\providecommand{\algorithmname}{Algorithm}
\newenvironment{lyxcode}
{\par\begin{list}{}{
\setlength{\rightmargin}{\leftmargin}
\setlength{\listparindent}{0pt}
\raggedright
\setlength{\itemsep}{0pt}
\setlength{\parsep}{0pt}
\normalfont\ttfamily}%
 \item[]}
{\end{list}}
\providecommand{\algorithmname}{Algorithm}
\newtheorem{theorem}{Theorem}[section]
\newtheorem{corollary}[theorem]{Corollary}
\newtheorem{prop}[theorem]{Proposition}
\newtheorem{definition}[theorem]{Definition}
  \theoremstyle{remark}
  \newtheorem{remark}{Remark}
  \theoremstyle{plain}
\newtheorem{lemma}[theorem]{Lemma}
\begin{document}
\title{A $\frac{9}{7}$-Approximation Algorithm for Graphic TSP in Cubic
Bipartite Graphs}

\author{Jeremy A. Karp, R. Ravi}
\maketitle

\begin{abstract}
We prove new results for approximating Graphic TSP. Specifically,
we provide a polynomial-time $\frac{9}{7}$-approximation algorithm
for cubic bipartite graphs and a $(\frac{9}{7}+\frac{1}{21(k-2)})$-approximation
algorithm for $k$-regular bipartite graphs, both of which are improved
approximation factors compared to previous results. Our approach involves
finding a cycle cover with relatively few cycles, which we are able
to do by leveraging the fact that all cycles in bipartite graphs are
of even length along with our knowledge of the structure of cubic
graphs.
 \end{abstract}

\section{Introduction}

\subsection{Motivation and Related Work}

The traveling salesman problem (TSP) is one of most well known problems
in combinatorial optimization, famous for being hard to solve precisely. 
In this problem, given a complete undirected graph $G=(V,E)$ with
vertex set $V$ and edge set $E$, with non-negative edge costs $c\in\mathbb{R}^{|E|}$,
$c\neq0$, the objective is to find a Hamiltonian cycle in $G$ of
minimum cost. In its most general form, TSP cannot be approximated
in polynomial time unless $P=NP$. In order to successfully find approximate
solutions for TSP, it is common to require that instances of the problem
have costs that satisfy the triangle inequality ($c_{ij}+c_{jk}\geq c_{ik}\;\forall i,j,k\in V$).
This is the Metric TSP problem. The Graphic TSP problem is a special
case of the Metric TSP, where instances are restricted to those where
$\forall i,j\in E$, the cost of edge $(i,j)$ in the complete graph
$G$ are the lengths of the shortest paths between nodes $i$ and
$j$ in an unweighted, undirected graph, on the same vertex set.

One value related to the ability to approximate TSP is the integrality
gap, which is the worst-case ratio between the optimal solution for
a TSP instance and the solution to a linear programming relaxation called
the subtour relaxation \cite{dantzig1954solution}.
A long-standing conjecture (see, e.g.,\cite{goemans1995worst}) for Metric TSP is that the integrality
gap is $\frac{4}{3}$. One source of motivation for studying Graphic TSP is that
the family of graphs with two vertices connected by three paths of
length $k$ has an integrality gap that approaches $\frac{4}{3}$.
This family of graphs demonstrates that
Graphic TSP captures much of the complexity of the more general Metric
TSP problem.

For several decades, Graphic TSP did not have any approximation algorithms
that achieved a better approximation than Christofides' classic $\frac{3}{2}$-approximation
algorithm for Metric TSP \cite{christofides1976worst}, further motivating the study
of this problem. However, a wave of recent papers
\cite{Gamarnik:2005:IUB:2308887.2309049,aggarwal20114,boyd2011tsp,gharan2011randomized,momke2011approximating,correa2012tsp,sebHo2012shorter} have provided significant
improvements in approximating Graphic TSP. Currently, the best known
approximation algorithm for Graphic TSP is due to Seb\H{o} and Vygen \cite{sebHo2012shorter}, with an approximation factor
of $\frac{7}{5}$.

Algorithms with even smaller approximation factors have also been
found for Graphic TSP instances generated by specific subclasses of
graphs. In particular, algorithms for Graphic TSP in cubic graphs
(where all nodes have degree $3$) have drawn significant interest
as this appears to be the simplest class of graphs that has many of
the same challenges as the general case. Currently, the best approximation
algorithm for Graphic TSP in cubic graphs is due to Correa, Larr\'e,
and Soto \cite{correa2012tsp}, whose algorithm achieves an approximation factor
of $(\frac{4}{3}-\frac{1}{61236})$ for $2$-edge-connected cubic graphs. Similarly, a $\frac{4}{3}$-approximation was recently obtained for instances of sub-quartic graphs \cite{DBLP:conf/esa/Newman14}.
Progress in approximating Graphic TSP in cubic graphs also relates
to traditional graph theory, as Barnette's conjecture \cite{barnette1969conjecture} states
that all bipartite, planar, $3$-connected, cubic graphs are
Hamiltonian. This conjecture suggests that instances of Graph TSP
on Barnette graphs could be easier to approximate, and conversely,
approximation algorithms for Graphic TSP in Barnette graphs may lead
to the resolution of this conjecture. Indeed, Correa, Larr\'e, and Soto
\cite{2013arXiv1310.1896C} provided a $(\frac{4}{3}-\frac{1}{18})$-approximation
algorithm for Barnette graphs. Along these lines, Aggarwal, Garg,
and Gupta \cite{aggarwal20114} were able to obtain a $\frac{4}{3}$-approximation
algorithm for $3$-edge-connected cubic graphs before any $\frac{4}{3}$-approximation
algorithms were known for all cubic graphs. In this paper, we examined
graphs that are cubic and bipartite, another class of graphs that
includes all Barnette graphs. An improved approximation for this class
of graphs is the primary theoretical contribution of this paper.
\begin{theorem}
\label{t1}
Given a cubic bipartite connected graph $G$ with $n$ vertices, there is a polynomial
time algorithm that computes a spanning Eulerian multigraph $H$ in
$G$ with at most $\frac{9}{7}n$ edges.
\end{theorem}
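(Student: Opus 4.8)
The plan is to build $H$ by starting from a spanning $2$-regular subgraph of $G$ and then stitching its cycles together into one connected Eulerian multigraph, spending as few additional edges as possible; bipartiteness is precisely what keeps the stitching cheap. For the \textbf{first step}, note that since $G$ is cubic and bipartite, K\"onig's edge-colouring theorem writes $E(G)=M_1\cup M_2\cup M_3$ as a disjoint union of three perfect matchings, so removing any $M_i$ leaves a $2$-factor $F$. More usefully, for \emph{any} $2$-factor $F$ of the cubic graph $G$ the complement $M:=E(G)\setminus F$ is $1$-regular, i.e.\ a perfect matching; hence $|E(F)|=n$, every cycle of $F$ has even length at least $4$, and every vertex of $G$ meets exactly one edge of $M$. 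I would take $F$ to be whichever candidate is most favourable for the assembly step (e.g.\ the one with fewest short cycles, after trying the three colour classes and some local exchanges), though any $2$-factor will serve as the starting point; all of this is polynomial.

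For the \textbf{second step}, form the multigraph $\mathcal{G}$ whose nodes are the cycles $C_1,\dots,C_k$ of $F$ and whose edges are the edges of $M$ (an $M$-edge with both ends on one cycle becomes a discarded loop); $\mathcal{G}$ is connected because $G$ is. Two elementary moves turn $F$ into $H$. The \emph{free merge}: if $C_i$ and $C_j$ are joined by two $M$-edges $ac$ and $bd$ with $ab\in E(C_i)$ and $cd\in E(C_j)$, then deleting $ab,cd$ and inserting $ac,bd$ fuses $C_i$ and $C_j$ into a single cycle on $V(C_i)\cup V(C_j)$ with no change in the number of edges and with all degrees still even. The \emph{paid merge}: doubling a single $M$-edge between two current components joins them at a cost of $+2$ edges while keeping all degrees even.

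For the \textbf{third step}, take a spanning tree of $\mathcal{G}$ and process it from its leaves toward a root: at each stage attach the current leaf component to the rest by a free merge whenever two suitably placed $M$-edges to the neighbour are available, and by a paid merge otherwise. (One has to keep track of how each free merge alters which endpoints of a super-cycle are ``cycle-adjacent'', since that governs later free merges.) The resulting $H$ is spanning, connected, and Eulerian by construction, and the procedure runs in polynomial time.

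The \textbf{fourth step is the crux, and the place I expect the real difficulty.} We have $|E(H)|=n+2p$ where $p$ is the number of paid merges, so we must show $p\le n/7$. This is where both hypotheses are essential. The cubic structure pins down the picture around a small cycle: a $4$-cycle of $F$ sends out exactly four $M$-edges to four \emph{distinct} vertices, and it has no chord (opposite corners of a $4$-cycle lie on the same side of the bipartition), which sharply restricts how such a cycle can fail to admit a free merge. Bipartiteness is used twice: it forbids those chords, and, because a super-cycle produced by merging again has even length, it forces any component that gets ``stuck'' — unable to be grown for free — to already be fairly large, which is what bounds $p$. I would make this quantitative by an amortised argument: assign each vertex a credit of $\tfrac27$, and show that every paid merge can be charged to a disjoint block of previously unspent vertices on the two components being joined (seven of them, which is exactly what the arithmetic $2=7\cdot\tfrac27$ demands), via a short case analysis on the length of the cycle being attached and the destinations of its $M$-edges. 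Pushing the constant all the way down to $\tfrac97$, rather than a weaker ratio such as $\tfrac54$, is the delicate part. The $k$-regular statement then follows from the same scheme applied to a $2$-factor of a $k$-regular bipartite graph, whose complement is $(k-2)$-regular and so offers progressively more room to merge as $k$ grows; this yields the bound $\tfrac97+\tfrac1{21(k-2)}$, whose value $\tfrac43$ at $k=3$ is exactly what the sharper perfect-matching analysis of Theorem~\ref{t1} improves to $\tfrac97$.
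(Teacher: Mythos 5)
Your overall architecture differs from the paper's in an interesting way: the paper never performs ``free merges'' on the $2$-factor itself. Instead it spends all of its effort constructing a $2$-factor $F_0$ that already has at most $\frac{n}{7}$ cycles (by repeatedly contracting $4$-cycles and organic $6$-cycles into gadgets, recomputing $2$-factors, expanding back, and accounting for the $6$-cycles that reappear via a ``protected edges'' correspondence), and then pays $2$ edges per tree edge for every join, exactly as in your paid merge. Your plan instead tolerates a $2$-factor with many cycles and tries to recover the ratio by showing most joins are free. That is a legitimate alternative strategy in principle, and your first three steps (K\"onig decomposition, the validity of the free and paid moves, processing a spanning tree of the cycle-intersection multigraph) are fine.

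The genuine gap is that step four --- the only step that distinguishes $\frac{9}{7}$ from $\frac{3}{2}$ or $\frac{4}{3}$ --- is not an argument but a declaration that an argument should exist. Concretely: a free merge of $C_i$ into $C_j$ requires two $M$-edges leaving \emph{adjacent} vertices $a,b$ of $C_i$ and landing on \emph{adjacent} vertices $c,d$ of $C_j$ with $cd\in E(C_j)$. Nothing in cubicness or bipartiteness guarantees such a configuration exists for a given small cycle; a $4$-cycle whose four $M$-edges go to four pairwise non-adjacent vertices on four different cycles admits no free merge at all, and then your charging scheme collects only $4\cdot\frac27=\frac87<2$ credits for its paid merge. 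Even after eliminating $4$-cycles, a ``stuck'' $6$-cycle supplies only $6\cdot\frac27=\frac{12}{7}<2$ credits, so every such cycle must borrow a vertex from a neighbouring large cycle without double-charging; establishing that this is always possible is precisely the content of the paper's Sections 2--3 (the gadget hierarchy $S_1,\dots,S_3,H_1,\dots,H_6$, the protected-edge definitions, and Lemmas~\ref{oldlemma3}--\ref{t12}), which occupy most of the paper and require an exhaustive case analysis of how $2$-factors thread through each configuration. Your proposal correctly locates where the difficulty lives but does not resolve it, so the theorem is not proved.
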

\begin{corollary} \label{k-reg}
Given a $k$-regular bipartite connected graph $G$ with $n$ vertices where $k\geq4$,
there is a polynomial time algorithm that computes a spanning Eulerian
multigraph $H$ in $G$ with at most $(\frac{9}{7}+\frac{1}{21(k-2)})n-2$
edges.
\end{corollary}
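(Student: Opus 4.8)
The plan is to reduce Corollary~\ref{k-reg} to the cubic bipartite case settled by Theorem~\ref{t1}. The natural approach is to pass from $G$ to a spanning \emph{cubic} bipartite subgraph, invoke Theorem~\ref{t1} there, and then pay a controlled number of additional edges to restore connectivity; essentially all of the work lies in keeping that last payment small, and that is exactly where the $\frac{1}{21(k-2)}$ term (and the hypothesis $k\ge 4$) come from.

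Concretely, I would proceed in the following steps. (1) Since $G$ is $k$-regular and bipartite, König's edge-colouring theorem (equivalently, repeated use of Hall's theorem) lets me write $E(G)=M_1\cup\cdots\cup M_k$ as a disjoint union of $k$ perfect matchings. (2) Choose three of these matchings, say $M_1,M_2,M_3$, so that the spanning cubic bipartite subgraph $G_3:=M_1\cup M_2\cup M_3$ has few connected components, the target being at most about $\frac{n}{42(k-2)}$ of them; call these components $C_1,\dots,C_r$. (3) Each $C_i$ is a connected cubic bipartite graph, so Theorem~\ref{t1} yields a spanning Eulerian multigraph $H_i$ of $C_i$ with at most $\frac97\,|V(C_i)|$ edges; since the $C_i$ partition $V(G)$, the $H_i$ together span $V(G)$ and use at most $\frac97 n$ edges in all. (4) Because $G$ is connected, there is a set of $r-1$ edges of $G$ forming a spanning tree of the ``component graph''; add each of these edges with multiplicity two. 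Doubling an edge changes no vertex's degree parity, so $H:=\bigcup_i H_i$ together with these $2(r-1)$ edges is a connected spanning Eulerian multigraph of $G$. (5) Counting, $|E(H)|\le \frac97 n + 2(r-1)\le \frac97 n + \frac{n}{21(k-2)} - 2 = \big(\frac97+\frac{1}{21(k-2)}\big)n - 2$, as required (for small $n$ a separate direct argument handles the base cases).

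The crux — and the only step that is not routine — is (2): one must show that among the $k$ matchings, three can be chosen whose union has only $O\!\big(n/(k-2)\big)$ components, with the right constant. The intuition is that $M_1\cup M_2$ is a $2$-factor, i.e.\ a vertex-disjoint union of even cycles, and that the remaining $k-2$ matchings \emph{jointly} connect all these cycles (since $G$ is connected); an averaging argument should then exhibit one further matching $M_j$ whose addition merges $M_1\cup M_2$ into few pieces, and optimizing over the choice of $M_1$ and $M_2$ as well should push the component count down into the claimed range. Turning this into the precise constant $42$ rather than a vague ``constant depending on $k$'' is the delicate part; I expect it to need a careful exchange/cut-counting argument on $G$ rather than a single averaging inequality. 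Everything downstream of that bound — the componentwise use of Theorem~\ref{t1} and the reconnection by doubled tree edges — is standard.
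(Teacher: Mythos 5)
There is a genuine gap, and it sits exactly where you flag it: step (2). You need the chosen cubic subgraph to have at most $\frac{n}{42(k-2)}$ connected components in total, you do not prove this, and it is almost certainly not provable --- for $k=4$ it would force the average component of $M_1\cup M_2\cup M_j$ to have at least $84$ vertices, whereas one can build connected $4$-regular bipartite graphs out of small $3$-regular blocks linked by a fourth matching so that many (or, for some factorizations, all) choices of three matchings leave $\Theta(n)$ components of constant size. A simple averaging over the $\binom{k}{3}$ choices does not rescue this: in such examples the average component count is far above your target even when a good choice happens to exist.

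The paper's argument shows that this strong component bound is not needed at all. The union of the per-component $2$-factors is a $2$-factor of $G$ itself, and Proposition \ref{t3} reconnects it with a doubled spanning tree of the \emph{original connected graph} $G$, at a cost of $2(m-1)$ edges where $m$ is the total number of \emph{cycles} --- the number of components of the cubic subgraph never enters. Every component on more than $6$ vertices already contributes cycles at rate $\frac{1}{7}$ per vertex by Theorem \ref{t13}; the only components that hurt are $K_{3,3}$s, which force rate $\frac{1}{6}$. So the quantity to control is the number of $K_{3,3}$ components, and this follows from a one-paragraph pigeonhole: fix $M_1,M_2$, note that each $6$-cycle of the $2$-factor $M_1\cup M_2$ (at most $\frac n6$ of them) can be completed to a $K_{3,3}$ by at most one of the edge-disjoint matchings $M_3,\dots,M_k$, so some choice of third matching yields at most $\frac{n}{6(k-2)}$ $K_{3,3}$s. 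The cycle count is then at most $\frac{n}{7}+\frac{n}{42(k-2)}$, and Proposition \ref{t3} gives $\bigl(\frac97+\frac{1}{21(k-2)}\bigr)n-2$ edges. Your framework could be repaired by replacing the global component bound with this $K_{3,3}$ bound and by using the sharper per-component edge count $\frac97 n_i-2$ (the $\frac97 n_i$ bound you quote leaks $2$ edges per component and would break the final $-2$), but as written the crux step is both unproved and aimed at the wrong target.
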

 This extension complements results~\cite{vishnoi2012permanent,FRS14} which provide guarantees for $k$-regular graphs
in the asymptotic regime. Corollary \ref{k-reg} improves on these guarantees for small values of $k$. Note that even for $k=4$ Corollary \ref{k-reg} yields a solution with fewer then $\frac{4}{3}n$ edges.

\subsection{Overview}

In this paper, we will present an algorithm to solve Graphic TSP,
which guarantees a solution with at most $\frac{9}{7}n$ edges in
cubic bipartite graphs. The best possible solution to Graphic TSP
is a Hamiltonian cycle, which has exactly $n$ edges, so this algorithm
has an approximation factor of $\frac{9}{7}$.

A corollary of Petersen's theorem is that every cubic bipartite graph
contains three edge-disjoint perfect matchings. The union of any $2$
of these matchings forms a $2$-factor. The following proposition
demonstrates the close relationship between $2$-factors and Graphic
TSP tours in connected graphs.
\begin{prop} \label{t3}
Any $2$-factor with $k$ cycles in a connected graph can be extended
into a spanning Eulerian multigraph with the addition of exactly $2(k-1)$
edges. This multigraph contains exactly $n+2(k-1)$ edges in total.
\end{prop}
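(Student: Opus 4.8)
The plan is to pass to an auxiliary graph whose vertices are the cycles of the $2$-factor, take a spanning tree of it, and double the corresponding edges. First I would record the basic structure: a $2$-factor is a spanning subgraph in which every vertex has degree $2$, so it decomposes into vertex-disjoint cycles $C_1,\dots,C_k$ that together cover all $n$ vertices, and summing degrees shows it has exactly $n$ edges. Next, form the auxiliary multigraph $\mathcal{G}$ whose vertex set is $\{C_1,\dots,C_k\}$, with one edge between $C_i$ and $C_j$ for each edge of $G$ having one endpoint in $C_i$ and the other in $C_j$. Because $G$ is connected, $\mathcal{G}$ is connected, so it has a spanning tree $T$ with exactly $k-1$ edges.

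Then I would build $H$ from the $2$-factor by adding, for each edge $e=(u,v)$ of $G$ that realizes an edge of $T$, two parallel copies of $e$. This adds exactly $2(k-1)$ edges, so $H$ has $n+2(k-1)$ edges in total, as claimed. It remains to verify that $H$ is a spanning Eulerian multigraph. For connectivity: the $2$-factor already connects each $C_i$ internally, and the doubled tree edges merge all the cycles into a single component precisely because $T$ spans $\mathcal{G}$. For the parity condition: every vertex has degree $2$ in the $2$-factor, and each added pair of parallel edges raises the degree of each of its two endpoints by $2$, so all degrees of $H$ remain even. A connected multigraph with all degrees even is Eulerian, which finishes the proof.

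The argument is essentially routine; the only step needing a little care is the reduction to $\mathcal{G}$ and the observation that it is connected, which is exactly where connectivity of $G$ enters — any $G$-path between vertices lying in different cycles projects to a walk in $\mathcal{G}$ joining the corresponding vertices. The choice of \emph{two} parallel copies of each tree edge, rather than one, is what preserves the parity needed for the Eulerian property, and it is also what makes the edge count come out to exactly $2(k-1)$.
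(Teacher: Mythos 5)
Your proof is correct and follows exactly the approach the paper indicates: contract each cycle of the $2$-factor to a node, take a spanning tree of the (connected) contracted graph, and add two copies of each of its $k-1$ edges, then verify connectivity and even degrees. The paper only sketches this in a sentence after the proposition statement; your write-up fills in the same argument with the routine details.
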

Proposition \ref{t3} can be implemented algorithmically by compressing each cycle in the $2$-factor into a single node
and then finding a spanning tree in this compressed graph. We then add two copies of the edges
from this spanning tree to the $2$-factor.
We present an algorithm, BIGCYCLE, which begins by finding a $2$-factor
with at most $\frac{n}{7}$ cycles. Then, it applies Proposition \ref{t3}
to generate a spanning Eulerian subgraph from this $2$-factor containing
at most $n+2\times(\frac{n}{7}-1)=\frac{9}{7}n-2$ edges.

BIGCYCLE first shrinks every $4$-cycle in the graph, then it generates
a $2$-factor in the condensed graph. If the resulting $2$-factor
has no $6$-cycles, then we can expand the $4$-cycles and this will
be our solution. If the $2$-factor does have a $6$-cycle, then the
algorithm contracts either this $6$-cycle or a larger subgraph that
includes this $6$-cycle. We are able to iterate this process until
we find a $2$-factor in the compressed graph with no $6$-cycles that have contracted portions. We will later define the term ``organic'' to describe these cycles.
At this point, the algorithm is able to expand the compressed graph
back to its original state, maintaining a $2$-factor with relatively
few cycles. Theorem \ref{t13} in Section 3.5 proves that this $2$-factor
has at most $\frac{n}{7}$ cycles.

\section{A $\frac{9}{7}$-Approximation Algorithm for Graphic TSP in Cubic Bipartite
Graphs}

\subsection{Overview}

In a graph with no $4$-cycles (squares), all $2$-factors will have
an average cycle length of at least $6$, so all $2$-factors will
have at most $\frac{n}{6}$ cycles, which results in a $\frac{4}{3}$-approximation after applying Proposition \ref{t3}.
In order to improve our approximation guarantee, we need to target
$6$-cycles, as well as $4$-cycles. The algorithm we present finds
a square-free $2$-factor in which every $6$-cycle can be put in correspondence with
a distinct cycle of size $8$ or larger. Then, we can find a $2$-factor
in which every large cycle and its corresponding $6$-cycles have
average cycle length of at least $7$ via an amortized analysis
over the compressing iterations
(Lemma \ref{t12} in Section 3.4). We
then show that this is enough to conclude that the $2$-factor contains
at most $\frac{n}{7}$ cycles (Theorem \ref{t13} in Section 3.5). This is
primary contribution of this paper.

A method used throughout this paper is to systematically replace certain
subgraphs containing $4$-cycles and $6$-cycles with other subgraphs.
We will refer to these replacement subgraphs as ``gadgets''. To keep
track of portions of the graph that have not been altered by these
gadgets, we define the term ``organic'' as follows.
\begin{definition} \label{t4}
A subgraph is organic if it consists entirely of nodes and edges contained
in the original graph. For a single edge to be organic, both its end-nodes
must be organic.
\end{definition}
We also give a formal definition of the term ``gadget'':
\begin{definition} \label{t5}
A gadget is a subgraph that is inserted into the graph by the BIGCYCLE
algorithm in place of a different subgraph. Examples of gadgets are
shown in Section \ref{gadgetsection}. Gadgets are used to replace other
subgraphs containing $4$- or $6$-cycles.
\end{definition}
In Section \ref{gadgetsection}, we introduce the gadgets used in the BIGCYCLE
algorithm. The BIGCYCLE algorithm is defined in Section \ref{algorithmsection}. When our $2$-factor contains $4$-cycles and organic $6$-cycles,this algorithm uses the gadgets from Section \ref{gadgetsection} to condense our graph and remove these cycles.
BIGCYCLE repeats this process (condensing $4$-cycles that appear along
the way) and compute a new $2$-factor in the condensed graph until
we obtain a $2$-factor with no organic $6$-cycles. In Section \ref{accounting} we examine expansions that can introduce $6$-cycles into our $2$-factor and show that while expanding the graph can create a small number of new $6$-cycles in
our $2$-factor, we are able to account for them, ensuring that
the bounds described in the previous paragraph must hold.

\subsection{Gadgets} \label{gadgetsection}

In this section, we present the subgraphs that will
be replaced with gadgets by the algorithm. In total, there are $3$ gadgets
to replace $4$-cycles and $6$ gadgets to replace $6$-cycles. We
will give these configurations the names $S_{1}$, $S_{2}$, $S_{3}$,
$H_{1}$, $H_{2}$, $H_{3}$, $H_{4}$, $H_{5}$, and $H_{6}$. The
gadget that replaces a configuration $X$ will be called $X'$.

First, we introduce the gadget we use to replace squares whose outgoing
edges are incident on four distinct vertices.

\begin{figure}[ht]
\begin{minipage}[t]{0.45\columnwidth}%
\begin{center}
\tikzstyle{node}=[circle, draw, fill=black!50,                         inner sep=0pt, minimum width=4pt]
\begin{tikzpicture}[thick,scale=0.18] 	\node [node] (a) at (-2,2) {}; 	\node [node] (b) at (2,2) {}; 	\node [node] (c) at (2,-2) {}; 	\node [node] (d) at (-2,-2) {}; 	\draw (a) -- (b) -- (c) -- (d) -- (a); 	\coordinate (1) at  ($ (a)+ (-1,1) $); 	 \coordinate (2) at  ($ (b)+ (1,1) $); 	\coordinate (3) at  ($ (c)+ (1,-1) $); 	\coordinate (4) at  ($ (d)+ (-1,-1) $); 	\draw (a) to node [label={[xshift=.2cm]1}] {} (1); 	\draw (b) to node [label={[xshift=-.2cm]2}] {} (2); 	\draw (c) to node [label={[xshift=-.2cm,yshift=-.5cm]3}] {} (3); 	\draw (d) to node [label={[xshift=.2cm,yshift=-.5cm]4}] {} (4);
\end{tikzpicture}
\par\end{center}

\caption{A square with four distinct neighbors: $S_{1}$}
\end{minipage}\hfill{}%
\begin{minipage}[t]{0.45\columnwidth}%
\begin{center}
\tikzstyle{node}=[circle, draw, fill=black!50,                         inner sep=0pt, minimum width=4pt]
\begin{tikzpicture}[auto,thick, scale=.18]	 	\node [node] (a) at (0,2) {}; 	\node [node] (b) at (0,-2) {}; 	\draw (a) -- (b); 	 \coordinate (1) at  ($ (a)+ (-1,1) $); 	\draw (a) to node [label={[xshift=0cm]1}] {} (1); 	\coordinate (3) at  ($ (a)+ (1,1) $); 	 \draw (a) to node [label={[xshift=.45cm, yshift=-.4cm]3}] {} (3); 	\coordinate (2) at  ($ (b)+ (-1,-1) $); 	\draw (b) to node [label={[xshift=-.4cm, yshift = .25cm]2}] {} (2); 	\coordinate (4) at  ($ (b)+ (1,-1) $); 	\draw (b) to node [label={[yshift=-.2cm]4}] {} (4); \end{tikzpicture}
\par\end{center}

\begin{center}
\caption{The gadget that replaces this configuration: $S'_{1}$}

\par\end{center}%
\end{minipage}
\end{figure}

Next, we introduce the $S_{2}$, used to replace squares with two exiting edges connected to a common vertex.

\begin{figure}[H]
\begin{minipage}[t]{0.45\columnwidth}%
\begin{center}
\tikzstyle{node}=[circle, draw, fill=black!50,                         inner sep=0pt, minimum width=4pt]
\begin{tikzpicture}[thick,scale=0.2] 	\node [node] (a) at (-2,2) {}; 	\node [node] (b) at (2,2) {}; 	\node [node] (c) at (2,-2) {}; 	\node [node] (d) at (-2,-2) {}; 	\node [node] (e) at (0,4) {}; 	\draw (a) -- (b) -- (c) -- (d) -- (a); 	\draw (a) -- (e); 	\draw (c) to[out=0,in=50, distance=5cm] (e); 	\coordinate (1) at  ($ (e)+ (0,1.5) $); 	\coordinate (2) at  ($ (b)+ (1,1) $); 	\coordinate (4) at  ($ (d)+ (-1,-1) $); 	\draw (e) to node [label={[xshift=-.2cm]1}] {} (1); 	\draw (b) to node [label={[xshift=-.10cm]2}] {} (2); 	\draw (d) to node [label={[xshift=.1cm,yshift=-.5cm]3}] {} (4);
\end{tikzpicture}
\par\end{center}

\caption{A square with three distinct neighbors: $S_{2}$}
\end{minipage}\hfill{}%
\begin{minipage}[t]{0.45\columnwidth}%
\begin{center}
\tikzstyle{node}=[circle, draw, fill=black!50,                         inner sep=0pt, minimum width=4pt]
\begin{tikzpicture}[thick,scale=.8] 	\node [node] (a) at (-1,.5) {}; 	\coordinate (1) at  ($ (a)+ (-.75,.75) $); 	\coordinate (2) at  ($ (a)+ (.75,.75) $); 	\coordinate (3) at  ($ (a)+ (0,-.8) $); 	\draw (a) to node [label={[xshift=.05cm]1}] {} (1); 	\draw (a) to node [label={[xshift=-.1cm]2}] {} (2); 	\draw (a) to node [label={[xshift=-.2cm,yshift=-.3cm]3}] {} (3); \end{tikzpicture} 
\par\end{center}

\begin{center}
\caption{The gadget which replaces this configuration: $S'_2$}

\par\end{center}%
\end{minipage}
\end{figure}

We also introduce the $S_{3}$, which is used to replace squares whose outgoing edges are incident on only two vertices.

\begin{figure}[H]

\begin{minipage}[t]{0.45\columnwidth}%
\begin{center}
\tikzstyle{node}=[circle, draw, fill=black!50,                         inner sep=0pt, minimum width=4pt]
\begin{tikzpicture}[thick,scale=0.125] 	\node [node] (a) at (-2,2) {}; 	\node [node] (b) at (2,2) {}; 	\node [node] (c) at (2,-2) {}; 	\node [node] (d) at (-2,-2) {}; 	\node [node] (e) at (0,4) {}; 	\node [node] (f) at (0,-4) {};   	\draw (a) -- (b) -- (c) -- (d) -- (a); 	\draw (a) -- (e); 	\draw (d) -- (f); 	\draw (c) to[out=0,in=50, distance=5cm] (e); 	\draw (b) to[out=-50,in=0, distance=5cm] (f); 	\node [node] (1) at  ($ (e)+ (0,3) $) {A}; 	\node [node] (2) at  ($ (f)+ (0,-3) $) {B};  \coordinate (a1) at  ($ (1)+ (-2,2) $);   \coordinate (a2) at  ($ (1)+ (2,2) $);	\draw (e) to (1); 	\draw (f) to (2);
	\draw (1) to node [label={1}] {} (a1); 	\draw (1) to  node [label={2}] {} (a2); 	\draw (2) to  node [label={[yshift=-.5cm]3}] {} ($ (2)+ (-2,-2) $); 	\draw (2) to  node [label={[yshift=-.5cm]4}] {} ($ (2)+ (2,-2) $);
\end{tikzpicture}
\par\end{center}

\caption{A square with two distinct neighbors: $S_{3}$}
\end{minipage}\hfill{}%
\begin{minipage}[t]{0.45\columnwidth}%
\begin{center}
\tikzstyle{node}=[circle, draw, fill=black!50,                         inner sep=0pt, minimum width=4pt]
\begin{tikzpicture}[thick,scale=0.125]

 	\node [node] (A) at (0,5.5) {A}; 	\node [node] (B) at (0,-5.5) {B}; 	\draw (A) to (B);
	\draw (A) to node [label={2}] {} ($ (A) + (2,2) $); 	\draw (A) to node [label={1}] {} ($ (A) + (-2,2) $); 	\draw (B) to node [label={[yshift=-.5cm]4}] {} ($ (B) + (2,-2) $); 	\draw (B) to node [label={[yshift=-.5cm]3}] {} ($ (B) + (-2,-2) $); \end{tikzpicture}
\par\end{center}

\begin{center}
\caption{The super-edge which replaces this configuration: $S'_{3}$ \label{s3p}}

\par\end{center}%
\end{minipage}
\end{figure}

The first gadget used to replace $6$-cycles is two super-vertices which replace
a simple $6$-cycle, $H_{1}$.
\begin{figure}[ht]
\begin{minipage}[t]{0.45\columnwidth}%
\begin{center}
\tikzstyle{node}=[circle, draw, fill=black!50,                         inner sep=0pt, minimum width=4pt]
\begin{tikzpicture}[thick,scale=.4] 	\node [node] (a) at (-1,.5) {}; 	\node [node] (b) at (0,1.2) {}; 	\node [node] (c) at (1,.5) {}; 	\node [node] (d) at (1,-.5) {}; 	\node [node] (e) at (0,-1.2) {}; 	\node [node] (f) at (-1,-.5) {};
	\draw (a) -- (b) -- (c) -- (d) -- (e) -- (f) -- (a); 	\coordinate (1) at  ($ (a)+ (-.75,.75) $); 	\coordinate (2) at  ($ (b)+ (0,.8) $); 	\coordinate (3) at  ($ (c)+ (.75,.75) $); 	\coordinate (4) at  ($ (d)+ (.75,-.75) $); 	\coordinate (5) at  ($ (e)+ (0,-.8) $); 	\coordinate (6) at  ($ (f) + (-.75,-.75) $); 	\draw (a) to node [label={[xshift=.1cm]1}] {} (1); 	\draw (b) to node [label={[xshift=-.0cm]2}] {} (2); 	\draw (c) to node [label={[xshift=-.1cm]3}] {} (3); 	\draw (d) to node [label={[xshift=.3cm,yshift=-.2cm]4}] {} (4); 	\draw (e) to node [label={[xshift=-.2cm,yshift=-.3cm]5}] {} (5); 	\draw (f) to node [label={[xshift=-.1cm]6}] {} (6); \end{tikzpicture}
\par\end{center}

\caption{A simple $6$-cycle: $H_{1}$}
\end{minipage}\hfill{}%
\begin{minipage}[t]{0.45\columnwidth}%
\begin{center}
\tikzstyle{node}=[circle, draw, fill=black!50,                         inner sep=0pt, minimum width=4pt]
\begin{tikzpicture}[thick,scale=.6] 	\node [node] (a) at (-1,.5) {};
	\node [node] (b) at (1,.5) {};
	\coordinate (1) at  ($ (a)+ (-.75,.75) $); 	\coordinate (2) at  ($ (b)+ (-.75,.75) $); 	\coordinate (3) at  ($ (a)+ (.75,.75) $); 	\coordinate (4) at  ($ (b)+ (.75,.75) $); 	\coordinate (5) at  ($ (a)+ (0,-.8) $); 	\coordinate (6) at  ($ (b) + (0,-.8) $); 	\draw (a) to node [label={[xshift=.05cm]1}] {} (1); 	\draw (b) to node [label={[xshift=.05cm]2}] {} (2); 	 \draw (a) to node [label={[xshift=-.1cm]3}] {} (3); 	\draw (b) to node [label={[xshift=-.1cm]4}] {} (4); 	\draw (a) to node [label={[xshift=-.2cm,yshift=-.3cm]5}] {} (5); 	\draw (b) to node [label={[xshift=-.2cm,yshift=-.3cm]6}] {} (6); \end{tikzpicture}
\par\end{center}

\begin{center}
\caption{The gadget which replaces the $6$-cycle: $H'_{1}$}

\par\end{center}%
\end{minipage}
\end{figure}

The remaining gadgets are special cases of $6$-cycles. Note
that every $H_{2}$ contains a $H_{1}$, all $H_{3}$s contain a $H_{2}$,
and $H_{4}$s, $H_{5}$s, and $H_{6}$s are special cases of $H_{3}$s.

The motivation to use these additional gadgets comes out of necessity,
to prevent large numbers of $6$-cycles from being introduced into
the $2$-factor during the expansion phase of the algorithm. For example,
Figures \ref{badex1} and \ref{badex2} in Section \ref{accounting} document an expansion that turns an
$x+y+4$-cycle in the cycle cover passing through a gadget which replaced a $H_{1}$
into two cycles of lengths $x+3$ and $y+5$. In Section \ref{algorithmsection} we specify that the algorithm
will condense $H_{2}$s before $H_{1}$s.  This ensures that $y$,
the length of a path, is at least $3$, meaning that the $y+5$-cycle
is not a $6$-cycle. The motivation for introducing the remaining
specialized gadgets is similar.

\begin{figure}[ht]
\begin{minipage}[t]{0.45\columnwidth}%
\begin{center}
\tikzstyle{node}=[circle, draw, fill=black!50,                         inner sep=0pt, minimum width=4pt]
\begin{tikzpicture}[thick,scale=.5] 	\node [node] (a) at (-1,.5) {}; 	\node [node] (b) at (0,1.2) {}; 	\node [node] (c) at (1,.5) {}; 	\node [node] (d) at (1,-.5) {}; 	\node [node] (e) at (0,-1.2) {}; 	\node [node] (f) at (-1,-.5) {}; 	\node [node] (g) at (0,.5) {}; 	\node [node] (h) at (0,-.5) {};
	\draw (a) -- (b) -- (c) -- (d) -- (e) -- (f) -- (a); 	\draw (b) -- (g) -- (h) -- (e); 	\coordinate (1) at  ($ (a)+ (-.75,.75) $); 	\coordinate (2) at  ($ (g)+ (-1.2, 1.2) $); 	\coordinate (3) at  ($ (c)+ (.75,.75) $); 	\coordinate (4) at  ($ (d)+ (.75,-.75) $); 	\coordinate (5) at  ($ (h)+ (1.2,-1.2) $); 	\coordinate (6) at  ($ (f) + (-.75,-.75) $); 	\draw (a) to node [label={[xshift=.05cm]1}] {} (1); 	\draw (g) to node [label={[xshift=.2cm]2}] {} (2); 	\draw (c) to node [label={[xshift=-.2cm]3}] {} (3); 	\draw (d) to node [label={[xshift=.3cm,yshift=-.2cm]4}] {} (4); 	\draw (h) to node [label={[xshift=-.1cm, yshift=-.4cm]5}] {} (5); 	\draw (f) to node [label={[xshift=-.1cm]6}] {} (6); \end{tikzpicture}
\par\end{center}

\caption{Two $6$-cycles with $3$ common edges: $H_{2}$}
\end{minipage}\hfill{}%
\begin{minipage}[t]{0.45\columnwidth}%
\begin{center}
\tikzstyle{node}=[circle, draw, fill=black!50,                         inner sep=0pt, minimum width=4pt]
\begin{tikzpicture}[thick,scale=.6] 	\node [node] (a) at (-1,.5) {};
	\node [node] (b) at (1,.5) {};
	\coordinate (1) at  ($ (a)+ (-.75,.75) $); 	\coordinate (2) at  ($ (b)+ (-.75,.75) $); 	\coordinate (3) at  ($ (a)+ (.75,.75) $); 	\coordinate (4) at  ($ (b)+ (.75,.75) $); 	\coordinate (5) at  ($ (a)+ (0,-.8) $); 	\coordinate (6) at  ($ (b) + (0,-.8) $); 	\draw (a) to node [label={[xshift=.05cm]1}] {} (1); 	\draw (b) to node [label={[xshift=.05cm]4}] {} (2); 	 \draw (a) to node [label={[xshift=-.1cm]2}] {} (3); 	\draw (b) to node [label={[xshift=-.1cm]5}] {} (4); 	\draw (a) to node [label={[xshift=-.2cm,yshift=-.3cm]3}] {} (5); 	\draw (b) to node [label={[xshift=-.2cm,yshift=-.3cm]6}] {} (6); \end{tikzpicture}
\par\end{center}

\begin{center}
\caption{The gadget which replaces the $H_2$: $H'_{2}$}

\par\end{center}%
\end{minipage}
\end{figure}

\begin{figure}[ht]
\begin{minipage}[t]{0.45\columnwidth}%
\begin{center}
\tikzstyle{node}=[circle, draw, fill=black!50,                         inner sep=0pt, minimum width=4pt]
\begin{tikzpicture}[thick,scale=.5] 	\node [node] (a) at (-1,.5) {}; 	\node [node] (b) at (0,1.2) {}; 	\node [node] (c) at (1,.5) {}; 	\node [node] (d) at (1,-.5) {}; 	\node [node] (e) at (0,-1.2) {}; 	\node [node] (f) at (-1,-.5) {}; 	\node [node] (g) at (0,.5) {}; 	\node [node] (h) at (0,-.5) {}; 	\node [node] (i) at  ($ (a)+ (.5,1.5) $) {}; 	\node [node] (j) at  ($ (d)+ (-.5,2.5) $) {};
	\draw (a) -- (b) -- (c) -- (d) -- (e) -- (f) -- (a); 	\draw (b) -- (g) -- (h) -- (e); 	\draw (a) -- (i) -- (j); 	\draw (d) to[out=0,in=0, distance=.5cm] (j); 	\coordinate (1) at  ($ (j) + (.75,.75) $); 	\coordinate (2) at  ($ (g)+ (-1.2, 1.2) $); 	 \coordinate (3) at  ($ (c)+ (.75,.75) $); 	\coordinate (4) at  ($ (i)+ (-.75,.75) $); 	\coordinate (5) at  ($ (h)+ (1.2,-1.2) $); 	\coordinate (6) at  ($ (f) + (-.75,-.75) $); 	\draw (i) to node [label=6] {} (4); 	\draw (j) to node [label=1] {} (1); 	\draw (g) to node [label={[xshift=-.35cm,yshift=-.05cm]2}] {} (2); 	\draw (c) to node [label={[xshift=0cm]3}] {} (3); 	 \draw (h) to node [label={[xshift=.15cm, yshift=-.15cm]5}] {} (5); 	\draw (f) to node [label={[xshift=-.1cm]4}] {} (6); \end{tikzpicture}
\par\end{center}

\caption{A specialized configuration containing three overlapping $6$-cycles:
$H_{3}$}
\end{minipage}\hfill{}%
\begin{minipage}[t]{0.45\columnwidth}%
\begin{center}
\tikzstyle{node}=[circle, draw, fill=black!50,                         inner sep=0pt, minimum width=4pt]
\begin{tikzpicture}[thick,scale=0.5] 	\coordinate (4) at (-2,1); 	\coordinate (3) at (-2,-1); 	\draw (4) to node [label={[xshift=.15cm,yshift=.3cm]4}] {} (3); 	\draw (4) to node [label={[xshift=.15cm,yshift=-.7cm]3}] {} (3); 	\coordinate (5) at (0,1); 	\coordinate (1) at (0,-1); 	\draw (5) to node [label={[xshift=.15cm,yshift=.3cm]5}] {} (1); 	\draw (5) to node [label={[xshift=.15cm,yshift=-.7cm]1}] {} (1); 	\coordinate (6) at (2,1); 	\coordinate (2) at (2,-1); 	\draw (6) to node [label={[xshift=.15cm,yshift=.3cm]6}] {} (2); 	\draw (6) to node [label={[xshift=.15cm,yshift=-.7cm]2}] {} (2); \end{tikzpicture}
\par\end{center}

\begin{center}
\caption{The gadget which replaces the $H_3$: $H'_{3}$}

\par\end{center}%
\end{minipage}
\end{figure}

\begin{figure}[H]
\begin{minipage}[t]{0.45\columnwidth}%
\begin{center}
\tikzstyle{node}=[circle, draw, fill=black!50,                         inner sep=0pt, minimum width=4pt]
\begin{tikzpicture}[thick,scale=.5] 	\node [node] (a) at (-1,.5) {}; 	\node [node] (b) at (0,1.2) {}; 	\node [node] (c) at (1,.5) {}; 	\node [node] (d) at (1,-.5) {}; 	\node [node] (e) at (0,-1.2) {}; 	\node [node] (f) at (-1,-.5) {}; 	\node [node] (g) at (0,.5) {}; 	\node [node] (h) at (0,-.5) {}; 	\node [node] (i) at  ($ (a)+ (.5,1.5) $) {}; 	\node [node] (j) at  ($ (d)+ (-.5,2.5) $) {}; 	\node [node] (w1) at ($ (j) + (2.5,-.7) $) {}; 	\node [node] (w2) at ($ (i) + (-2.5,-.7) $) {};
	\draw (a) -- (b) -- (c) -- (d) -- (e) -- (f) -- (a); 	\draw (b) -- (g) -- (h) -- (e); 	\draw (a) -- (i) -- (j) -- (w1) -- (g); 	\draw (i) -- (w2) -- (h); 	\draw (d) to[out=0,in=0, distance=.5cm] (j); 	\coordinate (1) at  ($ (w2) + (-1,0) $); 	\coordinate (2) at  ($ (w1)+ (1, 0) $); 	\coordinate (3) at  ($ (c)+ (1.5,-.75) $); 	\coordinate (4) at  ($ (f)+ (-1.5,.0) $); 	\draw (w2) to node [label=1] {} (1); 	\draw (w1) to node [label=2] {} (2); 	\draw (c) to node [label={[xshift=.1cm]3}] {} (3); 	\draw (f) to node [label={[xshift=-.1cm]4}] {} (4); \end{tikzpicture}
\par\end{center}

\caption{The $H_{4}$}
\end{minipage}\hfill{}%
\begin{minipage}[t]{0.45\columnwidth}%
\begin{center}
\tikzstyle{node}=[circle, draw, fill=black!50,                         inner sep=0pt, minimum width=4pt]
\begin{tikzpicture}[auto,thick, scale=.3]	 	\node [node] (a) at (0,2) {}; 	\node [node] (b) at (0,-2) {}; 	\draw (a) -- (b); 	\coordinate (1) at  ($ (a)+ (-1,1) $); 	\draw (a) to node [label={[xshift=0cm]1}] {} (1); 	\coordinate (3) at  ($ (a)+ (1,1) $); 	\draw (a) to node [label={[xshift=.5cm, yshift=-.5cm]3}] {} (3); 	\coordinate (2) at  ($ (b)+ (-1,-1) $); 	\draw (b) to node [label={[xshift=-.5cm, yshift = .5cm]2}] {} (2); 	\coordinate (4) at  ($ (b)+ (1,-1) $); 	\draw (b) to node [label={4}] {} (4); \end{tikzpicture} 
\par\end{center}

\begin{center}
\caption{The gadget which replaces $H_{4}$: $H'_{4}$}

\par\end{center}%
\end{minipage}
\end{figure}

\begin{figure}[H]
\begin{minipage}[t]{0.45\columnwidth}%
\begin{center}
\tikzstyle{node}=[circle, draw, fill=black!50,                         inner sep=0pt, minimum width=4pt]
\begin{tikzpicture}[thick,scale=.45] 	\node [node] (a) at (-1,.5) {}; 	\node [node] (b) at (0,1.2) {}; 	\node [node] (c) at (1,.5) {}; 	\node [node] (d) at (1,-.5) {}; 	\node [node] (e) at (0,-1.2) {}; 	\node [node] (f) at (-1,-.5) {}; 	\node [node] (g) at (0,.5) {}; 	\node [node] (h) at (0,-.5) {}; 	\node [node] (i) at  ($ (a)+ (.5,1.5) $) {}; 	\node [node] (j) at  ($ (d)+ (-.5,2.5) $) {}; 	\node [node] (w1) at ($ (j) + (2.5,-.7) $) {}; 	\node [node] (w2) at ($ (i) + (-2.5,-.7) $) {};
	\draw (a) -- (b) -- (c) -- (d) -- (e) -- (f) -- (a); 	\draw (b) -- (g) -- (h) -- (e); 	\draw (a) -- (i) -- (j) -- (w1) -- (g); 	\draw (i) -- (w2) -- (h); 	\draw (d) to[out=0,in=0, distance=.5cm] (j);
	\coordinate (2) at  ($ (w1)+ (1, 0) $); 	\coordinate (3) at  ($ (c)+ (1.5,-.75) $); 	\coordinate (4) at  ($ (f)+ (-1.5,.0) $);
	\draw (w1) to node [label=2] {} (2); 	\draw (f) to node [label={[xshift=-.1cm]3}] {} (4);
	\node [node] (1-3) at ($ (f) + (0,-1) $) {}; 	\coordinate (1) at  ($ (1-3) + (0,-1) $);
	\draw (w2) to[out=180,in=180] (1-3) to (1-3) to[out=-30,in=0, distance=2cm] (c); 	\draw (1-3) to node [label={[xshift=-.2cm,yshift=-.3cm]1}] {} (1);
\end{tikzpicture}
\par\end{center}

\caption{The $H_{5}$}
\end{minipage}\hfill{}%
\begin{minipage}[t]{0.45\columnwidth}%
\begin{center}
\tikzstyle{node}=[circle, draw, fill=black!50,                         inner sep=0pt, minimum width=4pt]
\begin{tikzpicture}[thick,scale=1] 	\node [node] (a) at (-1,.5) {}; 	\coordinate (1) at  ($ (a)+ (-.75,.75) $); 	\coordinate (2) at  ($ (a)+ (.75,.75) $); 	\coordinate (3) at  ($ (a)+ (0,-.8) $); 	\draw (a) to node [label={[xshift=.05cm]1}] {} (1); 	\draw (a) to node [label={[xshift=-.1cm]2}] {} (2); 	\draw (a) to node [label={[xshift=-.2cm,yshift=-.3cm]3}] {} (3); \end{tikzpicture} 
\par\end{center}

\begin{center}
\caption{The gadget which replaces $H_{5}$: $H'_{5}$}

\par\end{center}%
\end{minipage}
\end{figure}

\begin{figure}[H]
\begin{minipage}[t]{0.45\columnwidth}%
\begin{center}
\tikzstyle{node}=[circle, draw, fill=black!50,                         inner sep=0pt, minimum width=4pt]
\begin{tikzpicture}[thick,scale=.45] 	\node [node] (a) at (-1,.5) {}; 	\node [node] (b) at (0,1.2) {}; 	\node [node] (c) at (1,.5) {}; 	\node [node] (d) at (1,-.5) {}; 	\node [node] (e) at (0,-1.2) {}; 	\node [node] (f) at (-1,-.5) {}; 	\node [node] (g) at (0,.5) {}; 	\node [node] (h) at (0,-.5) {}; 	\node [node] (i) at  ($ (a)+ (.5,1.5) $) {}; 	\node [node] (j) at  ($ (d)+ (-.5,2.5) $) {}; 	\node [node] (w1) at ($ (j) + (2.5,-.7) $) {}; 	\node [node] (w2) at ($ (i) + (-2.5,-.7) $) {};
	\draw (a) -- (b) -- (c) -- (d) -- (e) -- (f) -- (a); 	\draw (b) -- (g) -- (h) -- (e); 	\draw (a) -- (i) -- (j) -- (w1) -- (g); 	\draw (i) -- (w2) -- (h); 	\draw (d) to[out=0,in=0, distance=.5cm] (j);
	\node [node] (left) at ($ (w2) + (0,-1) $) {}; 	\node [node] (right) at ($ (w1) + (0,-1) $) {};
	\draw (w2) to (left) to[out=290,in=-30,distance=4cm] (c); 	\draw (w1) to (right) to[out=280,in=250,distance=2.5cm] (f);
	\coordinate (1) at ($ (left) + (-1,0) $); 	\draw (left) to node [label={[xshift=.05cm]1}] {} (1);
	\coordinate (2) at ($ (right) + (1,0) $); 	\draw (right) to node [label={[xshift=.05cm]2}] {} (2); \end{tikzpicture}
\par\end{center}

\caption{The $H_{6}$}
\end{minipage}\hfill{}%
\begin{minipage}[t]{0.45\columnwidth}%
\begin{center}
\tikzstyle{node}=[circle, draw, fill=black!50,                         inner sep=0pt, minimum width=4pt]
\begin{tikzpicture}[thick,scale=1	] 	\coordinate (1) at (0,1); 	\coordinate (2) at (0,-1); 	\draw (1) to node [label={[xshift=.15cm,yshift=1cm]1}] {} (2); 	\draw (1) to node [label={[xshift=.15cm,yshift=-1.4cm]2}] {} (2);
\end{tikzpicture}
\par\end{center}

\begin{center}
\caption{The gadget which replaces $H_{6}$: $H'_{6}$}

\par\end{center}%
\end{minipage}
\end{figure}

In the next subsection, we present a detailed description of the algorithm.

\subsection{The Algorithm} \label{algorithmsection}

Listing 1 presents pseudocode for the BIGCYCLE algorithm. The remainder
of this section explains the details of the algorithm, broken up into
three subroutines, and presents motivation for the operations performed
by the algorithm. The COMPRESS, EXPAND, and DOUBLETREE subroutines called by BIGCYCLE
are described in the following three subsections.

\begin{algorithm}[H] 
\caption{BIGCYCLE($G$)\label{bigcyclealgo}}

\begin{lyxcode}
Input:~An~undirected,~unweighted,~cubic,~bipartite~graph,~$G=(V,E)$~\\
$F_{compressed}\leftarrow$COMPRESS($G$)\\
$F\leftarrow$EXPAND($F_{compressed}$)\\
$TSP\leftarrow$DOUBLETREE($G$,$F$)\\
Return~$TSP$\end{lyxcode}
\end{algorithm}

\subsubsection{Finding a ``good'' 2-factor in the condensed graph $G_{k}$}

We start the algorithm by receiving a connected cubic bipartite graph.
Call this graph $G_{0}$. If $G_{0}$ is a $K_{3,3}$ then we compute a $2$-factor in this
graph, which will be a Hamiltonian cycle, and return this cycle as
our solution. Otherwise, we search for $4$-cycles that are not contained in $K_{3,3}$s and replace them
with their corresponding gadgets until we are returned a graph with
no squares except possibly inside of $K_{3,3}$s. We replace $S_3$ subgraphs first, followed by $S_2$s and $S_1$s so as to replace the most specialized subgraphs first. Let $i$ be the
number of square compressions made, and let $G_{i}$ be the compressed
graph at the end of this process. Next, construct a $2$-factor, $F_{i}$,
in $G_{i}$. When we construct $2$-factors throughout the algorithm,
we do so by decomposing the graph into $3$ edge-disjoint perfect
matchings and taking the union of the two perfect matching containing
the fewest $S'_{3}$ gadgets, shown in Figure \ref{s3p}.
These two perfect matchings form a $2$-factor with limited
potential to introduce organic $6$-cycles of the type shown in Figure
\ref{s3p6cycle} (Section 3.2). If $F_{i}$ contains no organic $6$-cycles, then
we advance to the next phase of the algorithm, described in the next subsection. In this case, $k=i$.

If $F_{i}$ does contain an organic $6$-cycle, $C$, then we check
if the current compressed graph $G_{i}$ contains organic subgraphs
that can be replaced by gadgets in the following order (ordered from
most specialized to most general): $H_{6}$, $H_{5}$, $H_{4}$, $H_{3}$,
$H_{2}$, $H_{1}$. We choose the first organic configuration on the
list (the most specialized configuration) we can find in $G_{i}$
and replace this configuration with the corresponding gadget, outputting
graph $G_{i+1}$ to reflect this change. The order of choosing subgraphs to replace is useful in accounting for the average length
of the cycles in the final $2$-factor, as shown in the proof of Lemma~\ref{lem-corr}. We then search for $4$-cycles that are not contained in $K_{3,3}$s
and replace them with their corresponding gadgets until we have removed
any $4$-cycles generated as a consequence of replacing a subgraph
with one of our gadgets, obtaining a new compressed graph $G_{j}$, where $j-i+1$ is the number of $4$-cycles compressed.
We construct a new $2$-factor $F_{j}$ and repeat the process in
this paragraph until we have a $2$-factor $F_{k}$ with no organic
$6$-cycles, in a condensed graph $G_{k}$, where $k$ is the total
number of gadget replacement operations performed during this phase
of the algorithm. This process is performed by the COMPRESS subroutine
in Listing 1.

\subsubsection{Expanding a 2-factor in $G_{k}$ into a 2-factor in $G_{0}$}

We will describe the process of expanding $F_{k}$ and $G_{k}$ so that we get back
to the original graph $G_{0}$ with a desirable $2$-factor $F_{0}$ in more detail.

We will reverse the process described in the previous subsection by replacing
our gadgets in compressed graph $G_{i}$ with the original configuration
from the earlier graph $G_{i-1}$ in the reverse order of that in
which we replaced the configurations. In other words, the gadgets
we inserted last are those which we first replace with their original configuration.
We call this process ``expanding'' because each one of these operations
adds vertices and edges to the graph. After we have made each replacement
to expand the graph, $F_{i}$ is no longer a $2$-factor in $G_{i-1}$
because the new nodes added by the most recent expansion step are
not covered by $F_{i}$. However, we can add edges to $F_{i}$ so
that it becomes a $2$-factor, $F_{i-1}$ in the graph after this
expansion step. It may not be immediately clear that this is always
possible. In fact, one of the bigger challenges in developing this
algorithm was choosing a set of gadgets where this property holds.
Figures \ref{gadgetex1} and \ref{gadgetex2} show an example of how this process works.

\begin{figure}[ht]
\begin{minipage}[t]{0.45\columnwidth}%
\begin{center}
\tikzstyle{node}=[circle, draw, fill=black!50,                         inner sep=0pt, minimum width=4pt]
\begin{tikzpicture}[thick,scale=.7] 	\node [node] (a) at (-1,.5) {};
	\node [node] (b) at (1,.5) {};
	\coordinate (1) at  ($ (a)+ (-.75,.75) $); 	\coordinate (2) at  ($ (b)+ (-.75,.75) $); 	\coordinate (3) at  ($ (a)+ (.75,.75) $); 	\coordinate (4) at  ($ (b)+ (.75,.75) $); 	\coordinate (5) at  ($ (a)+ (0,-.8) $); 	\coordinate (6) at  ($ (b) + (0,-.8) $); 	\draw (a) to node [label={[xshift=.05cm,yshift=-.05cm]1}] {} (1); 	\draw (b) to node [label={[xshift=.05cm,yshift=-.05cm]2}] {} (2); 	\draw (a) to node [label={[xshift=-.1cm,yshift=-.05cm]3}] {} (3); 	\draw (b) to node [label={[xshift=-.1cm,yshift=-.05cm]4}] {} (4); 	\draw (a) to node [label={[xshift=-.2cm,yshift=-.3cm]5}] {} (5); 	 \draw (b) to node [label={[xshift=-.2cm,yshift=-.3cm]6}] {} (6);
	\draw [line width=.08cm] (a) to (1); 	\draw [line width=.08cm] (a) to (3); 	\draw [line width=.08cm] (b) to (2); 	\draw [line width=.08cm] (b) to (4); 	\draw [dashed, line width=.08cm] (1) to[distance=1cm] (3); 	\draw [dashed, line width=.08cm] (2) to[distance=1cm] (4); \end{tikzpicture}
\par\end{center}

\caption{A pair of super-vertices in $G_{i}$. The bold edges are included
in $2$-factor $F_{i}$. The dashed bold edges represent a path, included
in $F_{i}$. \label{gadgetex1}}
\end{minipage}\hfill{}%
\begin{minipage}[t]{0.45\columnwidth}%
\begin{center}
\tikzstyle{node}=[circle, draw, fill=black!50,                         inner sep=0pt, minimum width=4pt]
\begin{tikzpicture}[thick,scale=.5] 	\node [node] (a) at (-1,.5) {}; 	\node [node] (b) at (0,1.2) {}; 	\node [node] (c) at (1,.5) {}; 	\node [node] (d) at (1,-.5) {}; 	\node [node] (e) at (0,-1.2) {}; 	\node [node] (f) at (-1,-.5) {};
	\draw (a) -- (b) -- (c) -- (d) -- (e) -- (f) -- (a); 	\coordinate (1) at  ($ (a)+ (-.75,.75) $); 	\coordinate (2) at  ($ (b)+ (0,.8) $); 	\coordinate (3) at  ($ (c)+ (.75,.75) $); 	\coordinate (4) at  ($ (d)+ (.75,-.75) $); 	\coordinate (5) at  ($ (e)+ (0,-.8) $); 	\coordinate (6) at  ($ (f) + (-.75,-.75) $); 	\draw (a) to node [label={[xshift=.1cm,yshift=-.1cm]1}] {} (1); 	\draw (b) to node [label={[xshift=-.2cm,yshift=-.2cm]2}] {} (2); 	\draw (c) to node [label={[xshift=-.2cm]3}] {} (3); 	\draw (d) to node [label={[xshift=.1cm,yshift=-.0cm]4}] {} (4); 	\draw (e) to node [label={[xshift=-.2cm,yshift=-.3cm]5}] {} (5); 	\draw (f) to node [label={[xshift=-.1cm]6}] {} (6);
	\draw [line width=.08cm] (a) to (1); 	\draw [line width=.08cm] (b) to (2); 	\draw [line width=.08cm] (c) to (3); 	\draw [line width=.08cm] (d) to (4); 	\draw [line width=.08cm] (b) to (c); 	\draw [line width=.08cm] (a) to (f) to (e) to (d);
	\draw [dashed, line width=.08cm] (1) to[distance=2cm] (3); 	\draw [dashed, line width=.08cm] (2) to[in=50,distance=2.5cm] (4); \end{tikzpicture}
\par\end{center}

\begin{center}
\caption{$2$-factor $F_{i-1}$ after expanding the super-vertices from Fig.
\ref{gadgetex1}. \label{gadgetex2}}

\par\end{center}%
\end{minipage}
\end{figure}

At each expansion, we are able to extend $F_{i}$ into a set of edges
$F_{i-1}$ , which will be a $2$-factor in the expanded graph, $G_{i-1}$.
In order to optimize the performance of $BIGCYCLE$, we must impose
one extra operation in this phase of the algorithm. After each expansion
of a $H_{1}$ that introduces an organic $6$-cycle, $C_{1}$, into
the $2$-factor, we will perform a local search to see if $C_{1}$
and the nearby edges of the newly expanded $2$-factor $F_{i-1}$
contained in the surrounding portion of the graph are in the position
shown in Figure \ref{localimprove1}. If they are, then we update $F_{i-1}$ so that
it covers this region with one fewer cycle, as shown in Figure \ref{localimprove2}.
 In addition to being an effective heuristic to reduce the
number of cycles in our final $2$-factor, this operation allows us
to improve our approximation factor by eliminating an otherwise troubling
corner case. (see Remark \ref{remark1}, Figure \ref{remarkfigure2}, and Lemma \ref{lem-corr}).

\begin{figure}[H]
\begin{minipage}[t]{0.45\columnwidth}%
\begin{center}
\tikzstyle{node}=[circle, draw, fill=black!50,                         inner sep=0pt, minimum width=4pt] \tikzstyle{text_box}=[circle, fill=black!0,                         inner sep=0pt, minimum width=4pt]
\begin{tikzpicture}[thick,scale=.6] 	\node [node] (a) at (-1,.5) {}; 	\node [node] (b) at (0,1.2) {}; 	\node [node] (c) at (1,.5) {}; 	\node [node] (d) at (1,-.5) {}; 	\node [node] (e) at (0,-1.2) {}; 	\node [node] (f) at (-1,-.5) {};
	\draw (a) -- (b) -- (c) -- (d) -- (e) -- (f) -- (a);

	\node [node] (3) at  ($ (c)+ (.75,.75) $) {}; 	\coordinate (4) at  ($ (d)+ (.75,-.75) $); 	\coordinate (5) at  ($ (e)+ (0,-.8) $); 	\node [node] (6) at  ($ (f) + (-.75,-.75) $) {}; 	\draw (c) to node [label={[xshift=-.2cm]}] {} (3); 	\draw (d) to node [label={[xshift=.3cm,yshift=-.2cm]}] {} (4); 	\draw (e) to node [label={[xshift=-.2cm,yshift=-.3cm]}] {} (5); 	\draw (f) to node [label={[xshift=-.1cm]}] {} (6);
	\node [node] (p1) at ($ (5) + (0,-.7) $) {}; 	\node [node] (p2) at ($ (4) + (.6,-.5) $) {};
	\draw [line width=.08cm] (p2) to node [label={[xshift=.2cm]}] {} (3) to (c) to node [label={[xshift=.3cm,yshift=-.25cm]}] {} (d) to node [label={[xshift=.3cm,yshift=-.5cm]}] {} (e) to node [label={[xshift=-.1cm,yshift=-.5cm]}] {} (f) to (6) to node [label={[yshift=-.6cm]}] {} (p1) to node [label={[yshift=-.6cm]}] {} (p2); 	\draw [line width=.08cm, dashed] (a) to[out=170,in=60, distance=3cm] node [label={[xshift=.15cm,yshift=-.6cm]$C_1$}] {} (b); 	\draw [line width=.08cm] (a) to (b);
	\node [node] (e1) at ($ (a) + (-1.5,1.5) $) {}; 	\node [node] (e2) at ($ (b) + (0,2) $) {}; 	\node [node] (1) at ($ (e1) + (-.3,.8) $) {}; 	\node [node] (2) at ($ (e2) + (-.3,.8) $) {}; 	\draw [line width=.08cm] (1) to node [label={[xshift=-.3cm,yshift=-.2cm]}] {} (e1) to (e2) to node [label={[xshift=.2cm,yshift=-.1cm]}] {} (2); 	\draw [line width=.08cm, dashed] (1) to[out=170,in=60, distance=1cm] node [label={[xshift=.0cm]}] {} (2); 
	\draw (6) to (e1) to (e2) to (3);
	\draw (1) to ($ (1) + (-.7,-.2) $); 	\draw (2) to ($ (2) + (.7,.2) $);
	\draw (p1) to ($ (p1) + (0,-.5) $); 	\draw (p2) to ($ (p2) + (.20,-.5) $);
\end{tikzpicture}
\par\end{center}

\caption{The configuration $BIGCYCLE$ searches for after expanding any $H_{1}$
that introduces an organic $6$-cycle. \label{localimprove1}}
\end{minipage}\hfill{}%
\begin{minipage}[t]{0.45\columnwidth}%
\begin{center}
\tikzstyle{node}=[circle, draw, fill=black!50,                         inner sep=0pt, minimum width=4pt] \tikzstyle{text_box}=[circle, fill=black!0,                         inner sep=0pt, minimum width=4pt]
\begin{tikzpicture}[thick,scale=.6] 	\node [node] (a) at (-1,.5) {}; 	\node [node] (b) at (0,1.2) {}; 	\node [node] (c) at (1,.5) {}; 	\node [node] (d) at (1,-.5) {}; 	\node [node] (e) at (0,-1.2) {}; 	\node [node] (f) at (-1,-.5) {};
	\draw (a) -- (b) -- (c) -- (d) -- (e) -- (f) -- (a);

	\node [node] (3) at  ($ (c)+ (.75,.75) $) {}; 	\coordinate (4) at  ($ (d)+ (.75,-.75) $); 	\coordinate (5) at  ($ (e)+ (0,-.8) $); 	\node [node] (6) at  ($ (f) + (-.75,-.75) $) {}; 	\draw (c) to node [label={[xshift=-.2cm]}] {} (3); 	\draw (d) to node [label={[xshift=.3cm,yshift=-.2cm]}] {} (4); 	\draw (e) to node [label={[xshift=-.2cm,yshift=-.3cm]}] {} (5); 	\draw (f) to node [label={[xshift=-.1cm]}] {} (6);
	\node [node] (p1) at ($ (5) + (0,-.7) $) {}; 	\node [node] (p2) at ($ (4) + (.6,-.5) $) {};
	\draw [line width=.08cm] (b) to (c) to node [label={[xshift=.3cm,yshift=-.25cm]}] {} (d) to node [label={[xshift=.3cm,yshift=-.5cm]}] {} (e) to node [label={[xshift=-.1cm,yshift=-.5cm]}] {} (f) to (a); 	\draw [line width=.08cm, dashed] (a) to[out=170,in=60, distance=3cm] node [label={[xshift=.15cm,yshift=-.6cm]$C_1$}] {} (b);
	\node [node] (e1) at ($ (a) + (-1.5,1.5) $) {}; 	\node [node] (e2) at ($ (b) + (0,2) $) {}; 	\node [node] (1) at ($ (e1) + (-.3,.8) $) {}; 	\node [node] (2) at ($ (e2) + (-.3,.8) $) {}; 	\draw [line width=.08cm] (1) to node [label={[xshift=-.3cm,yshift=-.2cm]}] {} (e1); 	\draw [line width=.08cm] (e2) to node [label={[xshift=.2cm,yshift=-.1cm]}] {} (2); 	\draw [line width=.08cm, dashed] (1) to[out=170,in=60, distance=1cm] node [label={[xshift=.0cm]}] {} (2); 
	\draw (e1) to (e2);
	\draw (1) to ($ (1) + (-.7,-.2) $); 	\draw (2) to ($ (2) + (.7,.2) $);
	\draw (p1) to ($ (p1) + (0,-.5) $); 	\draw (p2) to ($ (p2) + (.20,-.5) $);
	\draw [line width=.08cm] (e1) to (6) to node [label={[yshift=-.6cm]}] {} (p1) to node [label={[yshift=-.6cm]}] {} (p2) to (3) to (e2);
\end{tikzpicture}
\par\end{center}

\begin{center}
\caption{The updated $2$-factor $F_{i-1}$ after the configuration in Figure
\ref{localimprove1} is corrected. \label{localimprove2}}

\par\end{center}%
\end{minipage}
\end{figure}

At this point, we can repeat the process of replacing gadgets with
their original configurations and adding edges to the $2$-factor
until we have expanded the graph back to the original input $G_{0}$
and have a $2$-factor, $F_{0}$, in this graph. This process is performed
by the EXPAND subroutine in Algorithm \ref{bigcyclealgo}.

\subsubsection{Obtaining a good final solution by adding edges to $F_{0}$}

We now have a $2$-factor $F_{0}$, which contains at most $k$ cycles. We compress
each cycle into a single node and compute a spanning tree in this compressed graph.
This spanning tree has $k-1$ edges. Then, we add two copies of the edges in this spanning tree
to our $2$-factor $F_{0}$ to obtain
a solution with $n+2(k-1)$ edges. In Section 3 we prove that $F_{0}$
has at most $\frac{n}{7}$ cycles, so this gives us a solution of
at most $\frac{9}{7}n-2$ edges. This process is performed by the
DOUBLETREE subroutine in Listing 1.

\section{Accounting for $6$-Cycles}
\label{accounting}

In the proof of our approximation guarantee, the limitation on producing
a lower approximation factor comes from the possibility that some
proportion of our final $2$-factor's cycles will be of length $6$.
Most operations the algorithm performs while expanding the $2$-factor
from the condensed to the original graph result in
cycles of length $8$ or larger, so in this section we will look at all
operations that create organic $6$-cycles in detail. To account
for $6$-cycles, we show that every organic $6$-cycle can be put
in correspondence with some long cycle of length $8$ or longer. Then,
Lemma \ref{t12} demonstrates that the average cycle length of any long cycle
and its corresponding set of $6$-cycles is sufficiently long to ensure
that our final cycle cover has relatively few cycles, even if some
of them are $6$-cycles.

Figures \ref{badex1} and \ref{badex2}, taking the dashed lines to be paths of lengths
$x$ and $y$, demonstrate how a $(y+7)$-cycle can turn into a
$6$-cycle and a $(y+5)$-cycle after an expansion if $x=3$.

\begin{figure}[ht]
\begin{minipage}[t]{0.45\columnwidth}%
\begin{center}
\tikzstyle{node}=[circle, draw, fill=black!50,                         inner sep=0pt, minimum width=4pt]
\begin{tikzpicture}[thick,scale=.7] 	\node [node] (a) at (0,2) {$SV_{1}$};
	\node [node] (b) at (0,-2) {$SV_{2}$};
	\node [node] (1) at  ($ (a)+ (-.75,-.75) $) {}; 	\node [node] (2) at  ($ (b)+ (-.75,.75) $) {}; 	\node [node] (3) at  ($ (a)+ (.75,-.75) $) {$A_1$};
	\coordinate (4) at  ($ (b)+ (.75,.75) $); 	\coordinate (5) at  ($ (a)+ (0,-.8) $); 	\node [node] (6) at  ($ (b) + (.75,.75) $) {$B_1$};		 	\draw (a) to node [label={[xshift=-.1cm]1}] {} (1); 	\draw (b) to node [label={[xshift=.1cm]2}] {} (2); 	\draw (a) to node [label={[xshift=.2cm]3}] {} (3); 	\draw (b) to node [label={[xshift=-.1cm]6}] {} (6);
	\draw [line width=.08cm] (a) to (1); 	\draw [line width=.08cm] (a) to (3); 	\draw [line width=.08cm] (b) to (2); 	\draw [line width=.08cm] (b) to (6); 	\draw [dashed, line width=.08cm] (1) to node [label={[xshift=-.2cm]x}] {} (2); 	\draw [dashed, line width=.08cm] (3) to node [label={[xshift=.2cm,yshift=.05cm]y}] {} (6);
\coordinate (5) at ($ (a) + (0,1) $); \draw (5) to node [label={[xshift=-.2cm,yshift=-.1cm]5}] {} (a);
\coordinate (4) at ($ (b) + (0,-1) $); \draw (4) to node [label={[xshift=-.2cm,yshift=-.3cm]4}] {} (b);  \end{tikzpicture}
\par\end{center}

\begin{center}
\caption{A cycle in $F_{i}$, a $2$-factor over the condensed graph $G_{i}$.
$SV_{1}$ and $SV_{2}$ are two super-vertices which replaced a standard
hexagon. The dashed lines represent paths of length 3. \label{badex1}}

\par\end{center}%
\end{minipage}\hfill{}%
\begin{minipage}[t]{0.45\columnwidth}%
\begin{center}
\tikzstyle{node}=[circle, draw, fill=black!50,                         inner sep=0pt, minimum width=4pt]
\begin{tikzpicture}[thick,scale=.7]
\clip (-3,-3) rectangle (3.5,3);
\node [node] (a) at (-1,.5) {}; 	\node [node] (b) at (0,1.2) {}; 	\node [node] (c) at (1,.5) {$B_{2}$}; 	\node [node] (d) at (1,-.5) {}; 	\node [node] (e) at (0,-1.2) {}; 	\node [node] (f) at (-1,-.5) {$A_{2}$};
	\draw (a) -- (b) -- (c) -- (d) -- (e) -- (f) -- (a); 	\node [node] (1) at  (-2,1.2) {}; 	\node [node] (2) at  (0,2.2) {}; 	 \node [node] (3) at  ($ (c)+ (.75,.75) $) {$A_{1}$}; 	\coordinate (4) at  ($ (d)+ (.75,-.75) $); 	\coordinate (5) at  ($ (e)+ (0,-.8) $); 	\node [node] (6) at  ($ (f) + (-.75,-.75) $) {$B_{1}$}; 	\draw (a) to node [label={[xshift=.1cm]1}] {} (1); 	 \draw (b) to node [label={[xshift=-.2cm,yshift=-.2cm]2}] {} (2); 	\draw (c) to node [label={[xshift=-.2cm]3}] {} (3); 	\draw (d) to node [label={[xshift=.3cm,yshift=-.2cm]4}] {} (4); 	\draw (e) to node [label={[xshift=-.2cm,yshift=-.3cm]5}] {} (5); 	 \draw (f) to node [label={[xshift=-.1cm]6}] {} (6);
	\draw [line width=.08cm] (1) to (a) to (b) to (2); 	\draw [line width=.08cm] (3) to (c) to node [label={[xshift=-.2cm,yshift=-.25cm]$P$}] {} (d) to node [label={[xshift=-.1cm,yshift=-.0cm]$P$}] {} (e) to node [label={[xshift=.1cm,yshift=-.1cm]$P$}] {} (f) to (6);
	\draw [dashed, line width=.08cm] (3) to[out=360,in=270,distance=4cm] node [label={[xshift=.2cm,yshift=-.5cm]y}] {} (6); 	 \draw [dashed, line width=.08cm] (1) to[out=90,distance=.8cm] node [label=x] {} (2); \end{tikzpicture}
\par\end{center}

\begin{center}
\caption{The cycle from Figure \ref{badex1}, after expanding $SV_{1}$ and $SV_{2}$ \label{badex2}}

\par\end{center}%
\end{minipage}
\end{figure}

In Sections \ref{expandh1}-\ref{expandsquare} we will carefully analyze this and several other cases that form the
bottleneck in our analysis, which occur when expanding our graph back
to its original state. In Section \ref{analyzingworstcase} we will account for organic $6$-cycles by
creating a correspondence from every $6$-cycle in the final $2$-factor
$F_{0}$ to some larger cycle in $F_{0}$. This way, if we can show
that every large cycle of length $l\geq8$ in $F_{0}$ is affiliated
with at most $f(l)$ $6$-cycles, then the average cycle length is
at least $\min_{l\geq8}\frac{6\times f(l)+l}{f(l)+1}$. Once we have
placed a lower-bound on the average cycle length in this manner (Lemma
\ref{t12}), we can easily determine our approximation factor (Theorems \ref{t1} and
\ref{t13}).

We now state some definitions about ``protected edges'',
organic paths which help us formalize the correspondence between $6$-cycles and larger cycles in our $2$-factor. We use this term because protected edges cannot be separated
from each other in the $2$-factor during subsequent expansion operations.
\begin{definition} \label{t6}
Protected edges are edges contained in maximal paths that are organic,
included in a cycle of length at least 8 in a $2$-factor $F_{i}$, and part of an organic subgraph
that was previously contracted and then expanded. Protected edges are identified during the ``expanding''
phase of the algorithm (described in Section 2.3), when the expansion
operation introduces an organic $6$-cycle.
The edges labeled ``$P$'' in Figure \ref{badex2} are an example of a set
of protected edges.\end{definition}

\begin{definition} \label{t8}
For a given $6$-cycle, $C$, in the final $2$-factor, $F_{0}$,
consider the value $i$ such that $C$ is a cycle of $F_{i}$ but
not of $F_{i+1}$. Then, it was the expansion operation from $G_{i+1}$
to $G_{i}$ which ``finalized'' this cycle. Then, the protected
edges identified during this ``finalizing'' operation are defined
to be the protected edges corresponding to $C$.
\end{definition}

\begin{definition} \label{t9}
For any cycle, $C_{i}$, of length at least $8$, in the $2$-factor
$F$ we will define a set of $6$-cycles, $S_{C_{i}}$ which correspond
to $C_{i}$. For each $6$-cycle, $C$, in $F$, we say $C$ is an
element of $S_{C_{i}}$ if $C$'s protected edges are in $C_{i}$
or if $C$'s protected edges are in another $6$-cycle $C'$ whose
protected edges are in $C_{i}$.
\end{definition}

\subsection{Expanding $H_1$ gadgets\label{expandh1}}

Appendix B in Section \ref{apdxh1} documents, in detail for all cases, the process
of winding the $2$-factor through a $H_{1}$ after the algorithm
has expanded a $H'_{1}$, the $H_{1}$'s replacement gadget. An examination
of this appendix confirms that the example shown in Figures \ref{badex1} and
\ref{badex2} is the only type of operation involving the $H'_{1}$ gadget that
can introduce an organic $6$-cycle into the $2$-factor during an
expansion. There are two other expansions with a similar outcome which
involve the $H'_{2}$ and $H'_{3}$ gadgets, respectively. The analysis
to account for these expansions is very similar to the analysis of
this case. Furthermore, these ``bad'' $H_{2}$ and $H_{3}$ expansions
introduce larger sets of protected edges than the $H_{1}$ expansion
in Figures \ref{badex1} and \ref{badex2}, so this $H_{1}$ expansion is what limits the
approximation factor we obtain in our analysis.
\begin{remark}  \label{remark1}
If an expansion operation of the type shown in Figures \ref{badex1} and \ref{badex2} were
to occur, then it is not possible for the nodes corresponding to $A_{1}$
and $B_{1}$ in these figures to be the super-vertices of a $H_{1}$'s
gadget whose expansion introduces an organic $6$-cycle into the $2$-factor.
Section 6.6 examines the expansions of $H_1$s  where the gadget is covered by a single cycle.
There are only two expansions in this section that introduce organic $6$-cycles, and they are isomorphic to each other.
If $A_{1}$ and $B_{1}$ were the super-vertices of a $H_{1}$'s
gadget whose expansion introduces an organic $6$-cycle into the $2$-factor, we must consider two cases. The path that goes through nodes $A_2$ and $B_2$ could end up either in the $6$-cycle or the longer cycle after the expansion. In the first case, the graph
$G_{i-1}$ would have contained a $H_{2}$ and would have been compressed
differently at the time when $A_{1}$ and $B_{1}$ would have been
created during a compression (see Figure \ref{remarkfigure1}). Then, at this stage,
the algorithm would have have replaced a $H_{i}$ for some $i\geq2$
at this time, not a $H_{1}$, which would have been necessary to create
$A_{1}$ and $B_{1}$ as specified. In the second case, upon expanding,
the graph $G_{i-1}$ and $2$-factor $F_{i-1}$ would be in the configuration
shown in Figure \ref{localimprove1}, prompting a local improvement so that this expansion
no longer introduces an organic $6$-cycle. Figure \ref{remarkfigure1} shows the first case
where $A_2$ and $B_2$ are in the $6$-cycle after the expansion, and
Figure \ref{remarkfigure2} shows the second case where $A_2$ and $B_2$ are in the longer cycle after the expansion.
\end{remark}

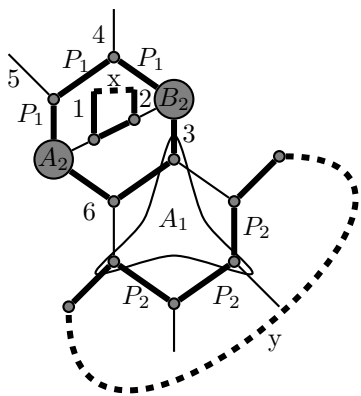
\begin{figure}[H]
\begin{centering}
\tikzstyle{node}=[circle, draw, fill=black!50,                         inner sep=0pt, minimum width=4pt] \tikzstyle{text_box}=[circle, fill=black!0,                         inner sep=0pt, minimum width=4pt]
\begin{tikzpicture}[thick,scale=.8]
\clip (-3,-3) rectangle (3.5,4);
\node [node] (a) at (-1,.5) {}; 	\node [node] (b) at (0,1.2) {}; 	\node [node] (c) at (1,.5) {}; 	\node [node] (d) at (1,-.5) {}; 	\node [node] (e) at (0,-1.2) {}; 	\node [node] (f) at (-1,-.5) {};
	\draw (a) -- (b) -- (c) -- (d) -- (e) -- (f) -- (a); 	\node [node] (1) at  (-2,1.2) {$A_{2}$}; 	\node [node] (2) at  (0,2.2) {$B_{2}$}; 	\node [node] (d1) at ($ (2) + (-2,0) $) {}; 	\node [node] (d2) at ($ (a) + (0,2.4) $) {}; 	\node [node] (n1) at ($ (1)+ (.67,.33) $) {}; 	\node [node] (n2) at ($ (n1)+ (.67,.33) $) {}; 	\coordinate (o1) at ($ (n1) + (0,.8) $); 	\draw (n1) to node [label={[xshift=-.2cm,yshift=-.2cm]1}] {} (o1); 	\coordinate (o2) at ($ (n2) + (0,.5) $); 	\draw (n2) to node [label={[xshift=.15cm,yshift=-.2cm]2}] {} (o2); 	\draw [line width=.08cm] (o1) to (n1) to (n2) to (o2); 	\draw [dashed, line width=.08cm] (o1) to node [label={[yshift=-.1cm]x}] {} (o2); 	\draw (1) to (n1) to (n2) to (2); 	\coordinate (e1) at  ($ (d1)+ (-.75,.75) $); 	\coordinate (e2) at  ($ (d2)+ (0,.8) $); 	\draw (d1) to node [label={[xshift=-.2cm,yshift=-.3cm]5}] {} (e1); 	\draw (d2) to node [label={[xshift=-.2cm,yshift=-.3cm]4}] {} (e2);

	\node [node] (3) at  ($ (c)+ (.75,.75) $) {}; 	\coordinate (4) at  ($ (d)+ (.75,-.75) $); 	\coordinate (5) at  ($ (e)+ (0,-.8) $); 	\node [node] (6) at  ($ (f) + (-.75,-.75) $) {}; 	\draw (a) to node [label={[xshift=0cm,yshift=-.6cm]6}] {} (1); 	\draw (b) to node [label={[xshift=.2cm,yshift=-.2cm]3}] {} (2); 	\draw (c) to node [label={[xshift=-.2cm]}] {} (3); 	\draw (d) to node [label={[xshift=.3cm,yshift=-.2cm]}] {} (4); 	\draw (e) to node [label={[xshift=-.2cm,yshift=-.3cm]}] {} (5); 	\draw (f) to node [label={[xshift=-.1cm]}] {} (6);
	\draw [line width=.08cm] (1) to (a) to (b) to (2); 	\draw [line width=.08cm] (1) to node [label={[xshift=-.3cm,yshift=-.2cm]$P_1$}] {} (d1) to node [label={[xshift=-.1cm,yshift=-.07cm]$P_1$}] {} (d2) to node [label={[xshift=.17cm,yshift=-.1cm]$P_1$}] {} (2);
	\draw [line width=.08cm] (3) to (c) to node [label={[xshift=.3cm,yshift=-.25cm]$P_2$}] {} (d) to node [label={[xshift=.3cm,yshift=-.5cm]$P_2$}] {} (e) to node [label={[xshift=-.1cm,yshift=-.5cm]$P_2$}] {} (f) to (6);
	\draw [dashed, line width=.08cm] (3) to[out=360,in=270,distance=4cm] node [label={[xshift=.1cm,yshift=-.5cm]y}] {} (6);
	\draw plot [smooth cycle] coordinates { ($ (b) +(0,.4) $) (.5,.2) ($ (d) + (.3,-.2) $) (0,-.4) ($ (f) + (-.3,-.2) $) (-.5,.2) }; 	\node [text_box] (txt) at (0,.2) {$A_1$}; \end{tikzpicture}
\par\end{centering}

\caption{The cycles from Figure \ref{badex2}, after expanding $A_{1}$ and $B_{1}$,
if these nodes had been a $H_{1}$'s gadget whose expansion introduced
a $6$-cycle, where the path through $A_2$ and $B_2$ is in a $6$-cycle after the expansion. We can see in this figure that nodes $A_{2}$ and
$B_{2}$ are part of an organic $H_{2}$. Then, $A_{1}$ and $B_{1}$
cannot be a $H_{1}$'s gadget in this configuration, otherwise the
$BIGCYCLE$ algorithm would have performed different operations, compressing
this $H_{2}$ or some other $H_{i}$, for some $i\geq2$ instead of
the $H_{1}$ that $A_{1}$ and $B_{1}$ replaced. \label{remarkfigure1}}
\end{figure}

\begin{figure}[H]
\begin{centering}
\tikzstyle{node}=[circle, draw, fill=black!50,                         inner sep=0pt, minimum width=4pt] \tikzstyle{text_box}=[circle, fill=black!0,                         inner sep=0pt, minimum width=4pt]
\begin{tikzpicture}[thick,scale=.8] 	\node [node] (a) at (-1,.5) {}; 	\node [node] (b) at (0,1.2) {}; 	\node [node] (c) at (1,.5) {}; 	\node [node] (d) at (1,-.5) {}; 	\node [node] (e) at (0,-1.2) {}; 	\node [node] (f) at (-1,-.5) {};
	\draw (a) -- (b) -- (c) -- (d) -- (e) -- (f) -- (a);

	\node [node] (3) at  ($ (c)+ (.75,.75) $) {$B_2$}; 	\coordinate (4) at  ($ (d)+ (.75,-.75) $); 	\coordinate (5) at  ($ (e)+ (0,-.8) $); 	\node [node] (6) at  ($ (f) + (-.75,-.75) $) {$A_2$}; 	\draw (c) to node [label={[xshift=-.2cm]3}] {} (3); 	\draw (d) to node [label={[xshift=.3cm,yshift=-.2cm]}] {} (4); 	\draw (e) to node [label={[xshift=-.2cm,yshift=-.3cm]}] {} (5); 	\draw (f) to node [label={[xshift=-.1cm]6}] {} (6);
	\draw plot [smooth cycle] coordinates { ($ (b) +(0,.4) $) (.5,.2) ($ (d) + (.3,-.2) $) (0,-.4) ($ (f) + (-.3,-.2) $) (-.5,.2) }; 	\node [text_box] (txt) at (0,.2) {$B_1$};
	\node [node] (p1) at ($ (5) + (0,-.7) $) {}; 	\node [node] (p2) at ($ (4) + (.6,-.5) $) {};
	\draw [line width=.08cm] (p2) to node [label={[xshift=.2cm]$P_1$}] {} (3) to (c) to node [label={[xshift=.3cm,yshift=-.25cm]$P_2$}] {} (d) to node [label={[xshift=.3cm,yshift=-.5cm]$P_2$}] {} (e) to node [label={[xshift=-.1cm,yshift=-.5cm]$P_2$}] {} (f) to (6) to node [label={[yshift=-.6cm]$P_1$}] {} (p1) to node [label={[yshift=-.6cm]$P_1$}] {} (p2); 	\draw [line width=.08cm, dashed] (a) to[out=170,in=60, distance=3cm] node [label={[xshift=.15cm,yshift=-.6cm]y}] {} (b); 	\draw [line width=.08cm] (a) to (b);
	\node [node] (e1) at ($ (a) + (-1.5,1.5) $) {}; 	\node [node] (e2) at ($ (b) + (0,2) $) {}; 	\node [node] (1) at ($ (e1) + (-.3,.8) $) {}; 	\node [node] (2) at ($ (e2) + (-.3,.8) $) {}; 	\draw [line width=.08cm] (1) to node [label={[xshift=-.3cm,yshift=-.2cm]1}] {} (e1) to (e2) to node [label={[xshift=.2cm,yshift=-.1cm]2}] {} (2); 	\draw [line width=.08cm, dashed] (1) to[out=170,in=60, distance=1cm] node [label={[xshift=.0cm]x}] {} (2); 
	\draw (6) to (e1) to (e2) to (3); 	\draw (p1) to node [label={[xshift=-.2cm,yshift=-.2cm]5}] {} ($ (p1) + (0,-.5) $);
\draw (p2) to node [label={[xshift=-.2cm,yshift=-.4cm]4}] {} ($ (p2) + (.2,-.4) $); 	\draw (1) to ($ (1) + (-.5,-.2) $); 
\draw (2) to ($ (2) + (.5,.2) $);
\end{tikzpicture}
\par\end{centering}

\caption{The cycles from Figure \ref{badex2}, after expanding $A_{1}$ and $B_{1}$,
if these nodes had been a $H_{1}$'s gadget whose expansion introduced
a $6$-cycle, where the path through $A_2$ and $B_2$ is in the longer cycle after the expansion. We can see in this figure the cycle containing path
$y$ corresponds to $6$-cycle $C_{1}$ in the configuration shown
in Figure \ref{localimprove1}. Then, the algorithm would have updated the $2$-factor
to the configuration shown in Figure \ref{localimprove2}, making path $y$ part of
a $10$-cycle. Then, because of this local correction, this expansion
step would not have introduced an organic $6$-cycle into the $2$-factor. \label{remarkfigure2}}
\end{figure}
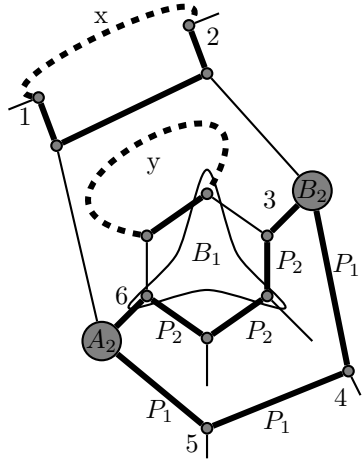

\subsection{Expanding $H_2$ gadgets\label{expandh2}}

Appendix C in Section \ref{apdxh2} documents, in detail for all cases, the process
of winding the $2$-factor through a $H_{2}$ after the algorithm
has expanded a $H_{2}'$, the $H_{2}$'s replacement gadget. An examination
of this appendix confirms that the example shown in Figures \ref{h2badex1} and
\ref{h2badex2} is the only type of operation involving the $H_{2}$ configuration
that can introduce an organic $6$-cycle into the $2$-factor during
an expansion.

\begin{figure}[H]
\begin{minipage}[t]{0.45\columnwidth}%
\begin{center}
\tikzstyle{node}=[circle, draw, fill=black!50,                         inner sep=0pt, minimum width=4pt]
\begin{tikzpicture}[thick,scale=1] 	\node [node] (a) at (0,2) {$SV_{1}$};
	\node [node] (b) at (0,-2) {$SV_{2}$};
	\node [node] (1) at  ($ (a)+ (-.75,-.75) $) {}; 	\node [node] (4) at  ($ (b)+ (-.75,.75) $) {}; 	\node [node] (3) at  ($ (a)+ (.75,-.75) $) {$A_1$};
	\coordinate (5) at  ($ (a)+ (0,-.8) $); 	\node [node] (6) at  ($ (b) + (.75,.75) $) {$B_1$};		 	\draw (a) to node [label={[xshift=0cm]1}] {} (1); 	\draw (b) to node [label={[xshift=0cm]4}] {} (4); 	\draw (a) to node [label={[xshift=0cm]3}] {} (3); 	\draw (b) to node [label={[xshift=0cm]6}] {} (6);
	\draw [line width=.08cm] (a) to (1); 	\draw [line width=.08cm] (a) to (3); 	\draw [line width=.08cm] (b) to (4); 	\draw [line width=.08cm] (b) to (6); 	\draw [dashed, line width=.08cm] (1) to node [label={[xshift=-.2cm]x}] {} (4); 	\draw [dashed, line width=.08cm] (3) to node [label={[xshift=.2cm]y}] {} (6); 
\draw (a) to node [label={[xshift=-.2cm]5}] {} ($ (a) + (0,.7) $);
\draw (b) to node [label={[xshift=-.2cm,yshift=-.2cm]2}] {} ($ (b) + (0,-.7) $);
\end{tikzpicture}
\par\end{center}

\begin{center}
\caption{A cycle in $F_{i}$, a $2$-factor over the condensed graph $G_{i}$.
$SV_{1}$ and $SV_{2}$ are two super-vertices which replaced a $H_{2}$. \label{h2badex1}}

\par\end{center}%
\end{minipage}\hfill{}%
\begin{minipage}[t]{0.45\columnwidth}%
\begin{center}
\tikzstyle{node}=[circle, draw, fill=black!50,                         inner sep=0pt, minimum width=4pt]
\begin{tikzpicture}[thick,scale=1] 	\node [node] (a) at (-1,.5) {}; 	\node [node] (b) at (0,1.2) {}; 	\node [node] (c) at (1,.5) {$B_{2}$}; 	\node [node] (d) at (1,-.5) {$A_2$}; 	\node [node] (e) at (0,-1.2) {}; 	\node [node] (f) at (-1,-.5) {}; 	\node [node] (g) at (0,.5) {}; 	\node [node] (h) at (0,-.5) {};
	\draw (a) -- (b) -- (c) -- (d) -- (e) -- (f) -- (a); 	\draw (b) -- (g) -- (h) -- (e); 	\node [node] (1) at  ($ (a)+ (-.75,.75) $) {}; 	\coordinate (2) at  ($ (g)+ (1,1) $); 	\node [node] (3) at  ($ (c)+ (.75,.75) $) {$A_{1}$}; 	\node [node] (6) at  ($ (d)+ (.75,-.75) $) {$B_1$}; 	\coordinate (5) at  ($ (h)+ (1,-1) $); 	\node [node] (4) at  ($ (f) + (-.75,-.75) $) {}; 	\draw (a) to node [label={[xshift=.1cm]1}] {} (1); 	\draw (g) to node [label={[xshift=-.2cm]2}] {} (2); 	\draw (c) to node [label={[xshift=-.2cm]3}] {} (3); 	\draw (d) to node [label={[xshift=.3cm,yshift=-.2cm]6}] {} (6); 	\draw (h) to node [label={[xshift=-.1cm,yshift=-.5cm]5}] {} (5); 	\draw (f) to node [label={[xshift=-.1cm]4}] {} (4);
	\draw [line width=.08cm] (1) to (a) to (f) to (4); 	\draw [line width=.08cm] (3) to (c) to node [label={[xshift=.0cm,yshift=-.6cm]$P$}] {} (b) to node [label={[xshift=-.2cm,yshift=-.3cm]$P$}] {} (g) to node [label={[xshift=-.2cm,yshift=-.15cm]$P$}] {} (h) to node [label={[xshift=-.2cm,yshift=-.15cm]$P$}] {} (e) to node [label={[xshift=.0cm,yshift=.0cm]$P$}] {} (d) to (6);
	\draw [dashed, line width=.08cm] (1) to[out=200,distance=1.5cm] node [label={[xshift=-.3cm,yshift=-.2cm]x}] {} (4); 	\draw [dashed, line width=.08cm] (3) to[out=-20, in=0,distance=1.5cm] node [label={[xshift=-.3cm,yshift=-.2cm]y}] {} (6); \end{tikzpicture}
\par\end{center}

\begin{center}
\caption{The cycle from Figure \ref{h2badex1}, after expanding $SV_{1}$ and $SV_{2}$ \label{h2badex2}}

\par\end{center}%
\end{minipage}
\end{figure}

The difference in this case is that now five edges are ``protected'',
rather than three for the $H_{1}$, when the graph is expanded. To
see this, consider Figure \ref{h2badex2}, where we can see that the nodes on the
dark path from $B_{2}$ to $A_{2}$ cannot be super-vertices. These nodes cannot be super-vertices, as they are part of an $H_2$ that was compressed, which could only happen if these nodes are organic.  
For the nodes along this path to become part of a $6$-cycle from expanding
$H_{1}$ or $H_{2}$ gadgets, they must be in between two corresponding
super-vertices located exactly five edges apart on one of the cycles
of the $2$-factor. The path between $A_{2}$ and $B_{2}$ in Figure
\ref{h2badex2} is five edges, so any sufficiently long path that contains these
nodes has more than five edges, meaning that if these edges were to
become part of a different cycle through expanding $H_{1}$ or $H_{2}$
gadgets, it would have to be a cycle of length at least $8$. These
edges are protected (Definition \ref{t6}) like those discussed in Section
\ref{expandh1}
\begin{prop} \label{h2prop}
A cycle, $C$, of length $x$ has at most $\frac{x}{5}$ affiliated
$6$-cycles that were formed during the expansion of a $H_{2}$ gadget.\end{prop}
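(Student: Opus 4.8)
The plan is to establish Proposition~\ref{h2prop} by a counting argument over the protected edges produced by ``bad'' $H_2$ expansions. As established in the discussion preceding the statement (and by the full case analysis in Appendix~C), the \emph{only} $H_2$ expansion that introduces an organic $6$-cycle is the one depicted in Figures~\ref{h2badex1} and~\ref{h2badex2}, and each such expansion ``finalizes'' a $6$-cycle while simultaneously creating a set $P$ of exactly five protected edges lying on a longer cycle. First I would fix a cycle $C$ of length $x$ in the final $2$-factor $F_0$ and consider the set $\mathcal{S}$ of all $6$-cycles affiliated with $C$ that were formed during an $H_2$ expansion; by Definition~\ref{t9}, for each $C' \in \mathcal{S}$, the five protected edges corresponding to $C'$ lie either directly in $C$ or inside another $6$-cycle whose protected edges chain back into $C$. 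Since each bad $H_2$ expansion is the ``finalizing'' operation for its $6$-cycle (Definition~\ref{t8}), the protected-edge sets corresponding to distinct $6$-cycles in $\mathcal{S}$ are produced at distinct expansion steps.

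The key step is to argue these five-edge protected paths are pairwise edge-disjoint and all embed into $C$. For disjointness: protected edges, once created, form maximal organic paths that cannot be broken apart by later expansions (the remark after Definition~\ref{t6}), so two protected-edge sets arising from two different bad $H_2$ expansions cannot share an edge --- if they did, a later expansion would have had to split an already-protected path, contradicting the ``protected'' property. For the embedding into $C$: I would induct on the chain length in Definition~\ref{t9}. If $C'$'s protected edges lie directly in $C$, there is nothing to do. Otherwise they lie in a $6$-cycle $C''$ whose protected edges chain into $C$; but a $6$-cycle has only six edges, and the five protected edges of $C'$ occupy five of them, so the single remaining edge of $C''$ is where $C''$ attaches to the rest of the structure --- tracing the chain, each intermediate $6$-cycle contributes its own disjoint five-edge path, and ultimately all of them, together with $C'$'s five edges, must be accommodated as edge-disjoint subpaths of $C$ (this is exactly the mechanism by which $S_{C_i}$ is defined to collect precisely those $6$-cycles whose protected edges ultimately route into $C_i$).

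Once edge-disjointness and containment in $C$ are in hand, the bound is immediate: if $|\mathcal{S}| = m$, then $C$ contains $m$ pairwise edge-disjoint paths each of exactly five edges, hence $5m \le x$, i.e. $m \le x/5$, which is the claim.

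I expect the main obstacle to be rigorously justifying the edge-disjointness and the ``chaining into $C$'' step --- in particular, confirming that a $6$-cycle $C''$ carrying a five-edge protected path really does have its one free edge pointing ``toward'' $C$, and that no two chains collide or double-count an edge of $C$. This requires carefully invoking the order in which the algorithm compresses gadgets (the remark explaining why the relevant endpoints could not themselves have been $H_1$ super-vertices, forcing an $H_i$ with $i \ge 2$ instead) together with the maximality and permanence of protected paths; the geometry in Figure~\ref{h2badex2}, showing the five $P$-labeled edges spanning from $B_2$ to $A_2$ with organic interior nodes, is the concrete picture that drives this. The arithmetic itself is trivial; all the content is in the structural bookkeeping, which parallels the $H_1$ analysis of Section~\ref{expandh1} but with five protected edges in place of three.
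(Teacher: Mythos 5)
Your argument is correct and is essentially the paper's proof: both rest on the observation that each ``bad'' $H_2$ expansion produces exactly five protected edges, that distinct expansions yield pairwise disjoint protected-edge sets, and that packing these five-edge sets into the $x$ edges of $C$ forces the count of such $6$-cycles to be at most $x/5$ (the paper phrases this as a contradiction from assuming more than $x/5$). Your extra care about the chaining case of Definition~\ref{t9} is a subtlety the paper's proof passes over silently here; it is resolved only later (Proposition~\ref{oldlemma3} shows a $6$-cycle arising from an $H_2$ expansion has its protected edges in a cycle of length at least $8$, so the chain never routes through another $6$-cycle), but this does not change the substance of the argument.
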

\begin{proof}
Suppose for the sake of contradiction that this cycle has $y>\frac{x}{5}$
affiliated $6$-cycles that were formed during the expansion of a
$H_{2}$ gadget. Each of these $6$-cycles was formed during a distinct
expansion operation (no expansion operation introduces more than one
organic $6$-cycle), so the protected edges for each of these cycles
are disjoint. Each of these $6$-cycles has $5$ protected edges,
so $5y>x$ of the edges in $C$ are protected edges affliated with
$6$-cycles that were formed during the expansion of a $H_{2}$ gadget.
This is a contradiction because $C$ has fewer than $5y$ total edges.
\end{proof}
This ``bad'' expansion, then, is very similar to the ``bad'' expansion
of $H_{1}$ gadgets. In fact, the main difference is that expanding
these $H_{2}$ gadgets is less costly because such an expansion protects
more edges than the corresponding $H_{1}$ gadget expansion. Consequently,
in our worst-case analysis, we will tend to discuss the $H_{1}$ gadget
expansion as this will be sufficient to analyze worst-case performance
of the algorithm.

\subsection{Expanding $H_{3}$ gadgets\label{expandh3}}

Appendix D in Section \ref{apdxh3} documents, in detail for all cases, the process
of winding the $2$-factor through a $H_{3}$ after the algorithm
has expanded a $H'_{3}$, the $H_{3}$'s replacement gadget. An examination
of this appendix confirms that the example shown in Figures \ref{h3badex1} and
\ref{h3badex2} is the only type of operation involving the $H_{3}$ configuration
that can introduce an organic $6$-cycle into the $2$-factor during
an expansion.\\

The instances where expanding a $H'_{3}$ can split off a $6$-cycle
are substantially different than those we examined for $H'_{1}$ and
$H'_{2}$s. This is because $H_{3}$s are replaced by super-edges
rather than super-vertices. Figures \ref{h3badex1} and
\ref{h3badex2} demonstrate this operation:

\begin{figure}[H]
\begin{minipage}[t]{0.45\columnwidth}%
\begin{center}
\tikzstyle{node}=[circle, draw, fill=black!40,                         inner sep=0pt, minimum width=4pt]
\begin{tikzpicture}[thick,scale=1]
	\node [node] (b1) at (-1.5,1) {$B_1$}; 	\node [node] (b2) at (1.5,1) {$B_2$}; 	\node [node] (s) at (0,-1) {$S$};
	\draw [line width=.08cm] (b1) to (s) to (b2); 	\draw [dashed, line width=.08cm] (b2) to[distance=.8cm]  node [label={[xshift=-.0cm]x}] {} (b1);
\end{tikzpicture}
\par\end{center}

\begin{center}
\caption{A cycle in $F_{i}$, a $2$-factor over the condensed graph $G_{i}$.
The edges $(S,B_{1})$ and $(S,B_{2})$ are two of the three super-edges
that replaced a $H_{3}$. The dashed line is a path of length $x$,
where $x\geq4$ and even.\label{h3badex1}}

\par\end{center}%
\end{minipage}\hfill{}%
\begin{minipage}[t]{0.45\columnwidth}%
\begin{center}
\tikzstyle{node}=[circle, draw, fill=black!50,                         inner sep=0pt, minimum width=4pt]
\begin{tikzpicture}[thick,scale=1] 	\node [node] (a) at (-1,.5) {}; 	\node [node] (b) at (0,1.2) {}; 	\node [node] (c) at (1,.5) {}; 	\node [node] (d) at (1,-.5) {}; 	\node [node] (e) at (0,-1.2) {}; 	\node [node] (f) at (-1,-.5) {}; 	\node [node] (b2) at (0,.5) {}; 	\node [node] (a2) at (0,-.5) {}; 	\node [node] (a1) at (-.5,1.7) {}; 	\node [node] (b1) at (.5, 1.7) {}; 	\node [node] (s1) at (2.5,0) {$S$}; 	\node [node] (t1) at (-2.2,.5) {$B_1$}; 	\node [node] (t2) at (-2.2,-.5) {$B_2$};
	\draw (a) -- (b) -- (c) -- (d) -- (e) -- (f) -- (a); 	\draw (b) -- (b2) -- (a2) -- (e); 	\draw (a) to (a1) to (b1) to (s1); 	\draw (b1) to[in=0,out=300] (d); 	\draw (s1) to[in=-30,out=220] (b2); 	\draw (a2) to[out=220, in=300] (t2); 	\draw (t1) to (a1);
	\draw [line width = .08cm] (t1) to (a1) to  node [label={[xshift=.3cm,yshift=-.15cm]$P$}] {} (a) to  node [label={[xshift=-.2cm,yshift=-.1cm]$P$}] {} (f) to  node [label={[xshift=.0cm,yshift=.0cm]$P$}] {} (e) to  node [label={[xshift=.25cm,yshift=-.18cm]$P$}] {} (a2); 	\draw [line width = .08cm] (a2) to[out=220, in=300] (t2); 	\draw [dashed, line width = .08cm] (t2) to[in=180,out=180] node[label={[xshift=-.3cm,yshift=-.2cm]x}] {} (t1);
	\draw [line width = .08cm] (s1) to[in=-30,out=220] (b2); 	\draw [line width = .08cm] (b2) to (b) to (c) to (d); 	\draw [line width = .08cm] (b1) to (s1); 	\draw [line width = .08cm] (b1) to[in=0,out=300] (d); \end{tikzpicture}
\par\end{center}

\begin{center}
\caption{The cycle from Figure \ref{h3badex1}, after expanding the two super-edges\label{h3badex2}}

\par\end{center}%
\end{minipage}
\end{figure}

In these figures, we see that this expansion requires two super-edges
to be directly neighboring each other in a cycle of the $2$-factor.
When this expansion is performed, the two super-edges are split off
and form a $6$-cycle while the larger cycle they came from increases
in length by four. These four new edges are protected (Definition
3). 
\begin{prop} \label{h3prop}
A cycle, $C$, of length $x$ has at most $\frac{x}{4}$ affiliated
$6$-cycles that were formed during the expansion of a $H_{3}$ gadget.\end{prop}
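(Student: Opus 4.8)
The plan is to mirror the proof of Proposition~\ref{h2prop} almost verbatim, adjusting only the count of protected edges. The key structural fact established in Section~\ref{expandh3} is that each time the expansion of an $H_3$ gadget splits off an organic $6$-cycle, exactly four edges of the larger cycle it came from become protected edges (these are the four new edges the larger cycle gains, which form an organic path that cannot later be broken apart by subsequent expansions). So I would argue by contradiction: suppose $C$ has $y > \frac{x}{4}$ affiliated $6$-cycles that were formed during the expansion of an $H_3$ gadget.

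First I would invoke the fact, stated earlier, that no single expansion operation introduces more than one organic $6$-cycle; hence each of these $y$ $6$-cycles is finalized by a distinct expansion step, and by Definition~\ref{t8} each has its own set of protected edges. Next I would argue that these protected-edge sets are pairwise disjoint — protected edges are identified maximal organic paths, and because they come from distinct expansion operations they cannot overlap (this is exactly the disjointness claim used in the $H_2$ case, and it is the point where I'd be most careful, since ``affiliated'' via Definition~\ref{t9} chains through intermediate $6$-cycles; but the protected edges that land in $C$ itself, as opposed to in some intermediate $C'$, are what matter, and for a cycle of length $x$ all the protected edges lying on $C$ must be distinct edges of $C$). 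Each such $6$-cycle contributes $4$ protected edges lying on $C$, so $C$ would contain at least $4y > x$ protected edges. Since $C$ has only $x$ edges total, this is a contradiction, proving $y \le \frac{x}{4}$.

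The main obstacle I anticipate is making the disjointness argument fully rigorous given the recursive definition of $S_{C_i}$ in Definition~\ref{t9}: an affiliated $6$-cycle's protected edges might live in another $6$-cycle $C'$ rather than directly in $C$, so one needs to check that the bookkeeping still charges $4$ distinct edges of $C$ to each affiliated $6$-cycle. I expect this is handled by the same reasoning as in Proposition~\ref{h2prop} — namely, that the proposition as stated is really about the $6$-cycles whose protected edges lie directly on $C$, and the transitive closure is accounted for separately in Section~\ref{analyzingworstcase} via Lemma~\ref{t12} — so the proof here can remain as short as the $H_2$ version. If instead the intended reading requires handling the chained case directly, I would induct on the length of the affiliation chain, using that each intermediate $6$-cycle is itself a short cycle whose protected edges (four of them, from an $H_3$ expansion) partition among the cycles it feeds into, preserving the $4$-edges-per-$6$-cycle accounting against the total edge budget of $C$.
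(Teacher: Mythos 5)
Your proof matches the paper's almost exactly: the paper argues by contradiction that $y>\frac{x}{4}$ affiliated $6$-cycles would each arise from a distinct expansion (since no expansion introduces more than one organic $6$-cycle), have pairwise disjoint sets of $4$ protected edges lying in $C$, and hence force $4y>x$ edges into a cycle of length $x$. The subtlety you flag about chained affiliation via Definition~\ref{t9} is likewise left implicit in the paper's proof and deferred to the accounting in Section~\ref{analyzingworstcase}, so your treatment is faithful to the intended argument.
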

\begin{proof}
Suppose for the sake of contradiction that this cycle has $y>\frac{x}{4}$
affiliated $6$-cycles that were formed during the expansion of a
$H_{3}$ gadget. Each of these $6$-cycles was formed during a distinct
expansion operation (no expansion operation introduces more than one
organic $6$-cycle), so the protected edges for each of these cycles
are disjoint. Each of these $6$-cycles has $4$ protected edges,
so $4y>x$ of the edges in $C$ are protected edges affliated with
$6$-cycles that were formed during the expansion of a $H_{3}$ gadget.
This is a contradiction because $C$ has fewer than $4y$ total edges.
\end{proof}
\subsection{Expanding gadgets that replaced squares\label{expandsquare}}

\begin{figure}[ht]
\begin{minipage}[t]{0.45\columnwidth}%
\begin{center}
\tikzstyle{node}=[circle, draw, fill=black!50,                         inner sep=0pt, minimum width=4pt]
\begin{tikzpicture}[thick,scale=0.25] 	\node [node] (Ax) at (-8,8.5) {$A_1$}; 	\node [node] (Bx) at (-8,-8.5) {$B_1$}; 	\draw (Ax) to (Bx);
\node [node] (a) at (-2,2) {}; 	\node [node] (b) at (2,2) {}; 	\node [node] (c) at (2,-2) {}; 	\node [node] (d) at (-2,-2) {}; 	 \node [node] (e) at (0,5.5) {}; 	\node [node] (f) at (0,-5.5) {}; 	\draw (a) -- (b) -- (c) -- (d) -- (a); 	\draw (a) -- (e); 	 \draw (d) -- (f);
\node [node] (1) at  ($ (e)+ (0,3) $) {$A_1$}; 	\node [node] (2) at  ($ (f)+ (0,-3) $) {$B_1$};
	\draw [line width=.08cm] (e) to (a) to (d) to (f) to[out=0,in=-50,distance=5cm] (b) to (c) to[out=0,in=30, distance=5cm] (e);
	\draw (e) to (1); \draw (f) to (2);

\end{tikzpicture}
\par\end{center}

\caption{A $S'_{3}$ in $G_{i}$ which is not covered by the $2$-factor $F_{i}$ (left), and the $S_{3}$ in $G_{i-1}$ that replaced the $S'_{3}$ after expansion (right) \label{s3pbefore}}
\end{minipage}\hfill{}%
\begin{minipage}[t]{0.45\columnwidth}%
\begin{center}
\tikzstyle{node}=[circle, draw, fill=black!50,                         inner sep=0pt, minimum width=4pt]
\begin{tikzpicture}[thick,scale=0.25] 	\node [node] (Ax) at (-8,8.5) {$A_2$}; 	\node [node] (Bx) at (-8,-8.5) {$B_2$}; 	 \draw[line width=.08cm] (Ax) to (Bx); \draw [dashed, line width=.08cm] (Ax) to[out=180,in=180, distance=4cm] node[label={[xshift=-.2cm]x}] {} (Bx);

\node [node] (a) at (-2,2) {}; 	\node [node] (b) at (2,2) {}; 	\node [node] (c) at (2,-2) {}; 	\node [node] (d) at (-2,-2) {}; 	 \node [node] (e) at (0,5.5) {}; 	\node [node] (f) at (0,-5.5) {}; 	\draw (a) -- (b) -- (c) -- (d) -- (a); 	\draw (a) -- (e); 	 \draw (d) -- (f); 	\draw (c) to[out=0,in=50, distance=5cm] (e); 	\draw (b) to[out=-50,in=0, distance=5cm] (f); 	\node [node] (1) at  ($ (e)+ (0,3) $) {$A_2$}; 	\node [node] (2) at  ($ (f)+ (0,-3) $) {$B_2$};
	\draw [line width=.08cm] (1) to (e) to node[label={[xshift=-.1cm,yshift=-.1cm]$P$}] {} (a) to node[label={[xshift=.15cm,yshift=0cm]$P$}] {} (b) to node[label={[xshift=-.2cm,yshift=-.2cm]$P$}] {} (c) to node[label={[xshift=-.0cm,yshift=0cm]$P$}] {} (d) to node[label={[xshift=.3cm,yshift=-.25cm]$P$}] {} (f) to (2);
	\draw [dashed, line width=.08cm] (1) to[out=180,in=180, distance=4cm] node[label={[xshift=-.2cm]x}] {} (2); \end{tikzpicture}
\par\end{center}

\begin{center}
\caption{A $S'_{3}$ in $G_{i}$ that is covered by the $2$-factor $F_{i}$ (left), and the $S_{3}$ in $G_{i-1}$ that replaced the $S'_{3}$ after expansion (right). The $5$ bold edges labeled ``$P$'' will be the $6$-cycle
in Figure \ref{s3pbefore}'s protected edges. \label{s3p6cycle}}

\par\end{center}%
\end{minipage}
\end{figure}
Expanding $S'_{3}$s, not in the $2$-factor, of the type shown in
Figure \ref{s3pbefore} can introduce organic $6$-cycles into the $2$-factor.
However, we limit the number of $S'_{3}$s whose edges are not in the $2$-factor
by computing three disjoint perfect matchings and taking as our $2$-factor the union
of the two perfect matchings that contain the most edges in $S'_{3}$s.
Then, at most $\frac{1}{3}$ of the $S'_{3}$s will have edges not included in the $2$-factor.
Then, we can put the $6$-cycles introduced by expanding each $S'_{3}$ of this type
in correspondence with a larger cycle containing the protected edges of the same type shown
in Figure \ref{s3p6cycle}. These edges are labeled ``$P$'' in Figure \ref{s3p6cycle}.

\subsection{Expanding all other gadgets}

In the analysis that follows, we have identified all
expansions with the potential to introduce organic $6$-cycles cycles
into $F_{i}$ that were not present in $F_{i+1}$. Confirming this
fact requires checking all possible ways a $2$-factor can pass through
each gadget to ensure that all expansions that can introduce organic
$6$-cycles are properly analyzed. Diagrams documenting every one of these other
expansions are shown in the Section \ref{apdxsq}-\ref{apdxh6}, and a careful examinitaion
of these sections confirms that all other expansion operations are
not capable of introducing an organic $6$-cycle into the $2$-factor.
These other operations result either in converting one cycle into
one larger cycle, converting one cycle into two large (length of at
least $8$) cycles, or converting two cycles into one or two large
cycles.

\subsection{Analyzing the Worst Case} \label{analyzingworstcase}

We call the edges identified in Definition \ref{t6} ``protected'' because
they form organic paths in our $2$-factor, so these paths will remain
part of the $2$-factor regardless of the future expansion operations
performed. In Definition \ref{t8}, we established a correspondence between
protected edges and $6$-cycles in our $2$-factor. This relation allows us
to place every $6$-cycle in correspondence with some large cycle in our
$2$-factor, as stated in Definition \ref{t9}. We will use this correspondence to
analyze the performance of BIGCYCLE.

In Lemma \ref{oldlemma4} we prove that if a $6$-cycle's protected edges are in another $6$-cycle, then the second cycle's protected edges must be in a cycle of length at least $10$. This is helpful to us, as the average length of these three cycles is at least $\frac{22}{3}$ which is greater than $7$. In Lemma \ref{lem-corr}, we prove that all $8$-cycles have at most one corresponding $6$-cycle, so any $8$-cycle and its corresponding $6$-cycle have an average length of $7$. Lemma \ref{t12} generalizes the previous two Lemmas to show that any long cycle and its corresponding $6$-cycles have an average length of at least $7$. 

\begin{prop} \label{oldlemma3}
If a $6$-cycle, $C_{1}$, has its corresponding protected edges in
another $6$-cycle, $C_{2}$, then $C_{2}$ has $5$ corresponding
protected edges, and these protected edges are all in a cycle of length
at least $8$. Furthermore, $C_{1}$ and $C_{2}$ must have been introduced
into the $2$-factor during the expansion of a $H'_{1}$ and $H'_{2}$,
respectively.\end{prop}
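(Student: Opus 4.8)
The plan is to enumerate the four ``bad'' expansions that can finalize a $6$-cycle --- the $H_1'$, $H_2'$, $H_3'$, and $S_3'$ expansions analysed in Sections \ref{expandh1}--\ref{expandsquare}, which create $6$-cycles having $3$, $5$, $4$, and $5$ corresponding protected edges respectively --- and then to rule out all but the $H_2'$ case for $C_2$ and all but the $H_1'$ case for $C_1$. Write $P_1$ for the set of protected edges corresponding to $C_1$ and $p=|P_1|\in\{3,4,5\}$, so $p\ge 3$; by Definitions \ref{t6} and \ref{t8}, $P_1$ is an organic path that becomes organic exactly at the expansion finalizing $C_1$ and is never re-routed afterwards, and by hypothesis it survives as a $p$-edge sub-path of the $6$-cycle $C_2$.

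The first step is to describe, for each bad expansion, the structure of the $6$-cycle it creates. In the $H_1'$ and $H_2'$ expansions (Figures \ref{badex1}--\ref{badex2} and \ref{h2badex1}--\ref{h2badex2}) the new $6$-cycle is the union of a pre-existing path of exactly three edges and three further edges --- one edge interior to the expanded configuration together with the two edges joining that path to it --- which, in the $2$-factor immediately before the expansion, are either not yet present or incident to a super-vertex and hence not organic. In the $H_3'$ and $S_3'$ expansions (Figures \ref{h3badex1}--\ref{h3badex2} and \ref{s3pbefore}) every edge of the new $6$-cycle is interior to the expanded configuration and not present in the $2$-factor before the expansion. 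Now observe that $C_2$ must be finalized strictly after $C_1$: at $C_1$'s finalization $P_1$ lies inside the sister cycle, which has length at least $8$ and so is not the $6$-cycle $C_2$, whence $C_2$ appears only at a later expansion; in particular, at $C_1$'s finalization the ``non-pre-existing'' edges of $C_2$ listed above are absent or non-organic, while every edge of $P_1$ is organic, so $P_1$ is disjoint from those edges of $C_2$. This rules out $C_2$ being finalized by an $H_3'$ or $S_3'$ expansion (there would then be no room inside $C_2$ for a path of at least three organic edges), and when $C_2$ comes from an $H_1'$ or $H_2'$ expansion it forces $P_1$ to coincide with the three-edge pre-existing path of that expansion, so $p=3$. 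As only the $H_1'$ expansion produces a $6$-cycle with exactly three corresponding protected edges, $C_1$ is introduced by an $H_1'$ expansion.

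It remains to exclude $C_2$ being introduced by an $H_1'$ expansion as well, and this is the one non-routine step --- it is exactly what Remark \ref{remark1} provides. Since $C_1$ comes from an $H_1'$ expansion, we are in the situation of Figures \ref{badex1}--\ref{badex2}: $P_1$ is the three-edge protected path $c$--$d$--$e$--$f$ on the sister cycle $L_1$, and the two vertices adjacent to the endpoints of $P_1$ on $L_1$ but not lying on $P_1$ are the vertices named $A_1$ and $B_1$. Suppose $C_2$ were finalized by expanding an $H_1'$ gadget with super-vertices $u$ and $v$. Then, by the previous paragraph, $P_1$ is the three-edge pre-existing path of that expansion, so immediately before the expansion $u$ and $v$ are precisely the two vertices adjacent to the endpoints of $P_1$ and not on $P_1$. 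But this ``vertex adjacent to an endpoint of $P_1$'' can change, as we pass from $C_1$'s finalization down to just before $C_2$'s expansion, only when it is itself expanded --- and once that happens it becomes organic and thereafter never changes again; since $u$ and $v$ are super-vertices, hence non-organic, they cannot have arisen in this way, and therefore $u=A_1$ and $v=B_1$. Thus $A_1$ and $B_1$ would be the super-vertices of an $H_1'$ gadget whose expansion introduces an organic $6$-cycle, contradicting Remark \ref{remark1} (which already disposes of the two sub-cases, according to whether the path through $A_2,B_2$ ends in the short or in the long cycle). Hence $C_2$ is introduced by an $H_2'$ expansion.

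Finally, a $6$-cycle introduced by an $H_2'$ expansion has exactly five corresponding protected edges (Section \ref{expandh2}; these are the five edges labelled $P$ on the longer cycle of Figure \ref{h2badex2}), and by Definition \ref{t6} protected edges always lie in a cycle of length at least $8$; combined with the fact, shown above, that $C_1$ is introduced by an $H_1'$ expansion, this proves the statement. I expect the main obstacle to be the verification behind the second and third paragraphs: checking, by going through the case diagrams in the appendices, that the ``pre-existing path versus non-organic edges'' dichotomy holds for each of the four bad expansions, and confirming that the two vertices of $L_1$ flanking $P_1$ really are the $A_1$ and $B_1$ of Remark \ref{remark1}, so that the argument genuinely reduces to the already-established Remark \ref{remark1}.
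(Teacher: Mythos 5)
Your proof is correct and uses the same ingredients as the paper's: the structural classification of the four bad expansions and their protected-edge counts, the observation that a newly created $6$-cycle can absorb only a three-edge pre-existing organic path, and Remark \ref{remark1} to kill the $H'_1$--$H'_1$ combination. The only real difference is the direction of the case split --- the paper first cases on which expansion finalized $C_1$ and rules out $H_2$, $H_3$, and the small $2$-cut because their protected paths (of length $5$ or more) are too long to be absorbed into a later $6$-cycle, whereas you case on which expansion finalized $C_2$ and deduce $|P_1|=3$ from the structure of $C_2$; these are the same argument read in opposite directions. One loose end: for the claim that $C_2$'s five protected edges lie in a cycle of length at least $8$, you cite Definition \ref{t6}, but that definition only guarantees this at the moment the edges are identified, not in the final $2$-factor, which is where the proposition is actually applied (see the proof of Lemma \ref{t10}). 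The paper closes this by arguing that a five-edge protected path can never end up in a later $6$-cycle and then invoking bipartiteness and square-freeness; your own second paragraph already establishes exactly this fact, so you should invoke it here instead of the definition.
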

\begin{proof}
In Section \ref{accounting}, we show all ways
an organic $6$-cycle can appear in the final $2$-factor. The appendices
validate this claim, as all possible expansions are examined in detail,
and all expansions not included in Section \ref{accounting} do not produce organic
$6$-cycles. Immediately following each of these three special expansion
operations, all newly identified protected edges are in cycles of
length at least $8$. \\

Then suppose, for the sake of contradiction, that in the final $2$-factor,
$C_{1}$ has a protected edge in another $6$-cycle, $C_{2}$, but
the lemma does not hold. The operation that brought $C_{1}$ into
the $2$-factor $F_{i-1}$ in graph $G_{i-1}$ is one that replaced
some gadget in a condensed graph, $G_{i}$, with a $H_{1}$, $H_{2}$,
$H_{3}$, or a small 2-cut as depicted in Figures \ref{badex1}, \ref{h2badex1}, \ref{h3badex1}, and
\ref{s3pbefore}, respectively. We now claim that such a situation cannot occur
if the expansion operation replaced the gadget with a $H_{2}$, $H_{3}$,
or a small 2-cut.\\

Consider the $H_{2}$ case first, as shown in Figures \ref{h2badex1}-\ref{h2badex2}. In this
case, $C_{1}$ is the left cycle in Figure \ref{h2badex2} and has five protected
edges (the path from $A_{2}$ to $B_{2}$). Furthermore, after this expansion the path from $A_{2}$
to $B_{2}$ in Figure \ref{h2badex2} is entirely organic, meaning none of the
nodes or edges (including $A_{2}$ and $B_{2}$) are part of gadgets.
The only way $C_{1}$'s protected edges could end up in a $6$-cycle
in the final $2$-factor is if another special expansion operation
split $C_{2}$ into two cycles. For this to happen, however, either
two super-vertices which together form a gadget must be a distance
of exactly $5$ edges away from each other in a cycle or two super-edges
must be directly next to each other in a cycle of the $2$-factor.
However, in this case, $C_{1}$'s protected edges form an organic
path of length $5$, which prevents this from occuring, as the two
closest possible super-vertices would be $A_{1}$ and $B_{1}$, which
are separated by $7$ edges and the two closest possible super-edges
would be $(A_{2},B_{1})$ and $(A_{1},B_{2})$, which are separated
by $5$ edges.\\

Similarly, consider if the expansion operation that first introduced
$C_{1}$ into the $2$-factor replaced a gadget with a $H_{3}$ or
a small $2$-cut of the type shown in Figure \ref{s3p6cycle}. The same reasoning
as in the previous paragraph applies here, too, as $C_{1}$ would
have at least six protected edges appearing consecutively in a $H_{3}$
or small $2$-cut in these cases. Any corresponding super-vertices
or super-edges in this cycle are separated by too many organic edges
and nodes to split off some of the protected edges into a new $6$-cycle.\\

The only remaining possibility is that $C_{1}$ was introduced through
an expansion operation that replaced a gadget with a $H_{1}$, as
shown in Figure \ref{badex2}. In this case, the only way these protected edges
could be split off into an organic $6$-cycle is if nodes $A_{1}$
and $B_{1}$ are a gadget. If this gadget was one that replaced a
$H_{1}$, then we see by Remark \ref{remark1} that the expansion of this gadget
cannot introduce an organic $6$-cycle into the $2$-factor. Then,
$A_{1}$ and $B_{1}$ must have been a $H'_{2}$. Thus, $C_{2}$ was
necessarily introduced into the $2$-factor through the expansion
of a $H'_{2}$, as depicted in Figures \ref{h2badex1}-\ref{h2badex2}. Then, $C_{2}$ has $5$
corresponding protected edges of its own. We conclude that these protected
edges will be in a cycle of length at least $8$ because we demonstrated
earlier in this proof that if protected edges are identified from
expanding a $H'_{2}$, these protected edges cannot be part of a $6$-cycle
in the final $2$-factor. The final $2$-factor does not contain any
odd cycles (due to bipartiteness of the original graph) and contains
no $4$-cycles so we conclude that $C_{2}$'s protected edges are
part of a cycle of length at least $8$, proving the lemma.\end{proof}

\begin{lemma} \label{oldlemma4}
\label{t10}
If a $6$-cycle, $C_{1}$, has its corresponding protected edges in
another $6$-cycle, $C_{2}$, then $C_{2}$ has $5$ corresponding
protected edges. These protected edges are all in a cycle which
either contains no other protected edges or is of length at least
$10$. Furthermore, $C_{1}$ and $C_{2}$ must have been introduced
into the $2$-factor during the expansion of a $H'_{1}$ and $H'_{2}$,
respectively.\end{lemma}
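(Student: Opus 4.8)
The plan is to start from Proposition~\ref{oldlemma3}, which already supplies every part of the statement except the length dichotomy on the cycle carrying $C_2$'s protected edges. So we may assume: $C_1$ is finalized by expanding an $H'_1$, $C_2$ is finalized by expanding an $H'_2$, the protected edges of $C_2$ form a single organic path $P_2$ with five edges, and the cycle $C_3$ of the final $2$-factor $F_0$ that contains $P_2$ satisfies $|C_3|\ge 8$. Because the input graph is bipartite and $F_0$ is square-free, $|C_3|$ is even; hence it suffices to treat the borderline case $|C_3|=8$ and show that then $C_3$ contains no protected edge outside $P_2$. (This is precisely the gap one must close so that the chain $C_1,C_2,C_3$ has average length at least $\tfrac{22}{3}>7$ in Lemma~\ref{t12}.)

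Assume $|C_3|=8$ and that $C_3$ contains a protected edge not in $P_2$. I would first run a short counting argument. By Definition~\ref{t6} and the case analysis of Section~\ref{accounting}, the protected edges of any one organic $6$-cycle form a single path lying entirely inside one cycle of the $2$-factor, protected edges belonging to distinct $6$-cycles are pairwise disjoint (each is finalized by its own expansion, and no expansion finalizes two organic $6$-cycles), and the admissible path lengths are $3$ (for an $H'_1$ expansion, Figure~\ref{badex2}), $4$ (for an $H'_3$ expansion, Figure~\ref{h3badex2}), or $5$ (for an $H'_2$ expansion, Figure~\ref{h2badex2}, and for the $S'_3$ $2$-cut, Figure~\ref{s3p6cycle}). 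Since only $8-5=3$ edges of $C_3$ lie outside $P_2$, there is exactly one further $6$-cycle $C_4$ with a protected edge in $C_3$, it has exactly three protected edges, they occupy precisely $C_3\setminus P_2$, and therefore $C_4$ was finalized by expanding an $H'_1$. Writing $P_4$ for the three-edge organic path $C_3\setminus P_2$, we get $C_3=P_2\cup P_4$, the union of a five-edge and a three-edge organic path sharing exactly two vertices $u$ and $v$ (the two junctions of the cycle), and in particular every vertex and edge of $C_3$ is organic.

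The contradiction then comes from the shape of the two ``bad'' expansions together with the behaviour of organic vertices under further expansion. In the $H'_2$ bad expansion (Figure~\ref{h2badex2}) the two endpoints of $P_2$ are vertices of the hexagon sitting inside the $H_2$ that was compressed and later expanded to finalize $C_2$; symmetrically, in the $H'_1$ bad expansion (Figure~\ref{badex2}) the two endpoints of $P_4$ are vertices of the hexagon sitting inside the $H_1$ that was compressed and later expanded to finalize $C_4$. Since $P_2$ and $P_4$ share exactly the vertices $u,v$, the pair $\{u,v\}$ would have to be, simultaneously, a pair of hexagon vertices of the compressed-then-expanded $H_2$ and a pair of hexagon vertices of the compressed-then-expanded $H_1$. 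But these two finalizing expansions happen at different steps of EXPAND; take the one performed first. It makes $u$ and $v$ organic (they are the endpoints of the protected path it produces), and organic vertices stay organic and are never absorbed into a gadget in any later, more expanded graph. Hence, when the second of the two expansions later removes its gadget, the hexagon vertices it creates were \emph{not} organic just beforehand, so none of them can be $u$ or $v$ --- contradicting the previous sentence. As a sanity check, the same picture even yields an immediate length contradiction when the $H'_1$ expansion is the one performed second: $C_3\setminus P_4$ would then have to consist of $P_4$'s two frozen external edges together with the evolved image of the ``$y$''-path, which has length at least $3$ by the ordering rule (the algorithm compresses $H_2$s before $H_1$s), so it has at least $5$ edges with equality only if $|y|=3$, in which case $C_3$ is already pinned down and contains the non-organic junctions, again impossible.

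The part I expect to be delicate, exactly as in Remark~\ref{remark1} and in the proof of Proposition~\ref{oldlemma3}, is the bookkeeping: for each of the finitely many configurations of Section~\ref{accounting} that can finalize a $6$-cycle one must pin down precisely which vertices are organic and which are encoded inside gadgets at the moment the expansion is performed, and then use the compression/expansion order (last inserted is first removed; $H_2$s compressed before $H_1$s, which is what forces the relevant ``$y$''-path to have length at least $3$) to exclude every remaining placement of $P_4$ relative to $P_2$. Everything else --- bipartiteness, square-freeness, and the disjointness and admissible sizes of protected-edge sets --- is already available from the earlier results.
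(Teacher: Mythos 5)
Your proposal is correct and follows essentially the same route as the paper's proof: invoke Proposition~\ref{oldlemma3}, use bipartiteness to reduce to the borderline case $|C_3|=8$, count the pairwise-disjoint protected-edge sets (which come in sizes $3$, $4$, $5$ and cannot be split across cycles) to force the three remaining edges to be the protected path of a bad $H'_1$ expansion, and then derive a contradiction from the fact that the two junction vertices would have to be hexagon vertices restored by two different gadget expansions. Your explicit treatment of both possible orderings of the two finalizing expansions is just a slightly more careful rendering of the paper's observation that $A_2$ and $B_2$ are already organic and therefore cannot be part of an $H'_1$ gadget.
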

\begin{proof}
By Proposition \ref{oldlemma3}, we know that $C_{2}$ has $5$ protected edges in a third
cycle, $C_{3}$, of length at least $8$. Suppose then, for the sake
of contradiction, that this lemma is violated. Then, $C_{3}$ contains
$C_{2}$'s $5$ protected edges, some set of other protected edges,
and has length less than $10$. By Proposition \ref{oldlemma3} and the fact that the graph
is bipartite, we conclude that $C_{3}$ must have length $8$. Protected
edges from the same expansion cannot get separated from each other,
as there are no super-vertices or super-edges along a path of protected
edges, and protected edges come in sets of $3$, $4$, and $5$ edges,
depending on the expansion operation. $C_{3}$ contains the five protected
edges from $C_{2}$ as well as another set of protected edges, so
this additional set of protected edges must be a set of three edges.
This is only possible if the additional set of protected edges were
introduced by expanding a $H'_{1}$, under the circumstances depicted
in Figure \ref{badex1}-\ref{badex2}. Such an expansion would require the nodes corresponding
to $A_{2}$ and $B_{2}$ in Figure \ref{h2badex2} to be a gadget. However, by
Lemma \ref{oldlemma3}, $A_{2}$ and $B_{2}$ must be organic, as they are part of
the $H_{2}$ whose expansion introduced $C_{2}$ into the $2$-factor
and $C_{2}$'s protected edges into $C_{3}$. Since $A_{2}$ and $B_{2}$
are organic, they cannot be a $H'_{1}$, proving that $C_{3}$ contains
no protected edges other than $C_{2}$'s protected edges. This contradicts
our assumption, proving the lemma.
\end{proof}

Before we prove a lower bound on average cycle length, we need an additional lemma regarding
$8$-cycles in the final $2$-factor. The following propositions are needed for the upcoming proof of
Lemma \ref{lem-corr}.

\begin{prop} \label{t7}
Let $P_{1}$ and $P_{2}$ be the sets of protected edges corresponding
to two $6$-cycles, $C_{1}$ and $C_{2}$, respectively. Then $P_{1}\cap P_{2}=\emptyset$
and $V(P_{1})\cap V(P_{2})=\emptyset$.\end{prop}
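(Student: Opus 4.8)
\textbf{Proof proposal for Proposition \ref{t7}.}

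The plan is to argue that protected edge sets corresponding to distinct $6$-cycles are created by distinct expansion operations, and that each such operation identifies its protected edges as an entire maximal organic path that has no further role in later expansions, so the sets can neither overlap in edges nor share a vertex. First I would recall Definition \ref{t8}: the protected edges corresponding to a $6$-cycle $C$ are exactly those identified during the unique ``finalizing'' expansion step for $C$, namely the step from $G_{i+1}$ to $G_{i}$ where $C$ first appears as a cycle of $F_i$. So to each $6$-cycle $C_j$ there is associated a well-defined index $i_j$ and a well-defined expansion operation.

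Next I would split into two cases. If $i_1 \neq i_2$, then $P_1$ and $P_2$ are identified at different steps of the expansion phase; since each is a maximal organic path (Definition \ref{t6}) that persists unchanged through all subsequent expansions, and since the finalizing operation for the later index can only have added vertices and edges outside the already-fixed protected path of the earlier index, the two paths are vertex-disjoint — here I would invoke the structural fact established in Section \ref{accounting} (via the appendices) that an expansion operation introduces new vertices/edges only in a bounded gadget region and never subdivides or reuses an existing organic protected path. If $i_1 = i_2$, then $C_1$ and $C_2$ would both have been finalized by the \emph{same} expansion step; but each expansion operation introduces at most one organic $6$-cycle (this is exactly the fact used repeatedly in Propositions \ref{h2prop} and \ref{h3prop}), so this forces $C_1 = C_2$, contradicting the hypothesis that they are distinct. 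Hence only the first case occurs, and $P_1 \cap P_2 = \emptyset$ and $V(P_1) \cap V(P_2) = \emptyset$ both follow.

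I expect the main obstacle to be the $i_1 = i_2$ sub-case done carefully: one must be sure that ``no expansion operation introduces more than one organic $6$-cycle'' is genuinely a global invariant covering \emph{all} gadget types ($H'_1$ through $H'_6$ and the square gadgets), not just the three ``bad'' expansions highlighted in Figures \ref{badex1}--\ref{s3p6cycle}. This is precisely what the case analysis in the appendices (Sections \ref{apdxsq}--\ref{apdxh6}) is asserted to verify, so I would cite that exhaustive check; the remaining work — that a single maximal organic protected path cannot meet another one created at a different step — then reduces to the observation that protected paths are organic and maximal, and expansions only touch gadget interiors, so two such paths arising from different steps lie in different (organic) regions of $F_0$ and share neither an edge nor an endpoint.
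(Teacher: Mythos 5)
Your proposal is correct and rests on exactly the two facts the paper's proof uses: that the protected edges (and their endpoints) identified at a finalizing step are newly created by that expansion (so a shared edge or vertex would force the two finalizing steps to coincide), and that no single expansion introduces more than one organic $6$-cycle. The paper merely phrases this as a direct contradiction rather than your two-case split on whether the finalizing indices coincide, so the arguments are essentially the same.
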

\begin{proof}
Suppose $P_{1}\cap P_{2}\neq\emptyset$, for the sake of contradiction. Then there is some edge
$e$ such that $e\in P_{1}$ and $e\in P_{2}$. All four expansion
operations (described in detail in Sections \ref{expandh1}-\ref{expandsquare}) that introduce
organic $6$-cycles and identify protected edges are such that there
is some integer $i$ where $e\notin E_{i+1}$ but $e\in E_{i}$. Then,
the expansion operation from $G_{i+1}$ to $G_{i}$ will introduce
both $C_{1}$ and $C_{2}$ to $2$-factor $F_{i}$ as organic $6$-cycles.
This is not possible, as the expansion operations which introduce
organic $6$-cycles all introduce exactly one $6$-cycle into the
$2$-factor, proving the first claim.
Suppose$V(P_{1})\cap V(P_{2})\neq\emptyset$, for the sake of contradiction. Then there is some node
$v$ such that $v\in V(P_{1})$ and $v\in V(P_{2})$. All four expansion
operations (described in detail in Sections \ref{expandh1}-\ref{expandsquare}) that introduce
organic $6$-cycles and identify nodes that are endpoints of protected
edges are such that there is some integer $i$ where $v\notin V_{i+1}$
but $v\in V_{i}$. Then, the expansion operation from $G_{i+1}$ to
$G_{i}$ will introduce both $C_{1}$ and $C_{2}$ to $2$-factor
$F_{i}$ as organic $6$-cycles. This is not possible, as the expansion
operations which introduce organic $6$-cycles all introduce exactly
one $6$-cycle into the $2$-factor, proving the lemma.\end{proof}

\begin{prop} \label{oldprop6}
Suppose there is a cycle $C$ in a preliminary $2$-factor $F_{i}$
which is the union of four edge disjoint paths: $P_{1}$, $X$, $P_{2}$,
$Y$. Furthermore, suppose that $P_{1}$ and $P_{2}$ are organic,
$P_{1}$ shares its endpoints with endpoints of $X$ and $Y$, and
$P_{2}$ shares its endpoints with the remaining endpoints of $X$
and $Y$. Then the final $2$-factor $F$ does not contain a cycle
$C'$ which is the union of $P_{1}$, $e_{1}$, $P_{2}$, $e_{2}$,
where $e_{1}$ and $e_{2}$ are single edges separating $P_{1}$ and
$P_{2}$ in $C'$, unless both $X$ and $Y$ are single edges.
\end{prop}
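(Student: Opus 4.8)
The plan is a proof by contradiction that reduces the statement to an inspection of the finitely many expansion operations able to restructure a cycle of the $2$-factor. Suppose $F$ contains a cycle $C'=P_1\cup e_1\cup P_2\cup e_2$ of the stated form, and suppose, towards a contradiction and without loss of generality (the roles of $X$ and $Y$ are interchangeable by rereading $C$ starting from $P_2$), that $X$ is not a single edge, so $|X|\ge 2$. Write $u_1,u_2$ for the endpoints of $P_1$ and $v_1,v_2$ for those of $P_2$, so that $C$ visits $u_1,P_1,u_2,X,v_1,P_2,v_2,Y,u_1$ in this order. In $C'$ the two single edges each join an endpoint of $P_1$ to an endpoint of $P_2$; after possibly swapping $v_1$ and $v_2$ there are two pairings, (A) $e_1=(u_2,v_1)$, $e_2=(u_1,v_2)$, and (B) $e_1=(u_2,v_2)$, $e_2=(u_1,v_1)$.

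The first key step pins down when $e_1$ (and, symmetrically, $e_2$) enters the $2$-factor. Since $P_1$ and $P_2$ are organic, so are their endpoints, and the terminal edges of $P_1$ and $P_2$ at those endpoints are organic edges lying on the relevant cycle in both $F_i$ and $F$; an organic edge on a cycle of the $2$-factor is never removed by an expansion, so these edges stay fixed throughout. Hence the $2$-factor edge at $u_2$ other than the terminal edge of $P_1$ changes from the first edge of $X$ (in $F_i$) to $e_1$ (in $F$); because $|X|\ge 2$, the first edge of $X$ is not $e_1$, so $e_1\notin F_i$ and $e_1$ is genuinely introduced by a later expansion. Since an expansion only alters $2$-factor edges incident to the gadget being expanded, the expansion that introduces $e_1$ is incident both to $u_2$ and (in case (A), to $v_1$; in case (B), to $v_2$); in particular the cycle through $P_1$ and $P_2$ is a cycle of $F_{j+1}$ for the relevant index $j$, and the expansion from $G_{j+1}$ to $G_j$ converts it into $C'$ together with at least one further cycle carrying the interior vertices of $X$ and $Y$, none of which lie on $C'$.

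The second step checks this against the catalogue of expansions. Sections \ref{expandh1}--\ref{expandsquare} together with Appendices \ref{apdxsq}--\ref{apdxh6} enumerate, for every gadget, every way the $2$-factor can wind through it, and a cycle is split into cycles delimited by two single edges joining two organic sub-paths only in the four ``bad'' expansions of $H_1'$, $H_2'$, $H_3'$ and $S_3'$ (Figures \ref{badex1}--\ref{badex2}, \ref{h2badex2}, and \ref{s3p6cycle}). In each of these the two single edges are precisely the edges flanking the newly created protected path, and the contraction order forces the paths that play the role of $X$ and $Y$ on the discarded side of the split to have length at least three (this is the point of contracting $H_2$'s before $H_1$'s, recorded in the overview, and is handled for the $H_2'$, $H_3'$, $S_3'$ expansions via Propositions \ref{oldlemma3} and \ref{t7}). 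Reading these figures off then shows that the paths playing the roles of $X$ and $Y$ in $F_i$ could only have been single edges, contradicting $|X|\ge 2$; both pairings (A) and (B) are excluded in this way.

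The step I expect to be the main obstacle is the bookkeeping in the first paragraph: ``organic'' in the sense of Definition \ref{t4} does not by itself mean ``never contracted'', so one must argue carefully that $P_1$, $P_2$ and their terminal edges are preserved verbatim by every intervening expansion, and that no two distinct expansions can each deliver the same specific edge $e_1$ to the $2$-factor. Once this is settled, the appeal to the already-enumerated local reroutings in Sections \ref{expandh1}--\ref{expandsquare} and the appendices is essentially mechanical.
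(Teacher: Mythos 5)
Your overall strategy --- contradiction plus an exhaustive inspection of the catalogued expansion operations --- is the same one the paper uses, and your first paragraph genuinely sharpens it: observing that $e_1=(u_2,v_1)$ cannot lie in $F_i$ (the two $2$-factor edges at $u_2$ in $C$ are the terminal edge of $P_1$ and the first edge of $X$, and the latter is not $e_1$ once $|X|\ge 2$), so that $e_1$ must be \emph{added} to the $2$-factor by some later operation, localizes the problem nicely. The gap is in how you close. First, you assume a single expansion introduces $e_1$ and $e_2$ together and ``converts'' the cycle into $C'$; nothing rules out the two edges entering at different steps, so the case analysis must be per edge, not per cycle. Second, and more seriously, the conclusion you draw from the figures is not what they show. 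In each of the four ``bad'' expansions the split-off cycle does have the shape ``two organic paths joined by two single edges,'' but one of those two paths (the protected path $c$--$d$--$e$--$f$ in Figure~\ref{badex2}, the hexagon edge $a$--$b$, etc.) is revealed by that very expansion, hence was not a subpath of any cycle of $F_i$ and cannot play the role of the proposition's $P_2$, which is required to lie on $C\subseteq F_i$. The mismatch occurs there, not in the lengths of the paths on the discarded side of the split, and the assertion that ``reading these figures off'' forces $X$ and $Y$ to be single edges does not follow from anything you have established.

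The clean way to finish from your own setup is the fact you circle but never state: every catalogued expansion adds to the $2$-factor only edges with at least one endpoint newly revealed by that expansion (internal edges of the expanded configuration), while the membership status of the outgoing edges and of all edges disjoint from the gadget is preserved. Since $u_2$ and $v_1$ both already exist in $G_i$, the edge $e_1$ can then never enter the $2$-factor at all, which is the contradiction. The one operation that this argument (and, in fairness, the paper's own proof) does not cover is the local improvement of Figures~\ref{localimprove1}--\ref{localimprove2}, which really does move the $2$-factor onto edges joining pre-existing vertices; you flag the bookkeeping around organic edges as your ``main obstacle,'' but this specific operation is where it actually bites and it needs a separate check.
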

\begin{figure}[H]
\begin{minipage}[t]{0.45\columnwidth}%
\begin{center}
\tikzstyle{node}=[circle, draw, fill=black!50,                         inner sep=0pt, minimum width=4pt]
\begin{tikzpicture}[thick,scale=.8]
	\node [node] (a) at (-1,1) {}; 	\node [node] (b) at (1,1) {}; 	\node [node] (c) at (1,-1) {}; 	\node [node] (d) at (-1,-1) {};
	\draw [dashed, line width=.08cm] (a) to[distance=.8cm]  node [label={[xshift=-.0cm]$P_1$}] {} (b); 	\draw [dashed, line width=.08cm] (b) to[out=-30,in=30,distance=.5cm]  node [label={[xshift=.4cm]$X$}] {} (c); 	\draw [dashed, line width=.08cm] (c) to[out=270,in=270,distance=.5cm]  node [label={[xshift=-.0cm]$P_2$}] {} (d); 	\draw [dashed, line width=.08cm] (d) to[out=150,in=210,distance=.5cm]  node [label={[xshift=-.4cm]$Y$}] {} (a);
\end{tikzpicture}
\par\end{center}

\begin{center}
\caption{A cycle $C$, composed of four edge disjoint paths $P_{1}$, $X$,
$P_{2}$, $Y$}

\par\end{center}%
\end{minipage}\hfill{}%
\begin{minipage}[t]{0.45\columnwidth}%
\begin{center}
\tikzstyle{node}=[circle, draw, fill=black!50,                         inner sep=0pt, minimum width=4pt]
\begin{tikzpicture}[thick,scale=0.8]
	\node [node] (a) at (-1,.5) {}; 	\node [node] (b) at (1,.5) {}; 	\node [node] (c) at (1,-.5) {}; 	\node [node] (d) at (-1,-.5) {};
	\draw [dashed, line width=.08cm] (a) to[distance=.8cm]  node [label={[xshift=-.0cm]$P_1$}] {} (b); 	\draw [line width=.08cm] (b) to  node [label={[xshift=.4cm,yshift=-.3cm]$e_1$}] {} (c); 	\draw [dashed, line width=.08cm] (c) to[out=270,in=270,distance=.5cm]  node [label={[xshift=-.0cm]$P_2$}] {} (d); 	\draw [line width=.08cm] (d) to  node [label={[xshift=-.4cm,yshift=-.3cm]$e_2$}] {} (a);
\end{tikzpicture}
\par\end{center}

\begin{center}
\caption{A cycle $C'$, composed of $P_{1}$, $e_{1}$, $P_{2}$, $e_{2}$.
Proposition \ref{oldprop6} proves that if $C$ is a cycle in the computed $2$-factor
in a compressed graph $G_{i}$, then $C'$ will never be a cycle in
the final $2$-factor $F$.}

\par\end{center}%
\end{minipage}
\end{figure}

\begin{proof}
Suppose, for the sake of contradiction, that $F$ contained such a
cycle $C'$ where $C\neq C'$. The algorithm does not perform any
expansion operations that combine two organic paths in separate cycles
into a single cycle connected by two single edges. Then, none of the
final $i$ expansions the algorithm performs can separate $P_{1}$
and $P_{2}$ into two different cycles. $|C|>|C'|$, so, for $F$
to contain $C'$, one of these final $i$ expansions must have shortened
one of the paths connecting $P_{1}$ to $P_{2}$ to just a single edge while keeping
both $P_{1}$ and $P_{2}$ in the same segment. Without loss of generality we assume that $X$ has length at least $2$. The algorithm does not perform any expansion operations that result in a new cycle containing both $P_1$ and $P_2$ such that the endpoints of $P_1$ and $P_2$ that were previously incident on $X$ are now connected by a single edge. Then there is no sequence of expansion operations that could result in $C$ being contained in $F$, contradicting our assumption.\end{proof}
\begin{lemma} \label{lem-corr}
Every cycle $C$ of length $8$ in the final $2$-factor $F$ has
at most one corresponding $6$-cycle.\end{lemma}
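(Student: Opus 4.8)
The plan is to suppose, for contradiction, that some $8$-cycle $C$ in the final $2$-factor $F$ has two distinct affiliated $6$-cycles $C_{1}$ and $C_{2}$, and to split into cases according to which clause of Definition \ref{t9} places each $C_{j}$ in $S_{C}$. First I would eliminate the possibility that some $C_{j}$ is affiliated with $C$ only through an intermediate $6$-cycle $C'$ (i.e.\ $P_{C_{j}}\subseteq C'$ and $P_{C'}\subseteq C$): by Lemma \ref{oldlemma4} applied to the pair $(C_{j},C')$, the cycle $C'$ has exactly $5$ protected edges and $C$ --- the cycle containing them --- either has length at least $10$ or contains no protected edges other than $P_{C'}$; as $|C|=8$, the $5$ edges of $P_{C'}$ are the only protected edges of $C$. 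Using Proposition \ref{t7} (protected-edge sets of distinct $6$-cycles are edge- and vertex-disjoint) this forces $C'$ to be the unique $6$-cycle with protected edges directly in $C$, so a second affiliated $6$-cycle would have to have its protected edges inside the $6$-cycle $C'$; two such would have disjoint protected-edge sets of size at least $3$ filling all six edges of $C'$, forcing both to come from $H_{1}'$ expansions, and this I would rule out by rerunning the compression-order and organicity bookkeeping of the proof of Proposition \ref{oldlemma3} and of Remark \ref{remark1} (an organic length-$5$ protected path arising from an $H_{2}'$ expansion cannot sit inside an $8$-cycle while that cycle's remaining edges are two organic length-$3$ protected paths of $H_{1}'$ origin). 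Hence it suffices to treat the case $P_{C_{1}}\subseteq C$ and $P_{C_{2}}\subseteq C$.

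In that case Proposition \ref{t7} gives $P_{C_{1}}\cap P_{C_{2}}=\emptyset$ and $V(P_{C_{1}})\cap V(P_{C_{2}})=\emptyset$. Since the two protected paths are vertex-disjoint, $C$ is the union of $P_{C_{1}}$, $P_{C_{2}}$, and two further disjoint sub-paths $X$ and $Y$, each of length at least $1$, joining their endpoints; thus $|P_{C_{1}}|+|P_{C_{2}}|+|X|+|Y|=8$. Each protected-edge set has at least $3$ edges (the minimum, attained only by the bad $H_{1}'$ expansion; the $H_{2}'$, $H_{3}'$ and $S_{3}'$ expansions yield $5$, $4$ and $5$), so $|P_{C_{1}}|=|P_{C_{2}}|=3$ and $|X|=|Y|=1$. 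Consequently both $C_{1}$ and $C_{2}$ were finalized by $H_{1}'$ expansions, and $C=P_{C_{1}}\cup\{e_{1}\}\cup P_{C_{2}}\cup\{e_{2}\}$, where $e_{1}=X$ and $e_{2}=Y$ are single edges, each joining an endpoint of $P_{C_{1}}$ to an endpoint of $P_{C_{2}}$.

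To finish I would invoke Proposition \ref{oldprop6}. Let $C_{2}$ be the one of $C_{1},C_{2}$ finalized later in the expansion phase, so that when it is finalized $P_{C_{1}}$ is already present as an organic path. The bad $H_{1}'$ expansion that finalizes $C_{2}$ splits a cycle into $C_{2}$ and a cycle $D$ with $P_{C_{2}}\subseteq D$ and $|D|\geq 8$, and in $D$ the endpoints of $P_{C_{2}}$ are joined, along the complement of $P_{C_{2}}$, by a path passing through the two super-vertices of the gadget together with an organic sub-path of length at least $3$ (Figures \ref{badex1}--\ref{badex2}, where this length-$3$ lower bound is the consequence of compressing $H_{2}$s before $H_{1}$s), hence of total length at least $5$. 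Since in $F$ both $P_{C_{1}}$ and $P_{C_{2}}$ lie on $C$ and organic paths are never cut apart by later expansions, $P_{C_{1}}$ must already lie on $D$, so $D$ is a cycle in a preliminary $2$-factor of the form $P_{C_{1}}\cup X'\cup P_{C_{2}}\cup Y'$ with $P_{C_{1}},P_{C_{2}}$ organic; moreover at least one of $X',Y'$ has length at least $2$, for otherwise $D$ would already coincide with $C$ and the endpoints of $P_{C_{1}}$ would be incident to organic edges produced by $C_{2}$'s bad expansion, the configuration that, by Remark \ref{remark1}, the algorithm's local correction would have removed. Proposition \ref{oldprop6} then says the final $2$-factor cannot contain $P_{C_{1}}\cup e_{1}\cup P_{C_{2}}\cup e_{2}=C$, a contradiction, completing the proof.

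The step I expect to be the main obstacle is this last one: identifying a preliminary $2$-factor in which the two protected paths $P_{C_{1}}$ and $P_{C_{2}}$ already lie on a common cycle separated by a segment of length at least $2$, so that Proposition \ref{oldprop6} applies. Making this precise requires tracking how the cycle carrying $P_{C_{2}}$ evolves after $C_{2}$ is finalized and arguing that $P_{C_{1}}$ can never be brought onto it joined only by two single edges --- which is exactly the content of Proposition \ref{oldprop6}, but one must still verify its hypotheses in the configuration at hand. The closely related difficulty is excluding the ``chain'' configuration in the first step, since a literal reading of Definition \ref{t9} together with the ``length $\geq 10$ or no other protected edges'' alternative in Lemma \ref{oldlemma4} leaves it open for an $8$-cycle; its exclusion relies on the organicity and compression-order arguments already developed for Remark \ref{remark1} and Proposition \ref{oldlemma3} rather than on any new tool.
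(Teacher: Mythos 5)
Your overall architecture matches the paper's (chain case versus both protected-edge sets lying directly in $C$; disjointness via Proposition \ref{t7}; the count $|P_{C_1}|=|P_{C_2}|=3$ forcing two $H_1'$ expansions; Proposition \ref{oldprop6} to finish), but your treatment of the chain case does not actually exclude it. If $P_{C_j}\subseteq C'$ and $P_{C'}\subseteq C$, then by Definition \ref{t9} \emph{both} $C'$ and $C_j$ already belong to $S_C$, so this configuration by itself gives the $8$-cycle two corresponding $6$-cycles; showing that no \emph{third} $6$-cycle can have its protected edges inside $C'$ (which is what your argument addresses) is beside the point. What has to be proved is that an $8$-cycle cannot contain the five protected edges of a $6$-cycle produced by an $H_2'$ expansion at all. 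The paper does this from the geometry of Figure \ref{h2badex2}: the cycle carrying those five edges has length $y+7$ immediately after that expansion, so $y=1$ would mean an $H_2$ was compressed while a square was present, and any later expansion shrinking the cycle to length $8$ would leave $A_2$ and $B_2$ joined by a path of length $3$, i.e.\ create a square, again contradicting the compression order. Your parenthetical gesture at the needed claim ("a length-$5$ protected path \dots while the cycle's remaining edges are two organic length-$3$ protected paths") cannot even describe an $8$-cycle, since $5+3+3>8$, and you yourself flag this step as unresolved.

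In the main case there is a second gap: you assert that because $P_{C_1}$ and $P_{C_2}$ both lie on $C$ in $F$ and organic paths are never cut apart, $P_{C_1}$ must already lie on the cycle $D$ created when $C_2$ is finalized. That inference fails: many expansion operations \emph{merge} two cycles into one, so $P_{C_1}$ could sit in a different cycle at that moment and be brought onto a descendant of $D$ later. The paper closes exactly this case by observing that no expansion combines two organic paths from separate cycles into one cycle with the paths sharing an endnode or separated by single edges on both sides; hence whenever the two paths first meet they are separated by at least two edges on one side, and Proposition \ref{oldprop6} then forbids the final $8$-cycle $P_{C_1}\cup e_1\cup P_{C_2}\cup e_2$. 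Without that observation, or a substitute for it, the subcase in which the two $H_1'$ finalizations occur in different cycles is unhandled, and your appeal to Proposition \ref{oldprop6} has no cycle on which to stand.
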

\begin{proof}
Suppose, for the sake of contradiction, that there exists an $8$-cycle
$C$ in $F$ which has $k$ corresponding $6$-cycles, $C_{1}$, $C_{2}$,
..., $C_{k}$, where $k\geq2$. By Definition 6, for $1\leq i\leq k$,
$C_{i}$ has protected edges, which are located either in $C$ or
in $C_{j}$ for some $j\neq i$, $1\leq j\leq k$.\\

First, let's consider when there are values $i,j$ such that $C_{i}$'s
protected edges are in $C_{j}$. $C$ is of length $8$, so by Lemma
\ref{oldlemma4}, it must be the case that $k=2$ and without loss of generality,
$i=1$ and $j=2$. Additionally, we know due to Lemma \ref{oldlemma3} that $C_{1}$
was introduced through the expansion of a $H'_{1}$ and $C_{2}$ was
introduced through the expansion of a $H'_{2}$.\\

If the expansion of the $H'_{2}$ that introduced $C_{2}$ into a
preliminary $2$-factor $F_{x}$ also introduced $C$ into $F_{x}$,
then $C$ could not be an $8$-cycle, which is a contradiction. To
see this, see Figure \ref{h2badex2}, which describes this class of expansion operation.
If $C$ were an $8$-cycle, then the dashed path connecting nodes
$A_{1}$ and $B_{1}$ in this figure is of length $1$, but this would
mean that the algorithm compressed a $H_{2}$ at a time when there
was a square present in graph, which is also a contradiction. Then,
the remaining possibility is that immediately following the expansion
of the $H'_{2}$ that introduced $C_{2}$ into $F_{x}$, the cycle's
protected edges were in a non-organic cycle $C'$, of the form shown
on the right side of Figure \ref{h2badex2}. A later expansion operation must introduce
$C$, resulting in the nodes corresponding to $A_{2}$ and $B_{2}$
in Figure \ref{h2badex2} being connected by a path of length $3$. This expansion
cannot be one where a square is expanded because none of these operations
shorten the length of the cycles involved. Then any other potential
expansion would contradict the algorithm, as a path of length $3$
from $A_{2}$ to $B_{2}$ would form a square, and the algorithm would
have previously contracted a $H_{i}$ for some $i\geq2$ when a square
is present in the graph. \\

We have now ruled out the possibility of cycle $C_{i}$ having protected
edges in a cycle other than $C$. By Lemma \ref{t7}, we know that the protected
edges of two cycles $C_{i}$ and $C_{j}$ are disjoint. Then, we can
easily see that $k<3$. Each $6$-cycle has at least $3$ protected
edges in $C$, and $C$ is an $8$-cycle, so if $k\geq3$, then either
$C$ would need to have more than $8$ edges or the $6$-cycles would
need to share protected edges, which is not possible.\\

We must now also show that we obtain a contradiction when $k=2$.
By Lemma \ref{t7}, we know that any two sets of protected edges do not share
any vertices. Then if $C$ contains two sets of protected edges, they
must be separated by at least one edge on each side of the cycle.
Then, $8=|C|\geq|P_{1}|+|P_{2}|+2$, where $P_{1}$ and $P_{2}$ are
the protected edges of $C_{1}$ and $C_{2}$, respectively. $P_{i}\geq3$,
and if either $P_{1}$ or $P_{2}$ is at least $4$ then $|P_{1}|+|P_{2}|+2\geq9$,
so the only possibility we need to consider is when $|P_{1}|=|P_{2}|=3$.
This would require that both $6$-cycles are introduced from expanding
$H'_{1}$s. All other possibilities would result in $C$ containing
more than $8$ protected edges, which is not possible. We now demonstrate
that this case results in a contradiction.\\

We now know that both $6$-cycles are introduced through the expansion
of two $H'_{1}$s, as all other cases result in a contradiction. The
specific expansion of this type that can introduce organic $6$-cycles
is described in detail in Section \ref{expandh1} and shown in Figures \ref{badex1}-\ref{badex2}.
If $P_{1}$ and $P_{2}$ are ever contained in different cycles of
a preliminary $2$-factor $F_{x}$, then some expansion operation
will eventually bring these two sets of protected edges into the same
cycle. However, observe that the algorithm does not perform any expansion
operations that combine two organic paths in separate cycles into
a single cycle, where the two paths share an endnode or are separated
by a single edge on both sides. Then $P_{1}$ and $P_{2}$ will necessarily
be separated by at least two edges on one side and at least one edge
on the other side. By Proposition \ref{oldprop6}, no future expansions could result
in $P_{1}$ and $P_{2}$ being contained in a single $8$-cycle. This
would contradict the assumptions that $C$ is an $8$-cycle and contains
two sets of protected edges introduced through the expansion of two
$H'_{1}$s.\\

The other possibility is that the expansion of the second $H'_{1}$
introduces $P_{2}$ directly into a cycle that contains $P_{1}$.
By Remark \ref{remark1}, we know that the nodes corresponding to $A_{1}$ and
$B_{1}$ shown Figure \ref{badex2} cannot be super-vertices of a $H'_{1}$ whose
expansion introduces an organic $6$-cycle into a preliminary $2$-factor
$F_{x}$. Then, there must be at least $7$ edges between any two
super-vertices of a $H'_{1}$ whose expansion introduces an organic
$6$-cycle into $F_{x}$. If $A_{1}$ or $B_{1}$ or either edge $(A_{2},B_{1})$
or $(B_{2},A_{1})$ are non-organic and get expanded before the expansion
that introduces the $6$-cycle into $F_{x}$, then these expansions
will replace these non-organic subgraphs with organic subgraphs, so
the new nodes in the place of $A_{1}$ or $B_{1}$ will never be two
super-vertices of a $H'_{1}$ whose expansion introduces a $6$-cycle
into $F_{x}$. Then, after the expansion introducing the second $6$-cycle,
the cycle containing both sets of protected edges in $F_{x}$ will
have at least $10$ edges, the $7$ edges in between the two super-vertices
that replaced the second $H_{1}$ and the second $6$-cycle's set
of $3$ protected edges. By Proposition \ref{oldprop6}, no future expansions can
result in both $P_{1}$ and $P_{2}$ being contained in a single $8$-cycle.
This contradicts the assumptions that $C$ is an $8$-cycle and contains
two sets of protected edges introduced through the expansion of two
$H'_{1}$s.\\

We have proved that all cases that could result in the existance of
an $8$-cycle $C$ in the final $2$-factor $F$ that has two or more
corresponding $6$-cycles leads to a contradiction, proving the lemma.
\end{proof}

We are now prepared to prove the next lemma, regarding average cycle
length of a large cycle and its set of corresponding $6$-cycles:
\begin{lemma} \label{t12}
For any cycle $C$ in the final $2$-factor $F$ of length $l$ such that $l\geq8$
and its set of corresponding $6$-cycles, the average length of this
set of cycles is at least $7$.\end{lemma}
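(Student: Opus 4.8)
The plan is to prove the equivalent inequality $|S_C|\le l-7$, where $S_C$ denotes the set of $6$-cycles corresponding to $C$ (Definition~\ref{t9}); indeed, $\{C\}\cup S_C$ consists of $1+|S_C|$ cycles of total length $l+6|S_C|$, so the average length is at least $7$ exactly when $l+6|S_C|\ge 7(1+|S_C|)$. To bound $|S_C|$, I would split it as $D_C\cup I_C$, where $D_C$ consists of the $6$-cycles whose protected edges lie in $C$ and $I_C$ of those whose protected edges lie in another $6$-cycle, which by Definition~\ref{t9} must itself belong to $D_C$. By Lemma~\ref{oldlemma4}, every member of $I_C$ is produced by expanding a $H'_1$ (hence has exactly $3$ protected edges) while its ``parent'' in $D_C$ is produced by expanding a $H'_2$ and carries exactly $5$ protected edges, which form a length-$5$ subpath of $C$ since the parent lies in $D_C$. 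Because any set of protected edges is a contiguous path on at least $4$ vertices (the possible sizes being $3$, $4$, and $5$, as noted in the proof of Lemma~\ref{oldlemma4}), and protected-edge sets of distinct $6$-cycles are vertex-disjoint (Proposition~\ref{t7}), a $6$-cycle, having only $6$ vertices, can parent at most one member of $I_C$. Writing $m_k$ for the number of members of $D_C$ whose protected path in $C$ has length $k\in\{3,4,5\}$ and $m=m_3+m_4+m_5=|D_C|$, this gives $|I_C|\le m_5$, hence $|S_C|\le m+m_5=m_3+m_4+2m_5$.

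Next I would bound $l$ from below. The $m$ protected paths of the members of $D_C$ are pairwise vertex-disjoint subpaths of $C$, and any edge of $C$ lying strictly between two cyclically consecutive such paths is not a protected edge of any member of $D_C$ (otherwise its path would share an endpoint with one of the two flanking paths, contradicting Proposition~\ref{t7}); so there are at least $m$ such ``gap'' edges when $m\ge 1$. Therefore $l\ge (3m_3+4m_4+5m_5)+m = 4m_3+5m_4+6m_5$, and this holds trivially when $m=0$. Combining with the previous bound, $l-7-|S_C|\ge 3m_3+4m_4+4m_5-7$, so the lemma follows immediately whenever $3m_3+4m_4+4m_5\ge 7$.

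It then remains to treat the finitely many vectors with $3m_3+4m_4+4m_5<7$, namely $(m_3,m_4,m_5)\in\{(0,0,0),(1,0,0),(0,1,0),(0,0,1),(2,0,0)\}$. For $(0,0,0)$ we have $|S_C|=0\le l-7$. For $(1,0,0)$ and $(0,1,0)$ we have $m_5=0$, so $I_C=\emptyset$ and $|S_C|=1\le l-7$ since $l\ge 8$; the same holds in the subcase of $(0,0,1)$ with $I_C=\emptyset$. The only delicate situations are $(0,0,1)$ with $|I_C|=1$ and $(2,0,0)$: in both, $|S_C|=2$ while the bounds above only force $l\ge 8$, and a bare averaging argument would give only $\tfrac{l+12}{3}\ge\tfrac{20}{3}<7$. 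For these I would invoke Lemma~\ref{lem-corr}: an $8$-cycle has at most one corresponding $6$-cycle, so $C$ cannot be an $8$-cycle, and since $C$ has even length (the graph is bipartite) we get $l\ge 10$ and $|S_C|=2\le l-7$. This exhausts all cases.

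The step I expect to be the real obstacle is exactly this last one --- ruling out an $8$-cycle with two corresponding $6$-cycles --- which is why Lemma~\ref{lem-corr} (and, underneath it, the local-improvement step of BIGCYCLE, Remark~\ref{remark1}, and Proposition~\ref{oldprop6}) is indispensable; without it the averaging estimate degrades to $\tfrac{20}{3}<7$ in precisely these corner cases. A smaller but still necessary point is carefully justifying that the ``gap'' edges between consecutive protected paths are genuinely forced, which leans on the vertex-disjointness in Proposition~\ref{t7} together with the fact that the protected edges of a single $6$-cycle always occur as one contiguous run of prescribed length $3$, $4$, or $5$.
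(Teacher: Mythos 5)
Your proof is correct, and it rests on exactly the same three pillars as the paper's: Proposition \ref{t7} (vertex-disjointness of protected-edge sets), Lemma \ref{oldlemma4} (an indirect $6$-cycle's parent carries $5$ protected edges and arises from a $H'_2$), and Lemma \ref{lem-corr} to kill the $8$-cycle corner cases. What differs is the bookkeeping. The paper splits on the value of $l$ ($l=8$, $l=10$, $l\geq 12$) and on whether any indirect correspondence exists, and in each branch uses a cruder packing bound (at most $l/3$ or $2l/5$ corresponding $6$-cycles) that happens to suffice there. You instead prove the two uniform linear inequalities $|S_C|\le m_3+m_4+2m_5$ and $l\ge 4m_3+5m_4+6m_5$ and reduce to five exceptional vectors; this is tighter in two respects the paper leaves implicit: you charge the gap edges between consecutive protected paths (the paper's $l/3$ bound ignores them), and you explicitly justify that a parent $6$-cycle can host at most one child's protected-edge set (two vertex-disjoint paths on at least $4$ vertices each cannot fit in $6$ vertices) --- a fact the paper's ``each $6$-cycle on average contributes at least $\frac{5}{2}$ protected edges'' pairing argument silently assumes. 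Your approach also makes transparent exactly where Lemma \ref{lem-corr} is indispensable, namely the vectors $(2,0,0)$ and $(0,0,1)$-with-a-child, which are precisely the configurations where the paper, too, must fall back on that lemma. Both arguments are sound; yours is somewhat more systematic and would generalize more gracefully if the set of gadgets (and hence of protected-path lengths) were changed.
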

\begin{proof}
First, consider the simple case where $C$ has no corresponding $6$-cycles.
The set of cycles we are considering in this case is just a single
cycle of length $l\geq8$. $l\geq8>7$, so in this case, the only cycle in the
set of cycles has length at least $7$.

Now, consider the case when all of the corresponding $6$-cycles have
their protected edges contained in the large cycle $C$ (to be clear,
the only way this condition could be violated is if some $6$-cycle
has its protected edges in another $6$-cycle, whose protected edges are contained
in $C$). Each of the expansion operations that included one of the
corresponding $6$-cycles in the $2$-factor protects at least $3$
edges, and these protected edges are contained in $C$, so at most
$\frac{l}{3}$ $6$-cycles can correspond to cycle $C$. If $l\geq10$,
then the average cycle length among cycle $C$ and its corresponding
$6$-cycles is at least
\begin{align*}
\frac{\lfloor\frac{l}{3}\rfloor\times6+l}{\lfloor\frac{l}{3}\rfloor+1} & \geq7
\end{align*}

If $l=8$ then by Lemma \ref{lem-corr}, $C$ has at most one corresponding $6$-cycle,
so the average length of $C$ and its corresponding $6$-cycle is
also $7$. We must consider the case when at least one corresponding
$6$-cycle, $C_{1}$, has its protected edges in another
$6$-cycle, $C_{2}$. If $l=8$ and C contains $C_{2}$'s protected
edges then $C$ has at least two corresponding $6$-cycles, $C_{1}$
and $C_{2}$, contradicting Lemma \ref{lem-corr}.

Next, consider if $l=10$ and $C$ contains $C_{2}$'s $5$ protected
edges in the final $2$-factor. $C$ cannot contain another set of
$5$ protected edges, due to Lemma \ref{t7}, because this would require these
protected edges to share a node with $C_{2}$'s protected edges. Then,
in addition to $C_{1}$ and $C_{2}$, $C$ can have at most one additional
corresponding $6$-cycle, otherwise $C$ would contain more than $10$
protected edges. In this case, $C$ has at most $3$ corresponding
$6$-cycles, so the average length of $C$ and its corresponding $6$-cycles
is at most $\frac{10+3\times6}{4}=7$.

The only remaining case is when $l\geq12$ and at least one corresponding
$6$-cycle, $C_{1}$, has its protected edges contained in another
$6$-cycle, $C_{2}$. By Lemma \ref{t10}, each $6$-cycle has at least $3$
protected edges in $C$ or its protected edges are in another $6$-cycle
whose $5$ protected edges are in $C$. So, if a corresponding $6$-cycle's
protected edges are not in $C$, then there is another $6$-cycle
corresponding to $C$ for which these two $6$-cycles contribute $5$
protected edges to $C$. Then, each $6$-cycle on average contributes
at least $\frac{5}{2}$ protected edges to $C$, so there are at most
$\frac{2l}{5}$ $6$-cycles corresponding to $C$. Then, the average
cycle length among cycle $C$ and its corresponding $6$-cycles is
at least
\begin{align*}
\frac{\lfloor\frac{2l}{5}\rfloor\times6+l}{\lfloor\frac{2l}{5}\rfloor+1} & \geq7\;(\text{because}\; l\geq12)
\end{align*}

In all possible cases, $C$ and its corresponding $6$-cycles have
average length of at least $7$.
\end{proof}

\subsection{Main Theorems}
\begin{theorem} \label{t13}
Given a cubic bipartite graph $G$ with $n>6$ vertices, there is
a polynomial time algorithm that computes a $2$-factor with at most
$\frac{n}{7}$ cycles \end{theorem}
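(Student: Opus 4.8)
The plan is to run BIGCYCLE, take its output $2$-factor $F_0$ of $G_0$, and bound the number of cycles of $F_0$ by partitioning these cycles into groups, each consisting of one cycle of length at least $8$ together with the $6$-cycles that correspond to it, and then applying Lemma~\ref{t12} group by group. First I would record the structural facts already in hand. Since $F_0$ is a spanning $2$-factor, the cycle lengths sum to exactly $n$. Since $G_0$ is bipartite, $F_0$ has no odd cycle; since $G_0$ is simple, it has no $2$-cycle; and since COMPRESS removes every square except inside a $K_{3,3}$ (and a $K_{3,3}$, being $3$-regular, can only appear as a whole connected graph $G_i$, in which case its unique $2$-factor is a single $6$-cycle with no square), while, as verified in Section~\ref{accounting} together with the appendices, no expansion operation ever produces a square in the $2$-factor, the final $2$-factor $F_0$ has no $4$-cycle. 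Hence every cycle of $F_0$ has length $6$ or at least $8$.

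Second, I would build the partition. For each cycle $C_i$ of $F_0$ of length $\ell_i\geq 8$, form the group $\mathcal{G}_i := \{C_i\}\cup S_{C_i}$, where $S_{C_i}$ is the set of corresponding $6$-cycles from Definition~\ref{t9}. Three things must be checked: (i) every $6$-cycle of $F_0$ lies in some $S_{C_i}$; (ii) it lies in exactly one; (iii) no cycle of length at least $8$ lies in any $S_{C_j}$. For (i): a $6$-cycle $C$ has a well-defined set of protected edges (Definition~\ref{t8}), and these edges lie either in some cycle of length at least $8$ (so $C\in S_{C_i}$ for that cycle) or in another $6$-cycle $C'$; in the latter case Proposition~\ref{oldlemma3} and Lemma~\ref{oldlemma4} guarantee that $C'$'s protected edges lie in a cycle of length at least $10\geq 8$, so again $C\in S_{C_i}$ for that cycle. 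This also shows the chains "protected edges inside another $6$-cycle" terminate after one step. For (ii): the protected-edge sets of distinct $6$-cycles are vertex-disjoint by Proposition~\ref{t7}, and each such set is contained in a unique cycle of $F_0$, so the assignment of a $6$-cycle to a group is unambiguous. Statement (iii) is immediate because $S_{C_j}$ consists only of $6$-cycles by definition. Therefore the groups $\{\mathcal{G}_i\}$ form a partition of the cycle set of $F_0$.

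Third, the counting. By Lemma~\ref{t12}, each group $\mathcal{G}_i$ has average cycle length at least $7$, so $|\mathcal{G}_i|\leq \frac{1}{7}\,L(\mathcal{G}_i)$, where $L(\mathcal{G}_i)$ is the total number of edges in the cycles of $\mathcal{G}_i$. Summing over the groups and using that they partition the cycles of a spanning $2$-factor, the number of cycles of $F_0$ equals $\sum_i |\mathcal{G}_i| \leq \frac{1}{7}\sum_i L(\mathcal{G}_i) = \frac{n}{7}$, as desired. It remains to observe that the algorithm runs in polynomial time: COMPRESS performs $O(n)$ gadget substitutions, recomputing a $2$-factor (the union of two of three edge-disjoint perfect matchings, extracted from a proper $3$-edge-colouring of a bipartite cubic graph) after each substitution; EXPAND reverses these substitutions with only local bookkeeping and the local search of Figures~\ref{localimprove1}--\ref{localimprove2}; and DOUBLETREE is a single spanning-tree computation. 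The hypothesis $n>6$ merely excludes $G_0=K_{3,3}$, for which the returned Hamiltonian cycle is a single cycle but $n/7<1$.

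I expect the main obstacle to be (i) and (ii) above: making sure the correspondence of Definition~\ref{t9} genuinely partitions the cycles, and in particular that a $6$-cycle's protected edges cannot be routed through an unbounded chain of other $6$-cycles. This is exactly what Proposition~\ref{oldlemma3}, Lemma~\ref{oldlemma4}, and Proposition~\ref{t7} were arranged to deliver, so the work reduces to invoking them correctly. A secondary point needing care is the appeal to the exhaustive appendix case analysis both to exclude $4$-cycles from $F_0$ and to guarantee that each expansion introduces at most one organic $6$-cycle; once those facts are granted, the averaging argument and the final summation are routine.
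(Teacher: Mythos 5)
Your proposal is correct and follows essentially the same route as the paper: the paper's proof of this theorem simply invokes Lemma~\ref{t12} to conclude the average cycle length is at least $7$ and then notes the running time, leaving the partition-and-average step implicit. You make that step explicit (well-definedness and disjointness of the groups via Proposition~\ref{oldlemma3}, Lemma~\ref{oldlemma4}, and Proposition~\ref{t7}), which is a faithful elaboration of the intended argument rather than a different one.
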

\begin{proof}
It follows directly from Lemma \ref{t12} that the average cycle length in
the $2$-factor produced by BIGCYCLE is at least $7$. Then, it must
be the case that BIGCYCLE produces a $2$-factor with at most $\frac{n}{7}$
cycles.

The BIGCYCLE algorithm performs $O(n)$ contractions and expansions.
In between each contraction, the algorithm will search for other subgraphs
to contract and will compute a $2$-factor in the current graph. Classical algorithms can compute $2$-factors with $O(n^{\frac{3}{2}})$ operations in the worse case \cite{hopcroft1973n}, so the contraction
phase of the algorithm runs in $O(n^{\frac{5}{2}})$ time. The expansion
phase of the algorithm takes $O(n)$ time in the worst case, since
there are at most $O(n)$ expansions and each one is performed in
constant time. Then, BIGCYCLE finds a $2$-factor with at most $\frac{n}{7}$
cycles in $O(n^{\frac{5}{2}})$.\end{proof}

\addtocounter{theorem}{-13}

We can now restate our main theorem from Section 1.1:

\begin{theorem}
Given a cubic bipartite connected graph $G$ with $n$ vertices, there is a polynomial
time algorithm that computes a spanning Eulerian multigraph $H$ in
$G$ with at most $\frac{9}{7}n$ edges. \end{theorem}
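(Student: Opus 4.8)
The plan is to assemble three ingredients already in place: the structural bound of Theorem~\ref{t13} on the number of cycles produced by BIGCYCLE, the tour-completion step of Proposition~\ref{t3}, and a trivial treatment of the degenerate small case. The overall claim is then just arithmetic.

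First I would dispose of the base case $n \le 6$. Since $G$ is cubic it has at least $4$ vertices, and a simple bipartite cubic graph has equal side sizes (counting edges from each side) and each vertex needs $3$ distinct neighbours on the other side, so each side has size at least $3$ and $n \ge 6$; when $n = 6$ both sides have size exactly $3$ and $3$-regularity forces $G = K_{3,3}$. In that case the algorithm computes a $2$-factor, which is a single Hamiltonian cycle with $6 = n \le \tfrac97 n$ edges — exactly the special case the BIGCYCLE pseudocode singles out — so the theorem holds.

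For $n > 6$, I would invoke Theorem~\ref{t13}: in polynomial time ($O(n^{5/2})$ by the analysis there) BIGCYCLE returns a $2$-factor $F_0$ of $G$ with at most $\tfrac{n}{7}$ cycles, say $k$ cycles. Since $G$ is connected, contracting each cycle of $F_0$ to a single vertex yields a connected multigraph, so Proposition~\ref{t3} applies: taking a spanning tree of this contracted graph (it has $k-1$ edges) and adding two copies of each tree edge to $F_0$ produces a spanning Eulerian multigraph $H$ with exactly $n + 2(k-1)$ edges. Plugging in $k \le \tfrac{n}{7}$ gives $|E(H)| \le n + 2\bigl(\tfrac{n}{7} - 1\bigr) = \tfrac97 n - 2 \le \tfrac97 n$. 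The DOUBLETREE step adds only $O(n)$ work on top of the $2$-factor computation, so the whole procedure is polynomial, completing the proof.

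I do not expect a genuine obstacle at this stage: all the difficulty has been pushed into Theorem~\ref{t13}, and through it into Lemma~\ref{t12} and the case analysis of Section~\ref{accounting}. The only points that warrant a word of care are verifying that the contracted cycle-graph is connected so that Proposition~\ref{t3} is applicable, and treating the regime $n \le 6$ separately because Theorem~\ref{t13} is stated only for $n > 6$; both are immediate.
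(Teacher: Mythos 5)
Your proposal is correct and follows essentially the same route as the paper's own proof: invoke Theorem~\ref{t13} for the $\frac{n}{7}$-cycle $2$-factor, apply Proposition~\ref{t3} to obtain $n+2(k-1)\le\frac{9}{7}n-2$ edges, and note the $O(n^{5/2})$ running time. Your explicit handling of the $K_{3,3}$ base case and the connectivity of the contracted cycle-graph only makes explicit what the paper leaves to the algorithm description and to Proposition~\ref{t3}'s hypothesis.
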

\begin{proof}
Theorem \ref{t13} proves that the COMPRESS and EXPAND phases
of BIGCYCLE produce a $2$-factor with at most $\frac{n}{7}$ cycles
for the required class of graphs. Proposition \ref{t3} demonstrates that
the DOUBLETREE phase in BIGCYCLE successfully extends the $2$-factor
into a spanning Eulerian multigraph with at most $\frac{9}{7}n-2$
edges.

From Theorem \ref{t13} that we can compute the required $2$-factor in $O(n^{\frac{5}{2}})$ time.
Once we have done this, we can compute the doubled spanning tree in $O(n)$ time as well, as the graph	
has $O(n)$ edges. The total running time of the algorithm is $O(n^{\frac{5}{2}})$ in the worst case.
\end{proof}

\section{An Extension to $k$-regular bipartite graphs}

In this section, we demonstrate how the $BIGCYCLE$ algorithm can
be used as a subroutine to produce an improved approximation algorithm
for $k$-regular bipartite graphs. The main idea in this algorithm
is that $k$-regular bipartite graphs contain cubic subgraphs on which
we can run $BIGCYCLE$ to obtain solutions to the cubic subgraphs,
which will also be solutions to the original $k$-regular bipartite
graphs. If the cubic subgraph we find is composed entirely of connected
components of size $8$ and larger, then we will get a solution with
at most $\frac{9}{7}n-2$ edges. However, if some of the components
are of size $6$ (in a cubic bipartite graph these will be $K_{3,3}$s),
then the $2$-factor we compute may have between $\frac{n}{6}$ and
$\frac{n}{7}$ cycles, which gives us a solution of size $x$ where
$\frac{9}{7}n-2\leq x\leq\frac{4}{3}n-2$. Algorithm \ref{oldalgo6} provides the
pseudo-code for selecting cubic subgraph from a $k$-regular bipartite
graph containing a small number of $K_{3,3}$s. In the analysis that
follows, we will bound the number of $K_{3,3}$s in the cubic subgraph
computed by Algorithm \ref{oldalgo6}, allowing us to prove a specific approximation
factor. The subroutine CountK33, used in Algorithm \ref{oldalgo6}, takes a graph
as input and returns the number of connected components of the graph
that are $K_{3,3}$s.

\begin{algorithm}[H] 
\caption{An algorithm to find a cubic subgraph from a $k$-regular bipartite
graph: CUBIC\label{oldalgo6}}

\begin{lyxcode}
Input:~A~connected,~undirected,~unweighted,~$k$-regular,~bipartite~graph,~$G_{0}=(V,E)$~\\
For~$i=1$~to~$k$:~\\
~~$M_{i}\leftarrow$FindPerfectMatching($G_{i-1}$)~\\
~~$G_{i}\leftarrow G_{i-1}\backslash M_{i}$~\\
~~If~$i=3$:~\\
~~~~$G_{cubic}\leftarrow(V,M_{1}\cup M_{2}\cup M_{3})$~\\
~~If~$i>3$:~\\
~~~~$G_{temp}\leftarrow(V,M_{1}\cup M_{2}\cup M_{i})$~\\
~~~~If~CountK33($G_{temp}$)<CountK33($G_{cubic}$):~\\
~~~~~~$G_{cubic}\leftarrow G_{temp}$~\\
End~Loop~\\
Return~$G_{cubic}$\end{lyxcode}
\end{algorithm}

\begin{lemma} \label{oldlemma7}
For any $k$-regular bipartite graph where $k\geq4$, $G$, at most
$\frac{n}{6(k-2)}$ $K_{3,3}$s are contained in CUBIC($G$), the
cubic bipartite graph output by Algorithm \ref{oldalgo6}.\end{lemma}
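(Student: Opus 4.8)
The plan is to track how many of the $k$ perfect matchings $M_4, M_5, \dots, M_k$ can "fail" to reduce the $K_{3,3}$ count when combined with $M_1 \cup M_2$, and then argue that the best of these choices cannot be too bad. Write $G_0$ for the original $k$-regular bipartite graph. After removing $M_1, M_2, M_3$, the leftover graph $G_3$ is $(k-3)$-regular and bipartite, so by König/Petersen it decomposes into $k-3$ further perfect matchings $M_4, \dots, M_k$ (this is the decomposition the algorithm walks through). For each $i \geq 4$, consider the cubic graph $G_{\text{temp}}^{(i)} := (V, M_1 \cup M_2 \cup M_i)$, and let $t_i$ be the number of connected components of $G_{\text{temp}}^{(i)}$ that are $K_{3,3}$'s. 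The algorithm returns a cubic subgraph whose $K_{3,3}$ count is $\min(t_3, t_4, \dots, t_k)$ where $t_3$ comes from $G_{\text{cubic}} = (V, M_1\cup M_2 \cup M_3)$; so it suffices to bound $\min_i t_i$.

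The key combinatorial observation is that a $K_{3,3}$ component in $(V, M_1 \cup M_2 \cup M_i)$ is a fixed $6$-vertex set $S$ on which $M_1 \cup M_2$ restricts to a $6$-cycle and $M_i$ restricts to the "diagonal" perfect matching completing it to $K_{3,3}$. Crucially, the $6$-cycle on $S$ is determined entirely by $M_1 \cup M_2$, independent of $i$; only the third matching varies. For a given such $S$ (one where $M_1 \cup M_2$ already forms a $6$-cycle), there is exactly one perfect matching on $S$ that turns it into $K_{3,3}$, namely the three "long diagonals" of the hexagon. Since $M_4, \dots, M_k$ are edge-disjoint, at most one of them can contain all three of those diagonal edges on $S$ — so each potential $K_{3,3}$ location $S$ is "realized" in at most one of the graphs $G_{\text{temp}}^{(4)}, \dots, G_{\text{temp}}^{(k)}$. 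Hence $\sum_{i=4}^{k} t_i \leq (\text{number of hexagon components of } M_1 \cup M_2) \leq n/6$, because the hexagon components are vertex-disjoint $6$-sets. (One should also handle $M_3$: either include $t_3$ in the averaging to get $\sum_{i=3}^k t_i \le n/6$ over $k-2$ terms, or note $t_3$ likewise counts hexagon locations realized by $M_3$ and is disjoint from those realized by $M_4,\dots,M_k$.) Averaging over the $k-2$ indices $i \in \{3, 4, \dots, k\}$ gives $\min_i t_i \leq \frac{1}{k-2}\sum_{i=3}^k t_i \leq \frac{n}{6(k-2)}$, which is exactly the claimed bound.

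I would present this as: (1) invoke the matching decomposition of $G_3$ to get $M_4, \dots, M_k$; (2) prove the structural claim that each $K_{3,3}$ in any $(V, M_1\cup M_2\cup M_i)$ sits on a hexagon component of $M_1 \cup M_2$ and is "completed" by the unique diagonal matching on that hexagon; (3) use edge-disjointness of $M_3, M_4, \dots, M_k$ to conclude that each hexagon location contributes to at most one $t_i$, giving $\sum_{i=3}^k t_i \le n/6$; (4) pigeonhole/average to get $\min_i t_i \le \frac{n}{6(k-2)}$; (5) note the algorithm's running minimum $G_{\text{cubic}}$ achieves this minimum. The main obstacle is step (2): one must verify carefully that $K_{3,3}$-ness of a component of $(V, M_1 \cup M_2 \cup M_i)$ really forces $M_1 \cup M_2$ to induce a $6$-cycle on those six vertices (as opposed to, say, two triangles — impossible here by bipartiteness — or a $6$-cycle plus a chord, impossible since $M_1\cup M_2$ is $2$-regular) and that the completing matching is unique, so that edge-disjointness genuinely separates the contributions. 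Everything after that is a clean counting/averaging argument.
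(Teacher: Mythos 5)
Your proposal is correct and follows essentially the same argument as the paper: each $K_{3,3}$ component must sit on a $6$-cycle of $M_1\cup M_2$ (of which there are at most $n/6$), edge-disjointness of the matchings means each such hexagon is completed to a $K_{3,3}$ by at most one $M_i$, and the pigeonhole principle over the $k-2$ choices gives the bound. Your added care in verifying that a $K_{3,3}$ component forces $M_1\cup M_2$ to induce a $6$-cycle with a unique completing matching is a reasonable elaboration of a step the paper treats as immediate, but the route is the same.
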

\begin{proof}
Consider an arbitrary $k$-regular bipartite graph $G$. Note that
the nodes of any $6$-cycle in $(V,M_{1}\cup M_{2})$ will form a
$K_{3,3}$ in $(V,M_{1}\cup M_{2}\cup M_{i})$ for at most $1$ value
of $i$ because the $k$ matchings $M_{1},\ldots,M_{k}$ are edge-disjoint.
There are at most $\frac{n}{6}$ $6$-cycles in $M_{1}\cup M_{2}$.
In the worst case, the nodes of each of these $6$-cycles can form
a $K_{3,3}$ in $(V,M_{1}\cup M_{2}\cup M_{i})$ for exactly one value
of $i$ where $3\leq i\leq k$. Then, by the pigeonhole principle,
there is some value $i$ where $3\leq i\leq k$ where $(V,M_{1}\cup M_{2}\cup M_{i})$
contains at most $\frac{\frac{n}{6}}{k-2}=\frac{n}{6(k-2)}$ $K_{3,3}$s.
Algorithm \ref{oldalgo6} finds $M_{i}$ where $(V,M_{1}\cup M_{2}\cup M_{i})$
contains the fewest $K_{3,3}$s, so the algorithm will necessarily
output three edge-disjoint matchings with at most $\frac{n}{6(k-2)}$
$K_{3,3}$s, proving the lemma.\end{proof}
\begin{theorem}
Given a $k$-regular bipartite $G$ with $n$ vertices where $k\geq4$,
there is a polynomial time algorithm that computes a spanning Eulerian
multigraph $H$ in $G$ with at most $(\frac{9}{7}+\frac{1}{21(k-2)})n-2$
edges.\end{theorem}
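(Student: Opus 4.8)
The plan is to feed $G$ to Algorithm~\ref{oldalgo6} (CUBIC) to extract a spanning cubic bipartite subgraph $G_{\mathrm{cubic}}$ of $G$ whose $K_{3,3}$ components are few in number, then to run BIGCYCLE on each connected component of $G_{\mathrm{cubic}}$, concatenate the per-component $2$-factors into a single spanning $2$-factor $F$ of $G$, and finally close $F$ into a spanning Eulerian multigraph via Proposition~\ref{t3}. The key observation making this work is that $G$ itself is connected even though $G_{\mathrm{cubic}}$ need not be, so Proposition~\ref{t3} still applies with $G$ as the ambient graph.

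First I would bound the number of cycles in $F$. Let $t$ be the number of components of $G_{\mathrm{cubic}}$ isomorphic to $K_{3,3}$; Lemma~\ref{oldlemma7} gives $t \le \frac{n}{6(k-2)}$. A $K_{3,3}$ is Hamiltonian, so BIGCYCLE returns a single $6$-cycle on each such component, accounting for $6t$ vertices and $t$ cycles. Every other component of $G_{\mathrm{cubic}}$ is a connected cubic bipartite graph, hence has an even number of vertices that is at least $8$ (the only cubic bipartite graph on $6$ vertices is $K_{3,3}$), so Theorem~\ref{t13} applies and yields a $2$-factor with at most one seventh of that component's vertex count many cycles. Summing over all components, the number $c$ of cycles of $F$ satisfies
\[
c \;\le\; t + \frac{n-6t}{7} \;=\; \frac{n+t}{7} \;\le\; \frac{n}{7} + \frac{n}{42(k-2)} .
\]

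Then I would apply Proposition~\ref{t3} to the connected graph $G$ and the spanning $2$-factor $F$: it extends to a spanning Eulerian multigraph with exactly $n + 2(c-1)$ edges, which by the bound above is at most
\[
n + 2\!\left(\frac{n+t}{7} - 1\right) = \frac{9n + 2t}{7} - 2 \;\le\; \left(\frac{9}{7} + \frac{1}{21(k-2)}\right)n - 2 ,
\]
using $t \le \frac{n}{6(k-2)}$ and $\frac{2}{7}\cdot\frac{1}{6} = \frac{1}{21}$ in the last step. Polynomiality is immediate: CUBIC performs $k$ perfect-matching computations plus the CountK33 comparisons, running BIGCYCLE on the components costs $O(n^{5/2})$ in total by the running-time analysis in Theorem~\ref{t13}, and the doubled spanning tree is then built in $O(n)$ time.

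The step I expect to require the most care is the handling of the small components: verifying that the $K_{3,3}$ components are exactly the $6$-vertex components, so that Theorem~\ref{t13} (which requires more than $6$ vertices) legitimately applies to all the other components, and confirming that a disjoint union of component $2$-factors is a bona fide spanning $2$-factor of $G$ to which Proposition~\ref{t3} can be invoked through the connectivity of $G$ rather than of $G_{\mathrm{cubic}}$. Once those points are settled, the remaining arithmetic is routine.
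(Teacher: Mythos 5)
Your proposal is correct and follows essentially the same route as the paper: extract a cubic subgraph via CUBIC, bound the number of $K_{3,3}$ components by Lemma~\ref{oldlemma7}, cover those with $6$-cycles and the rest with BIGCYCLE's $2$-factors, and close up via Proposition~\ref{t3} applied to the connected ambient graph $G$; your cycle count $(n+t)/7$ is the paper's $\frac{x_1}{6}+\frac{x_2}{7}$ with $x_1=6t$. Your explicit attention to why the non-$K_{3,3}$ components have at least $8$ vertices (so Theorem~\ref{t13} applies) is a point the paper leaves implicit, but it is not a different argument.
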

\begin{proof}
First, we will find a cubic subraph of $G$, $G_{cubic}$, by running
Algorithm \ref{oldalgo6} on $G$. In each component of $G_{cubic}$ that is a $K_{3,3}$
we will find a $6$-cycle covering these nodes. This can be done in
constant time by taking any walk through this component that does
not visit a node twice as long as this is possible, then returning
to the first node. In every other connected component, run the Contract
and Expand phases of the $BIGCYCLE$ algorithm, which will find a
$2$-factor over this component containing at most $\frac{n_{i}}{7}$
cycles, where $n_{i}$ is the number of nodes in the connected component.
The upper bound of $\frac{n_{i}}{7}$ cycles is proven by Theorem
\ref{t13}. By Lemma \ref{oldlemma7}, there are $x_{1}$ nodes in $K_{3,3}$s within $G_{cubic}$,
where $x_{1}\leq\frac{n}{k-2}$. These nodes are covered by $\frac{x_{1}}{6}$
cycles. Then, there $x_{2}$ nodes in the remaining components and
$x_{1}+x_{2}=n$. These $x_{2}$ nodes are covered by at most $\frac{x_{2}}{7}$
cycles. Then, the overall $2$-factor of $G$ has at most $\frac{x_{1}}{6}+\frac{x_{2}}{7}$
cycles. The following calculations compute an upper bound on these
cycles in terms of $n$:
\begin{align*}
\frac{x_{1}}{6}+\frac{x_{2}}{7} & =\frac{7x_{1}+6x_{2}}{42}\\
 & =\frac{6n+x_{1}}{42}\\
 & =\frac{n}{7}+\frac{x_{1}}{42}\\
 & \leq\frac{n}{7}+\frac{n}{42(k-2)}
\end{align*}

By Proposition \ref{t3}, this $2$-factor can be extended into a spanning
Eulerian multigraph in $G$ with at most $n+2(\frac{n}{7}+\frac{n}{42(k-2)}-1)=(\frac{9}{7}+\frac{1}{21(k-2)})n-2$
edges, proving the theorem.
\end{proof}

\subsection*{Acknowledgment}

We would like to thank Satoru Iwata and Alantha Newman for useful discussions during this project.

\bibliography{biblio}

\section{Appendix A: Squares}\label{apdxsq}

The purpose of the appendices is to demonstrate in detail how the
algorithm $BIGCYCLE$ ``winds'' a $2$-factor through the gadgets
($4$-cycles, $H_{1}$s, $H_{2}$s, $H_{3}$s, $H_{4}$s, $H_{5}$s,
and $H_{6}$s) as it expands the condensed graph back to its original
state.

\subsection{Gadget is covered by three edges of a single cycle}

If a gadget that replaced a square is covered by two disjoint cycles
of $F_{i}$, then the internal edge of the gadget must be included
from $F_{i}$. Then, $F_{i}$ must include either edge 1 or 3 and
either edge 2 or 4. However, while there are four orientations to
consider, they are all symmetric to each other, so there is only one
case to consider. In this case, we start with a cycle of length $x+3$
in $F_{i}$ and are returned a single cycle of length $x+5$ in $F_{i-1}$.
$x+3\geq6$, so $x+5\geq8$, meaning this class of expansions cannot
introduce an organic $6$-cycle into the $2$-factor.

\begin{figure}[H]
\begin{minipage}[t]{0.45\columnwidth}%
\begin{center}
\tikzstyle{node}=[circle, draw, fill=black!50,                         inner sep=0pt, minimum width=4pt]
\begin{tikzpicture}[auto,thick, scale=.4]	 	\node [node] (a) at (0,2) {}; 	\node [node] (b) at (0,-2) {}; 	\draw (a) -- (b); 	\coordinate (1) at  ($ (a)+ (-1,1) $); 	\draw (a) to node [label={[xshift=-.1cm]1}] {} (1); 	\coordinate (3) at  ($ (a)+ (1,1) $); 	\draw (a) to node [label={[xshift=.5cm, yshift=-.5cm]3}] {} (3); 	\coordinate (2) at  ($ (b)+ (-1,-1) $); 	\draw (b) to node [label={[xshift=-.5cm, yshift = .5cm]2}] {} (2); 	\coordinate (4) at  ($ (b)+ (1,-1) $); 	\draw (b) to node [label={4}] {} (4);
	\draw [line width=.08cm] (1) to (a) to (b) to (2); 	\draw [dashed, line width=.08cm] (1) to[out=180,in=180] node[label=x] {} (2);
\end{tikzpicture}
\par\end{center}

\begin{center}
\caption{A cycle of length $x+3$ that passes through a gadget that replaced
a square. }

\par\end{center}%
\end{minipage}\hfill{}%
\begin{minipage}[t]{0.45\columnwidth}%
\begin{center}
\tikzstyle{node}=[circle, draw, fill=black!50,                         inner sep=0pt, minimum width=4pt]
\begin{tikzpicture}[thick,scale=0.4] 	\node [node] (a) at (-2,2) {}; 	\node [node] (b) at (2,2) {}; 	\node [node] (c) at (2,-2) {}; 	\node [node] (d) at (-2,-2) {}; 	\draw (a) -- (b) -- (c) -- (d) -- (a); 	\coordinate (1) at  ($ (a)+ (-1,1) $); 	\coordinate (2) at  ($ (b)+ (1,1) $); 	\coordinate (3) at  ($ (c)+ (1,-1) $); 	\coordinate (4) at  ($ (d)+ (-1,-1) $); 	\draw (a) to node [label={[xshift=.5cm]1}] {} (1); 	\draw (b) to node [label={[xshift=-.5cm]2}] {} (2); 	\draw (c) to node [label={[xshift=-.5cm,yshift=-.5cm]3}] {} (3); 	\draw (d) to node [label={[xshift=.5cm,yshift=-.5cm]4}] {} (4);
	\draw [line width=.08cm] (1) to (a) to (d) to (c) to (b) to (2); 	\draw [dashed, line width=.08cm] (1) to[out=90,in=90] node[label=x] {} (2);
\end{tikzpicture}
\par\end{center}

\begin{center}
\caption{The cycle from the previous figure, after expanding the gadget, is
now of length $x+5$.}

\par\end{center}%
\end{minipage}
\end{figure}

\subsection{Gadget is covered by two cycles}

If a gadget that replaced a square is covered by two disjoint cycles
of $F_{i}$, then the internal edge of the gadget must be excluded
from $F_{i}$. Then, there is only one possible orientation in which
$F_{i}$ could cover the nodes of this gadget. In this case, we start
with cycles of lengths $x+2$ and $y+2$ in $F_{i}$ and are returned
a single cycle of length $x+y+6$ in $F_{i-1}$. $x+2\geq6$ and $y+2\geq6$,
so $x+y+6>8$, meaning this class of expansions cannot introduce an
organic $6$-cycle into the $2$-factor.

\begin{figure}[H]
\begin{minipage}[t]{0.45\columnwidth}%
\begin{center}
\tikzstyle{node}=[circle, draw, fill=black!50,                         inner sep=0pt, minimum width=4pt]
\begin{tikzpicture}[auto,thick, scale=.4]	 	\node [node] (a) at (0,2) {}; 	\node [node] (b) at (0,-2) {}; 	\draw (a) -- (b); 	\coordinate (1) at  ($ (a)+ (-1,1) $); 	\draw (a) to node [label={[xshift=-.1cm]1}] {} (1); 	\coordinate (3) at  ($ (a)+ (1,1) $); 	\draw (a) to node [label={[xshift=.5cm, yshift=-.5cm]3}] {} (3); 	\coordinate (2) at  ($ (b)+ (-1,-1) $); 	\draw (b) to node [label={[xshift=-.5cm, yshift = .5cm]2}] {} (2); 	\coordinate (4) at  ($ (b)+ (1,-1) $); 	\draw (b) to node [label={4}] {} (4);
	\draw [line width=.08cm] (1) to (a) to (3); 	\draw [line width=.08cm] (2) to (b) to (4); 	\draw [dashed, line width=.08cm] (1) to[out=90,in=90] node[label=x] {} (3); 	\draw [dashed, line width=.08cm] (2) to[out=-90,in=-90] node[label=y] {} (4);
\end{tikzpicture}
\par\end{center}

\begin{center}
\caption{Two cycles of lengths $x+2$ and $y+2$ that pass through a gadget
that replaced a square.}

\par\end{center}%
\end{minipage}\hfill{}%
\begin{minipage}[t]{0.45\columnwidth}%
\begin{center}
\tikzstyle{node}=[circle, draw, fill=black!50,                         inner sep=0pt, minimum width=4pt]
\begin{tikzpicture}[thick,scale=0.4]
\clip (-3,-5) rectangle (5,5);
\node [node] (a) at (-2,2) {}; 	\node [node] (b) at (2,2) {}; 	\node [node] (c) at (2,-2) {}; 	\node [node] (d) at (-2,-2) {}; 	\draw (a) -- (b) -- (c) -- (d) -- (a); 	\coordinate (1) at  ($ (a)+ (-1,1) $); 	\coordinate (2) at  ($ (b)+ (1,1) $); 	\coordinate (3) at  ($ (c)+ (1,-1) $); 	\coordinate (4) at  ($ (d)+ (-1,-1) $); 	\draw (a) to node [label={[xshift=.5cm]1}] {} (1); 	\draw (b) to node [label={[xshift=-.5cm]2}] {} (2); 	\draw (c) to node [label={[xshift=-.5cm,yshift=-.5cm]3}] {} (3); 	\draw (d) to node [label={[xshift=.5cm,yshift=-.5cm]4}] {} (4);
	\draw [line width=.08cm] (1) to (a) to (b) to (2); 	\draw [line width=.08cm] (4) to (d) to (c) to (3); 	\draw [dashed, line width=.08cm] (1) to[out=45,in=45, distance=7cm] node[label=x] {} (3); 	\draw [dashed, line width=.08cm] (2) to[out=-45,in=-45, distance=7cm] node[label=y] {} (4);
\end{tikzpicture}
\par\end{center}

\begin{center}
\caption{The cycles from the previous figure, after expanding the gadget, now
form a single cycle of length $x+y+6$.}

\par\end{center}%
\end{minipage}
\end{figure}

\subsection{Gadget is covered by four edges of a single cycle}

If a gadget that replaced a square is covered by four edges of a single
cycle of $F_{i}$, then there are two orientations in which $F_{i}$
could pass through these edges. However, these two cases are symmetric,
so there is only one case to consider. In this case, we start with
a cycle of length $x+y+4$ in $F_{i}$ and are returned a single cycle
of length $x+y+6$ in $F_{i-1}$. $x+y+4\geq6$, so $x+y+6\geq8$,
meaning this class of expansions cannot introduce an organic $6$-cycle
into the $2$-factor.

\begin{figure}[H]
\begin{minipage}[t]{0.45\columnwidth}%
\begin{center}
\tikzstyle{node}=[circle, draw, fill=black!50,                         inner sep=0pt, minimum width=4pt]
\begin{tikzpicture}[auto,thick, scale=.4]	 	\node [node] (a) at (0,2) {}; 	\node [node] (b) at (0,-2) {}; 	\draw (a) -- (b); 	\coordinate (1) at  ($ (a)+ (-1,1) $); 	\draw (a) to node [label={[xshift=-.1cm]1}] {} (1); 	\coordinate (3) at  ($ (a)+ (1,1) $); 	\draw (a) to node [label={[xshift=.5cm, yshift=-.5cm]3}] {} (3); 	\coordinate (2) at  ($ (b)+ (-1,-1) $); 	\draw (b) to node [label={[xshift=-.5cm, yshift = .5cm]2}] {} (2); 	\coordinate (4) at  ($ (b)+ (1,-1) $); 	\draw (b) to node [label={4}] {} (4);
	\draw [line width=.08cm] (1) to (a) to (3); 	\draw [line width=.08cm] (2) to (b) to (4); 	\draw [dashed, line width=.08cm] (1) to[out=180,in=180] node[label=x] {} (2); 	\draw [dashed, line width=.08cm] (3) to[out=0,in=-0] node[label=y] {} (4);
\end{tikzpicture} 
\par\end{center}

\begin{center}
\caption{A cycle of length $x+y+4$ that passes through a gadget that replaced
a square.}

\par\end{center}%
\end{minipage}\hfill{}%
\begin{minipage}[t]{0.45\columnwidth}%
\begin{center}
\tikzstyle{node}=[circle, draw, fill=black!50,                         inner sep=0pt, minimum width=4pt]
\begin{tikzpicture}[thick,scale=0.4] 	\node [node] (a) at (-2,2) {}; 	\node [node] (b) at (2,2) {}; 	\node [node] (c) at (2,-2) {}; 	\node [node] (d) at (-2,-2) {}; 	\draw (a) -- (b) -- (c) -- (d) -- (a); 	\coordinate (1) at  ($ (a)+ (-1,1) $); 	\coordinate (2) at  ($ (b)+ (1,1) $); 	\coordinate (3) at  ($ (c)+ (1,-1) $); 	\coordinate (4) at  ($ (d)+ (-1,-1) $); 	\draw (a) to node [label={[xshift=.5cm]1}] {} (1); 	\draw (b) to node [label={[xshift=-.5cm]2}] {} (2); 	\draw (c) to node [label={[xshift=-.5cm,yshift=-.5cm]3}] {} (3); 	\draw (d) to node [label={[xshift=.5cm,yshift=-.5cm]4}] {} (4);
	\draw [line width=.08cm] (1) to (a) to (d) to (4); 	\draw [line width=.08cm] (2) to (b) to (c) to (3); 	\draw [dashed, line width=.08cm] (1) to[out=45,in=135] node[label=x] {} (2); 	\draw [dashed, line width=.08cm] (3) to[out=-45,in=225] node[label=y] {} (4);
\end{tikzpicture}
\par\end{center}

\begin{center}
\caption{The cycle from the previous figure, after expanding the gadget, is
now of length $x+y+6$.}

\par\end{center}%
\end{minipage}
\end{figure}

\subsection{Super-vertex replaces a square}

A super-vertex that replaced a square is necessarily covered by $F_{i}$.
Then, there are three ways we can select the edge in each of the super-vertices
to exclude from $F_{i}$. However, the cases where edges 2 and 3 are
excluded from $F_{i}$ are symmetric so we will examine only the second
of these cases. Then, there are only two case to consider. In all
cases, we start with a cycle of length $x+2$ in $F_{i}$ and are
returned a single cycle of length $x+6$ in $F_{i-1}$. $x+2\geq6$,
so $x+6>8$, meaning this class of expansions cannot introduce an
organic $6$-cycle into the $2$-factor.

\begin{figure}[H]
\begin{minipage}[t]{0.45\columnwidth}%
\begin{center}
\tikzstyle{node}=[circle, draw, fill=black!50,                         inner sep=0pt, minimum width=4pt]
\begin{tikzpicture}[thick,scale=1.2] 	\node [node] (a) at (-1,.5) {}; 	\coordinate (1) at  ($ (a)+ (-.75,.75) $); 	\coordinate (2) at  ($ (a)+ (.75,.75) $); 	\coordinate (3) at  ($ (a)+ (0,-.8) $); 	\draw (a) to node [label={[xshift=.05cm]1}] {} (1); 	\draw (a) to node [label={[xshift=-.1cm]2}] {} (2); 	\draw (a) to node [label={[xshift=-.2cm,yshift=-.3cm]3}] {} (3);
	\draw [line width=.08cm] (1) to (a) to (2); 	\draw [dashed, line width=.08cm] (1) to[out=45,in=135] node[label=x] {} (2); \end{tikzpicture}
\par\end{center}

\begin{center}
\caption{A cycle of length $x+2$ that passes through a gadget that replaced
a square.}

\par\end{center}%
\end{minipage}\hfill{}%
\begin{minipage}[t]{0.45\columnwidth}%
\begin{center}
\tikzstyle{node}=[circle, draw, fill=black!50,                         inner sep=0pt, minimum width=4pt]
\begin{tikzpicture}[thick,scale=0.4] 	\node [node] (a) at (-2,2) {}; 	\node [node] (b) at (2,2) {}; 	\node [node] (c) at (2,-2) {}; 	\node [node] (d) at (-2,-2) {}; 	\node [node] (e) at (0,4) {};
	\draw (a) -- (b) -- (c) -- (d) -- (a); 	\draw (a) -- (e);
	\draw (c) to[out=0,in=50, distance=4.5cm] (e); 	\coordinate (1) at  ($ (e)+ (0,1.5) $); 	\coordinate (2) at  ($ (b)+ (1,1) $); 	\coordinate (3) at  ($ (d)+ (-1,-1) $); 	\draw (e) to node [label={[xshift=-.2cm]1}] {} (1); 	\draw (b) to node [label={[xshift=-.10cm]2}] {} (2); 	\draw (d) to node [label={[xshift=.1cm,yshift=-.5cm]3}] {} (3);
	\draw [line width=.08cm] (1) to (e) to (a) to (d) to (c) to (b) to (2); 	\draw [dashed, line width=.08cm] (1) to[out=45,in=45] node[label=x] {} (2);
\end{tikzpicture}
\par\end{center}

\begin{center}
\caption{The cycle from the previous figure, after expanding the gadget, is
now of length $x+6$.}

\par\end{center}%
\end{minipage}
\end{figure}
\begin{figure}[H]
\begin{minipage}[t]{0.45\columnwidth}%
\begin{center}
\tikzstyle{node}=[circle, draw, fill=black!50,                         inner sep=0pt, minimum width=4pt]
\begin{tikzpicture}[thick,scale=1.2] 	\node [node] (a) at (-1,.5) {}; 	\coordinate (1) at  ($ (a)+ (-.75,.75) $); 	\coordinate (2) at  ($ (a)+ (.75,.75) $); 	\coordinate (3) at  ($ (a)+ (0,-.8) $); 	\draw (a) to node [label={[xshift=.05cm]1}] {} (1); 	\draw (a) to node [label={[xshift=-.1cm]2}] {} (2); 	\draw (a) to node [label={[xshift=-.2cm,yshift=-.3cm]3}] {} (3);
	\draw [line width=.08cm] (2) to (a) to (3); 	\draw [dashed, line width=.08cm] (2) to[out=270,in=0] node[label={[xshift=-.1cm]x}] {} (3); \end{tikzpicture}
\par\end{center}

\begin{center}
\caption{A cycle of length $x+2$ that passes through a gadget that replaced
a square.}

\par\end{center}%
\end{minipage}\hfill{}%
\begin{minipage}[t]{0.45\columnwidth}%
\begin{center}
\tikzstyle{node}=[circle, draw, fill=black!50,                         inner sep=0pt, minimum width=4pt]
\begin{tikzpicture}[thick,scale=0.4] 	\node [node] (a) at (-2,2) {}; 	\node [node] (b) at (2,2) {}; 	\node [node] (c) at (2,-2) {}; 	\node [node] (d) at (-2,-2) {}; 	\node [node] (e) at (0,4) {};
	\draw (a) -- (b) -- (c) -- (d) -- (a); 	\draw (a) -- (e);
	\draw (c) to[out=0,in=50, distance=4.5cm] (e); 	\coordinate (1) at  ($ (e)+ (0,1.5) $); 	\coordinate (2) at  ($ (b)+ (1,1) $); 	\coordinate (3) at  ($ (d)+ (-1,-1) $); 	\draw (e) to node [label={[xshift=-.2cm]1}] {} (1); 	\draw (b) to node [label={[xshift=.10cm,yshift=-.6cm]2}] {} (2); 	\draw (d) to node [label={[xshift=.1cm,yshift=-.5cm]3}] {} (3);
	\draw [line width=.08cm] (3) to (d) to (c) to[out=0,in=50,distance=4.5cm] (e); 	\draw [line width=.08cm] (e) to (a) to (b) to (2); 	\draw [dashed, line width=.08cm] (2) to[out=160,in=135, distance=4.7cm] node[label={[xshift=0cm]x}] {} (3);
\end{tikzpicture}
\par\end{center}

\begin{center}
\caption{The cycle from the previous figure, after expanding the gadget, is
now of length $x+6$.}

\par\end{center}%
\end{minipage}
\end{figure}

\subsection{Super-edge replaces a square}

\begin{figure}[H]
\begin{minipage}[t]{0.45\columnwidth}%
\begin{center}
\tikzstyle{node}=[circle, draw, fill=black!50,                         inner sep=0pt, minimum width=4pt]
\begin{tikzpicture}[thick,scale=0.4] 	\node [node] (A) at (0,5.5) {A}; 	\node [node] (B) at (0,-5.5) {B}; 	\draw (A) to (B);
\end{tikzpicture}
\par\end{center}

\begin{center}
\caption{The super-edge is not included in the $2$-factor.}

\par\end{center}%
\end{minipage}\hfill{}%
\begin{minipage}[t]{0.45\columnwidth}%
\begin{center}
\tikzstyle{node}=[circle, draw, fill=black!50,                         inner sep=0pt, minimum width=4pt]
\begin{tikzpicture}[thick,scale=0.4] 	\node [node] (a) at (-2,2) {}; 	\node [node] (b) at (2,2) {}; 	\node [node] (c) at (2,-2) {}; 	\node [node] (d) at (-2,-2) {}; 	\node [node] (e) at (0,4) {}; 	\node [node] (f) at (0,-4) {}; 	\draw (a) -- (b) -- (c) -- (d) -- (a); 	\draw (a) -- (e); 	\draw (d) -- (f); 	\draw (c) to[out=0,in=50, distance=5cm] (e); 	\draw (b) to[out=-50,in=0, distance=5cm] (f); 	\node [node] (1) at  ($ (e)+ (0,1.5) $) {A}; 	\node [node] (2) at  ($ (f)+ (0,-1.5) $) {B}; 	\draw (e) to (1); 	\draw (f) to (2);
	\draw [line width=.08cm] (e) to (a) to (d) to (f) to[out=0,in=-50,distance=5cm] (b) to (c) to[out=0,in=50, distance=5cm] (e);
\end{tikzpicture}
\par\end{center}

\begin{center}
\caption{A cycle of length $6$ passes through the square after the super-edge
in the previous figure is expanded. The impact of these $6$-cycles
on the algorithm's result is analyzed in Sections \ref{expandsquare} and \ref{analyzingworstcase}}

\par\end{center}%
\end{minipage}
\end{figure}
\begin{figure}[H]
\begin{minipage}[t]{0.45\columnwidth}%
\begin{center}
\tikzstyle{node}=[circle, draw, fill=black!50,                         inner sep=0pt, minimum width=4pt]
\begin{tikzpicture}[thick,scale=0.4] 	\node [node] (A) at (0,5.5) {A}; 	\node [node] (B) at (0,-5.5) {B}; 	\draw [line width=.08cm] (A) to (B); 	\draw [dashed, line width=.08cm] (A) to[out=0,in=0, distance=4cm] node[label={[xshift=-.2cm]x}] {} (B);
\end{tikzpicture}
\par\end{center}

\begin{center}
\caption{A cycle of length $x+1$ that passes through a gadget that replaced
a square.}

\par\end{center}%
\end{minipage}\hfill{}%
\begin{minipage}[t]{0.45\columnwidth}%
\begin{center}
\tikzstyle{node}=[circle, draw, fill=black!50,                         inner sep=0pt, minimum width=4pt]
\begin{tikzpicture}[thick,scale=0.4] 	\node [node] (a) at (-2,2) {}; 	\node [node] (b) at (2,2) {}; 	\node [node] (c) at (2,-2) {}; 	\node [node] (d) at (-2,-2) {}; 	\node [node] (e) at (0,4) {}; 	\node [node] (f) at (0,-4) {}; 	\draw (a) -- (b) -- (c) -- (d) -- (a); 	\draw (a) -- (e); 	\draw (d) -- (f); 	\draw (c) to[out=0,in=50, distance=5cm] (e); 	\draw (b) to[out=-50,in=0, distance=5cm] (f); 	\node [node] (1) at  ($ (e)+ (0,1.5) $) {A}; 	\node [node] (2) at  ($ (f)+ (0,-1.5) $) {B}; 	\draw (e) to (1); 	\draw (f) to (2);
	\draw [line width=.08cm] (1) to (e) to (a) to (b) to (c) to (d) to (f) to (2);
	\draw [dashed, line width=.08cm] (1) to[out=180,in=180, distance=4cm] node[label={[xshift=-.2cm]x}] {} (2); \end{tikzpicture}
\par\end{center}

\begin{center}
\caption{The cycle from the previous figure, after expanding the gadget, is
now of length $x+7$.}

\par\end{center}%
\end{minipage}
\end{figure}

\section{Appendix B: $H_{1}$s} \label{apdxh1}

\subsection{Gadget is covered by two cycles}

If a $H'_{1}$ is covered by two disjoint cycle in $F_{i}$, then,
there are three ways we can select the edge in each of the super-vertices
to exclude from $F_{i}$. However, without loss of generality, we
can fix the edge of the first super-vertex to exclude from $F_{i}$.
Then, there are only three cases to consider. In each of these cases,
we start with cycles of lengths $x+2$ and $y+2$ in $F_{i}$ and
are returned either a single cycle of length $x+y+8$ or two cycles
of lengths $x+4$ and $y+4$ in $F_{i-1}$. $x+2\geq6$ and $y+2\geq6$,
so $x+y+8>8$, meaning the first case cannot introduce an organic
$6$-cycle into the $2$-factor. Similarly, we conclude in the second
case that $x+4\geq8$ and $y+4\geq8$ so neither the $x+4$ or $y+4$
cycles can be organic $6$-cycles.

\begin{figure}[H]
\begin{minipage}[t]{0.45\columnwidth}%
\begin{center}
\tikzstyle{node}=[circle, draw, fill=black!50,                         inner sep=0pt, minimum width=4pt]
\begin{tikzpicture}[thick,scale=1.2] 	\node [node] (a) at (-1,.5) {};
	\node [node] (b) at (1,.5) {};
	\coordinate (1) at  ($ (a)+ (-.75,.75) $); 	\coordinate (2) at  ($ (b)+ (-.75,.75) $); 	\coordinate (3) at  ($ (a)+ (.75,.75) $); 	\coordinate (4) at  ($ (b)+ (.75,.75) $); 	\coordinate (5) at  ($ (a)+ (0,-.8) $); 	\coordinate (6) at  ($ (b) + (0,-.8) $); 	\draw (a) to node [label={[xshift=.05cm]1}] {} (1); 	\draw (b) to node [label={[xshift=.05cm]2}] {} (2); 	\draw (a) to node [label={[xshift=-.1cm]3}] {} (3); 	\draw (b) to node [label={[xshift=-.1cm]4}] {} (4); 	\draw (a) to node [label={[xshift=-.2cm,yshift=-.3cm]5}] {} (5); 	\draw (b) to node [label={[xshift=-.2cm,yshift=-.3cm]6}] {} (6);
	\draw [line width=.08cm] (1) to (a) to (3); 	\draw [line width=.08cm] (2) to (b) to (4); 	\draw [dashed, line width=.08cm] (1) to[out=90,in=90] node[label={[xshift=0cm]x}] {} (3);
	\draw [dashed, line width=.08cm] (2) to[out=90,in=90] node[label={[xshift=0cm]y}] {} (4); 
\end{tikzpicture}
\par\end{center}

\begin{center}
\caption{Two cycles of lengths $x+2$ and $y+2$ that pass through a $H'_{1}$.}

\par\end{center}%
\end{minipage}\hfill{}%
\begin{minipage}[t]{0.45\columnwidth}%
\begin{center}
\tikzstyle{node}=[circle, draw, fill=black!50,                         inner sep=0pt, minimum width=4pt]
\begin{tikzpicture}[thick,scale=.8] 	\node [node] (a) at (-1,.5) {}; 	\node [node] (b) at (0,1.2) {}; 	\node [node] (c) at (1,.5) {}; 	\node [node] (d) at (1,-.5) {}; 	\node [node] (e) at (0,-1.2) {}; 	\node [node] (f) at (-1,-.5) {};
	\draw (a) -- (b) -- (c) -- (d) -- (e) -- (f) -- (a); 	\coordinate (1) at  ($ (a)+ (-.75,.75) $); 	\coordinate (2) at  ($ (b)+ (0,.8) $); 	\coordinate (3) at  ($ (c)+ (.75,.75) $); 	\coordinate (4) at  ($ (d)+ (.75,-.75) $); 	\coordinate (5) at  ($ (e)+ (0,-.8) $); 	\coordinate (6) at  ($ (f) + (-.75,-.75) $); 	\draw (a) to node [label={[xshift=.1cm]1}] {} (1); 	\draw (b) to node [label={[xshift=-.2cm]2}] {} (2); 	\draw (c) to node [label={[xshift=-.2cm]3}] {} (3); 	\draw (d) to node [label={[xshift=.3cm,yshift=-.2cm]4}] {} (4); 	\draw (e) to node [label={[xshift=-.2cm,yshift=-.3cm]5}] {} (5); 	\draw (f) to node [label={[xshift=-.1cm]6}] {} (6);
	\draw [line width=.08cm] (a) to (1); 	\draw [line width=.08cm] (b) to (2); 	\draw [line width=.08cm] (c) to (3); 	\draw [line width=.08cm] (d) to (4); 	\draw [line width=.08cm] (b) to (c); 	\draw [line width=.08cm] (a) to (f) to (e) to (d);
	\draw [dashed, line width=.08cm] (1) to[distance=2cm] node[label=x] {} (3); 	\draw [dashed, line width=.08cm] (2) to[in=50,distance=2.5cm] node[label=y] {} (4); \end{tikzpicture}
\par\end{center}

\begin{center}
\caption{The cycles from the previous figure, after expanding the gadget, now
form a single cycle of length $x+y+8$.}

\par\end{center}%
\end{minipage}
\end{figure}
\begin{figure}[H]
\begin{minipage}[t]{0.45\columnwidth}%
\begin{center}
\tikzstyle{node}=[circle, draw, fill=black!50,                         inner sep=0pt, minimum width=4pt]
\begin{tikzpicture}[thick,scale=1.2] 	\node [node] (a) at (-1,.5) {};
	\node [node] (b) at (1,.5) {};
	\coordinate (1) at  ($ (a)+ (-.75,.75) $); 	\coordinate (2) at  ($ (b)+ (-.75,.75) $); 	\coordinate (3) at  ($ (a)+ (.75,.75) $); 	\coordinate (4) at  ($ (b)+ (.75,.75) $); 	\coordinate (5) at  ($ (a)+ (0,-.8) $); 	\coordinate (6) at  ($ (b) + (0,-.8) $); 	\draw (a) to node [label={[xshift=.05cm]1}] {} (1); 	\draw (b) to node [label={[xshift=.05cm]2}] {} (2); 	\draw (a) to node [label={[xshift=-.1cm]3}] {} (3); 	\draw (b) to node [label={[xshift=-.1cm]4}] {} (4); 	\draw (a) to node [label={[xshift=-.2cm,yshift=-.3cm]5}] {} (5); 	\draw (b) to node [label={[xshift=-.2cm,yshift=-.3cm]6}] {} (6);
	\draw [line width=.08cm] (1) to (a) to (3); 	\draw [line width=.08cm] (2) to (b) to (6); 	\draw [dashed, line width=.08cm] (1) to[out=90,in=90] node[label={[xshift=0cm]x}] {} (3);
	\draw [dashed, line width=.08cm] (2) to[out=200,in=240] node[label={[xshift=-.3cm]y}] {} (6); 
\end{tikzpicture}
\par\end{center}

\begin{center}
\caption{Two cycles of lengths $x+2$ and $y+2$ that pass through a $H'_{1}$.}

\par\end{center}%
\end{minipage}\hfill{}%
\begin{minipage}[t]{0.45\columnwidth}%
\begin{center}
\tikzstyle{node}=[circle, draw, fill=black!50,                         inner sep=0pt, minimum width=4pt]
\begin{tikzpicture}[thick,scale=.8] 	\node [node] (a) at (-1,.5) {}; 	\node [node] (b) at (0,1.2) {}; 	\node [node] (c) at (1,.5) {}; 	\node [node] (d) at (1,-.5) {}; 	\node [node] (e) at (0,-1.2) {}; 	\node [node] (f) at (-1,-.5) {};
	\draw (a) -- (b) -- (c) -- (d) -- (e) -- (f) -- (a); 	\coordinate (1) at  ($ (a)+ (-.75,.75) $); 	\coordinate (2) at  ($ (b)+ (0,.8) $); 	\coordinate (3) at  ($ (c)+ (.75,.75) $); 	\coordinate (4) at  ($ (d)+ (.75,-.75) $); 	\coordinate (5) at  ($ (e)+ (0,-.8) $); 	\coordinate (6) at  ($ (f) + (-.75,-.75) $); 	\draw (a) to node [label={[xshift=.1cm]1}] {} (1); 	\draw (b) to node [label={[xshift=-.2cm]2}] {} (2); 	\draw (c) to node [label={[xshift=-.2cm]3}] {} (3); 	\draw (d) to node [label={[xshift=.3cm,yshift=-.2cm]4}] {} (4); 	\draw (e) to node [label={[xshift=-.2cm,yshift=-.3cm]5}] {} (5); 	\draw (f) to node [label={[xshift=-.1cm]6}] {} (6);
	\draw [line width=.08cm] (2) to (b) to (a) to (1); 	\draw [line width=.08cm] (6) to (f) to (e) to (d) to (c) to (3); 	\draw [dashed, line width=.08cm] (1) to[distance=2cm] node[label=x] {} (3); 	\draw [dashed, line width=.08cm] (2) to[out=120,in=135,distance=2.5cm] node[label=y] {} (6); \end{tikzpicture}
\par\end{center}

\begin{center}
\caption{The cycles from the previous figure, after expanding the gadget, now
form a single cycle of length $x+y+8$.}

\par\end{center}%
\end{minipage}
\end{figure}
\begin{figure}[H]
\begin{minipage}[t]{0.45\columnwidth}%
\begin{center}
 \tikzstyle{node}=[circle, draw, fill=black!50,                         inner sep=0pt, minimum width=4pt]
\begin{tikzpicture}[thick,scale=1.2] 	\node [node] (a) at (-1,.5) {};
	\node [node] (b) at (1,.5) {};
	\coordinate (1) at  ($ (a)+ (-.75,.75) $); 	\coordinate (2) at  ($ (b)+ (-.75,.75) $); 	\coordinate (3) at  ($ (a)+ (.75,.75) $); 	\coordinate (4) at  ($ (b)+ (.75,.75) $); 	\coordinate (5) at  ($ (a)+ (0,-.8) $); 	\coordinate (6) at  ($ (b) + (0,-.8) $); 	\draw (a) to node [label={[xshift=.05cm]1}] {} (1); 	\draw (b) to node [label={[xshift=.05cm]2}] {} (2); 	\draw (a) to node [label={[xshift=-.1cm]3}] {} (3); 	\draw (b) to node [label={[xshift=-.1cm]4}] {} (4); 	\draw (a) to node [label={[xshift=-.2cm,yshift=-.3cm]5}] {} (5); 	\draw (b) to node [label={[xshift=-.2cm,yshift=-.3cm]6}] {} (6);
	\draw [line width=.08cm] (1) to (a) to (3); 	\draw [line width=.08cm] (4) to (b) to (6); 	\draw [dashed, line width=.08cm] (1) to[out=90,in=90] node[label={[xshift=0cm]x}] {} (3);
	\draw [dashed, line width=.08cm] (4) to[out=20,in=0] node[label={[xshift=.3cm]y}] {} (6); 
\end{tikzpicture}
\par\end{center}

\begin{center}
\caption{Two cycles of lengths $x+2$ and $y+2$ that pass through a $H'_{1}$.}

\par\end{center}%
\end{minipage}\hfill{}%
\begin{minipage}[t]{0.45\columnwidth}%
\begin{center}
 \tikzstyle{node}=[circle, draw, fill=black!50,                         inner sep=0pt, minimum width=4pt]
\begin{tikzpicture}[thick,scale=.8] 	\node [node] (a) at (-1,.5) {}; 	\node [node] (b) at (0,1.2) {}; 	\node [node] (c) at (1,.5) {}; 	\node [node] (d) at (1,-.5) {}; 	\node [node] (e) at (0,-1.2) {}; 	\node [node] (f) at (-1,-.5) {};
	\draw (a) -- (b) -- (c) -- (d) -- (e) -- (f) -- (a); 	\coordinate (1) at  ($ (a)+ (-.75,.75) $); 	\coordinate (2) at  ($ (b)+ (0,.8) $); 	\coordinate (3) at  ($ (c)+ (.75,.75) $); 	\coordinate (4) at  ($ (d)+ (.75,-.75) $); 	\coordinate (5) at  ($ (e)+ (0,-.8) $); 	\coordinate (6) at  ($ (f) + (-.75,-.75) $); 	\draw (a) to node [label={[xshift=.1cm]1}] {} (1); 	\draw (b) to node [label={[xshift=-.2cm]2}] {} (2); 	\draw (c) to node [label={[xshift=-.2cm]3}] {} (3); 	\draw (d) to node [label={[xshift=.3cm,yshift=-.2cm]4}] {} (4); 	\draw (e) to node [label={[xshift=-.2cm,yshift=-.3cm]5}] {} (5); 	\draw (f) to node [label={[xshift=-.1cm]6}] {} (6);
	\draw [line width=.08cm] (3) to (c) to (b) to (a) to (1); 	\draw [line width=.08cm] (6) to (f) to (e) to (d) to (4); 	\draw [dashed, line width=.08cm] (1) to[distance=2cm] node[label=x] {} (3); 	\draw [dashed, line width=.08cm] (4) to[out=240,in=300,distance=2cm] node[label=y] {} (6); \end{tikzpicture}
\par\end{center}

\begin{center}
\caption{The cycles from the previous figure, after expanding the gadget, now
form two cycles of lengths $x+4$ and $y+4$, respectively.}

\par\end{center}%
\end{minipage}
\end{figure}

\subsection{Gadget is covered by one cycle}

If a $H'_{1}$ is covered by a single cycle in $F_{i}$, then, there
are three ways we can select the edge in each of the super-vertices
to exclude from $F_{i}$. However, without loss of generality, we
can fix the edge of the first super-vertex to exclude from $F_{i}$.
Then, in each of these three configurations, we examine the two orientations
in which the cycle can pass through the two super-vertices. In all
$6$ cases, we start with a cycle of lengths $x+y+4$ in $F_{i}$
and are returned either a single cycle of length $x+y+8$, two cycles
of lengths $x+3$ and $y+5$, or two cycles of lengths $x+5$ and
$y+3$ in $F_{i-1}$. $x+y+4\geq6$, so $x+y+8>8$, meaning the first
case cannot introduce an organic $6$-cycle into the $2$-factor.
In the later two cases the $x+3$ or $y+3$ cycle can be an organic
$6$-cycle, but this is the expansion examined in detail in Sections
\ref{expandh1} and \ref{analyzingworstcase}.

\begin{figure}[H]
\begin{minipage}[t]{0.45\columnwidth}%
\begin{center}
\tikzstyle{node}=[circle, draw, fill=black!50,                         inner sep=0pt, minimum width=4pt]
\begin{tikzpicture}[thick,scale=.8] 	\node [node] (a) at (0,1.5) {};
	\node [node] (b) at (0,-1.5) {};
	\node [node] (t1) at  ($ (a)+ (-.5,-1) $) {}; 	\node [node] (t2) at  ($ (a)+ (.5,-1) $) {}; 	\coordinate (t3) at  ($ (a)+ (0,.75) $); 	\node [node] (b1) at  ($ (b)+ (-.5,1) $) {}; 	\node [node] (b2) at  ($ (b)+ (.5,1) $) {}; 	\coordinate (b3) at  ($ (b) + (0,-.75) $); 	\draw (a) to node [label={[xshift=-.1cm]1}] {} (t1); 	\draw (a) to node [label={[xshift=.1cm]3}] {} (t2); 	\draw (a) to node [label={[xshift=-.2cm]5}] {} (t3); 	\draw (b) to node [label={[xshift=-.3cm,yshift=-.2cm]2}] {} (b1); 	\draw (b) to node [label={[xshift=.3cm,yshift=-.20cm]4}] {} (b2); 	\draw (b) to node [label={[xshift=-.2cm,yshift=-.3cm]6}] {} (b3);
	\draw [line width=.08cm] (t1) to (a) to (t2); 	\draw [line width=.08cm] (b1) to (b) to (b2); 	\draw [dashed, line width=.08cm] (t1) to node[label={[xshift=-.2cm,yshift=-.25cm]x}] {} (b1); 	\draw [dashed, line width=.08cm] (t2) to node[label={[xshift=.2cm,yshift=-.25cm]y}] {} (b2); 
\end{tikzpicture}
\par\end{center}

\begin{center}
\caption{A cycle of length $x+y+4$ that passes through a $H'_{1}$.}

\par\end{center}%
\end{minipage}\hfill{}%
\begin{minipage}[t]{0.45\columnwidth}%
\begin{center}
\tikzstyle{node}=[circle, draw, fill=black!50,                         inner sep=0pt, minimum width=4pt]
\begin{tikzpicture}[thick,scale=.8] 	\node [node] (a) at (-1,.5) {}; 	\node [node] (b) at (0,1.2) {}; 	\node [node] (c) at (1,.5) {}; 	\node [node] (d) at (1,-.5) {}; 	\node [node] (e) at (0,-1.2) {}; 	\node [node] (f) at (-1,-.5) {};
	\draw (a) -- (b) -- (c) -- (d) -- (e) -- (f) -- (a); 	\coordinate (1) at  ($ (a)+ (-.75,.75) $); 	\coordinate (2) at  ($ (b)+ (0,.8) $); 	\coordinate (3) at  ($ (c)+ (.75,.75) $); 	\coordinate (4) at  ($ (d)+ (.75,-.75) $); 	\coordinate (5) at  ($ (e)+ (0,-.8) $); 	\coordinate (6) at  ($ (f) + (-.75,-.75) $); 	\draw (a) to node [label={[xshift=.1cm]1}] {} (1); 	\draw (b) to node [label={[xshift=-.2cm]2}] {} (2); 	\draw (c) to node [label={[xshift=-.2cm]3}] {} (3); 	\draw (d) to node [label={[xshift=.3cm,yshift=-.2cm]4}] {} (4); 	\draw (e) to node [label={[xshift=-.2cm,yshift=-.3cm]5}] {} (5); 	\draw (f) to node [label={[xshift=-.1cm]6}] {} (6);
	\draw [line width=.08cm] (a) to (1); 	\draw [line width=.08cm] (b) to (2); 	\draw [line width=.08cm] (c) to (3); 	\draw [line width=.08cm] (d) to (4); 	\draw [line width=.08cm] (b) to (c); 	\draw [line width=.08cm] (a) to (f) to (e) to (d);
	\draw [dashed, line width=.08cm] (1) to[out=90,in=110] node[label=x] {} (2); 	\draw [dashed, line width=.08cm] (3) to[out=20,in=50] node[label={[xshift=.2cm]y}] {} (4); \end{tikzpicture}
\par\end{center}

\begin{center}
\caption{The cycle from the previous figure, after expanding the gadget, is
now of length $x+y+8$.}

\par\end{center}%
\end{minipage}
\end{figure}
\begin{figure}[H]
\begin{minipage}[t]{0.45\columnwidth}%
\begin{center}
\tikzstyle{node}=[circle, draw, fill=black!50,                         inner sep=0pt, minimum width=4pt]
\begin{tikzpicture}[thick,scale=.8] 	\node [node] (a) at (0,1.5) {};
	\node [node] (b) at (0,-1.5) {};
	\node [node] (t1) at  ($ (a)+ (-.5,-1) $) {}; 	\node [node] (t2) at  ($ (a)+ (.5,-1) $) {}; 	\coordinate (t3) at  ($ (a)+ (0,.75) $); 	\node [node] (b1) at  ($ (b)+ (-.5,1) $) {}; 	\node [node] (b2) at  ($ (b)+ (.5,1) $) {}; 	\coordinate (b3) at  ($ (b) + (0,-.75) $); 	\draw (a) to node [label={[xshift=-.1cm]1}] {} (t1); 	\draw (a) to node [label={[xshift=.1cm]3}] {} (t2); 	\draw (a) to node [label={[xshift=-.2cm]5}] {} (t3); 	\draw (b) to node [label={[xshift=-.3cm,yshift=-.2cm]2}] {} (b1); 	\draw (b) to node [label={[xshift=.3cm,yshift=-.20cm]6}] {} (b2); 	\draw (b) to node [label={[xshift=-.2cm,yshift=-.3cm]4}] {} (b3);
	\draw [line width=.08cm] (t1) to (a) to (t2); 	\draw [line width=.08cm] (b1) to (b) to (b2); 	\draw [dashed, line width=.08cm] (t1) to node[label={[xshift=-.2cm,yshift=-.25cm]x}] {} (b1); 	\draw [dashed, line width=.08cm] (t2) to node[label={[xshift=.2cm,yshift=-.25cm]y}] {} (b2); 
\end{tikzpicture}
\par\end{center}

\begin{center}
\caption{A cycle of length $x+y+4$ that passes through a $H'_{1}$.}

\par\end{center}%
\end{minipage}\hfill{}%
\begin{minipage}[t]{0.45\columnwidth}%
\begin{center}
 \tikzstyle{node}=[circle, draw, fill=black!50,                         inner sep=0pt, minimum width=4pt]
\begin{tikzpicture}[thick,scale=.8] 	\node [node] (a) at (-1,.5) {}; 	\node [node] (b) at (0,1.2) {}; 	\node [node] (c) at (1,.5) {}; 	\node [node] (d) at (1,-.5) {}; 	\node [node] (e) at (0,-1.2) {}; 	\node [node] (f) at (-1,-.5) {};
	\draw (a) -- (b) -- (c) -- (d) -- (e) -- (f) -- (a); 	\coordinate (1) at  ($ (a)+ (-.75,.75) $); 	\coordinate (2) at  ($ (b)+ (0,.8) $); 	\coordinate (3) at  ($ (c)+ (.75,.75) $); 	\coordinate (4) at  ($ (d)+ (.75,-.75) $); 	\coordinate (5) at  ($ (e)+ (0,-.8) $); 	\coordinate (6) at  ($ (f) + (-.75,-.75) $); 	\draw (a) to node [label={[xshift=.1cm]1}] {} (1); 	\draw (b) to node [label={[xshift=-.2cm]2}] {} (2); 	\draw (c) to node [label={[xshift=-.2cm]3}] {} (3); 	\draw (d) to node [label={[xshift=.3cm,yshift=-.2cm]4}] {} (4); 	\draw (e) to node [label={[xshift=-.2cm,yshift=-.3cm]5}] {} (5); 	\draw (f) to node [label={[xshift=-.1cm]6}] {} (6);
	\draw [line width=.08cm] (2) to (b) to (a) to (1); 	\draw [line width=.08cm] (6) to (f) to (e) to (d) to (c) to (3); 	\draw [dashed, line width=.08cm] (1) to[out=90,in=110] node[label=x] {} (2); 	\draw [dashed, line width=.08cm] (3) to[out=-60,in=-45,distance=3cm] node[label={[xshift=0cm,yshift=-.6cm]y}] {} (6); \end{tikzpicture}
\par\end{center}

\begin{center}
\caption{The cycle from the previous figure, after expanding the gadget, is
now two cycles, of lengths $x+3$ and $y+5$, respectively. This expansion
can produce an organic $6$-cycle if $x=3$ and the cycle is organic.
The impact of these $6$-cycles on the algorithm's result is analyzed
in Sections \ref{expandh1} and \ref{analyzingworstcase}.}

\par\end{center}%
\end{minipage}
\end{figure}
\begin{figure}[H]
\begin{minipage}[t]{0.45\columnwidth}%
\begin{center}
\tikzstyle{node}=[circle, draw, fill=black!50,                         inner sep=0pt, minimum width=4pt]
\begin{tikzpicture}[thick,scale=.8] 	\node [node] (a) at (0,1.5) {};
	\node [node] (b) at (0,-1.5) {};
	\node [node] (t1) at  ($ (a)+ (-.5,-1) $) {}; 	\node [node] (t2) at  ($ (a)+ (.5,-1) $) {}; 	\coordinate (t3) at  ($ (a)+ (0,.75) $); 	\node [node] (b1) at  ($ (b)+ (-.5,1) $) {}; 	\node [node] (b2) at  ($ (b)+ (.5,1) $) {}; 	\coordinate (b3) at  ($ (b) + (0,-.75) $); 	\draw (a) to node [label={[xshift=-.1cm]1}] {} (t1); 	\draw (a) to node [label={[xshift=.1cm]3}] {} (t2); 	\draw (a) to node [label={[xshift=-.2cm]5}] {} (t3); 	\draw (b) to node [label={[xshift=-.3cm,yshift=-.2cm]4}] {} (b1); 	\draw (b) to node [label={[xshift=.3cm,yshift=-.20cm]2}] {} (b2); 	\draw (b) to node [label={[xshift=-.2cm,yshift=-.3cm]6}] {} (b3);
	\draw [line width=.08cm] (t1) to (a) to (t2); 	\draw [line width=.08cm] (b1) to (b) to (b2); 	\draw [dashed, line width=.08cm] (t1) to node[label={[xshift=-.2cm,yshift=-.25cm]x}] {} (b1); 	\draw [dashed, line width=.08cm] (t2) to node[label={[xshift=.2cm,yshift=-.25cm]y}] {} (b2); 
\end{tikzpicture}
\par\end{center}

\begin{center}
\caption{A cycle of length $x+y+4$ that passes through a $H'_{1}$.}

\par\end{center}%
\end{minipage}\hfill{}%
\begin{minipage}[t]{0.45\columnwidth}%
\begin{center}
 \tikzstyle{node}=[circle, draw, fill=black!50,                         inner sep=0pt, minimum width=4pt]
\begin{tikzpicture}[thick,scale=.8]
\clip (-3.5,-3) rectangle (3,3);
\node [node] (a) at (-1,.5) {}; 	\node [node] (b) at (0,1.2) {}; 	\node [node] (c) at (1,.5) {}; 	\node [node] (d) at (1,-.5) {}; 	\node [node] (e) at (0,-1.2) {}; 	\node [node] (f) at (-1,-.5) {};
	\draw (a) -- (b) -- (c) -- (d) -- (e) -- (f) -- (a); 	\coordinate (1) at  ($ (a)+ (-.75,.75) $); 	\coordinate (2) at  ($ (b)+ (0,.8) $); 	\coordinate (3) at  ($ (c)+ (.75,.75) $); 	\coordinate (4) at  ($ (d)+ (.75,-.75) $); 	\coordinate (5) at  ($ (e)+ (0,-.8) $); 	\coordinate (6) at  ($ (f) + (-.75,-.75) $); 	\draw (a) to node [label={[xshift=.1cm]1}] {} (1); 	\draw (b) to node [label={[xshift=-.2cm]2}] {} (2); 	\draw (c) to node [label={[xshift=-.2cm]3}] {} (3); 	\draw (d) to node [label={[xshift=.3cm,yshift=-.2cm]4}] {} (4); 	\draw (e) to node [label={[xshift=-.2cm,yshift=-.3cm]5}] {} (5); 	\draw (f) to node [label={[xshift=-.1cm]6}] {} (6);
	\draw [line width=.08cm] (a) to (1); 	\draw [line width=.08cm] (b) to (2); 	\draw [line width=.08cm] (c) to (3); 	\draw [line width=.08cm] (d) to (4); 	\draw [line width=.08cm] (b) to (c); 	\draw [line width=.08cm] (a) to (f) to (e) to (d);
	\draw [dashed, line width=.08cm] (1) to[in=260,out=180, distance=4cm] node[label={[xshift=-.5cm,yshift=-.2cm]x}] {} (4); 	\draw [dashed, line width=.08cm] (3) to[in=70,out=60] node[label=y] {} (2); \end{tikzpicture} 
\par\end{center}

\begin{center}
\caption{The cycle from the previous figure, after expanding the gadget, is
now two cycles, of lengths $y+3$ and $x+5$, respectively. This expansion
can produce an organic $6$-cycle if $y=3$ and the cycle is organic.
The impact of these $6$-cycles on the algorithm's result is analyzed
in Sections \ref{expandh1} and \ref{analyzingworstcase}.}

\par\end{center}%
\end{minipage}
\end{figure}
\begin{figure}[H]
\begin{minipage}[t]{0.45\columnwidth}%
\begin{center}
\tikzstyle{node}=[circle, draw, fill=black!50,                         inner sep=0pt, minimum width=4pt]
\begin{tikzpicture}[thick,scale=.8] 	\node [node] (a) at (0,1.5) {};
	\node [node] (b) at (0,-1.5) {};
	\node [node] (t1) at  ($ (a)+ (-.5,-1) $) {}; 	\node [node] (t2) at  ($ (a)+ (.5,-1) $) {}; 	\coordinate (t3) at  ($ (a)+ (0,.75) $); 	\node [node] (b1) at  ($ (b)+ (-.5,1) $) {}; 	\node [node] (b2) at  ($ (b)+ (.5,1) $) {}; 	\coordinate (b3) at  ($ (b) + (0,-.75) $); 	\draw (a) to node [label={[xshift=-.1cm]1}] {} (t1); 	\draw (a) to node [label={[xshift=.1cm]3}] {} (t2); 	\draw (a) to node [label={[xshift=-.2cm]5}] {} (t3); 	\draw (b) to node [label={[xshift=-.3cm,yshift=-.2cm]4}] {} (b1); 	\draw (b) to node [label={[xshift=.3cm,yshift=-.20cm]6}] {} (b2); 	\draw (b) to node [label={[xshift=-.2cm,yshift=-.3cm]2}] {} (b3);
	\draw [line width=.08cm] (t1) to (a) to (t2); 	\draw [line width=.08cm] (b1) to (b) to (b2); 	\draw [dashed, line width=.08cm] (t1) to node[label={[xshift=-.2cm,yshift=-.25cm]x}] {} (b1); 	\draw [dashed, line width=.08cm] (t2) to node[label={[xshift=.2cm,yshift=-.25cm]y}] {} (b2); 
\end{tikzpicture}
\par\end{center}

\begin{center}
\caption{A cycle of length $x+y+4$ that passes through a $H'_{1}$.}

\par\end{center}%
\end{minipage}\hfill{}%
\begin{minipage}[t]{0.45\columnwidth}%
\begin{center}
 \tikzstyle{node}=[circle, draw, fill=black!50,                         inner sep=0pt, minimum width=4pt]
\begin{tikzpicture}[thick,scale=.8]
\clip (-3,-3) rectangle (5,4);
\node [node] (a) at (-1,.5) {}; 	\node [node] (b) at (0,1.2) {}; 	\node [node] (c) at (1,.5) {}; 	\node [node] (d) at (1,-.5) {}; 	\node [node] (e) at (0,-1.2) {}; 	\node [node] (f) at (-1,-.5) {};
	\draw (a) -- (b) -- (c) -- (d) -- (e) -- (f) -- (a); 	\coordinate (1) at  ($ (a)+ (-.75,.75) $); 	\coordinate (2) at  ($ (b)+ (0,.8) $); 	\coordinate (3) at  ($ (c)+ (.75,.75) $); 	\coordinate (4) at  ($ (d)+ (.75,-.75) $); 	\coordinate (5) at  ($ (e)+ (0,-.8) $); 	\coordinate (6) at  ($ (f) + (-.75,-.75) $); 	\draw (a) to node [label={[xshift=.1cm]1}] {} (1); 	\draw (b) to node [label={[xshift=-.2cm]2}] {} (2); 	\draw (c) to node [label={[xshift=-.2cm]3}] {} (3); 	\draw (d) to node [label={[xshift=.3cm,yshift=-.2cm]4}] {} (4); 	\draw (e) to node [label={[xshift=-.2cm,yshift=-.3cm]5}] {} (5); 	\draw (f) to node [label={[xshift=-.1cm]6}] {} (6);
	\draw [line width=.08cm] (3) to (c) to (b) to (a) to (1); 	\draw [line width=.08cm] (6) to (f) to (e) to (d) to (4); 	\draw [dashed, line width=.08cm] (1) to[out=70,in=30,distance=3.5cm] node[label=x] {} (4); 	\draw [dashed, line width=.08cm] (3) to[out=-30,in=300,distance=3.5cm] node[label={[yshift=-.6cm]y}] {} (6); \end{tikzpicture}
\par\end{center}

\begin{center}
\caption{The cycle from the previous figure, after expanding the gadget, is
now of length $x+y+8$.}

\par\end{center}%
\end{minipage}
\end{figure}
\begin{figure}[H]
\begin{minipage}[t]{0.45\columnwidth}%
\begin{center}
\tikzstyle{node}=[circle, draw, fill=black!50,                         inner sep=0pt, minimum width=4pt]
\begin{tikzpicture}[thick,scale=.8] 	\node [node] (a) at (0,1.5) {};
	\node [node] (b) at (0,-1.5) {};
	\node [node] (t1) at  ($ (a)+ (-.5,-1) $) {}; 	\node [node] (t2) at  ($ (a)+ (.5,-1) $) {}; 	\coordinate (t3) at  ($ (a)+ (0,.75) $); 	\node [node] (b1) at  ($ (b)+ (-.5,1) $) {}; 	\node [node] (b2) at  ($ (b)+ (.5,1) $) {}; 	\coordinate (b3) at  ($ (b) + (0,-.75) $); 	\draw (a) to node [label={[xshift=-.1cm]1}] {} (t1); 	\draw (a) to node [label={[xshift=.1cm]3}] {} (t2); 	\draw (a) to node [label={[xshift=-.2cm]5}] {} (t3); 	\draw (b) to node [label={[xshift=-.3cm,yshift=-.2cm]6}] {} (b1); 	\draw (b) to node [label={[xshift=.3cm,yshift=-.20cm]2}] {} (b2); 	\draw (b) to node [label={[xshift=-.2cm,yshift=-.3cm]4}] {} (b3);
	\draw [line width=.08cm] (t1) to (a) to (t2); 	\draw [line width=.08cm] (b1) to (b) to (b2); 	\draw [dashed, line width=.08cm] (t1) to node[label={[xshift=-.2cm,yshift=-.25cm]x}] {} (b1); 	\draw [dashed, line width=.08cm] (t2) to node[label={[xshift=.2cm,yshift=-.25cm]y}] {} (b2); 
\end{tikzpicture}
\par\end{center}

\begin{center}
\caption{A cycle of length $x+y+4$ that passes through a $H'_{1}$.}

\par\end{center}%
\end{minipage}\hfill{}%
\begin{minipage}[t]{0.45\columnwidth}%
\begin{center}
\tikzstyle{node}=[circle, draw, fill=black!50,                         inner sep=0pt, minimum width=4pt]
\begin{tikzpicture}[thick,scale=.8] 	\node [node] (a) at (-1,.5) {}; 	\node [node] (b) at (0,1.2) {}; 	\node [node] (c) at (1,.5) {}; 	\node [node] (d) at (1,-.5) {}; 	\node [node] (e) at (0,-1.2) {}; 	\node [node] (f) at (-1,-.5) {};
	\draw (a) -- (b) -- (c) -- (d) -- (e) -- (f) -- (a); 	\coordinate (1) at  ($ (a)+ (-.75,.75) $); 	\coordinate (2) at  ($ (b)+ (0,.8) $); 	\coordinate (3) at  ($ (c)+ (.75,.75) $); 	\coordinate (4) at  ($ (d)+ (.75,-.75) $); 	\coordinate (5) at  ($ (e)+ (0,-.8) $); 	\coordinate (6) at  ($ (f) + (-.75,-.75) $); 	\draw (a) to node [label={[xshift=.1cm]1}] {} (1); 	\draw (b) to node [label={[xshift=-.2cm]2}] {} (2); 	\draw (c) to node [label={[xshift=-.2cm]3}] {} (3); 	\draw (d) to node [label={[xshift=.3cm,yshift=-.2cm]4}] {} (4); 	\draw (e) to node [label={[xshift=-.2cm,yshift=-.3cm]5}] {} (5); 	\draw (f) to node [label={[xshift=-.1cm]6}] {} (6);
	\draw [line width=.08cm] (2) to (b) to (a) to (1); 	\draw [line width=.08cm] (6) to (f) to (e) to (d) to (c) to (3); 	\draw [dashed, line width=.08cm] (1) to[out=180,in=180] node[label={[xshift=.2cm]x}] {} (6); 	\draw [dashed, line width=.08cm] (2) to[out=80,in=45] node[label=y] {} (3); \end{tikzpicture}

\par\end{center}

\begin{center}
\caption{The cycle from the previous figure, after expanding the gadget, is
now of length $x+y+8$.}

\par\end{center}%
\end{minipage}
\end{figure}
\begin{figure}[H]
\begin{minipage}[t]{0.45\columnwidth}%
\begin{center}
\tikzstyle{node}=[circle, draw, fill=black!50,                         inner sep=0pt, minimum width=4pt]
\begin{tikzpicture}[thick,scale=.8] 	\node [node] (a) at (0,1.5) {};
	\node [node] (b) at (0,-1.5) {};
	\node [node] (t1) at  ($ (a)+ (-.5,-1) $) {}; 	\node [node] (t2) at  ($ (a)+ (.5,-1) $) {}; 	\coordinate (t3) at  ($ (a)+ (0,.75) $); 	\node [node] (b1) at  ($ (b)+ (-.5,1) $) {}; 	\node [node] (b2) at  ($ (b)+ (.5,1) $) {}; 	\coordinate (b3) at  ($ (b) + (0,-.75) $); 	\draw (a) to node [label={[xshift=-.1cm]1}] {} (t1); 	\draw (a) to node [label={[xshift=.1cm]3}] {} (t2); 	\draw (a) to node [label={[xshift=-.2cm]5}] {} (t3); 	\draw (b) to node [label={[xshift=-.3cm,yshift=-.2cm]6}] {} (b1); 	\draw (b) to node [label={[xshift=.3cm,yshift=-.20cm]4}] {} (b2); 	\draw (b) to node [label={[xshift=-.2cm,yshift=-.3cm]2}] {} (b3);
	\draw [line width=.08cm] (t1) to (a) to (t2); 	\draw [line width=.08cm] (b1) to (b) to (b2); 	\draw [dashed, line width=.08cm] (t1) to node[label={[xshift=-.2cm,yshift=-.25cm]x}] {} (b1); 	\draw [dashed, line width=.08cm] (t2) to node[label={[xshift=.2cm,yshift=-.25cm]y}] {} (b2); 
\end{tikzpicture}
\par\end{center}

\begin{center}
\caption{A cycle of length $x+y+4$ that passes through a $H'_{1}$.}

\par\end{center}%
\end{minipage}\hfill{}%
\begin{minipage}[t]{0.45\columnwidth}%
\begin{center}
\tikzstyle{node}=[circle, draw, fill=black!50,                         inner sep=0pt, minimum width=4pt]
\begin{tikzpicture}[thick,scale=.8] 	\node [node] (a) at (-1,.5) {}; 	\node [node] (b) at (0,1.2) {}; 	\node [node] (c) at (1,.5) {}; 	\node [node] (d) at (1,-.5) {}; 	\node [node] (e) at (0,-1.2) {}; 	\node [node] (f) at (-1,-.5) {};
	\draw (a) -- (b) -- (c) -- (d) -- (e) -- (f) -- (a); 	\coordinate (1) at  ($ (a)+ (-.75,.75) $); 	\coordinate (2) at  ($ (b)+ (0,.8) $); 	\coordinate (3) at  ($ (c)+ (.75,.75) $); 	\coordinate (4) at  ($ (d)+ (.75,-.75) $); 	\coordinate (5) at  ($ (e)+ (0,-.8) $); 	\coordinate (6) at  ($ (f) + (-.75,-.75) $); 	\draw (a) to node [label={[xshift=.1cm]1}] {} (1); 	\draw (b) to node [label={[xshift=-.2cm]2}] {} (2); 	\draw (c) to node [label={[xshift=-.2cm]3}] {} (3); 	\draw (d) to node [label={[xshift=.3cm,yshift=-.2cm]4}] {} (4); 	\draw (e) to node [label={[xshift=-.2cm,yshift=-.3cm]5}] {} (5); 	\draw (f) to node [label={[xshift=-.1cm]6}] {} (6);
	\draw [line width=.08cm] (3) to (c) to (b) to (a) to (1); 	\draw [line width=.08cm] (6) to (f) to (e) to (d) to (4); 	\draw [dashed, line width=.08cm] (1) to[out=180,in=180] node[label={[xshift=.2cm]x}] {} (6); 	\draw [dashed, line width=.08cm] (3) to[out=0,in=0] node[label={[xshift=-.2cm]y}] {} (4); \end{tikzpicture}

\par\end{center}

\begin{center}
\caption{The cycle from the previous figure, after expanding the gadget, is
now of length $x+y+8$.}

\par\end{center}%
\end{minipage}
\end{figure}

\section{Appendix C: $H_{2}$s} \label{apdxh2}

\subsection{Gadget is covered by two cycles}

If a $H'_{2}$ is covered by a two cycles in $F_{i}$, then, there
are three ways we can select the edge in each of the super-vertices
to exclude from $F_{i}$. However, without loss of generality, we
can fix the edge of the first super-vertex to exclude from $F_{i}$.
Then, we only have to consider three cases. In each of these cases,
we start with two cycles of lengths $x+2$ and $y+2$ in $F_{i}$
and are returned a single cycle of length $x+y+10$ in $F_{i-1}$.
$x+2\geq6$ and $y+2\geq6$, so $x+y+10>8$, meaning that none of
these expansions can introduce an organic $6$-cycle into the $2$-factor.

\begin{figure}[H]
\begin{minipage}[t]{0.45\columnwidth}%
\begin{center}
\tikzstyle{node}=[circle, draw, fill=black!50,                         inner sep=0pt, minimum width=4pt]
\begin{tikzpicture}[thick,scale=1] 	\node [node] (a) at (-1,.5) {};
	\node [node] (b) at (1,.5) {};
	\coordinate (1) at  ($ (a)+ (-.75,.75) $); 	\coordinate (2) at  ($ (b)+ (-.75,.75) $); 	\coordinate (3) at  ($ (a)+ (.75,.75) $); 	\coordinate (4) at  ($ (b)+ (.75,.75) $); 	\coordinate (5) at  ($ (a)+ (0,-.8) $); 	\coordinate (6) at  ($ (b) + (0,-.8) $); 	\draw (a) to node [label={[xshift=.05cm]1}] {} (1); 	\draw (a) to node [label={[xshift=-.1cm]2}] {} (3); 	\draw (a) to node [label={[xshift=-.2cm,yshift=-.3cm]3}] {} (5);
	\draw (b) to node [label={[xshift=.05cm]4}] {} (2); 	\draw (b) to node [label={[xshift=-.1cm]5}] {} (4); 	\draw (b) to node [label={[xshift=-.2cm,yshift=-.3cm]6}] {} (6);
	\draw [line width=.08cm] (1) to (a) to (3); 	\draw [line width=.08cm] (2) to (b) to (4); 	\draw [dashed, line width=.08cm] (1) to[out=90,in=90] node[label={[xshift=0cm]x}] {} (3);
	\draw [dashed, line width=.08cm] (2) to[out=90,in=90] node[label={[xshift=0cm]y}] {} (4); 
\end{tikzpicture}
\par\end{center}

\begin{center}
\caption{Two cycles of lengths $x+2$ and $y+2$ pass through a $H'_{2}$.}

\par\end{center}%
\end{minipage}\hfill{}%
\begin{minipage}[t]{0.45\columnwidth}%
\begin{center}
\tikzstyle{node}=[circle, draw, fill=black!50,                         inner sep=0pt, minimum width=4pt]
\begin{tikzpicture}[thick,scale=.75] 	\node [node] (a) at (-1,.5) {}; 	\node [node] (b) at (0,1.2) {}; 	\node [node] (c) at (1,.5) {}; 	\node [node] (d) at (1,-.5) {}; 	\node [node] (e) at (0,-1.2) {}; 	\node [node] (f) at (-1,-.5) {}; 	\node [node] (g) at (0,.5) {}; 	\node [node] (h) at (0,-.5) {};
	\draw (a) -- (b) -- (c) -- (d) -- (e) -- (f) -- (a); 	\draw (b) -- (g) -- (h) -- (e); 	\coordinate (1) at  ($ (a)+ (-.75,.75) $); 	\coordinate (2) at  ($ (g)+ (-1.2, 1.2) $); 	\coordinate (3) at  ($ (c)+ (.75,.75) $); 	\coordinate (6) at  ($ (d)+ (.75,-.75) $); 	\coordinate (5) at  ($ (h)+ (1.2,-1.2) $); 	\coordinate (4) at  ($ (f) + (-.75,-.75) $); 	\draw (a) to node [label={[xshift=.1cm]1}] {} (1); 	\draw (g) to node [label={[xshift=.2cm]2}] {} (2); 	\draw (c) to node [label={[xshift=-.2cm]3}] {} (3); 	\draw (d) to node [label={[xshift=.3cm,yshift=-.2cm]6}] {} (6); 	\draw (h) to node [label={[xshift=.15cm, yshift=-.1cm]5}] {} (5); 	\draw (f) to node [label={[xshift=-.1cm]4}] {} (4);
	\draw [line width=.08cm] (1) to (a) to (f) to (4); 	\draw [line width=.08cm] (2) to (g) to (b) to (c) to (d) to (e) to (h) to (5);
	\draw [dashed, line width=.08cm] (1) to[out=180,in=70,distance=1.5cm] node[label=x] {} (2); 	\draw [dashed, line width=.08cm] (4) to[out=220,in=-45,distance=.8cm] node[label={[yshift=-.5cm]y}] {} (5); \end{tikzpicture}
\par\end{center}

\begin{center}
\caption{The cycles from the previous figure, after expanding the gadget, are
now a single cycle of length $x+y+10$.}

\par\end{center}%
\end{minipage}
\end{figure}
\begin{figure}[H]
\begin{minipage}[t]{0.45\columnwidth}%
\begin{center}
\tikzstyle{node}=[circle, draw, fill=black!50,                         inner sep=0pt, minimum width=4pt]
\begin{tikzpicture}[thick,scale=1] 	\node [node] (a) at (-1,.5) {};
	\node [node] (b) at (1,.5) {};
	\coordinate (1) at  ($ (a)+ (-.75,.75) $); 	\coordinate (2) at  ($ (b)+ (-.75,.75) $); 	\coordinate (3) at  ($ (a)+ (.75,.75) $); 	\coordinate (4) at  ($ (b)+ (.75,.75) $); 	\coordinate (5) at  ($ (a)+ (0,-.8) $); 	\coordinate (6) at  ($ (b) + (0,-.8) $); 	\draw (a) to node [label={[xshift=.05cm]1}] {} (1); 	\draw (b) to node [label={[xshift=.05cm]4}] {} (2); 	\draw (a) to node [label={[xshift=-.1cm]2}] {} (3); 	\draw (b) to node [label={[xshift=-.1cm]5}] {} (4); 	\draw (a) to node [label={[xshift=-.2cm,yshift=-.3cm]3}] {} (5); 	\draw (b) to node [label={[xshift=-.2cm,yshift=-.3cm]6}] {} (6);
	\draw [line width=.08cm] (1) to (a) to (3); 	\draw [line width=.08cm] (2) to (b) to (6); 	\draw [dashed, line width=.08cm] (1) to[out=90,in=90] node[label={[xshift=0cm]x}] {} (3);
	\draw [dashed, line width=.08cm] (2) to[out=200,in=240] node[label={[xshift=-.3cm]y}] {} (6); 
\end{tikzpicture}
\par\end{center}

\begin{center}
\caption{Two cycles of lengths $x+2$ and $y+2$ pass through a $H'_{2}$.}

\par\end{center}%
\end{minipage}\hfill{}%
\begin{minipage}[t]{0.45\columnwidth}%
\begin{center}
\tikzstyle{node}=[circle, draw, fill=black!50,                         inner sep=0pt, minimum width=4pt]
\begin{tikzpicture}[thick,scale=.75] 	\node [node] (a) at (-1,.5) {}; 	\node [node] (b) at (0,1.2) {}; 	\node [node] (c) at (1,.5) {}; 	\node [node] (d) at (1,-.5) {}; 	\node [node] (e) at (0,-1.2) {}; 	\node [node] (f) at (-1,-.5) {}; 	\node [node] (g) at (0,.5) {}; 	\node [node] (h) at (0,-.5) {};
	\draw (a) -- (b) -- (c) -- (d) -- (e) -- (f) -- (a); 	\draw (b) -- (g) -- (h) -- (e); 	\coordinate (1) at  ($ (a)+ (-.75,.75) $); 	\coordinate (2) at  ($ (g)+ (-1.2, 1.2) $); 	\coordinate (3) at  ($ (c)+ (.75,.75) $); 	\coordinate (6) at  ($ (d)+ (.75,-.75) $); 	\coordinate (5) at  ($ (h)+ (1.2,-1.2) $); 	\coordinate (4) at  ($ (f) + (-.75,-.75) $); 	\draw (a) to node [label={[xshift=.1cm]1}] {} (1); 	\draw (g) to node [label={[xshift=.2cm]2}] {} (2); 	\draw (c) to node [label={[xshift=-.2cm]3}] {} (3); 	\draw (d) to node [label={[xshift=.3cm,yshift=-.2cm]6}] {} (6); 	\draw (h) to node [label={[xshift=.15cm, yshift=-.1cm]5}] {} (5); 	\draw (f) to node [label={[xshift=-.1cm]4}] {} (4);
	\draw [line width=.08cm] (2) to (g) to (h) to (e) to (f) to (4); 	\draw [line width=.08cm] (1) to (a) to (b) to (c) to (d) to (6);
	\draw [dashed, line width=.08cm] (1) to[out=180,in=70,distance=1.5cm] node[label=x] {} (2); 	\draw [dashed, line width=.08cm] (4) to[out=220,in=-45,distance=1.3cm] node[label={[yshift=-.5cm]y}] {} (6); \end{tikzpicture}
\par\end{center}

\begin{center}
\caption{The cycles from the previous figure, after expanding the gadget, are
now a single cycle of length $x+y+10$.}

\par\end{center}%
\end{minipage}
\end{figure}
\begin{figure}[H]
\begin{minipage}[t]{0.45\columnwidth}%
\begin{center}
\tikzstyle{node}=[circle, draw, fill=black!50,                         inner sep=0pt, minimum width=4pt]
\begin{tikzpicture}[thick,scale=1] 	\node [node] (a) at (-1,.5) {};
	\node [node] (b) at (1,.5) {};
	\coordinate (1) at  ($ (a)+ (-.75,.75) $); 	\coordinate (2) at  ($ (b)+ (-.75,.75) $); 	\coordinate (3) at  ($ (a)+ (.75,.75) $); 	\coordinate (4) at  ($ (b)+ (.75,.75) $); 	\coordinate (5) at  ($ (a)+ (0,-.8) $); 	\coordinate (6) at  ($ (b) + (0,-.8) $); 	\draw (a) to node [label={[xshift=.05cm]1}] {} (1); 	\draw (b) to node [label={[xshift=.05cm]4}] {} (2); 	\draw (a) to node [label={[xshift=-.1cm]2}] {} (3); 	\draw (b) to node [label={[xshift=-.1cm]5}] {} (4); 	\draw (a) to node [label={[xshift=-.2cm,yshift=-.3cm]3}] {} (5); 	\draw (b) to node [label={[xshift=-.2cm,yshift=-.3cm]6}] {} (6);
	\draw [line width=.08cm] (1) to (a) to (3); 	\draw [line width=.08cm] (4) to (b) to (6); 	\draw [dashed, line width=.08cm] (1) to[out=90,in=90] node[label={[xshift=0cm]x}] {} (3);
	\draw [dashed, line width=.08cm] (4) to[out=20,in=0] node[label={[xshift=.3cm]y}] {} (6); 
\end{tikzpicture}
\par\end{center}

\begin{center}
\caption{Two cycles of lengths $x+2$ and $y+2$ pass through a $H'_{2}$.}

\par\end{center}%
\end{minipage}\hfill{}%
\begin{minipage}[t]{0.45\columnwidth}%
\begin{center}
\tikzstyle{node}=[circle, draw, fill=black!50,                         inner sep=0pt, minimum width=4pt]
\begin{tikzpicture}[thick,scale=.75] 	\node [node] (a) at (-1,.5) {}; 	\node [node] (b) at (0,1.2) {}; 	\node [node] (c) at (1,.5) {}; 	\node [node] (d) at (1,-.5) {}; 	\node [node] (e) at (0,-1.2) {}; 	\node [node] (f) at (-1,-.5) {}; 	\node [node] (g) at (0,.5) {}; 	\node [node] (h) at (0,-.5) {};
	\draw (a) -- (b) -- (c) -- (d) -- (e) -- (f) -- (a); 	\draw (b) -- (g) -- (h) -- (e); 	\coordinate (1) at  ($ (a)+ (-.75,.75) $); 	\coordinate (2) at  ($ (g)+ (-1.2, 1.2) $); 	\coordinate (3) at  ($ (c)+ (.75,.75) $); 	\coordinate (6) at  ($ (d)+ (.75,-.75) $); 	\coordinate (5) at  ($ (h)+ (1.2,-1.2) $); 	\coordinate (4) at  ($ (f) + (-.75,-.75) $); 	\draw (a) to node [label={[xshift=.1cm]1}] {} (1); 	\draw (g) to node [label={[xshift=.2cm]2}] {} (2); 	\draw (c) to node [label={[xshift=-.2cm]3}] {} (3); 	\draw (d) to node [label={[xshift=.3cm,yshift=-.2cm]6}] {} (6); 	\draw (h) to node [label={[xshift=.15cm, yshift=-.1cm]5}] {} (5); 	\draw (f) to node [label={[xshift=-.1cm]4}] {} (4);
	\draw [line width=.08cm] (2) to (g) to (b) to (c) to (d) to (6); 	\draw [line width=.08cm] (1) to (a) to (f) to (e) to (h) to (5);
	\draw [dashed, line width=.08cm] (1) to[out=180,in=70,distance=1.5cm] node[label=x] {} (2); 	\draw [dashed, line width=.08cm] (5) to[out=240,in=20,distance=1.5cm] node[label={[yshift=-.5cm]y}] {} (6); \end{tikzpicture}
\par\end{center}

\begin{center}
\caption{The cycles from the previous figure, after expanding the gadget, are
now a single cycle of length $x+y+10$.}

\par\end{center}%
\end{minipage}
\end{figure}

\subsection{Gadget is covered by one cycle}

If a $H'_{2}$ is covered by a single cycle in $F_{i}$, then, there
are three ways we can select the edge in each of the super-vertices
to exclude from $F_{i}$. However, without loss of generality, we
can fix the edge of the first super-vertex to exclude from $F_{i}$.
Then, in each of these three configurations, we examine the two orientations
in which the cycle can pass through the two super-vertices. In all
$6$ cases, we start with a cycle of lengths $x+y+4$ in $F_{i}$
and are returned a single cycle of length $x+y+10$, two cycles of
lengths $x+5$ and $y+5$, or two cycles of lengths $x+3$ and $y+7$
in $F_{i-1}$. $x+y+4\geq6$, so $x+y+10>8$, meaning the first case
cannot introduce an organic $6$-cycle into the $2$-factor. In the
second case, neither the $x+5$ and $y+5$ cycles can be organic $6$-cycles,
otherwise the $H_{2}$ the gadget replaced would have been part of
an organic $H_{3}$, which would have been contracted instead of the
$H_{2}$. In the third case, the $x+3$ cycle can be an organic $6$-cycle,
but this is the expansion examined in detail in Sections \ref{expandh2} and \ref{analyzingworstcase}.

\begin{figure}[H]
\begin{minipage}[t]{0.45\columnwidth}%
\begin{center}
\tikzstyle{node}=[circle, draw, fill=black!50,                         inner sep=0pt, minimum width=4pt]
\begin{tikzpicture}[thick,scale=1] 	\node [node] (a) at (0,1.5) {};
	\node [node] (b) at (0,-1.5) {};
	\node [node] (t1) at  ($ (a)+ (-.5,-1) $) {}; 	\node [node] (t2) at  ($ (a)+ (.5,-1) $) {}; 	\coordinate (t3) at  ($ (a)+ (0,.75) $); 	\node [node] (b1) at  ($ (b)+ (-.5,1) $) {}; 	\node [node] (b2) at  ($ (b)+ (.5,1) $) {}; 	\coordinate (b3) at  ($ (b) + (0,-.75) $); 	\draw (a) to node [label={[xshift=-.1cm]1}] {} (t1); 	\draw (a) to node [label={[xshift=.1cm]2}] {} (t2); 	\draw (a) to node [label={[xshift=-.2cm]3}] {} (t3); 	\draw (b) to node [label={[xshift=-.3cm,yshift=-.2cm]4}] {} (b1); 	\draw (b) to node [label={[xshift=.3cm,yshift=-.20cm]5}] {} (b2); 	\draw (b) to node [label={[xshift=-.2cm,yshift=-.3cm]6}] {} (b3);
	\draw [line width=.08cm] (t1) to (a) to (t2); 	\draw [line width=.08cm] (b1) to (b) to (b2); 	\draw [dashed, line width=.08cm] (t1) to node[label={[xshift=-.2cm,yshift=-.25cm]x}] {} (b1); 	\draw [dashed, line width=.08cm] (t2) to node[label={[xshift=.2cm,yshift=-.25cm]y}] {} (b2); 
\end{tikzpicture}
\par\end{center}

\begin{center}
\caption{A cycle of length $x+y+4$ passes through a $H'_{2}$.}

\par\end{center}%
\end{minipage}\hfill{}%
\begin{minipage}[t]{0.45\columnwidth}%
\begin{center}
\tikzstyle{node}=[circle, draw, fill=black!50,                         inner sep=0pt, minimum width=4pt]
\begin{tikzpicture}[thick,scale=.75] 	\node [node] (a) at (-1,.5) {}; 	\node [node] (b) at (0,1.2) {}; 	\node [node] (c) at (1,.5) {}; 	\node [node] (d) at (1,-.5) {}; 	\node [node] (e) at (0,-1.2) {}; 	\node [node] (f) at (-1,-.5) {}; 	\node [node] (g) at (0,.5) {}; 	\node [node] (h) at (0,-.5) {};
	\draw (a) -- (b) -- (c) -- (d) -- (e) -- (f) -- (a); 	\draw (b) -- (g) -- (h) -- (e); 	\coordinate (1) at  ($ (a)+ (-.75,.75) $); 	\coordinate (2) at  ($ (g)+ (-1.2, 1.2) $); 	\coordinate (3) at  ($ (c)+ (.75,.75) $); 	\coordinate (6) at  ($ (d)+ (.75,-.75) $); 	\coordinate (5) at  ($ (h)+ (1.2,-1.2) $); 	\coordinate (4) at  ($ (f) + (-.75,-.75) $); 	\draw (a) to node [label={[xshift=.1cm]1}] {} (1); 	\draw (g) to node [label={[xshift=.2cm]2}] {} (2); 	\draw (c) to node [label={[xshift=-.2cm]3}] {} (3); 	\draw (d) to node [label={[xshift=.3cm,yshift=-.2cm]6}] {} (6); 	\draw (h) to node [label={[xshift=.15cm, yshift=-.1cm]5}] {} (5); 	\draw (f) to node [label={[xshift=-.1cm]4}] {} (4);
	\draw [line width=.08cm] (1) to (a) to (f) to (4); 	\draw [line width=.08cm] (2) to (g) to (b) to (c) to (d) to (e) to (h) to (5);
	\draw [dashed, line width=.08cm] (1) to[out=160,in=200,distance=.8cm] node[label={[xshift=-.2cm]x}] {} (4); 	\draw [dashed, line width=.08cm] (2) to[out=75,in=0,distance=4cm] node[label={[xshift=.1cm,yshift=-.0cm]y}] {} (5); \end{tikzpicture}
\par\end{center}

\begin{center}
\caption{The cycle from the previous figure, after expanding the gadget, is
now two cycles, of lengths $x+3$ and $y+7$, respectively. This expansion
can produce an organic $6$-cycle if $x=3$ and the cycle is organic.
The impact of these $6$-cycles on the algorithm's result is analyzed
in Sections \ref{expandh2} and \ref{analyzingworstcase}.}

\par\end{center}%
\end{minipage}
\end{figure}
\begin{figure}[H]
\begin{minipage}[t]{0.45\columnwidth}%
\begin{center}
\tikzstyle{node}=[circle, draw, fill=black!50,                         inner sep=0pt, minimum width=4pt]
\begin{tikzpicture}[thick,scale=1] 	\node [node] (a) at (0,1.5) {};
	\node [node] (b) at (0,-1.5) {};
	\node [node] (t1) at  ($ (a)+ (-.5,-1) $) {}; 	\node [node] (t2) at  ($ (a)+ (.5,-1) $) {}; 	\coordinate (t3) at  ($ (a)+ (0,.75) $); 	\node [node] (b1) at  ($ (b)+ (-.5,1) $) {}; 	\node [node] (b2) at  ($ (b)+ (.5,1) $) {}; 	\coordinate (b3) at  ($ (b) + (0,-.75) $); 	\draw (a) to node [label={[xshift=-.1cm]1}] {} (t1); 	\draw (a) to node [label={[xshift=.1cm]2}] {} (t2); 	\draw (a) to node [label={[xshift=-.2cm]3}] {} (t3); 	\draw (b) to node [label={[xshift=-.3cm,yshift=-.2cm]4}] {} (b1); 	\draw (b) to node [label={[xshift=.3cm,yshift=-.20cm]6}] {} (b2); 	\draw (b) to node [label={[xshift=-.2cm,yshift=-.3cm]5}] {} (b3);
	\draw [line width=.08cm] (t1) to (a) to (t2); 	\draw [line width=.08cm] (b1) to (b) to (b2); 	\draw [dashed, line width=.08cm] (t1) to node[label={[xshift=-.2cm,yshift=-.25cm]x}] {} (b1); 	\draw [dashed, line width=.08cm] (t2) to node[label={[xshift=.2cm,yshift=-.25cm]y}] {} (b2); 
\end{tikzpicture}
\par\end{center}

\begin{center}
\caption{A cycle of length $x+y+4$ passes through a $H'_{2}$.}

\par\end{center}%
\end{minipage}\hfill{}%
\begin{minipage}[t]{0.45\columnwidth}%
\begin{center}
\tikzstyle{node}=[circle, draw, fill=black!50,                         inner sep=0pt, minimum width=4pt]
\begin{tikzpicture}[thick,scale=.75] 	\node [node] (a) at (-1,.5) {}; 	\node [node] (b) at (0,1.2) {}; 	\node [node] (c) at (1,.5) {}; 	\node [node] (d) at (1,-.5) {}; 	\node [node] (e) at (0,-1.2) {}; 	\node [node] (f) at (-1,-.5) {}; 	\node [node] (g) at (0,.5) {}; 	\node [node] (h) at (0,-.5) {};
	\draw (a) -- (b) -- (c) -- (d) -- (e) -- (f) -- (a); 	\draw (b) -- (g) -- (h) -- (e); 	\coordinate (1) at  ($ (a)+ (-.75,.75) $); 	\coordinate (2) at  ($ (g)+ (-1.2, 1.2) $); 	\coordinate (3) at  ($ (c)+ (.75,.75) $); 	\coordinate (6) at  ($ (d)+ (.75,-.75) $); 	\coordinate (5) at  ($ (h)+ (1.2,-1.2) $); 	\coordinate (4) at  ($ (f) + (-.75,-.75) $); 	\draw (a) to node [label={[xshift=.1cm]1}] {} (1); 	\draw (g) to node [label={[xshift=.2cm]2}] {} (2); 	\draw (c) to node [label={[xshift=-.2cm]3}] {} (3); 	\draw (d) to node [label={[xshift=.3cm,yshift=-.2cm]6}] {} (6); 	\draw (h) to node [label={[xshift=.15cm, yshift=-.1cm]5}] {} (5); 	\draw (f) to node [label={[xshift=-.1cm]4}] {} (4);
	\draw [line width=.08cm] (2) to (g) to (h) to (e) to (f) to (4); 	\draw [line width=.08cm] (1) to (a) to (b) to (c) to (d) to (6);
	\draw [dashed, line width=.08cm] (1) to[out=160,in=200,distance=.8cm] node[label={[xshift=-.2cm]x}] {} (4); 	\draw [dashed, line width=.08cm] (2) to[out=45,in=30,distance=2.7cm] node[label={[yshift=-.0cm]y}] {} (6); \end{tikzpicture}
\par\end{center}

\begin{center}
\caption{The cycle from the previous figure, after expanding the gadget, is
now a cycle of length $x+y+10$.}

\par\end{center}%
\end{minipage}
\end{figure}
\begin{figure}[H]
\begin{minipage}[t]{0.45\columnwidth}%
\begin{center}
\tikzstyle{node}=[circle, draw, fill=black!50,                         inner sep=0pt, minimum width=4pt]
\begin{tikzpicture}[thick,scale=1] 	\node [node] (a) at (0,1.5) {};
	\node [node] (b) at (0,-1.5) {};
	\node [node] (t1) at  ($ (a)+ (-.5,-1) $) {}; 	\node [node] (t2) at  ($ (a)+ (.5,-1) $) {}; 	\coordinate (t3) at  ($ (a)+ (0,.75) $); 	\node [node] (b1) at  ($ (b)+ (-.5,1) $) {}; 	\node [node] (b2) at  ($ (b)+ (.5,1) $) {}; 	\coordinate (b3) at  ($ (b) + (0,-.75) $); 	\draw (a) to node [label={[xshift=-.1cm]1}] {} (t1); 	\draw (a) to node [label={[xshift=.1cm]2}] {} (t2); 	\draw (a) to node [label={[xshift=-.2cm]3}] {} (t3); 	\draw (b) to node [label={[xshift=-.3cm,yshift=-.2cm]5}] {} (b1); 	\draw (b) to node [label={[xshift=.3cm,yshift=-.20cm]4}] {} (b2); 	\draw (b) to node [label={[xshift=-.2cm,yshift=-.3cm]6}] {} (b3);
	\draw [line width=.08cm] (t1) to (a) to (t2); 	\draw [line width=.08cm] (b1) to (b) to (b2); 	\draw [dashed, line width=.08cm] (t1) to node[label={[xshift=-.2cm,yshift=-.25cm]x}] {} (b1); 	\draw [dashed, line width=.08cm] (t2) to node[label={[xshift=.2cm,yshift=-.25cm]y}] {} (b2); 
\end{tikzpicture}
\par\end{center}

\begin{center}
\caption{A cycle of length $x+y+4$ passes through a $H'_{2}$.}

\par\end{center}%
\end{minipage}\hfill{}%
\begin{minipage}[t]{0.45\columnwidth}%
\begin{center}
 \tikzstyle{node}=[circle, draw, fill=black!50,                         inner sep=0pt, minimum width=4pt]
\begin{tikzpicture}[thick,scale=.75] 	\node [node] (a) at (-1,.5) {}; 	\node [node] (b) at (0,1.2) {}; 	\node [node] (c) at (1,.5) {}; 	\node [node] (d) at (1,-.5) {}; 	\node [node] (e) at (0,-1.2) {}; 	\node [node] (f) at (-1,-.5) {}; 	\node [node] (g) at (0,.5) {}; 	\node [node] (h) at (0,-.5) {};
	\draw (a) -- (b) -- (c) -- (d) -- (e) -- (f) -- (a); 	\draw (b) -- (g) -- (h) -- (e); 	\coordinate (1) at  ($ (a)+ (-.75,.75) $); 	\coordinate (2) at  ($ (g)+ (-1.2, 1.2) $); 	\coordinate (3) at  ($ (c)+ (.75,.75) $); 	\coordinate (6) at  ($ (d)+ (.75,-.75) $); 	\coordinate (5) at  ($ (h)+ (1.2,-1.2) $); 	\coordinate (4) at  ($ (f) + (-.75,-.75) $); 	\draw (a) to node [label={[xshift=.1cm]1}] {} (1); 	\draw (g) to node [label={[xshift=.2cm]2}] {} (2); 	\draw (c) to node [label={[xshift=-.2cm]3}] {} (3); 	\draw (d) to node [label={[xshift=.3cm,yshift=-.2cm]6}] {} (6); 	\draw (h) to node [label={[xshift=.15cm, yshift=-.1cm]5}] {} (5); 	\draw (f) to node [label={[xshift=-.1cm]4}] {} (4);
	\draw [line width=.08cm] (1) to (a) to (f) to (4); 	\draw [line width=.08cm] (2) to (g) to (b) to (c) to (d) to (e) to (h) to (5);
	\draw [dashed, line width=.08cm] (1) to[out=80,in=20,distance=4cm] node[label=x] {} (5); 	\draw [dashed, line width=.08cm] (4) to[out=150,in=200,distance=2cm] node[label={[xshift=-.3cm,yshift=-.5cm]y}] {} (2); \end{tikzpicture}
\par\end{center}

\begin{center}
\caption{The cycle from the previous figure, after expanding the gadget, is
now a cycle of length $x+y+10$.}

\par\end{center}%
\end{minipage}
\end{figure}
\begin{figure}[H]
\begin{minipage}[t]{0.45\columnwidth}%
\begin{center}
\tikzstyle{node}=[circle, draw, fill=black!50,                         inner sep=0pt, minimum width=4pt]
\begin{tikzpicture}[thick,scale=1] 	\node [node] (a) at (0,1.5) {};
	\node [node] (b) at (0,-1.5) {};
	\node [node] (t1) at  ($ (a)+ (-.5,-1) $) {}; 	\node [node] (t2) at  ($ (a)+ (.5,-1) $) {}; 	\coordinate (t3) at  ($ (a)+ (0,.75) $); 	\node [node] (b1) at  ($ (b)+ (-.5,1) $) {}; 	\node [node] (b2) at  ($ (b)+ (.5,1) $) {}; 	\coordinate (b3) at  ($ (b) + (0,-.75) $); 	\draw (a) to node [label={[xshift=-.1cm]1}] {} (t1); 	\draw (a) to node [label={[xshift=.1cm]2}] {} (t2); 	\draw (a) to node [label={[xshift=-.2cm]3}] {} (t3); 	\draw (b) to node [label={[xshift=-.3cm,yshift=-.2cm]5}] {} (b1); 	\draw (b) to node [label={[xshift=.3cm,yshift=-.20cm]6}] {} (b2); 	\draw (b) to node [label={[xshift=-.2cm,yshift=-.3cm]4}] {} (b3);
	\draw [line width=.08cm] (t1) to (a) to (t2); 	\draw [line width=.08cm] (b1) to (b) to (b2); 	\draw [dashed, line width=.08cm] (t1) to node[label={[xshift=-.2cm,yshift=-.25cm]x}] {} (b1); 	\draw [dashed, line width=.08cm] (t2) to node[label={[xshift=.2cm,yshift=-.25cm]y}] {} (b2); 
\end{tikzpicture}
\par\end{center}

\begin{center}
\caption{A cycle of length $x+y+4$ passes through a $H'_{2}$.}

\par\end{center}%
\end{minipage}\hfill{}%
\begin{minipage}[t]{0.45\columnwidth}%
\begin{center}
\tikzstyle{node}=[circle, draw, fill=black!50,                         inner sep=0pt, minimum width=4pt]
\begin{tikzpicture}[thick,scale=.75] 	\node [node] (a) at (-1,.5) {}; 	\node [node] (b) at (0,1.2) {}; 	\node [node] (c) at (1,.5) {}; 	\node [node] (d) at (1,-.5) {}; 	\node [node] (e) at (0,-1.2) {}; 	\node [node] (f) at (-1,-.5) {}; 	\node [node] (g) at (0,.5) {}; 	\node [node] (h) at (0,-.5) {};
	\draw (a) -- (b) -- (c) -- (d) -- (e) -- (f) -- (a); 	\draw (b) -- (g) -- (h) -- (e); 	\coordinate (1) at  ($ (a)+ (-.75,.75) $); 	\coordinate (2) at  ($ (g)+ (-1.2, 1.2) $); 	\coordinate (3) at  ($ (c)+ (.75,.75) $); 	\coordinate (6) at  ($ (d)+ (.75,-.75) $); 	\coordinate (5) at  ($ (h)+ (1.2,-1.2) $); 	\coordinate (4) at  ($ (f) + (-.75,-.75) $); 	\draw (a) to node [label={[xshift=.1cm]1}] {} (1); 	\draw (g) to node [label={[xshift=.2cm]2}] {} (2); 	\draw (c) to node [label={[xshift=-.2cm]3}] {} (3); 	\draw (d) to node [label={[xshift=.3cm,yshift=-.2cm]6}] {} (6); 	\draw (h) to node [label={[xshift=.15cm, yshift=-.1cm]5}] {} (5); 	\draw (f) to node [label={[xshift=-.1cm]4}] {} (4);
	\draw [line width=.08cm] (2) to (g) to (b) to (c) to (d) to (6); 	\draw [line width=.08cm] (1) to (a) to (f) to (e) to (h) to (5);
	\draw [dashed, line width=.08cm] (1) to[out=180,in=-110,distance=3cm] node[label={[yshift=-.5cm]x}] {} (5); 	\draw [dashed, line width=.08cm] (2) to[out=80,in=20,distance=3cm] node[label={[yshift=-.0cm]y}] {} (6); \end{tikzpicture}
\par\end{center}

\begin{center}
\caption{The cycle from the previous figure, after expanding the gadget, is
now two cycles, of lengths $x+5$ and $y+5$, respectively. Note that
$x$ and $y$ cannot be $1$, otherwise this $H_{2}$ would be part
of a $H_{3}$. This is not possible, otherwise the $H_{3}$ would
have been contracted instead of this $H_{2}$.}

\par\end{center}%
\end{minipage}
\end{figure}
\begin{figure}[H]
\begin{minipage}[t]{0.45\columnwidth}%
\begin{center}
\tikzstyle{node}=[circle, draw, fill=black!50,                         inner sep=0pt, minimum width=4pt]
\begin{tikzpicture}[thick,scale=1] 	\node [node] (a) at (0,1.5) {};
	\node [node] (b) at (0,-1.5) {};
	\node [node] (t1) at  ($ (a)+ (-.5,-1) $) {}; 	\node [node] (t2) at  ($ (a)+ (.5,-1) $) {}; 	\coordinate (t3) at  ($ (a)+ (0,.75) $); 	\node [node] (b1) at  ($ (b)+ (-.5,1) $) {}; 	\node [node] (b2) at  ($ (b)+ (.5,1) $) {}; 	\coordinate (b3) at  ($ (b) + (0,-.75) $); 	\draw (a) to node [label={[xshift=-.1cm]1}] {} (t1); 	\draw (a) to node [label={[xshift=.1cm]2}] {} (t2); 	\draw (a) to node [label={[xshift=-.2cm]3}] {} (t3); 	\draw (b) to node [label={[xshift=-.3cm,yshift=-.2cm]6}] {} (b1); 	\draw (b) to node [label={[xshift=.3cm,yshift=-.20cm]4}] {} (b2); 	\draw (b) to node [label={[xshift=-.2cm,yshift=-.3cm]5}] {} (b3);
	\draw [line width=.08cm] (t1) to (a) to (t2); 	\draw [line width=.08cm] (b1) to (b) to (b2); 	\draw [dashed, line width=.08cm] (t1) to node[label={[xshift=-.2cm,yshift=-.25cm]x}] {} (b1); 	\draw [dashed, line width=.08cm] (t2) to node[label={[xshift=.2cm,yshift=-.25cm]y}] {} (b2); 
\end{tikzpicture}
\par\end{center}

\begin{center}
\caption{A cycle of length $x+y+4$ passes through a $H'_{2}$.}

\par\end{center}%
\end{minipage}\hfill{}%
\begin{minipage}[t]{0.45\columnwidth}%
\begin{center}
\tikzstyle{node}=[circle, draw, fill=black!50,                         inner sep=0pt, minimum width=4pt]
\begin{tikzpicture}[thick,scale=.75] 	\node [node] (a) at (-1,.5) {}; 	\node [node] (b) at (0,1.2) {}; 	\node [node] (c) at (1,.5) {}; 	\node [node] (d) at (1,-.5) {}; 	\node [node] (e) at (0,-1.2) {}; 	\node [node] (f) at (-1,-.5) {}; 	\node [node] (g) at (0,.5) {}; 	\node [node] (h) at (0,-.5) {};
	\draw (a) -- (b) -- (c) -- (d) -- (e) -- (f) -- (a); 	\draw (b) -- (g) -- (h) -- (e); 	\coordinate (1) at  ($ (a)+ (-.75,.75) $); 	\coordinate (2) at  ($ (g)+ (-1.2, 1.2) $); 	\coordinate (3) at  ($ (c)+ (.75,.75) $); 	\coordinate (6) at  ($ (d)+ (.75,-.75) $); 	\coordinate (5) at  ($ (h)+ (1.2,-1.2) $); 	\coordinate (4) at  ($ (f) + (-.75,-.75) $); 	\draw (a) to node [label={[xshift=.1cm]1}] {} (1); 	\draw (g) to node [label={[xshift=.2cm]2}] {} (2); 	\draw (c) to node [label={[xshift=-.2cm]3}] {} (3); 	\draw (d) to node [label={[xshift=.3cm,yshift=-.2cm]6}] {} (6); 	\draw (h) to node [label={[xshift=.15cm, yshift=-.1cm]5}] {} (5); 	\draw (f) to node [label={[xshift=-.1cm]4}] {} (4);
	\draw [line width=.08cm] (2) to (g) to (h) to (e) to (f) to (4); 	\draw [line width=.08cm] (1) to (a) to (b) to (c) to (d) to (6);
	\draw [dashed, line width=.08cm] (1) to[out=90,in=30,distance=3.5cm] node[label=x] {} (6); 	\draw [dashed, line width=.08cm] (4) to[out=200,in=160,distance=1.5cm] node[label={[xshift=-.2cm,yshift=-.0cm]y}] {} (2); \end{tikzpicture}
\par\end{center}

\begin{center}
\caption{The cycle from the previous figure, after expanding the gadget, is
now two cycles, of lengths $x+5$ and $y+5$, respectively. Note that
$x$ and $y$ cannot be $1$, otherwise this $H_{2}$ would be part
of a $H_{3}$. This is not possible, otherwise the $H_{3}$ would
have been contracted instead of this $H_{2}$.}

\par\end{center}%
\end{minipage}
\end{figure}
\begin{figure}[H]
\begin{minipage}[t]{0.45\columnwidth}%
\begin{center}
\tikzstyle{node}=[circle, draw, fill=black!50,                         inner sep=0pt, minimum width=4pt]
\begin{tikzpicture}[thick,scale=1] 	\node [node] (a) at (0,1.5) {};
	\node [node] (b) at (0,-1.5) {};
	\node [node] (t1) at  ($ (a)+ (-.5,-1) $) {}; 	\node [node] (t2) at  ($ (a)+ (.5,-1) $) {}; 	\coordinate (t3) at  ($ (a)+ (0,.75) $); 	\node [node] (b1) at  ($ (b)+ (-.5,1) $) {}; 	\node [node] (b2) at  ($ (b)+ (.5,1) $) {}; 	\coordinate (b3) at  ($ (b) + (0,-.75) $); 	\draw (a) to node [label={[xshift=-.1cm]1}] {} (t1); 	\draw (a) to node [label={[xshift=.1cm]2}] {} (t2); 	\draw (a) to node [label={[xshift=-.2cm]3}] {} (t3); 	\draw (b) to node [label={[xshift=-.3cm,yshift=-.2cm]6}] {} (b1); 	\draw (b) to node [label={[xshift=.3cm,yshift=-.20cm]5}] {} (b2); 	\draw (b) to node [label={[xshift=-.2cm,yshift=-.3cm]4}] {} (b3);
	\draw [line width=.08cm] (t1) to (a) to (t2); 	\draw [line width=.08cm] (b1) to (b) to (b2); 	\draw [dashed, line width=.08cm] (t1) to node[label={[xshift=-.2cm,yshift=-.25cm]x}] {} (b1); 	\draw [dashed, line width=.08cm] (t2) to node[label={[xshift=.2cm,yshift=-.25cm]y}] {} (b2); 
\end{tikzpicture}
\par\end{center}

\begin{center}
\caption{A cycle of length $x+y+4$ passes through a $H'_{2}$.}

\par\end{center}%
\end{minipage}\hfill{}%
\begin{minipage}[t]{0.45\columnwidth}%
\begin{center}
\tikzstyle{node}=[circle, draw, fill=black!50,                         inner sep=0pt, minimum width=4pt]
\begin{tikzpicture}[thick,scale=.75]
\clip (-3.5,-3) rectangle (3,3);
\node [node] (a) at (-1,.5) {}; 	\node [node] (b) at (0,1.2) {}; 	\node [node] (c) at (1,.5) {}; 	\node [node] (d) at (1,-.5) {}; 	\node [node] (e) at (0,-1.2) {}; 	\node [node] (f) at (-1,-.5) {}; 	\node [node] (g) at (0,.5) {}; 	\node [node] (h) at (0,-.5) {};
	\draw (a) -- (b) -- (c) -- (d) -- (e) -- (f) -- (a); 	\draw (b) -- (g) -- (h) -- (e); 	\coordinate (1) at  ($ (a)+ (-.75,.75) $); 	\coordinate (2) at  ($ (g)+ (-1.2, 1.2) $); 	\coordinate (3) at  ($ (c)+ (.75,.75) $); 	\coordinate (6) at  ($ (d)+ (.75,-.75) $); 	\coordinate (5) at  ($ (h)+ (1.2,-1.2) $); 	\coordinate (4) at  ($ (f) + (-.75,-.75) $); 	\draw (a) to node [label={[xshift=.1cm]1}] {} (1); 	\draw (g) to node [label={[xshift=.2cm]2}] {} (2); 	\draw (c) to node [label={[xshift=-.2cm]3}] {} (3); 	\draw (d) to node [label={[xshift=.3cm,yshift=-.2cm]6}] {} (6); 	\draw (h) to node [label={[xshift=.15cm, yshift=-.1cm]5}] {} (5); 	\draw (f) to node [label={[xshift=-.1cm]4}] {} (4);
	\draw [line width=.08cm] (2) to (g) to (b) to (c) to (d) to (6); 	\draw [line width=.08cm] (1) to (a) to (f) to (e) to (h) to (5);
	\draw [dashed, line width=.08cm] (1) to[out=180,in=-90,distance=4.5cm] node[label={[yshift=-.5cm]x}] {} (6); 	\draw [dashed, line width=.08cm] (2) to[out=80,in=-20,distance=4.5cm] node[label={[xshift=.2cm,yshift=-.0cm]y}] {} (5); \end{tikzpicture}
\par\end{center}

\begin{center}
\caption{The cycle from the previous figure, after expanding the gadget, is
now a cycle of length $x+y+10$.}

\par\end{center}%
\end{minipage}
\end{figure}

\section{Appendix D: $H_{3}$s} \label{apdxh3}

\subsection{Zero super-edges are covered by $2$-factor}

If none of the super-edges of a $H'_{3}$ are covered by $F_{i}$,
then expanding this gadget returns a cycle of length $10$.

\begin{figure}[H]
\begin{minipage}[t]{0.45\columnwidth}%
\begin{center}
\tikzstyle{node}=[circle, draw, fill=black!50,                         inner sep=0pt, minimum width=4pt]
\begin{tikzpicture}[thick,scale=0.8] 	\coordinate (4) at (-2,1); 	\coordinate (3) at (-2,-1); 	\draw (4) to node [label={[xshift=.15cm,yshift=.3cm]4}] {} (3); 	\draw (4) to node [label={[xshift=.15cm,yshift=-.7cm]3}] {} (3); 	\coordinate (5) at (0,1); 	\coordinate (1) at (0,-1); 	\draw (5) to node [label={[xshift=.15cm,yshift=.3cm]5}] {} (1); 	\draw (5) to node [label={[xshift=.15cm,yshift=-.7cm]1}] {} (1); 	\coordinate (6) at (2,1); 	\coordinate (2) at (2,-1); 	\draw (6) to node [label={[xshift=.15cm,yshift=.3cm]6}] {} (2); 	\draw (6) to node [label={[xshift=.15cm,yshift=-.7cm]2}] {} (2); \end{tikzpicture}
\par\end{center}

\begin{center}
\caption{None of the super-edges in a $H'_{3}$ are included in the $2$-factor.}

\par\end{center}%
\end{minipage}\hfill{}%
\begin{minipage}[t]{0.45\columnwidth}%
\begin{center}
\tikzstyle{node}=[circle, draw, fill=black!50,                         inner sep=0pt, minimum width=4pt]
\begin{tikzpicture}[thick,scale=.75] 	\node [node] (a) at (-1,.5) {}; 	\node [node] (b) at (0,1.2) {}; 	\node [node] (c) at (1,.5) {}; 	\node [node] (d) at (1,-.5) {}; 	\node [node] (e) at (0,-1.2) {}; 	\node [node] (f) at (-1,-.5) {}; 	\node [node] (g) at (0,.5) {}; 	\node [node] (h) at (0,-.5) {}; 	\node [node] (i) at  ($ (a)+ (.5,1.5) $) {}; 	\node [node] (j) at  ($ (d)+ (-.5,2.5) $) {};
	\draw (a) -- (b) -- (c) -- (d) -- (e) -- (f) -- (a); 	\draw (b) -- (g) -- (h) -- (e); 	\draw (a) -- (i) -- (j); 	\draw (d) to[out=0,in=0, distance=.5cm] (j); 	\coordinate (1) at  ($ (j) + (.75,.75) $); 	\coordinate (2) at  ($ (g)+ (-1.2, 1.2) $); 	\coordinate (3) at  ($ (c)+ (.75,.75) $); 	\coordinate (4) at  ($ (i)+ (-.75,.75) $); 	\coordinate (5) at  ($ (h)+ (1.2,-1.2) $); 	\coordinate (6) at  ($ (f) + (-.75,-.75) $); 	\draw (i) to node [label=6] {} (4); 	\draw (j) to node [label=1] {} (1); 	\draw (g) to node [label={[xshift=.2cm]2}] {} (2); 	\draw (c) to node [label={[xshift=-.1cm]3}] {} (3); 	\draw (h) to node [label={[xshift=.15cm, yshift=-.1cm]5}] {} (5); 	\draw (f) to node [label={[xshift=-.1cm]4}] {} (6);
	\draw [line width=.08cm] (a) to (f) to (e) to (h) to (g) to (b) to (c) to (d) to[out=0,in=0, distance=.5cm] (j); 	\draw [line width=.08cm] (j) to (i) to (a); \end{tikzpicture}
\par\end{center}

\begin{center}
\caption{After expanding the gadget, the $H_{3}$ is covered by a cycle of
length $10$.}

\par\end{center}%
\end{minipage}
\end{figure}

\subsection{One super-edge is covered by $2$-factor}

If exactly one edge of a $H_{3}$ gadget is covered by a cycle of
$F_{i}$, then, to ensure we examine every case, we examine all three
ways we can select the super-edge to include in $F_{i}$. In all three
cases, we start with a cycle of length $x+1$ in $F_{i}$ and are
returned either a single cycle of length $x+11$ in $F_{i-1}$. $x+1\geq6$,
so $x+11\geq8$, meaning the resulting cycle in $F_{i-1}$ cannot
be a cycle of length $6$.

\begin{figure}[H]
\begin{minipage}[t]{0.45\columnwidth}%
\begin{center}
\tikzstyle{node}=[circle, draw, fill=black!50,                         inner sep=0pt, minimum width=4pt]
\begin{tikzpicture}[thick,scale=0.8] 	\coordinate (4) at (-2,1); 	\coordinate (3) at (-2,-1); 	\draw (4) to node [label={[xshift=.15cm,yshift=.3cm]4}] {} (3); 	\draw (4) to node [label={[xshift=.15cm,yshift=-.7cm]3}] {} (3); 	\coordinate (5) at (0,1); 	\coordinate (1) at (0,-1); 	\draw (5) to node [label={[xshift=.15cm,yshift=.3cm]5}] {} (1); 	\draw (5) to node [label={[xshift=.15cm,yshift=-.7cm]1}] {} (1); 	\coordinate (6) at (2,1); 	\coordinate (2) at (2,-1); 	\draw (6) to node [label={[xshift=.15cm,yshift=.3cm]6}] {} (2); 	\draw (6) to node [label={[xshift=.15cm,yshift=-.7cm]2}] {} (2);
	\draw [line width=.08cm] (4) to (3); 	\draw [dashed, line width=.08cm] (4) to[out=110,in=-110, distance=2cm] node [label={[xshift=-.2cm]x}] {} (3); \end{tikzpicture}
\par\end{center}

\begin{center}
\caption{A cycle of length $x+1$ passes through a $H'_{3}$.}

\par\end{center}%
\end{minipage}\hfill{}%
\begin{minipage}[t]{0.45\columnwidth}%
\begin{center}
\tikzstyle{node}=[circle, draw, fill=black!50,                         inner sep=0pt, minimum width=4pt]
\begin{tikzpicture}[thick,scale=.75] 	\node [node] (a) at (-1,.5) {}; 	\node [node] (b) at (0,1.2) {}; 	\node [node] (c) at (1,.5) {}; 	\node [node] (d) at (1,-.5) {}; 	\node [node] (e) at (0,-1.2) {}; 	\node [node] (f) at (-1,-.5) {}; 	\node [node] (g) at (0,.5) {}; 	\node [node] (h) at (0,-.5) {}; 	\node [node] (i) at  ($ (a)+ (.5,1.5) $) {}; 	\node [node] (j) at  ($ (d)+ (-.5,2.5) $) {};
	\draw (a) -- (b) -- (c) -- (d) -- (e) -- (f) -- (a); 	\draw (b) -- (g) -- (h) -- (e); 	\draw (a) -- (i) -- (j); 	\draw (d) to[out=0,in=0, distance=.5cm] (j); 	\coordinate (1) at  ($ (j) + (.75,.75) $); 	\coordinate (2) at  ($ (g)+ (-1.2, 1.2) $); 	\coordinate (3) at  ($ (c)+ (.75,.75) $); 	\coordinate (6) at  ($ (i)+ (-.75,.75) $); 	\coordinate (5) at  ($ (h)+ (1.2,-1.2) $); 	\coordinate (4) at  ($ (f) + (-.75,-.75) $); 	\draw (i) to node [label=6] {} (6); 	\draw (j) to node [label=1] {} (1); 	\draw (g) to node [label={[xshift=.2cm]2}] {} (2); 	\draw (c) to node [label={[xshift=.0cm]3}] {} (3); 	\draw (h) to node [label={[xshift=.15cm, yshift=-.1cm]5}] {} (5); 	\draw (f) to node [label={[xshift=-.1cm]4}] {} (4);
	\draw [line width=.08cm] (3) to (c) to (d) to [out=0,in=0, distance=.5cm] (j) to (j) to (i) to (a) to (b) to (g) to (h) to (e) to (f) to (4);
	\draw [dashed, line width=.08cm] (4) to[out=300,in=-30,distance=3cm] node [label={[xshift=.4cm,yshift=-.2cm]x}] {} (3); \end{tikzpicture}
\par\end{center}

\begin{center}
\caption{The cycle from the previous figure, after expanding the gadget, is
now a cycle of length $x+11$.}

\par\end{center}%
\end{minipage}
\end{figure}
\begin{figure}[H]
\begin{minipage}[t]{0.45\columnwidth}%
\begin{center}
\tikzstyle{node}=[circle, draw, fill=black!50,                         inner sep=0pt, minimum width=4pt]
\begin{tikzpicture}[thick,scale=0.8] 	\coordinate (4) at (-2,1); 	\coordinate (3) at (-2,-1); 	\draw (4) to node [label={[xshift=.15cm,yshift=.3cm]4}] {} (3); 	\draw (4) to node [label={[xshift=.15cm,yshift=-.7cm]3}] {} (3); 	\coordinate (5) at (0,1); 	\coordinate (1) at (0,-1); 	\draw (5) to node [label={[xshift=.15cm,yshift=.3cm]5}] {} (1); 	\draw (5) to node [label={[xshift=.15cm,yshift=-.7cm]1}] {} (1); 	\coordinate (6) at (2,1); 	\coordinate (2) at (2,-1); 	\draw (6) to node [label={[xshift=.15cm,yshift=.3cm]6}] {} (2); 	\draw (6) to node [label={[xshift=.15cm,yshift=-.7cm]2}] {} (2);
	\draw [line width=.08cm] (5) to (1); 	\draw [dashed, line width=.08cm] (5) to[out=110,in=-110, distance=2cm] node [label={[xshift=-.2cm]x}] {} (1); \end{tikzpicture}
\par\end{center}

\begin{center}
\caption{A cycle of length $x+1$ passes through a $H'_{3}$.}

\par\end{center}%
\end{minipage}\hfill{}%
\begin{minipage}[t]{0.45\columnwidth}%
\begin{center}
\tikzstyle{node}=[circle, draw, fill=black!50,                         inner sep=0pt, minimum width=4pt]
\begin{tikzpicture}[thick,scale=.75] 	\node [node] (a) at (-1,.5) {}; 	\node [node] (b) at (0,1.2) {}; 	\node [node] (c) at (1,.5) {}; 	\node [node] (d) at (1,-.5) {}; 	\node [node] (e) at (0,-1.2) {}; 	\node [node] (f) at (-1,-.5) {}; 	\node [node] (g) at (0,.5) {}; 	\node [node] (h) at (0,-.5) {}; 	\node [node] (i) at  ($ (a)+ (.5,1.5) $) {}; 	\node [node] (j) at  ($ (d)+ (-.5,2.5) $) {};
	\draw (a) -- (b) -- (c) -- (d) -- (e) -- (f) -- (a); 	\draw (b) -- (g) -- (h) -- (e); 	\draw (a) -- (i) -- (j); 	\draw (d) to[out=0,in=0, distance=.5cm] (j); 	\coordinate (1) at  ($ (j) + (.75,.75) $); 	\coordinate (2) at  ($ (g)+ (-1.2, 1.2) $); 	\coordinate (3) at  ($ (c)+ (.75,.75) $); 	\coordinate (6) at  ($ (i)+ (-.75,.75) $); 	\coordinate (5) at  ($ (h)+ (1.2,-1.2) $); 	\coordinate (4) at  ($ (f) + (-.75,-.75) $); 	\draw (i) to node [label=6] {} (6); 	\draw (j) to node [label=1] {} (1); 	\draw (g) to node [label={[xshift=.2cm]2}] {} (2); 	\draw (c) to node [label={[xshift=-.1cm]3}] {} (3); 	\draw (h) to node [label={[xshift=.15cm, yshift=-.1cm]5}] {} (5); 	\draw (f) to node [label={[xshift=-.1cm]4}] {} (4);
	\draw [line width=.08cm] (5) to (h) to (g) to (b) to (c) to (d) to (e) to (f) to (a) to (i) to (j) to (1);
	\draw [dashed, line width=.08cm] (5) to[out=30,in=-30,distance=1.3cm] node [label={[xshift=.2cm]x}] {} (1); \end{tikzpicture}
\par\end{center}

\begin{center}
\caption{The cycle from the previous figure, after expanding the gadget, is
now a cycle of length $x+11$.}

\par\end{center}%
\end{minipage}
\end{figure}
\begin{figure}[H]
\begin{minipage}[t]{0.45\columnwidth}%
\begin{center}
\tikzstyle{node}=[circle, draw, fill=black!50,                         inner sep=0pt, minimum width=4pt]
\begin{tikzpicture}[thick,scale=0.8] 	\coordinate (4) at (-2,1); 	\coordinate (3) at (-2,-1); 	\draw (4) to node [label={[xshift=.15cm,yshift=.3cm]4}] {} (3); 	\draw (4) to node [label={[xshift=.15cm,yshift=-.7cm]3}] {} (3); 	\coordinate (5) at (0,1); 	\coordinate (1) at (0,-1); 	\draw (5) to node [label={[xshift=.15cm,yshift=.3cm]5}] {} (1); 	\draw (5) to node [label={[xshift=.15cm,yshift=-.7cm]1}] {} (1); 	\coordinate (6) at (2,1); 	\coordinate (2) at (2,-1); 	\draw (6) to node [label={[xshift=.15cm,yshift=.3cm]6}] {} (2); 	\draw (6) to node [label={[xshift=.15cm,yshift=-.7cm]2}] {} (2);
	\draw [line width=.08cm] (6) to (2); 	\draw [dashed, line width=.08cm] (6) to[out=110,in=-110, distance=2cm] node [label={[xshift=-.2cm]x}] {} (2); \end{tikzpicture}
\par\end{center}

\begin{center}
\caption{A cycle of length $x+1$ passes through a $H'_{3}$.}

\par\end{center}%
\end{minipage}\hfill{}%
\begin{minipage}[t]{0.45\columnwidth}%
\begin{center}
\tikzstyle{node}=[circle, draw, fill=black!50,                         inner sep=0pt, minimum width=4pt]
\begin{tikzpicture}[thick,scale=.75] 	\node [node] (a) at (-1,.5) {}; 	\node [node] (b) at (0,1.2) {}; 	\node [node] (c) at (1,.5) {}; 	\node [node] (d) at (1,-.5) {}; 	\node [node] (e) at (0,-1.2) {}; 	\node [node] (f) at (-1,-.5) {}; 	\node [node] (g) at (0,.5) {}; 	\node [node] (h) at (0,-.5) {}; 	\node [node] (i) at  ($ (a)+ (.5,1.5) $) {}; 	\node [node] (j) at  ($ (d)+ (-.5,2.5) $) {};
	\draw (a) -- (b) -- (c) -- (d) -- (e) -- (f) -- (a); 	\draw (b) -- (g) -- (h) -- (e); 	\draw (a) -- (i) -- (j); 	\draw (d) to[out=0,in=0, distance=.5cm] (j); 	\coordinate (1) at  ($ (j) + (.75,.75) $); 	\coordinate (2) at  ($ (g)+ (-1.2, 1.2) $); 	\coordinate (3) at  ($ (c)+ (.75,.75) $); 	\coordinate (6) at  ($ (i)+ (-.75,.75) $); 	\coordinate (5) at  ($ (h)+ (1.2,-1.2) $); 	\coordinate (4) at  ($ (f) + (-.75,-.75) $); 	\draw (i) to node [label=6] {} (6); 	\draw (j) to node [label=1] {} (1); 	\draw (g) to node [label={[xshift=.2cm]2}] {} (2); 	\draw (c) to node [label={[xshift=-.05cm]3}] {} (3); 	\draw (h) to node [label={[xshift=.15cm, yshift=-.1cm]5}] {} (5); 	\draw (f) to node [label={[xshift=-.1cm]4}] {} (4);
	\draw [line width=.08cm] (6) to (i) to (j) to [out=0,in=0, distance=.5cm] (d) to (d) to (c) to (b) to (a) to (f) to (e) to (h) to (g) to (2);
	\draw [dashed, line width=.08cm] (6) to[out=150,in=210,distance=.5cm] node [label={[xshift=-.2cm]x}] {} (2); \end{tikzpicture}
\par\end{center}

\begin{center}
\caption{The cycle from the previous figure, after expanding the gadget, is
now a cycle of length $x+11$.}

\par\end{center}%
\end{minipage}
\end{figure}

\subsection{Two super-edges are covered by $2$-factor}

\subsubsection{Two cycles pass through gadget}

If two edges of a $H'_{3}$ are covered by two disjoint cycles in
$F_{i}$, then, to ensure we examine every case, we examine all three
ways we can select the super-edge to exclude from $F_{i}$. In all
three cases, we start with cycles of lengths $x+1$ and $y+1$ in
$F_{i}$ and are returned either a single cycle of length $x+y+12$
or two cycles of lengths $x+5$ and $y+7$ in $F_{i-1}$. $x+1\geq6$,
and $y+1\geq6$, so $x+5\geq8$, $y+7\geq8$, and $x+y+12\geq8$,
meaning the resulting cycle or cycles in $F_{i-1}$ cannot be a cycle
of length $6$.

\begin{figure}[H]
\begin{minipage}[t]{0.45\columnwidth}%
\begin{center}
\tikzstyle{node}=[circle, draw, fill=black!50,                         inner sep=0pt, minimum width=4pt]
\begin{tikzpicture}[thick,scale=0.8] 	\coordinate (4) at (-2,1); 	\coordinate (3) at (-2,-1); 	\draw (4) to node [label={[xshift=.15cm,yshift=.3cm]4}] {} (3); 	\draw (4) to node [label={[xshift=.15cm,yshift=-.7cm]3}] {} (3); 	\coordinate (5) at (0,1); 	\coordinate (1) at (0,-1); 	\draw (5) to node [label={[xshift=.15cm,yshift=.3cm]5}] {} (1); 	\draw (5) to node [label={[xshift=.15cm,yshift=-.7cm]1}] {} (1); 	\coordinate (6) at (2,1); 	\coordinate (2) at (2,-1); 	\draw (6) to node [label={[xshift=.15cm,yshift=.3cm]6}] {} (2); 	\draw (6) to node [label={[xshift=.15cm,yshift=-.7cm]2}] {} (2);
	\draw [line width=.08cm] (4) to (3); 	\draw [dashed, line width=.08cm] (4) to[out=110,in=-110, distance=2cm] node [label={[xshift=-.2cm]x}] {} (3);
	\draw [line width=.08cm] (5) to (1); 	\draw [dashed, line width=.08cm] (5) to[out=110,in=-110, distance=2cm] node [label={[xshift=-.2cm]y}] {} (1); \end{tikzpicture}
\par\end{center}

\begin{center}
\caption{Two cycles of lengths $x+1$ and $y+1$ pass through a $H'_{3}$.}

\par\end{center}%
\end{minipage}\hfill{}%
\begin{minipage}[t]{0.45\columnwidth}%
\begin{center}
\tikzstyle{node}=[circle, draw, fill=black!50,                         inner sep=0pt, minimum width=4pt]
\begin{tikzpicture}[thick,scale=.75] 	\node [node] (a) at (-1,.5) {}; 	\node [node] (b) at (0,1.2) {}; 	\node [node] (c) at (1,.5) {}; 	\node [node] (d) at (1,-.5) {}; 	\node [node] (e) at (0,-1.2) {}; 	\node [node] (f) at (-1,-.5) {}; 	\node [node] (g) at (0,.5) {}; 	\node [node] (h) at (0,-.5) {}; 	\node [node] (i) at  ($ (a)+ (.5,1.5) $) {}; 	\node [node] (j) at  ($ (d)+ (-.5,2.5) $) {};
	\draw (a) -- (b) -- (c) -- (d) -- (e) -- (f) -- (a); 	\draw (b) -- (g) -- (h) -- (e); 	\draw (a) -- (i) -- (j); 	\draw (d) to[out=0,in=0, distance=.5cm] (j); 	\coordinate (1) at  ($ (j) + (.75,.75) $); 	\coordinate (2) at  ($ (g)+ (-1.2, 1.2) $); 	\coordinate (3) at  ($ (c)+ (.75,.75) $); 	\coordinate (6) at  ($ (i)+ (-.75,.75) $); 	\coordinate (5) at  ($ (h)+ (1.2,-1.2) $); 	\coordinate (4) at  ($ (f) + (-.75,-.75) $); 	\draw (i) to node [label=6] {} (6); 	\draw (j) to node [label=1] {} (1); 	\draw (g) to node [label={[xshift=.2cm]2}] {} (2); 	\draw (c) to node [label={[xshift=-.1cm]3}] {} (3); 	\draw (h) to node [label={[xshift=.15cm, yshift=-.1cm]5}] {} (5); 	\draw (f) to node [label={[xshift=-.1cm]4}] {} (4);
	\draw [line width=.08cm] (1) to (j) to (i) to (a) to (b) to (g) to (h) to (5); 	\draw [line width=.08cm] (4) to (f) to (e) to (d) to (c) to (3);
	\draw [dashed, line width=.08cm] (4) to[out=310,in=-40,distance=3.2cm] node [label={[xshift=.4cm,yshift=-.2cm]x}] {} (3); 	\draw [dashed, line width=.08cm] (5) to[out=30,in=-30,distance=1.3cm] node [label={[xshift=.2cm,yshift=.5cm]y}] {} (1); \end{tikzpicture}
\par\end{center}

\begin{center}
\caption{The cycles from the previous figure, after expanding the gadget, are
now cycles of lengths $x+5$ and $y+7$, respectively.}

\par\end{center}%
\end{minipage}
\end{figure}
\begin{figure}[H]
\begin{minipage}[t]{0.45\columnwidth}%
\begin{center}
\tikzstyle{node}=[circle, draw, fill=black!50,                         inner sep=0pt, minimum width=4pt]
\begin{tikzpicture}[thick,scale=0.8] 	\coordinate (4) at (-2,1); 	\coordinate (3) at (-2,-1); 	\draw (4) to node [label={[xshift=.15cm,yshift=.3cm]4}] {} (3); 	\draw (4) to node [label={[xshift=.15cm,yshift=-.7cm]3}] {} (3); 	\coordinate (5) at (0,1); 	\coordinate (1) at (0,-1); 	\draw (5) to node [label={[xshift=.15cm,yshift=.3cm]5}] {} (1); 	\draw (5) to node [label={[xshift=.15cm,yshift=-.7cm]1}] {} (1); 	\coordinate (6) at (2,1); 	\coordinate (2) at (2,-1); 	\draw (6) to node [label={[xshift=.15cm,yshift=.3cm]6}] {} (2); 	\draw (6) to node [label={[xshift=.15cm,yshift=-.7cm]2}] {} (2);
	\draw [line width=.08cm] (4) to (3); 	\draw [dashed, line width=.08cm] (4) to[out=110,in=-110, distance=2cm] node [label={[xshift=-.2cm]x}] {} (3);
	\draw [line width=.08cm] (6) to (2); 	\draw [dashed, line width=.08cm] (6) to[out=110,in=-110, distance=2cm] node [label={[xshift=-.2cm]y}] {} (2); \end{tikzpicture} 
\par\end{center}

\begin{center}
\caption{Two cycles of lengths $x+1$ and $y+1$ pass through a $H'_{3}$.}

\par\end{center}%
\end{minipage}\hfill{}%
\begin{minipage}[t]{0.45\columnwidth}%
\begin{center}
\tikzstyle{node}=[circle, draw, fill=black!50,                         inner sep=0pt, minimum width=4pt]
\begin{tikzpicture}[thick,scale=.75] 	\node [node] (a) at (-1,.5) {}; 	\node [node] (b) at (0,1.2) {}; 	\node [node] (c) at (1,.5) {}; 	\node [node] (d) at (1,-.5) {}; 	\node [node] (e) at (0,-1.2) {}; 	\node [node] (f) at (-1,-.5) {}; 	\node [node] (g) at (0,.5) {}; 	\node [node] (h) at (0,-.5) {}; 	\node [node] (i) at  ($ (a)+ (.5,1.5) $) {}; 	\node [node] (j) at  ($ (d)+ (-.5,2.5) $) {};
	\draw (a) -- (b) -- (c) -- (d) -- (e) -- (f) -- (a); 	\draw (b) -- (g) -- (h) -- (e); 	\draw (a) -- (i) -- (j); 	\draw (d) to[out=0,in=0, distance=.5cm] (j); 	\coordinate (1) at  ($ (j) + (.75,.75) $); 	\coordinate (2) at  ($ (g)+ (-1.2, 1.2) $); 	\coordinate (3) at  ($ (c)+ (.75,.75) $); 	\coordinate (6) at  ($ (i)+ (-.75,.75) $); 	\coordinate (5) at  ($ (h)+ (1.2,-1.2) $); 	\coordinate (4) at  ($ (f) + (-.75,-.75) $); 	\draw (i) to node [label=6] {} (6); 	\draw (j) to node [label=1] {} (1); 	\draw (g) to node [label={[xshift=.2cm]2}] {} (2); 	\draw (c) to node [label={[xshift=-.05cm]3}] {} (3); 	\draw (h) to node [label={[xshift=.15cm, yshift=-.1cm]5}] {} (5); 	\draw (f) to node [label={[xshift=-.1cm]4}] {} (4);
	\draw [line width=.08cm] (6) to (i) to (j) to[out=0,in=0,distance=.5cm] (d) to (d) to (e) to (h) to (g) to (2); 	\draw [line width=.08cm] (4) to (f) to (a) to (b) to (c) to (3);
	\draw [dashed, line width=.08cm] (4) to[out=310,in=-40,distance=3.2cm] node [label={[xshift=.4cm,yshift=-.2cm]x}] {} (3); 	\draw [dashed, line width=.08cm] (6) to[out=150,in=210,distance=.5cm] node [label={[xshift=-.2cm]y}] {} (2); \end{tikzpicture}
\par\end{center}

\begin{center}
\caption{The cycles from the previous figure, after expanding the gadget, are
now cycles of lengths $x+5$ and $y+7$, respectively.}

\par\end{center}%
\end{minipage}
\end{figure}
\begin{figure}[H]
\begin{minipage}[t]{0.45\columnwidth}%
\begin{center}
\tikzstyle{node}=[circle, draw, fill=black!50,                         inner sep=0pt, minimum width=4pt]
\begin{tikzpicture}[thick,scale=0.8] 	\coordinate (4) at (-2,1); 	\coordinate (3) at (-2,-1); 	\draw (4) to node [label={[xshift=.15cm,yshift=.3cm]4}] {} (3); 	\draw (4) to node [label={[xshift=.15cm,yshift=-.7cm]3}] {} (3); 	\coordinate (5) at (0,1); 	\coordinate (1) at (0,-1); 	\draw (5) to node [label={[xshift=.15cm,yshift=.3cm]5}] {} (1); 	\draw (5) to node [label={[xshift=.15cm,yshift=-.7cm]1}] {} (1); 	\coordinate (6) at (2,1); 	\coordinate (2) at (2,-1); 	\draw (6) to node [label={[xshift=.15cm,yshift=.3cm]6}] {} (2); 	\draw (6) to node [label={[xshift=.15cm,yshift=-.7cm]2}] {} (2);
	\draw [line width=.08cm] (5) to (1); 	\draw [dashed, line width=.08cm] (5) to[out=110,in=-110, distance=2cm] node [label={[xshift=-.2cm]x}] {} (1);
	\draw [line width=.08cm] (6) to (2); 	\draw [dashed, line width=.08cm] (6) to[out=110,in=-110, distance=2cm] node [label={[xshift=-.2cm]y}] {} (2); \end{tikzpicture}
\par\end{center}

\begin{center}
\caption{Two cycles of lengths $x+1$ and $y+1$ pass through a $H'_{3}$.}

\par\end{center}%
\end{minipage}\hfill{}%
\begin{minipage}[t]{0.45\columnwidth}%
\begin{center}
\tikzstyle{node}=[circle, draw, fill=black!50,                         inner sep=0pt, minimum width=4pt]
\begin{tikzpicture}[thick,scale=.75] 	\node [node] (a) at (-1,.5) {}; 	\node [node] (b) at (0,1.2) {}; 	\node [node] (c) at (1,.5) {}; 	\node [node] (d) at (1,-.5) {}; 	\node [node] (e) at (0,-1.2) {}; 	\node [node] (f) at (-1,-.5) {}; 	\node [node] (g) at (0,.5) {}; 	\node [node] (h) at (0,-.5) {}; 	\node [node] (i) at  ($ (a)+ (.5,1.5) $) {}; 	\node [node] (j) at  ($ (d)+ (-.5,2.5) $) {};
	\draw (a) -- (b) -- (c) -- (d) -- (e) -- (f) -- (a); 	\draw (b) -- (g) -- (h) -- (e); 	\draw (a) -- (i) -- (j); 	\draw (d) to[out=0,in=0, distance=.5cm] (j); 	\coordinate (1) at  ($ (j) + (.75,.75) $); 	\coordinate (2) at  ($ (g)+ (-1.2, 1.2) $); 	\coordinate (3) at  ($ (c)+ (.75,.75) $); 	\coordinate (6) at  ($ (i)+ (-.75,.75) $); 	\coordinate (5) at  ($ (h)+ (1.2,-1.2) $); 	\coordinate (4) at  ($ (f) + (-.75,-.75) $); 	\draw (i) to node [label=6] {} (6); 	\draw (j) to node [label=1] {} (1); 	\draw (g) to node [label={[xshift=.2cm]2}] {} (2); 	\draw (c) to node [label={[xshift=-.05cm]3}] {} (3); 	\draw (h) to node [label={[xshift=.15cm, yshift=-.1cm]5}] {} (5); 	\draw (f) to node [label={[xshift=-.1cm]4}] {} (4);
	\draw [line width=.08cm] (1) to (j) to[out=0,in=0,distance=.5cm] (d) to (d) to (c) to (b) to (g) to (2); 	\draw [line width=.08cm] (6) to (i) to (a) to (f) to (e) to (h) to (5);
	\draw [dashed, line width=.08cm] (5) to[out=30,in=-30,distance=1.3cm] node [label={[xshift=.2cm,yshift=.5cm]x}] {} (1); 	\draw [dashed, line width=.08cm] (6) to[out=150,in=210,distance=.5cm] node [label={[xshift=-.2cm]y}] {} (2); \end{tikzpicture}
\par\end{center}

\begin{center}
\caption{The cycles from the previous figure, after expanding the gadget, are
now a single cycle of length $x+y+12$.}

\par\end{center}%
\end{minipage}
\end{figure}

\subsubsection{One cycle passes through gadget}

If two edges of a $H'_{3}$ are covered by a single cycle in $F_{i}$,
then, to ensure we examine every case, we examine all three ways we
can select the super-edge to exclude from $F_{i}$ and then within
each of these arrangements, we examine both orientations in which
the two super-edges in the same cycle can be connected. In all six
cases, we start with a cycle of length $x+y+2$ in $F_{i}$ and are
returned a single cycle of length $x+y+12$ in $F_{i-1}$, except
for one exception, which is analyzed in detail
in Sections \ref{expandh3} and \ref{analyzingworstcase}. $x+y+2\geq6$, so $x+y+12>8$, meaning
the resulting cycle in $F_{i-1}$ cannot be a cycle of length $6$,
except for in the special case we have identified.

\begin{figure}[H]
\begin{minipage}[t]{0.45\columnwidth}%
\begin{center}
\tikzstyle{node}=[circle, draw, fill=black!50,                         inner sep=0pt, minimum width=4pt]
\begin{tikzpicture}[thick,scale=0.8] 	\coordinate (4) at (-2,1); 	\coordinate (3) at (-2,-1); 	\draw (4) to node [label={[xshift=.15cm,yshift=.3cm]4}] {} (3); 	\draw (4) to node [label={[xshift=.15cm,yshift=-.7cm]3}] {} (3); 	\coordinate (5) at (0,1); 	\coordinate (1) at (0,-1); 	\draw (5) to node [label={[xshift=.15cm,yshift=.3cm]5}] {} (1); 	\draw (5) to node [label={[xshift=.15cm,yshift=-.7cm]1}] {} (1); 	\coordinate (6) at (2,1); 	\coordinate (2) at (2,-1); 	\draw (6) to node [label={[xshift=.15cm,yshift=.3cm]6}] {} (2); 	\draw (6) to node [label={[xshift=.15cm,yshift=-.7cm]2}] {} (2);
	\draw [line width=.08cm] (4) to (3); 	\draw [dashed, line width=.08cm] (4) to[out=230,in=-140, distance=2.2cm] node [label={[xshift=-.4cm]x}] {} (1);
	\draw [line width=.08cm] (5) to (1); 	\draw [dashed, line width=.08cm] (3) to[out=130,in=140, distance=2.2cm] node [label={[xshift=-.0cm]y}] {} (5); \end{tikzpicture}
\par\end{center}

\begin{center}
\caption{A cycle of length $x+y+2$ passes through a $H'_{3}$.}

\par\end{center}%
\end{minipage}\hfill{}%
\begin{minipage}[t]{0.45\columnwidth}%
\begin{center}
\tikzstyle{node}=[circle, draw, fill=black!50,                         inner sep=0pt, minimum width=4pt]
\begin{tikzpicture}[thick,scale=.75] 	\node [node] (a) at (-1,.5) {}; 	\node [node] (b) at (0,1.2) {}; 	\node [node] (c) at (1,.5) {}; 	\node [node] (d) at (1,-.5) {}; 	\node [node] (e) at (0,-1.2) {}; 	\node [node] (f) at (-1,-.5) {}; 	\node [node] (g) at (0,.5) {}; 	\node [node] (h) at (0,-.5) {}; 	\node [node] (i) at  ($ (a)+ (.5,1.5) $) {}; 	\node [node] (j) at  ($ (d)+ (-.5,2.5) $) {};
	\draw (a) -- (b) -- (c) -- (d) -- (e) -- (f) -- (a); 	\draw (b) -- (g) -- (h) -- (e); 	\draw (a) -- (i) -- (j); 	\draw (d) to[out=0,in=0, distance=.5cm] (j); 	\coordinate (1) at  ($ (j) + (.75,.75) $); 	\coordinate (2) at  ($ (g)+ (-1.2, 1.2) $); 	\coordinate (3) at  ($ (c)+ (.75,.75) $); 	\coordinate (6) at  ($ (i)+ (-.75,.75) $); 	\coordinate (5) at  ($ (h)+ (1.2,-1.2) $); 	\coordinate (4) at  ($ (f) + (-.75,-.75) $); 	\draw (i) to node [label=6] {} (6); 	\draw (j) to node [label=1] {} (1); 	\draw (g) to node [label={[xshift=.2cm]2}] {} (2); 	\draw (c) to node [label={[xshift=-.1cm]3}] {} (3); 	\draw (h) to node [label={[xshift=.15cm, yshift=-.1cm]5}] {} (5); 	\draw (f) to node [label={[xshift=-.1cm]4}] {} (4);
	\draw [line width=.08cm] (1) to (j) to (i) to (a) to (b) to (g) to (h) to (5); 	\draw [line width=.08cm] (4) to (f) to (e) to (d) to (c) to (3);
	\draw [dashed, line width=.08cm] (4) to[out=130,in=120,distance=3.2cm] node [label={[xshift=-.1cm,yshift=-.0cm]x}] {} (1); 	\draw [dashed, line width=.08cm] (5) to[out=30,in=-30,distance=1.3cm] node [label={[xshift=.2cm,yshift=.5cm]y}] {} (3); \end{tikzpicture}
\par\end{center}

\begin{center}
\caption{The cycle from the previous figure, after expanding the gadget, is
now a cycle of length $x+y+12$.}

\par\end{center}%
\end{minipage}
\end{figure}
\begin{figure}[H]
\begin{minipage}[t]{0.45\columnwidth}%
\begin{center}
\tikzstyle{node}=[circle, draw, fill=black!50,                         inner sep=0pt, minimum width=4pt]
\begin{tikzpicture}[thick,scale=0.8] 	\coordinate (4) at (-2,1); 	\coordinate (3) at (-2,-1); 	\draw (4) to node [label={[xshift=.15cm,yshift=.3cm]4}] {} (3); 	\draw (4) to node [label={[xshift=.15cm,yshift=-.7cm]3}] {} (3); 	\coordinate (5) at (0,1); 	\coordinate (1) at (0,-1); 	\draw (5) to node [label={[xshift=.15cm,yshift=.3cm]5}] {} (1); 	\draw (5) to node [label={[xshift=.15cm,yshift=-.7cm]1}] {} (1); 	\coordinate (6) at (2,1); 	\coordinate (2) at (2,-1); 	\draw (6) to node [label={[xshift=.15cm,yshift=.3cm]6}] {} (2); 	\draw (6) to node [label={[xshift=.15cm,yshift=-.7cm]2}] {} (2);
	\draw [line width=.08cm] (4) to (3); 	\draw [dashed, line width=.08cm] (4) to[out=230,in=-140, distance=3cm] node [label={[xshift=-.2cm,yshift=-.5cm]x}] {} (2);
	\draw [line width=.08cm] (6) to (2); 	\draw [dashed, line width=.08cm] (3) to[out=130,in=130, distance=3cm] node [label={[xshift=-.2cm]y}] {} (6); \end{tikzpicture}
\par\end{center}

\begin{center}
\caption{A cycle of length $x+y+2$ passes through a $H'_{3}$.}

\par\end{center}%
\end{minipage}\hfill{}%
\begin{minipage}[t]{0.45\columnwidth}%
\begin{center}
\tikzstyle{node}=[circle, draw, fill=black!50,                         inner sep=0pt, minimum width=4pt]
\begin{tikzpicture}[thick,scale=.75] 	\node [node] (a) at (-1,.5) {}; 	\node [node] (b) at (0,1.2) {}; 	\node [node] (c) at (1,.5) {}; 	\node [node] (d) at (1,-.5) {}; 	\node [node] (e) at (0,-1.2) {}; 	\node [node] (f) at (-1,-.5) {}; 	\node [node] (g) at (0,.5) {}; 	\node [node] (h) at (0,-.5) {}; 	\node [node] (i) at  ($ (a)+ (.5,1.5) $) {}; 	\node [node] (j) at  ($ (d)+ (-.5,2.5) $) {};
	\draw (a) -- (b) -- (c) -- (d) -- (e) -- (f) -- (a); 	\draw (b) -- (g) -- (h) -- (e); 	\draw (a) -- (i) -- (j); 	\draw (d) to[out=0,in=0, distance=.5cm] (j); 	\coordinate (1) at  ($ (j) + (.75,.75) $); 	\coordinate (2) at  ($ (g)+ (-1.2, 1.2) $); 	\coordinate (3) at  ($ (c)+ (.75,.75) $); 	\coordinate (6) at  ($ (i)+ (-.75,.75) $); 	\coordinate (5) at  ($ (h)+ (1.2,-1.2) $); 	\coordinate (4) at  ($ (f) + (-.75,-.75) $); 	\draw (i) to node [label=6] {} (6); 	\draw (j) to node [label=1] {} (1); 	\draw (g) to node [label={[xshift=.2cm]2}] {} (2); 	\draw (c) to node [label={[xshift=-.05cm]3}] {} (3); 	\draw (h) to node [label={[xshift=.15cm, yshift=-.1cm]5}] {} (5); 	\draw (f) to node [label={[xshift=-.1cm]4}] {} (4);
	\draw [line width=.08cm] (6) to (i) to (j) to[out=0,in=0,distance=.5cm] (d) to (d) to (e) to (h) to (g) to (2); 	\draw [line width=.08cm] (4) to (f) to (a) to (b) to (c) to (3);
	\draw [dashed, line width=.08cm] (4) to[out=160,in=200,distance=1cm] node [label={[xshift=-.2cm,yshift=-.0cm]x}] {} (2); 	\draw [dashed, line width=.08cm] (6) to[out=50,in=30,distance=2cm] node [label={[xshift=-.2cm,yshift=.1cm]y}] {} (3); \end{tikzpicture}
\par\end{center}

\begin{center}
\caption{The cycle from the previous figure, after expanding the gadget, is
now a cycle of length $x+y+12$.}

\par\end{center}%
\end{minipage}
\end{figure}
\begin{figure}[H]
\begin{minipage}[t]{0.45\columnwidth}%
\begin{center}
\tikzstyle{node}=[circle, draw, fill=black!50,                         inner sep=0pt, minimum width=4pt]
\begin{tikzpicture}[thick,scale=0.8] 	\coordinate (4) at (-2,1); 	\coordinate (3) at (-2,-1); 	\draw (4) to node [label={[xshift=.15cm,yshift=.3cm]4}] {} (3); 	\draw (4) to node [label={[xshift=.15cm,yshift=-.7cm]3}] {} (3); 	\coordinate (5) at (0,1); 	\coordinate (1) at (0,-1); 	\draw (5) to node [label={[xshift=.15cm,yshift=.3cm]5}] {} (1); 	\draw (5) to node [label={[xshift=.15cm,yshift=-.7cm]1}] {} (1); 	\coordinate (6) at (2,1); 	\coordinate (2) at (2,-1); 	\draw (6) to node [label={[xshift=.15cm,yshift=.3cm]6}] {} (2); 	\draw (6) to node [label={[xshift=.15cm,yshift=-.7cm]2}] {} (2);
	\draw [line width=.08cm] (4) to (3); 	\draw [dashed, line width=.08cm] (4) to[out=90,in=90, distance=1cm] node [label={[xshift=-.0cm]x}] {} (5);
	\draw [line width=.08cm] (5) to (1); 	\draw [dashed, line width=.08cm] (3) to[out=-90,in=-90, distance=1cm] node [label={[xshift=-.0cm]y}] {} (1); \end{tikzpicture}
\par\end{center}

\begin{center}
\caption{A cycle of length $x+y+2$ passes through a $H'_{3}$.}

\par\end{center}%
\end{minipage}\hfill{}%
\begin{minipage}[t]{0.45\columnwidth}%
\begin{center}
\tikzstyle{node}=[circle, draw, fill=black!50,                         inner sep=0pt, minimum width=4pt]
\begin{tikzpicture}[thick,scale=.75] 	\node [node] (a) at (-1,.5) {}; 	\node [node] (b) at (0,1.2) {}; 	\node [node] (c) at (1,.5) {}; 	\node [node] (d) at (1,-.5) {}; 	\node [node] (e) at (0,-1.2) {}; 	\node [node] (f) at (-1,-.5) {}; 	\node [node] (g) at (0,.5) {}; 	\node [node] (h) at (0,-.5) {}; 	\node [node] (i) at  ($ (a)+ (.5,1.5) $) {}; 	\node [node] (j) at  ($ (d)+ (-.5,2.5) $) {};
	\draw (a) -- (b) -- (c) -- (d) -- (e) -- (f) -- (a); 	\draw (b) -- (g) -- (h) -- (e); 	\draw (a) -- (i) -- (j); 	\draw (d) to[out=0,in=0, distance=.5cm] (j); 	\coordinate (1) at  ($ (j) + (.75,.75) $); 	\coordinate (2) at  ($ (g)+ (-1.2, 1.2) $); 	\coordinate (3) at  ($ (c)+ (.75,.75) $); 	\coordinate (6) at  ($ (i)+ (-.75,.75) $); 	\coordinate (5) at  ($ (h)+ (1.2,-1.2) $); 	\coordinate (4) at  ($ (f) + (-.75,-.75) $); 	\draw (i) to node [label=6] {} (6); 	\draw (j) to node [label=1] {} (1); 	\draw (g) to node [label={[xshift=.2cm]2}] {} (2); 	\draw (c) to node [label={[xshift=-.1cm]3}] {} (3); 	\draw (h) to node [label={[xshift=.15cm, yshift=-.1cm]5}] {} (5); 	\draw (f) to node [label={[xshift=-.1cm]4}] {} (4);
	\draw [line width=.08cm] (1) to (j) to (i) to (a) to (b) to (g) to (h) to (5); 	\draw [line width=.08cm] (4) to (f) to (e) to (d) to (c) to (3);
	\draw [dashed, line width=.08cm] (4) to[out=260,in=-80,distance=1cm] node [label={[xshift=.0cm,yshift=-.5cm]x}] {} (5); 	\draw [dashed, line width=.08cm] (3) to[out=30,in=-30,distance=.5cm] node [label={[xshift=.2cm,yshift=.0cm]y}] {} (1); \end{tikzpicture}
\par\end{center}

\begin{center}
\caption{The cycle from the previous figure, after expanding the gadget, is
now a cycle of length $x+y+12$.}

\par\end{center}%
\end{minipage}
\end{figure}
\begin{figure}[H]
\begin{minipage}[t]{0.45\columnwidth}%
\begin{center}
\tikzstyle{node}=[circle, draw, fill=black!50,                         inner sep=0pt, minimum width=4pt]
\begin{tikzpicture}[thick,scale=0.8] 	\coordinate (4) at (-2,1); 	\coordinate (3) at (-2,-1); 	\draw (4) to node [label={[xshift=.15cm,yshift=.3cm]4}] {} (3); 	\draw (4) to node [label={[xshift=.15cm,yshift=-.7cm]3}] {} (3); 	\coordinate (5) at (0,1); 	\coordinate (1) at (0,-1); 	\draw (5) to node [label={[xshift=.15cm,yshift=.3cm]5}] {} (1); 	\draw (5) to node [label={[xshift=.15cm,yshift=-.7cm]1}] {} (1); 	\coordinate (6) at (2,1); 	\coordinate (2) at (2,-1); 	\draw (6) to node [label={[xshift=.15cm,yshift=.3cm]6}] {} (2); 	\draw (6) to node [label={[xshift=.15cm,yshift=-.7cm]2}] {} (2);
	\draw [line width=.08cm] (4) to (3); 	\draw [dashed, line width=.08cm] (4) to[out=90,in=90, distance=1cm] node [label={[xshift=-.0cm]x}] {} (6);
	\draw [line width=.08cm] (6) to (2); 	\draw [dashed, line width=.08cm] (3) to[out=-90,in=-90, distance=1cm] node [label={[xshift=-.0cm]y}] {} (2); \end{tikzpicture}
\par\end{center}

\begin{center}
\caption{A cycle of length $x+y+2$ passes through a $H'_{3}$.}

\par\end{center}%
\end{minipage}\hfill{}%
\begin{minipage}[t]{0.45\columnwidth}%
\begin{center}
\tikzstyle{node}=[circle, draw, fill=black!50,                         inner sep=0pt, minimum width=4pt]
\begin{tikzpicture}[thick,scale=.8] 	\node [node] (a) at (-1,.5) {}; 	\node [node] (b) at (0,1.2) {}; 	\node [node] (c) at (1,.5) {}; 	\node [node] (d) at (1,-.5) {}; 	\node [node] (e) at (0,-1.2) {}; 	\node [node] (f) at (-1,-.5) {}; 	\node [node] (g) at (0,.5) {}; 	\node [node] (h) at (0,-.5) {}; 	\node [node] (i) at  ($ (a)+ (.5,1.5) $) {}; 	\node [node] (j) at  ($ (d)+ (-.5,2.5) $) {};
	\draw (a) -- (b) -- (c) -- (d) -- (e) -- (f) -- (a); 	\draw (b) -- (g) -- (h) -- (e); 	\draw (a) -- (i) -- (j); 	\draw (d) to[out=0,in=0, distance=.5cm] (j); 	\coordinate (1) at  ($ (j) + (.75,.75) $); 	\coordinate (2) at  ($ (g)+ (-1.2, 1.2) $); 	\coordinate (3) at  ($ (c)+ (.75,.75) $); 	\coordinate (6) at  ($ (i)+ (-.75,.75) $); 	\coordinate (5) at  ($ (h)+ (1.2,-1.2) $); 	\coordinate (4) at  ($ (f) + (-.75,-.75) $); 	\draw (i) to node [label=6] {} (6); 	\draw (j) to node [label=1] {} (1); 	\draw (g) to node [label={[xshift=.2cm]2}] {} (2); 	\draw (c) to node [label={[xshift=-.1cm]3}] {} (3); 	\draw (h) to node [label={[xshift=.15cm, yshift=-.1cm]5}] {} (5); 	\draw (f) to node [label={[xshift=-.1cm]4}] {} (4);
	\draw [line width=.08cm] (6) to (i) to (j) to[out=0,in=0,distance=.5cm] (d) to (d) to (e) to (h) to (g) to (2); 	\draw [line width=.08cm] (4) to (f) to (a) to (b) to (c) to (3);
	\draw [dashed, line width=.08cm] (4) to[out=160,in=200,distance=1.3cm] node [label={[xshift=-.2cm,yshift=-.0cm]x}] {} (6); 	\draw [dashed, line width=.08cm] (3) to[out=60,in=130,distance=3cm] node [label={[xshift=-.2cm]y}] {} (2); \end{tikzpicture}
\par\end{center}

\begin{center}
\caption{The cycle from the previous figure, after expanding the gadget, is
now a cycle of length $x+y+12$.}

\par\end{center}%
\end{minipage}
\end{figure}
\begin{figure}[H]
\begin{minipage}[t]{0.45\columnwidth}%
\begin{center}
\tikzstyle{node}=[circle, draw, fill=black!50,                         inner sep=0pt, minimum width=4pt]
\begin{tikzpicture}[thick,scale=0.8] 	\coordinate (4) at (-2,1); 	\coordinate (3) at (-2,-1); 	\draw (4) to node [label={[xshift=.15cm,yshift=.3cm]4}] {} (3); 	\draw (4) to node [label={[xshift=.15cm,yshift=-.7cm]3}] {} (3); 	\coordinate (5) at (0,1); 	\coordinate (1) at (0,-1); 	\draw (5) to node [label={[xshift=.15cm,yshift=.3cm]5}] {} (1); 	\draw (5) to node [label={[xshift=.15cm,yshift=-.7cm]1}] {} (1); 	\coordinate (6) at (2,1); 	\coordinate (2) at (2,-1); 	\draw (6) to node [label={[xshift=.15cm,yshift=.3cm]6}] {} (2); 	\draw (6) to node [label={[xshift=.15cm,yshift=-.7cm]2}] {} (2);
	\draw [line width=.08cm] (5) to (1); 	\draw [dashed, line width=.08cm] (5) to[out=230,in=-140, distance=2.2cm] node [label={[xshift=-.4cm]x}] {} (2);
	\draw [line width=.08cm] (6) to (2); 	\draw [dashed, line width=.08cm] (1) to[out=130,in=140, distance=2.2cm] node [label={[xshift=-.0cm]y}] {} (6); \end{tikzpicture}
\par\end{center}

\begin{center}
\caption{A cycle of length $x+y+2$ passes through a $H'_{3}$.}

\par\end{center}%
\end{minipage}\hfill{}%
\begin{minipage}[t]{0.45\columnwidth}%
\begin{center}
\tikzstyle{node}=[circle, draw, fill=black!50,                         inner sep=0pt, minimum width=4pt]
\begin{tikzpicture}[thick,scale=.75] 	\node [node] (a) at (-1,.5) {}; 	\node [node] (b) at (0,1.2) {}; 	\node [node] (c) at (1,.5) {}; 	\node [node] (d) at (1,-.5) {}; 	\node [node] (e) at (0,-1.2) {}; 	\node [node] (f) at (-1,-.5) {}; 	\node [node] (g) at (0,.5) {}; 	\node [node] (h) at (0,-.5) {}; 	\node [node] (i) at  ($ (a)+ (.5,1.5) $) {}; 	\node [node] (j) at  ($ (d)+ (-.5,2.5) $) {};
	\draw (a) -- (b) -- (c) -- (d) -- (e) -- (f) -- (a); 	\draw (b) -- (g) -- (h) -- (e); 	\draw (a) -- (i) -- (j); 	\draw (d) to[out=0,in=0, distance=.5cm] (j); 	\coordinate (1) at  ($ (j) + (.75,.75) $); 	\coordinate (2) at  ($ (g)+ (-1.2, 1.2) $); 	\coordinate (3) at  ($ (c)+ (.75,.75) $); 	\coordinate (6) at  ($ (i)+ (-.75,.75) $); 	\coordinate (5) at  ($ (h)+ (1.2,-1.2) $); 	\coordinate (4) at  ($ (f) + (-.75,-.75) $); 	\draw (i) to node [label=6] {} (6); 	\draw (j) to node [label=1] {} (1); 	\draw (g) to node [label={[xshift=.2cm]2}] {} (2); 	\draw (c) to node [label={[xshift=-.05cm]3}] {} (3); 	\draw (h) to node [label={[xshift=.15cm, yshift=-.1cm]5}] {} (5); 	\draw (f) to node [label={[xshift=-.1cm]4}] {} (4);
	\draw [line width=.08cm] (1) to (j) to[out=0,in=0,distance=.5cm] (d) to (d) to (c) to (b) to (g) to (2); 	\draw [line width=.08cm] (6) to (i) to (a) to (f) to (e) to (h) to (5);
	\draw [dashed, line width=.08cm] (5) to[out=210,in=200,distance=3.1cm] node [label={[xshift=-.4cm,yshift=.0cm]x}] {} (2); 	\draw [dashed, line width=.08cm] (6) to[out=90,in=90,distance=.6cm] node [label={[xshift=-.0cm]y}] {} (1); \end{tikzpicture}
\par\end{center}

\begin{center}
\caption{The cycle from the previous figure, after expanding the gadget, is
now a cycle of length $x+y+12$.}

\par\end{center}%
\end{minipage}
\end{figure}
\begin{figure}[H]
\begin{minipage}[t]{0.45\columnwidth}%
\begin{center}
\tikzstyle{node}=[circle, draw, fill=black!50,                         inner sep=0pt, minimum width=4pt]
\begin{tikzpicture}[thick,scale=0.8] 	\coordinate (4) at (-2,1); 	\coordinate (3) at (-2,-1); 	\draw (4) to node [label={[xshift=.15cm,yshift=.3cm]4}] {} (3); 	\draw (4) to node [label={[xshift=.15cm,yshift=-.7cm]3}] {} (3); 	\coordinate (5) at (0,1); 	\coordinate (1) at (0,-1); 	\draw (5) to node [label={[xshift=.15cm,yshift=.3cm]5}] {} (1); 	\draw (5) to node [label={[xshift=.15cm,yshift=-.7cm]1}] {} (1); 	\coordinate (6) at (2,1); 	\coordinate (2) at (2,-1); 	\draw (6) to node [label={[xshift=.15cm,yshift=.3cm]6}] {} (2); 	\draw (6) to node [label={[xshift=.15cm,yshift=-.7cm]2}] {} (2);
	\draw [line width=.08cm] (5) to (1); 	\draw [dashed, line width=.08cm] (5) to[out=90,in=90, distance=1cm] node [label={[xshift=-.0cm]x}] {} (6);
	\draw [line width=.08cm] (6) to (2); 	\draw [dashed, line width=.08cm] (1) to[out=-90,in=-90, distance=1cm] node [label={[xshift=-.0cm]y}] {} (2); \end{tikzpicture}
\par\end{center}

\begin{center}
\caption{A cycle of length $x+y+2$ passes through a $H'_{3}$.}

\par\end{center}%
\end{minipage}\hfill{}%
\begin{minipage}[t]{0.45\columnwidth}%
\begin{center}
\tikzstyle{node}=[circle, draw, fill=black!50,                         inner sep=0pt, minimum width=4pt]
\begin{tikzpicture}[thick,scale=.8] 	\node [node] (a) at (-1,.5) {}; 	\node [node] (b) at (0,1.2) {}; 	\node [node] (c) at (1,.5) {}; 	\node [node] (d) at (1,-.5) {}; 	\node [node] (e) at (0,-1.2) {}; 	\node [node] (f) at (-1,-.5) {}; 	\node [node] (g) at (0,.5) {}; 	\node [node] (h) at (0,-.5) {}; 	\node [node] (i) at  ($ (a)+ (.5,1.5) $) {}; 	\node [node] (j) at  ($ (d)+ (-.5,2.5) $) {};
	\draw (a) -- (b) -- (c) -- (d) -- (e) -- (f) -- (a); 	\draw (b) -- (g) -- (h) -- (e); 	\draw (a) -- (i) -- (j); 	\draw (d) to[out=0,in=0, distance=.5cm] (j); 	\coordinate (1) at  ($ (j) + (.75,.75) $); 	\coordinate (2) at  ($ (g)+ (-1.2, 1.2) $); 	\coordinate (3) at  ($ (c)+ (.75,.75) $); 	\coordinate (6) at  ($ (i)+ (-.75,.75) $); 	\coordinate (5) at  ($ (h)+ (1.2,-1.2) $); 	\coordinate (4) at  ($ (f) + (-.75,-.75) $); 	\draw (i) to node [label=6] {} (6); 	\draw (j) to node [label=1] {} (1); 	\draw (g) to node [label={[xshift=.2cm]2}] {} (2); 	\draw (c) to node [label={[xshift=-.1cm]3}] {} (3); 	\draw (h) to node [label={[xshift=.15cm, yshift=-.1cm]5}] {} (5); 	\draw (f) to node [label={[xshift=-.1cm]4}] {} (4);
	\draw [line width=.08cm] (1) to (j) to[out=0,in=0,distance=.5cm] (d) to (d) to (c) to (b) to (g) to (2); 	\draw [line width=.08cm] (6) to (i) to (a) to (f) to (e) to (h) to (5);
	\draw [dashed, line width=.08cm] (5) to[out=210,in=220, distance=3.3cm] node [label={[xshift=-.3cm,yshift=.0cm]x}] {} (6); 	\draw [dashed, line width=.08cm] (1) to[out=140,in=150,distance=2cm] node [label={[xshift=-.2cm]y}] {} (2); \end{tikzpicture}
\par\end{center}

\begin{center}
\caption{The cycle from the previous figure, after expanding the gadget, is
now two cycles of lengths $x+6$ and $y+6$, respectively. Both $x$
and $y$ cannot have length $0$, otherwise this $H_{3}$ would be
part of a $H_{4}$. This expansion can produce an organic $6$-cycle
if either $x$ or $y$ is a $0$-length path and the cycle is organic.
The impact of these $6$-cycles on the algorithm's result is analyzed
in Sections \ref{expandh3} and \ref{analyzingworstcase}.}

\par\end{center}%
\end{minipage}
\end{figure}

\subsection{Three super-edges are covered by $2$-factor}

\subsubsection{Three cycles pass through gadget}

\begin{figure}[H]
\begin{minipage}[t]{0.45\columnwidth}%
\begin{center}
\tikzstyle{node}=[circle, draw, fill=black!50,                         inner sep=0pt, minimum width=4pt]
\begin{tikzpicture}[thick,scale=0.8] 	\coordinate (4) at (-2,1); 	\coordinate (3) at (-2,-1); 	\draw (4) to node [label={[xshift=.15cm,yshift=.3cm]4}] {} (3); 	\draw (4) to node [label={[xshift=.15cm,yshift=-.7cm]3}] {} (3); 	\coordinate (5) at (0,1); 	\coordinate (1) at (0,-1); 	\draw (5) to node [label={[xshift=.15cm,yshift=.3cm]5}] {} (1); 	\draw (5) to node [label={[xshift=.15cm,yshift=-.7cm]1}] {} (1); 	\coordinate (6) at (2,1); 	\coordinate (2) at (2,-1); 	\draw (6) to node [label={[xshift=.15cm,yshift=.3cm]6}] {} (2); 	\draw (6) to node [label={[xshift=.15cm,yshift=-.7cm]2}] {} (2);
	\draw [line width=.08cm] (4) to (3); 	\draw [dashed, line width=.08cm] (4) to[out=110,in=-110, distance=2cm] node [label={[xshift=-.2cm]x}] {} (3);
	\draw [line width=.08cm] (5) to (1); 	\draw [dashed, line width=.08cm] (5) to[out=110,in=-110, distance=2cm] node [label={[xshift=-.2cm]y}] {} (1);
	\draw [line width=.08cm] (6) to (2); 	\draw [dashed, line width=.08cm] (6) to[out=110,in=-110, distance=2cm] node [label={[xshift=-.2cm]z}] {} (2); \end{tikzpicture}
\par\end{center}

\begin{center}
\caption{Three cycles of lengths $x+1$, $y+1$, and $z+1$ pass through a
$H'_{3}$.}

\par\end{center}%
\end{minipage}\hfill{}%
\begin{minipage}[t]{0.45\columnwidth}%
\begin{center}
\tikzstyle{node}=[circle, draw, fill=black!50,                         inner sep=0pt, minimum width=4pt]
\begin{tikzpicture}[thick,scale=.75] 	\node [node] (a) at (-1,.5) {}; 	\node [node] (b) at (0,1.2) {}; 	\node [node] (c) at (1,.5) {}; 	\node [node] (d) at (1,-.5) {}; 	\node [node] (e) at (0,-1.2) {}; 	\node [node] (f) at (-1,-.5) {}; 	\node [node] (g) at (0,.5) {}; 	\node [node] (h) at (0,-.5) {}; 	\node [node] (i) at  ($ (a)+ (.5,1.5) $) {}; 	\node [node] (j) at  ($ (d)+ (-.5,2.5) $) {};
	\draw (a) -- (b) -- (c) -- (d) -- (e) -- (f) -- (a); 	\draw (b) -- (g) -- (h) -- (e); 	\draw (a) -- (i) -- (j); 	\draw (d) to[out=0,in=0, distance=.5cm] (j); 	\coordinate (1) at  ($ (j) + (.75,.75) $); 	\coordinate (2) at  ($ (g)+ (-1.2, 1.2) $); 	\coordinate (3) at  ($ (c)+ (.75,.75) $); 	\coordinate (6) at  ($ (i)+ (-.75,.75) $); 	\coordinate (5) at  ($ (h)+ (1.2,-1.2) $); 	\coordinate (4) at  ($ (f) + (-.75,-.75) $); 	\draw (i) to node [label=6] {} (6); 	\draw (j) to node [label=1] {} (1); 	\draw (g) to node [label={[xshift=.2cm]2}] {} (2); 	\draw (c) to node [label={[xshift=-.1cm]3}] {} (3); 	\draw (h) to node [label={[xshift=.15cm, yshift=-.1cm]5}] {} (5); 	\draw (f) to node [label={[xshift=-.1cm]4}] {} (4);
	\draw [line width=.08cm] (1) to (j) to (i) to (6); 	\draw [line width=.08cm] (4) to (f) to (a) to (b) to (g) to (2); 	\draw [line width=.08cm] (3) to (c) to (d) to (e) to (h) to (5);
	\draw [dashed, line width=.08cm] (4) to[out=310,in=-40,distance=3.2cm] node [label={[xshift=.4cm,yshift=-.2cm]x}] {} (3); 	\draw [dashed, line width=.08cm] (5) to[out=30,in=-30,distance=1.3cm] node [label={[xshift=.2cm,yshift=.5cm]y}] {} (1); 	\draw [dashed, line width=.08cm] (6) to[out=150,in=210,distance=.5cm] node [label={[xshift=-.2cm]z}] {} (2); \end{tikzpicture}
\par\end{center}

\begin{center}
\caption{The cycles from the previous figure, after expanding the gadget, are
now a single cycle of length $x+y+z+13$.}

\par\end{center}%
\end{minipage}
\end{figure}

\subsubsection{Two cycles pass through gadget}

If a $H'_{3}$ is covered by two disjoint cycles in $F_{i}$, then
two of the super-edges must be part of the same cycle, with the third
super-edge in a different cycle. To ensure we examine every case,
we examine all three ways we can select the super-edge to appear in
a separate cycle and then within each of these arrangements, we examine
both orientations in which the two super-edges in the same cycle can
be connected. In all cases we start with cycles of lengths $x+y+2$
and $z+1$ in $F_{i}$ and are returned a single cycle of length $x+y+z+13$
in $F_{i-1}$. $x+y+2\geq6$ and $z+1\geq6$, so $x+y+z+13>8$, meaning
the resulting cycle in $F_{i-1}$ cannot be a cycle of length $6$.

\begin{figure}[H]
\begin{minipage}[t]{0.45\columnwidth}%
\begin{center}
\tikzstyle{node}=[circle, draw, fill=black!50,                         inner sep=0pt, minimum width=4pt]
\begin{tikzpicture}[thick,scale=0.8] 	\coordinate (4) at (-2,1); 	\coordinate (3) at (-2,-1); 	\draw (4) to node [label={[xshift=.15cm,yshift=.3cm]4}] {} (3); 	\draw (4) to node [label={[xshift=.15cm,yshift=-.7cm]3}] {} (3); 	\coordinate (5) at (0,1); 	\coordinate (1) at (0,-1); 	\draw (5) to node [label={[xshift=.15cm,yshift=.3cm]5}] {} (1); 	\draw (5) to node [label={[xshift=.15cm,yshift=-.7cm]1}] {} (1); 	\coordinate (6) at (2,1); 	\coordinate (2) at (2,-1); 	\draw (6) to node [label={[xshift=.15cm,yshift=.3cm]6}] {} (2); 	\draw (6) to node [label={[xshift=.15cm,yshift=-.7cm]2}] {} (2);
	\draw [line width=.08cm] (4) to (3); 	\draw [dashed, line width=.08cm] (4) to[out=230,in=-140, distance=2.2cm] node [label={[xshift=-.4cm]x}] {} (1);
	\draw [line width=.08cm] (5) to (1); 	\draw [dashed, line width=.08cm] (3) to[out=130,in=140, distance=2.2cm] node [label={[xshift=-.0cm]y}] {} (5);
	\draw [line width=.08cm] (6) to (2); 	\draw [dashed, line width=.08cm] (6) to[out=110,in=-110, distance=2cm] node [label={[xshift=-.2cm]z}] {} (2); \end{tikzpicture}
\par\end{center}

\begin{center}
\caption{Two cycles of lengths $x+y+2$ and $z+1$ pass through a $H'_{3}$.}

\par\end{center}%
\end{minipage}\hfill{}%
\begin{minipage}[t]{0.45\columnwidth}%
\begin{center}
\tikzstyle{node}=[circle, draw, fill=black!50,                         inner sep=0pt, minimum width=4pt]
\begin{tikzpicture}[thick,scale=.75] 	\node [node] (a) at (-1,.5) {}; 	\node [node] (b) at (0,1.2) {}; 	\node [node] (c) at (1,.5) {}; 	\node [node] (d) at (1,-.5) {}; 	\node [node] (e) at (0,-1.2) {}; 	\node [node] (f) at (-1,-.5) {}; 	\node [node] (g) at (0,.5) {}; 	\node [node] (h) at (0,-.5) {}; 	\node [node] (i) at  ($ (a)+ (.5,1.5) $) {}; 	\node [node] (j) at  ($ (d)+ (-.5,2.5) $) {};
	\draw (a) -- (b) -- (c) -- (d) -- (e) -- (f) -- (a); 	\draw (b) -- (g) -- (h) -- (e); 	\draw (a) -- (i) -- (j); 	\draw (d) to[out=0,in=0, distance=.5cm] (j); 	\coordinate (1) at  ($ (j) + (.75,.75) $); 	\coordinate (2) at  ($ (g)+ (-1.2, 1.2) $); 	\coordinate (3) at  ($ (c)+ (.75,.75) $); 	\coordinate (6) at  ($ (i)+ (-.75,.75) $); 	\coordinate (5) at  ($ (h)+ (1.2,-1.2) $); 	\coordinate (4) at  ($ (f) + (-.75,-.75) $); 	\draw (i) to node [label=6] {} (6); 	\draw (j) to node [label=1] {} (1); 	\draw (g) to node [label={[xshift=.2cm]2}] {} (2); 	\draw (c) to node [label={[xshift=-.05cm]3}] {} (3); 	\draw (h) to node [label={[xshift=.15cm, yshift=-.1cm]5}] {} (5); 	\draw (f) to node [label={[xshift=-.1cm]4}] {} (4);
	\draw [line width=.08cm] (1) to (j) to[out=0,in=0,distance=.5cm] (d) to (d) to (e) to (h) to (5); 	\draw [line width=.08cm] (4) to (f) to (a) to (i) to (6); 	\draw [line width=.08cm] (3) to (c) to (b) to (g) to (2);
	\draw [dashed, line width=.08cm] (4) to[out=130,in=120,distance=3.7cm] node [label={[xshift=-.1cm,yshift=-.0cm]x}] {} (1); 	\draw [dashed, line width=.08cm] (5) to[out=30,in=-30,distance=1.3cm] node [label={[xshift=.2cm,yshift=.5cm]y}] {} (3);
	\draw [dashed, line width=.08cm] (6) to[out=150,in=210,distance=.5cm] node [label={[xshift=-.2cm]z}] {} (2); \end{tikzpicture}
\par\end{center}

\begin{center}
\caption{The cycles from the previous figure, after expanding the gadget, are
now a single cycle of length $x+y+z+13$.}

\par\end{center}%
\end{minipage}
\end{figure}
\begin{figure}[H]
\begin{minipage}[t]{0.45\columnwidth}%
\begin{center}
\tikzstyle{node}=[circle, draw, fill=black!50,                         inner sep=0pt, minimum width=4pt]
\begin{tikzpicture}[thick,scale=0.8] 	\coordinate (4) at (-2,1); 	\coordinate (3) at (-2,-1); 	\draw (4) to node [label={[xshift=.15cm,yshift=.3cm]4}] {} (3); 	\draw (4) to node [label={[xshift=.15cm,yshift=-.7cm]3}] {} (3); 	\coordinate (5) at (0,1); 	\coordinate (1) at (0,-1); 	\draw (5) to node [label={[xshift=.15cm,yshift=.3cm]5}] {} (1); 	\draw (5) to node [label={[xshift=.15cm,yshift=-.7cm]1}] {} (1); 	\coordinate (6) at (2,1); 	\coordinate (2) at (2,-1); 	\draw (6) to node [label={[xshift=.15cm,yshift=.3cm]6}] {} (2); 	\draw (6) to node [label={[xshift=.15cm,yshift=-.7cm]2}] {} (2);
	\draw [line width=.08cm] (4) to (3); 	\draw [dashed, line width=.08cm] (4) to[out=230,in=-140, distance=3cm] node [label={[xshift=-.2cm,yshift=-.5cm]x}] {} (2);
	\draw [line width=.08cm] (6) to (2); 	\draw [dashed, line width=.08cm] (3) to[out=130,in=130, distance=3cm] node [label={[xshift=-.2cm]y}] {} (6);
	\draw [line width=.08cm] (5) to (1); 	\draw [dashed, line width=.08cm] (5) to[out=110,in=-110, distance=2cm] node [label={[xshift=-.2cm]z}] {} (1); \end{tikzpicture}
\par\end{center}

\begin{center}
\caption{Two cycles of lengths $x+y+2$ and $z+1$ pass through a $H'_{3}$.}

\par\end{center}%
\end{minipage}\hfill{}%
\begin{minipage}[t]{0.45\columnwidth}%
\begin{center}
\tikzstyle{node}=[circle, draw, fill=black!50,                         inner sep=0pt, minimum width=4pt]
\begin{tikzpicture}[thick,scale=.75] 	\node [node] (a) at (-1,.5) {}; 	\node [node] (b) at (0,1.2) {}; 	\node [node] (c) at (1,.5) {}; 	\node [node] (d) at (1,-.5) {}; 	\node [node] (e) at (0,-1.2) {}; 	\node [node] (f) at (-1,-.5) {}; 	\node [node] (g) at (0,.5) {}; 	\node [node] (h) at (0,-.5) {}; 	\node [node] (i) at  ($ (a)+ (.5,1.5) $) {}; 	\node [node] (j) at  ($ (d)+ (-.5,2.5) $) {};
	\draw (a) -- (b) -- (c) -- (d) -- (e) -- (f) -- (a); 	\draw (b) -- (g) -- (h) -- (e); 	\draw (a) -- (i) -- (j); 	\draw (d) to[out=0,in=0, distance=.5cm] (j); 	\coordinate (1) at  ($ (j) + (.75,.75) $); 	\coordinate (2) at  ($ (g)+ (-1.2, 1.2) $); 	\coordinate (3) at  ($ (c)+ (.75,.75) $); 	\coordinate (6) at  ($ (i)+ (-.75,.75) $); 	\coordinate (5) at  ($ (h)+ (1.2,-1.2) $); 	\coordinate (4) at  ($ (f) + (-.75,-.75) $); 	\draw (i) to node [label=6] {} (6); 	\draw (j) to node [label=1] {} (1); 	\draw (g) to node [label={[xshift=.2cm]2}] {} (2); 	\draw (c) to node [label={[xshift=-.05cm]3}] {} (3); 	\draw (h) to node [label={[xshift=.15cm, yshift=-.1cm]5}] {} (5); 	\draw (f) to node [label={[xshift=-.1cm]4}] {} (4);
	\draw [line width=.08cm] (1) to (j) to[out=0,in=0,distance=.5cm] (d) to (d) to (c) to (3); 	\draw [line width=.08cm] (4) to (f) to (e) to (h) to (5); 	\draw [line width=.08cm] (6) to (i) to (a) to (b) to (g) to (2);
	\draw [dashed, line width=.08cm] (4) to[out=160,in=200,distance=1cm] node [label={[xshift=-.2cm,yshift=-.0cm]x}] {} (2); 	\draw [dashed, line width=.08cm] (6) to[out=50,in=30,distance=2cm] node [label={[xshift=-.2cm,yshift=.1cm]y}] {} (3); 	\draw [dashed, line width=.08cm] (5) to[out=30,in=-30,distance=1.3cm] node [label={[xshift=.2cm]z}] {} (1); \end{tikzpicture}
\par\end{center}

\begin{center}
\caption{The cycles from the previous figure, after expanding the gadget, are
now a single cycle of length $x+y+z+13$.}

\par\end{center}%
\end{minipage}
\end{figure}
\begin{figure}[H]
\begin{minipage}[t]{0.45\columnwidth}%
\begin{center}
\tikzstyle{node}=[circle, draw, fill=black!50,                         inner sep=0pt, minimum width=4pt]
\begin{tikzpicture}[thick,scale=0.8] 	\coordinate (4) at (-2,1); 	\coordinate (3) at (-2,-1); 	\draw (4) to node [label={[xshift=.15cm,yshift=.3cm]4}] {} (3); 	\draw (4) to node [label={[xshift=.15cm,yshift=-.7cm]3}] {} (3); 	\coordinate (5) at (0,1); 	\coordinate (1) at (0,-1); 	\draw (5) to node [label={[xshift=.15cm,yshift=.3cm]5}] {} (1); 	\draw (5) to node [label={[xshift=.15cm,yshift=-.7cm]1}] {} (1); 	\coordinate (6) at (2,1); 	\coordinate (2) at (2,-1); 	\draw (6) to node [label={[xshift=.15cm,yshift=.3cm]6}] {} (2); 	\draw (6) to node [label={[xshift=.15cm,yshift=-.7cm]2}] {} (2);
	\draw [line width=.08cm] (4) to (3); 	\draw [dashed, line width=.08cm] (4) to[out=90,in=90, distance=1cm] node [label={[xshift=-.0cm]x}] {} (5);
	\draw [line width=.08cm] (5) to (1); 	\draw [dashed, line width=.08cm] (3) to[out=-90,in=-90, distance=1cm] node [label={[xshift=-.0cm]y}] {} (1);
	\draw [line width=.08cm] (6) to (2); 	\draw [dashed, line width=.08cm] (6) to[out=110,in=-110, distance=2cm] node [label={[xshift=-.2cm]z}] {} (2); \end{tikzpicture}
\par\end{center}

\begin{center}
\caption{Two cycles of lengths $x+y+2$ and $z+1$ pass through a $H'_{3}$.}

\par\end{center}%
\end{minipage}\hfill{}%
\begin{minipage}[t]{0.45\columnwidth}%
\begin{center}
\tikzstyle{node}=[circle, draw, fill=black!50,                         inner sep=0pt, minimum width=4pt]
\begin{tikzpicture}[thick,scale=.75] 	\node [node] (a) at (-1,.5) {}; 	\node [node] (b) at (0,1.2) {}; 	\node [node] (c) at (1,.5) {}; 	\node [node] (d) at (1,-.5) {}; 	\node [node] (e) at (0,-1.2) {}; 	\node [node] (f) at (-1,-.5) {}; 	\node [node] (g) at (0,.5) {}; 	\node [node] (h) at (0,-.5) {}; 	\node [node] (i) at  ($ (a)+ (.5,1.5) $) {}; 	\node [node] (j) at  ($ (d)+ (-.5,2.5) $) {};
	\draw (a) -- (b) -- (c) -- (d) -- (e) -- (f) -- (a); 	\draw (b) -- (g) -- (h) -- (e); 	\draw (a) -- (i) -- (j); 	\draw (d) to[out=0,in=0, distance=.5cm] (j); 	\coordinate (1) at  ($ (j) + (.75,.75) $); 	\coordinate (2) at  ($ (g)+ (-1.2, 1.2) $); 	\coordinate (3) at  ($ (c)+ (.75,.75) $); 	\coordinate (6) at  ($ (i)+ (-.75,.75) $); 	\coordinate (5) at  ($ (h)+ (1.2,-1.2) $); 	\coordinate (4) at  ($ (f) + (-.75,-.75) $); 	\draw (i) to node [label=6] {} (6); 	\draw (j) to node [label=1] {} (1); 	\draw (g) to node [label={[xshift=.2cm]2}] {} (2); 	\draw (c) to node [label={[xshift=-.05cm]3}] {} (3); 	\draw (h) to node [label={[xshift=.15cm, yshift=-.1cm]5}] {} (5); 	\draw (f) to node [label={[xshift=-.1cm]4}] {} (4);
	\draw [line width=.08cm] (1) to (j) to[out=0,in=0,distance=.5cm] (d) to (d) to (e) to (h) to (5); 	\draw [line width=.08cm] (4) to (f) to (a) to (i) to (6); 	\draw [line width=.08cm] (2) to (g) to (b) to (c) to (3);
	\draw [dashed, line width=.08cm] (4) to[out=260,in=-80,distance=1cm] node [label={[xshift=.0cm,yshift=-.5cm]x}] {} (5); 	\draw [dashed, line width=.08cm] (3) to[out=30,in=-30,distance=.5cm] node [label={[xshift=.2cm,yshift=.0cm]y}] {} (1); 	\draw [dashed, line width=.08cm] (6) to[out=150,in=210,distance=.5cm] node [label={[xshift=-.2cm]z}] {} (2); \end{tikzpicture}
\par\end{center}

\begin{center}
\caption{The cycles from the previous figure, after expanding the gadget, are
now a single cycle of length $x+y+z+13$.}

\par\end{center}%
\end{minipage}
\end{figure}
\begin{figure}[H]
\begin{minipage}[t]{0.45\columnwidth}%
\begin{center}
\tikzstyle{node}=[circle, draw, fill=black!50,                         inner sep=0pt, minimum width=4pt]
\begin{tikzpicture}[thick,scale=0.8] 	\coordinate (4) at (-2,1); 	\coordinate (3) at (-2,-1); 	\draw (4) to node [label={[xshift=.15cm,yshift=.3cm]4}] {} (3); 	\draw (4) to node [label={[xshift=.15cm,yshift=-.7cm]3}] {} (3); 	\coordinate (5) at (0,1); 	\coordinate (1) at (0,-1); 	\draw (5) to node [label={[xshift=.15cm,yshift=.3cm]5}] {} (1); 	\draw (5) to node [label={[xshift=.15cm,yshift=-.7cm]1}] {} (1); 	\coordinate (6) at (2,1); 	\coordinate (2) at (2,-1); 	\draw (6) to node [label={[xshift=.15cm,yshift=.3cm]6}] {} (2); 	\draw (6) to node [label={[xshift=.15cm,yshift=-.7cm]2}] {} (2);
	\draw [line width=.08cm] (4) to (3); 	\draw [dashed, line width=.08cm] (4) to[out=90,in=90, distance=1cm] node [label={[xshift=-.0cm]x}] {} (6);
	\draw [line width=.08cm] (6) to (2); 	\draw [dashed, line width=.08cm] (3) to[out=-90,in=-90, distance=1cm] node [label={[xshift=-.0cm,yshift=-.5cm]y}] {} (2);
	\draw [line width=.08cm] (5) to (1); 	\draw [dashed, line width=.08cm] (5) to[out=110,in=-110, distance=2cm] node [label={[xshift=-.2cm]z}] {} (1); \end{tikzpicture}
\par\end{center}

\begin{center}
\caption{Two cycles of lengths $x+y+2$ and $z+1$ pass through a $H'_{3}$.}

\par\end{center}%
\end{minipage}\hfill{}%
\begin{minipage}[t]{0.45\columnwidth}%
\begin{center}
\tikzstyle{node}=[circle, draw, fill=black!50,                         inner sep=0pt, minimum width=4pt]
\begin{tikzpicture}[thick,scale=.75] 	\node [node] (a) at (-1,.5) {}; 	\node [node] (b) at (0,1.2) {}; 	\node [node] (c) at (1,.5) {}; 	\node [node] (d) at (1,-.5) {}; 	\node [node] (e) at (0,-1.2) {}; 	\node [node] (f) at (-1,-.5) {}; 	\node [node] (g) at (0,.5) {}; 	\node [node] (h) at (0,-.5) {}; 	\node [node] (i) at  ($ (a)+ (.5,1.5) $) {}; 	\node [node] (j) at  ($ (d)+ (-.5,2.5) $) {};
	\draw (a) -- (b) -- (c) -- (d) -- (e) -- (f) -- (a); 	\draw (b) -- (g) -- (h) -- (e); 	\draw (a) -- (i) -- (j); 	\draw (d) to[out=0,in=0, distance=.5cm] (j); 	\coordinate (1) at  ($ (j) + (.75,.75) $); 	\coordinate (2) at  ($ (g)+ (-1.2, 1.2) $); 	\coordinate (3) at  ($ (c)+ (.75,.75) $); 	\coordinate (6) at  ($ (i)+ (-.75,.75) $); 	\coordinate (5) at  ($ (h)+ (1.2,-1.2) $); 	\coordinate (4) at  ($ (f) + (-.75,-.75) $); 	\draw (i) to node [label=6] {} (6); 	\draw (j) to node [label=1] {} (1); 	\draw (g) to node [label={[xshift=.2cm]2}] {} (2); 	\draw (c) to node [label={[xshift=-.05cm]3}] {} (3); 	\draw (h) to node [label={[xshift=.15cm, yshift=-.1cm]5}] {} (5); 	\draw (f) to node [label={[xshift=-.1cm]4}] {} (4);
	\draw [line width=.08cm] (6) to (i) to (j) to (1); 	\draw [line width=.08cm] (4) to (f) to (a) to (b) to (g) to (2); 	\draw [line width=.08cm] (5) to (h) to (e) to (d) to (c) to (3);
	\draw [dashed, line width=.08cm] (4) to[out=160,in=200,distance=1.3cm] node [label={[xshift=-.2cm,yshift=-.0cm]x}] {} (6); 	\draw [dashed, line width=.08cm] (3) to[out=60,in=130,distance=3cm] node [label={[xshift=-.2cm]y}] {} (2); 	\draw [dashed, line width=.08cm] (5) to[out=30,in=-30,distance=1.3cm] node [label={[xshift=.2cm]z}] {} (1); \end{tikzpicture}
\par\end{center}

\begin{center}
\caption{The cycles from the previous figure, after expanding the gadget, are
now a single cycle of length $x+y+z+13$.}

\par\end{center}%
\end{minipage}
\end{figure}
\begin{figure}[H]
\begin{minipage}[t]{0.45\columnwidth}%
\begin{center}
\tikzstyle{node}=[circle, draw, fill=black!50,                         inner sep=0pt, minimum width=4pt]
\begin{tikzpicture}[thick,scale=0.8] 	\coordinate (4) at (-2,1); 	\coordinate (3) at (-2,-1); 	\draw (4) to node [label={[xshift=.15cm,yshift=.3cm]4}] {} (3); 	\draw (4) to node [label={[xshift=.15cm,yshift=-.7cm]3}] {} (3); 	\coordinate (5) at (0,1); 	\coordinate (1) at (0,-1); 	\draw (5) to node [label={[xshift=.15cm,yshift=.3cm]5}] {} (1); 	\draw (5) to node [label={[xshift=.15cm,yshift=-.7cm]1}] {} (1); 	\coordinate (6) at (2,1); 	\coordinate (2) at (2,-1); 	\draw (6) to node [label={[xshift=.15cm,yshift=.3cm]6}] {} (2); 	\draw (6) to node [label={[xshift=.15cm,yshift=-.7cm]2}] {} (2);
	\draw [line width=.08cm] (5) to (1); 	\draw [dashed, line width=.08cm] (5) to[out=230,in=-140, distance=2.2cm] node [label={[xshift=-.4cm]x}] {} (2);
	\draw [line width=.08cm] (6) to (2); 	\draw [dashed, line width=.08cm] (1) to[out=130,in=140, distance=2.2cm] node [label={[xshift=-.0cm]y}] {} (6);
	\draw [line width=.08cm] (4) to (3); 	\draw [dashed, line width=.08cm] (4) to[out=110,in=-110, distance=2cm] node [label={[xshift=-.2cm]z}] {} (3); \end{tikzpicture}
\par\end{center}

\begin{center}
\caption{Two cycles of lengths $x+y+2$ and $z+1$ pass through a $H'_{3}$.}

\par\end{center}%
\end{minipage}\hfill{}%
\begin{minipage}[t]{0.45\columnwidth}%
\begin{center}
\tikzstyle{node}=[circle, draw, fill=black!50,                         inner sep=0pt, minimum width=4pt]
\begin{tikzpicture}[thick,scale=.75] 	\node [node] (a) at (-1,.5) {}; 	\node [node] (b) at (0,1.2) {}; 	\node [node] (c) at (1,.5) {}; 	\node [node] (d) at (1,-.5) {}; 	\node [node] (e) at (0,-1.2) {}; 	\node [node] (f) at (-1,-.5) {}; 	\node [node] (g) at (0,.5) {}; 	\node [node] (h) at (0,-.5) {}; 	\node [node] (i) at  ($ (a)+ (.5,1.5) $) {}; 	\node [node] (j) at  ($ (d)+ (-.5,2.5) $) {};
	\draw (a) -- (b) -- (c) -- (d) -- (e) -- (f) -- (a); 	\draw (b) -- (g) -- (h) -- (e); 	\draw (a) -- (i) -- (j); 	\draw (d) to[out=0,in=0, distance=.5cm] (j); 	\coordinate (1) at  ($ (j) + (.75,.75) $); 	\coordinate (2) at  ($ (g)+ (-1.2, 1.2) $); 	\coordinate (3) at  ($ (c)+ (.75,.75) $); 	\coordinate (6) at  ($ (i)+ (-.75,.75) $); 	\coordinate (5) at  ($ (h)+ (1.2,-1.2) $); 	\coordinate (4) at  ($ (f) + (-.75,-.75) $); 	\draw (i) to node [label=6] {} (6); 	\draw (j) to node [label=1] {} (1); 	\draw (g) to node [label={[xshift=.2cm]2}] {} (2); 	\draw (c) to node [label={[xshift=-.05cm]3}] {} (3); 	\draw (h) to node [label={[xshift=.15cm, yshift=-.1cm]5}] {} (5); 	\draw (f) to node [label={[xshift=-.1cm]4}] {} (4);
	\draw [line width=.08cm] (1) to (j) to[out=0,in=0,distance=.5cm] (d) to (d) to (e) to (h) to (5); 	\draw [line width=.08cm] (6) to (i) to (a) to (f) to (4); 	\draw [line width=.08cm] (2) to (g) to (b) to (c) to (3);
	\draw [dashed, line width=.08cm] (5) to[out=225,in=200,distance=3.7cm] node [label={[xshift=-.4cm,yshift=.0cm]x}] {} (2); 	\draw [dashed, line width=.08cm] (6) to[out=90,in=90,distance=.6cm] node [label={[xshift=-.0cm]y}] {} (1); 	\draw [dashed, line width=.08cm] (4) to[out=280,in=-30,distance=3.7cm] node [label={[xshift=.4cm,yshift=-.2cm]z}] {} (3); \end{tikzpicture}
\par\end{center}

\begin{center}
\caption{The cycles from the previous figure, after expanding the gadget, are
now a single cycle of length $x+y+z+13$.}

\par\end{center}%
\end{minipage}
\end{figure}
\begin{figure}[H]
\begin{minipage}[t]{0.45\columnwidth}%
\begin{center}
\tikzstyle{node}=[circle, draw, fill=black!50,                         inner sep=0pt, minimum width=4pt]
\begin{tikzpicture}[thick,scale=0.8] 	\coordinate (4) at (-2,1); 	\coordinate (3) at (-2,-1); 	\draw (4) to node [label={[xshift=.15cm,yshift=.3cm]4}] {} (3); 	\draw (4) to node [label={[xshift=.15cm,yshift=-.7cm]3}] {} (3); 	\coordinate (5) at (0,1); 	\coordinate (1) at (0,-1); 	\draw (5) to node [label={[xshift=.15cm,yshift=.3cm]5}] {} (1); 	\draw (5) to node [label={[xshift=.15cm,yshift=-.7cm]1}] {} (1); 	\coordinate (6) at (2,1); 	\coordinate (2) at (2,-1); 	\draw (6) to node [label={[xshift=.15cm,yshift=.3cm]6}] {} (2); 	\draw (6) to node [label={[xshift=.15cm,yshift=-.7cm]2}] {} (2);
	\draw [line width=.08cm] (5) to (1); 	\draw [dashed, line width=.08cm] (5) to[out=90,in=90, distance=1cm] node [label={[xshift=-.0cm]x}] {} (6);
	\draw [line width=.08cm] (6) to (2); 	\draw [dashed, line width=.08cm] (1) to[out=-90,in=-90, distance=1cm] node [label={[xshift=-.0cm]y}] {} (2);
	\draw [line width=.08cm] (4) to (3); 	\draw [dashed, line width=.08cm] (4) to[out=110,in=-110, distance=2cm] node [label={[xshift=-.2cm]z}] {} (3); \end{tikzpicture}
\par\end{center}

\begin{center}
\caption{Two cycles of lengths $x+y+2$ and $z+1$ pass through a $H'_{3}$.}

\par\end{center}%
\end{minipage}\hfill{}%
\begin{minipage}[t]{0.45\columnwidth}%
\begin{center}
\tikzstyle{node}=[circle, draw, fill=black!50,                         inner sep=0pt, minimum width=4pt]
\begin{tikzpicture}[thick,scale=.75] 	\node [node] (a) at (-1,.5) {}; 	\node [node] (b) at (0,1.2) {}; 	\node [node] (c) at (1,.5) {}; 	\node [node] (d) at (1,-.5) {}; 	\node [node] (e) at (0,-1.2) {}; 	\node [node] (f) at (-1,-.5) {}; 	\node [node] (g) at (0,.5) {}; 	\node [node] (h) at (0,-.5) {}; 	\node [node] (i) at  ($ (a)+ (.5,1.5) $) {}; 	\node [node] (j) at  ($ (d)+ (-.5,2.5) $) {};
	\draw (a) -- (b) -- (c) -- (d) -- (e) -- (f) -- (a); 	\draw (b) -- (g) -- (h) -- (e); 	\draw (a) -- (i) -- (j); 	\draw (d) to[out=0,in=0, distance=.5cm] (j); 	\coordinate (1) at  ($ (j) + (.75,.75) $); 	\coordinate (2) at  ($ (g)+ (-1.2, 1.2) $); 	\coordinate (3) at  ($ (c)+ (.75,.75) $); 	\coordinate (6) at  ($ (i)+ (-.75,.75) $); 	\coordinate (5) at  ($ (h)+ (1.2,-1.2) $); 	\coordinate (4) at  ($ (f) + (-.75,-.75) $); 	\draw (i) to node [label=6] {} (6); 	\draw (j) to node [label=1] {} (1); 	\draw (g) to node [label={[xshift=.2cm]2}] {} (2); 	\draw (c) to node [label={[xshift=-.05cm]3}] {} (3); 	\draw (h) to node [label={[xshift=.15cm, yshift=-.1cm]5}] {} (5); 	\draw (f) to node [label={[xshift=-.1cm]4}] {} (4);
	\draw [line width=.08cm] (6) to (i) to (j) to (1); 	\draw [line width=.08cm] (4) to (f) to (a) to (b) to (g) to (2); 	\draw [line width=.08cm] (5) to (h) to (e) to (d) to (c) to (3);
	\draw [dashed, line width=.08cm] (5) to[out=220,in=220, distance=3.7cm] node [label={[xshift=-.3cm,yshift=.0cm]x}] {} (6); 	\draw [dashed, line width=.08cm] (1) to[out=140,in=150,distance=2cm] node [label={[xshift=-.2cm]y}] {} (2); 	\draw [dashed, line width=.08cm] (4) to[out=300,in=-30,distance=3.5cm] node [label={[xshift=.4cm,yshift=-.2cm]z}] {} (3); \end{tikzpicture}
\par\end{center}

\begin{center}
\caption{The cycles from the previous figure, after expanding the gadget, are
now a single cycle of length $x+y+z+13$.}

\par\end{center}%
\end{minipage}
\end{figure}

\subsubsection{One cycle passes through gadget}

If a $H'_{3}$ is covered by a single cycle in $F_{i}$, then without
loss of generality, we can assume the cycle passes through the (4,3)
super-edge first. To ensure we examine every case, we examine all
four sides of the two remaining super-edges the $2$-factor could
enter after exiting edge 4. Then, after exiting the other side of
this second super-edge, we consider the two orientations in which
the $2$-factor could enter the third super-edge. In total, this gives
us eight cases to consider. In all cases we start with a cycle of
lengths $x+y+z+3$ in $F_{i}$ and are returned a single cycle of
length $x+y+z+13$ in $F_{i-1}$. $x+y+z+3\geq6$, so $x+y+z+13>8$,
meaning the resulting cycle in $F_{i-1}$ cannot be a cycle of length
$6$.

\begin{figure}[H]
\begin{minipage}[t]{0.45\columnwidth}%
\begin{center}
\tikzstyle{node}=[circle, draw, fill=black!50,                         inner sep=0pt, minimum width=4pt]
\begin{tikzpicture}[thick,scale=0.8] 	\coordinate (4) at (-2,1); 	\coordinate (3) at (-2,-1); 	\draw (4) to node [label={[xshift=.15cm,yshift=.3cm]4}] {} (3); 	\draw (4) to node [label={[xshift=.15cm,yshift=-.7cm]3}] {} (3); 	\coordinate (5) at (0,1); 	\coordinate (1) at (0,-1); 	\draw (5) to node [label={[xshift=.15cm,yshift=.3cm]5}] {} (1); 	\draw (5) to node [label={[xshift=.15cm,yshift=-.7cm]1}] {} (1); 	\coordinate (6) at (2,1); 	\coordinate (2) at (2,-1); 	\draw (6) to node [label={[xshift=.15cm,yshift=.3cm]6}] {} (2); 	\draw (6) to node [label={[xshift=.15cm,yshift=-.7cm]2}] {} (2);
	\draw [line width=.08cm] (4) to (3); 	\draw [line width=.08cm] (5) to (1); 	\draw [line width=.08cm] (6) to (2); 
	\draw [dashed,  line width=.08cm] (3) to[out=-90,in=-90] node [label={[xshift=-.0cm]x}] {} (1); 	\draw [dashed,  line width=.08cm] (5) to[out=0,in=180] node [label={[xshift=.1cm,yshift=-.0cm]y}] {} (2); 	\draw [dashed,  line width=.08cm] (6) to[out=90,in=90, distance=1cm] node [label=z] {} (4); 
\end{tikzpicture}
\par\end{center}

\begin{center}
\caption{A cycle of length $x+y+z+3$ passes through a $H'_{3}$.}

\par\end{center}%
\end{minipage}\hfill{}%
\begin{minipage}[t]{0.45\columnwidth}%
\begin{center}
\tikzstyle{node}=[circle, draw, fill=black!50,                         inner sep=0pt, minimum width=4pt]
\begin{tikzpicture}[thick,scale=.75]
\clip (-3.5,-3) rectangle (3,3);
\node [node] (a) at (-1,.5) {}; 	\node [node] (b) at (0,1.2) {}; 	\node [node] (c) at (1,.5) {}; 	\node [node] (d) at (1,-.5) {}; 	\node [node] (e) at (0,-1.2) {}; 	\node [node] (f) at (-1,-.5) {}; 	\node [node] (g) at (0,.5) {}; 	\node [node] (h) at (0,-.5) {}; 	\node [node] (i) at  ($ (a)+ (.5,1.5) $) {}; 	\node [node] (j) at  ($ (d)+ (-.5,2.5) $) {};
	\draw (a) -- (b) -- (c) -- (d) -- (e) -- (f) -- (a); 	\draw (b) -- (g) -- (h) -- (e); 	\draw (a) -- (i) -- (j); 	\draw (d) to[out=0,in=0, distance=.5cm] (j); 	\coordinate (1) at  ($ (j) + (.75,.75) $); 	\coordinate (2) at  ($ (g)+ (-1.2, 1.2) $); 	\coordinate (3) at  ($ (c)+ (.75,.75) $); 	\coordinate (6) at  ($ (i)+ (-.75,.75) $); 	\coordinate (5) at  ($ (h)+ (1.2,-1.2) $); 	\coordinate (4) at  ($ (f) + (-.75,-.75) $); 	\draw (i) to node [label=6] {} (6); 	\draw (j) to node [label=1] {} (1); 	\draw (g) to node [label={[xshift=.2cm]2}] {} (2); 	\draw (c) to node [label={[xshift=-.05cm]3}] {} (3); 	\draw (h) to node [label={[xshift=.15cm, yshift=-.1cm]5}] {} (5); 	\draw (f) to node [label={[xshift=-.1cm]4}] {} (4);
	\draw [dashed, line width=.08cm] (1) to[out=30,in=-30,distance=1.3cm] node [label={[xshift=.2cm,yshift=.0cm]x}] {} (3); 	\draw [dashed, line width=.08cm] (2) to[out=180,in=220,distance=4cm] node [label={[xshift=-.4cm]y}] {} (5); 	\draw [dashed, line width=.08cm] (4) to[out=160,in=200,distance=1cm] node [label={[xshift=.2cm,yshift=-.0cm]z}] {} (6);
	\draw [line width=.08cm] (1) to (j) to (i) to (6); 	\draw [line width=.08cm] (3) to (c) to (d) to (e) to (h) to (5); 	\draw [line width=.08cm] (2) to (g) to (b) to (a) to (f) to (4);

\end{tikzpicture}
\par\end{center}

\begin{center}
\caption{The cycle from the previous figure, after expanding the gadget, is
now a cycle of length $x+y+z+13$.}

\par\end{center}%
\end{minipage}
\end{figure}
\begin{figure}[H]
\begin{minipage}[t]{0.45\columnwidth}%
\begin{center}
\tikzstyle{node}=[circle, draw, fill=black!50,                         inner sep=0pt, minimum width=4pt]
\begin{tikzpicture}[thick,scale=0.8] 	\coordinate (4) at (-2,1); 	\coordinate (3) at (-2,-1); 	\draw (4) to node [label={[xshift=.15cm,yshift=.3cm]4}] {} (3); 	\draw (4) to node [label={[xshift=.15cm,yshift=-.7cm]3}] {} (3); 	\coordinate (5) at (0,1); 	\coordinate (1) at (0,-1); 	\draw (5) to node [label={[xshift=.15cm,yshift=.3cm]5}] {} (1); 	\draw (5) to node [label={[xshift=.15cm,yshift=-.7cm]1}] {} (1); 	\coordinate (6) at (2,1); 	\coordinate (2) at (2,-1); 	\draw (6) to node [label={[xshift=.15cm,yshift=.3cm]6}] {} (2); 	\draw (6) to node [label={[xshift=.15cm,yshift=-.7cm]2}] {} (2);
	\draw [line width=.08cm] (4) to (3); 	\draw [line width=.08cm] (5) to (1); 	\draw [line width=.08cm] (6) to (2); 
	\draw [dashed,  line width=.08cm] (3) to[out=-90,in=-90] node [label={[xshift=-.0cm]x}] {} (1); 	\draw [dashed,  line width=.08cm] (5) to[out=90,in=90] node [label={[xshift=-.0cm,yshift=-.5cm]y}] {} (6); 	\draw [dashed,  line width=.08cm] (2) to[out=20,in=50, distance=3.5cm] node [label=z] {} (4); 
\end{tikzpicture}
\par\end{center}

\begin{center}
\caption{A cycle of length $x+y+z+3$ passes through a $H'_{3}$.}

\par\end{center}%
\end{minipage}\hfill{}%
\begin{minipage}[t]{0.45\columnwidth}%
\begin{center}
\tikzstyle{node}=[circle, draw, fill=black!50,                         inner sep=0pt, minimum width=4pt]
\begin{tikzpicture}[thick,scale=.75]
\clip (-3.5,-3) rectangle (3,3);
\node [node] (a) at (-1,.5) {}; 	\node [node] (b) at (0,1.2) {}; 	\node [node] (c) at (1,.5) {}; 	\node [node] (d) at (1,-.5) {}; 	\node [node] (e) at (0,-1.2) {}; 	\node [node] (f) at (-1,-.5) {}; 	\node [node] (g) at (0,.5) {}; 	\node [node] (h) at (0,-.5) {}; 	\node [node] (i) at  ($ (a)+ (.5,1.5) $) {}; 	\node [node] (j) at  ($ (d)+ (-.5,2.5) $) {};
	\draw (a) -- (b) -- (c) -- (d) -- (e) -- (f) -- (a); 	\draw (b) -- (g) -- (h) -- (e); 	\draw (a) -- (i) -- (j); 	\draw (d) to[out=0,in=0, distance=.5cm] (j); 	\coordinate (1) at  ($ (j) + (.75,.75) $); 	\coordinate (2) at  ($ (g)+ (-1.2, 1.2) $); 	\coordinate (3) at  ($ (c)+ (.75,.75) $); 	\coordinate (6) at  ($ (i)+ (-.75,.75) $); 	\coordinate (5) at  ($ (h)+ (1.2,-1.2) $); 	\coordinate (4) at  ($ (f) + (-.75,-.75) $); 	\draw (i) to node [label=6] {} (6); 	\draw (j) to node [label=1] {} (1); 	\draw (g) to node [label={[xshift=.2cm]2}] {} (2); 	\draw (c) to node [label={[xshift=-.05cm]3}] {} (3); 	\draw (h) to node [label={[xshift=.15cm, yshift=-.1cm]5}] {} (5); 	\draw (f) to node [label={[xshift=-.1cm]4}] {} (4);
	\draw [dashed, line width=.08cm] (1) to[out=30,in=-30,distance=.8cm] node [label={[xshift=.2cm,yshift=.0cm]x}] {} (3); 	\draw [dashed, line width=.08cm] (6) to[out=200,in=220,distance=4cm] node [label={[xshift=-.4cm]y}] {} (5); 	\draw [dashed, line width=.08cm] (4) to[out=160,in=200,distance=1cm] node [label={[xshift=-.2cm,yshift=-.0cm]z}] {} (2);
	\draw [line width=.08cm] (2) to (g) to (h) to (5); 	\draw [line width=.08cm] (6) to (i) to (a) to (b) to (c) to (3); 	\draw [line width=.08cm] (4) to (f) to (e) to (d) to[out=0,in=0,distance=.5cm] (j) to (1);

\end{tikzpicture}
\par\end{center}

\begin{center}
\caption{The cycle from the previous figure, after expanding the gadget, is
now a cycle of length $x+y+z+13$.}

\par\end{center}%
\end{minipage}
\end{figure}
\begin{figure}[H]
\begin{minipage}[t]{0.45\columnwidth}%
\begin{center}
\tikzstyle{node}=[circle, draw, fill=black!50,                         inner sep=0pt, minimum width=4pt]
\begin{tikzpicture}[thick,scale=0.8] 	\coordinate (4) at (-2,1); 	\coordinate (3) at (-2,-1); 	\draw (4) to node [label={[xshift=.15cm,yshift=.3cm]4}] {} (3); 	\draw (4) to node [label={[xshift=.15cm,yshift=-.7cm]3}] {} (3); 	\coordinate (5) at (0,1); 	\coordinate (1) at (0,-1); 	\draw (5) to node [label={[xshift=.15cm,yshift=.3cm]5}] {} (1); 	\draw (5) to node [label={[xshift=.15cm,yshift=-.7cm]1}] {} (1); 	\coordinate (6) at (2,1); 	\coordinate (2) at (2,-1); 	\draw (6) to node [label={[xshift=.15cm,yshift=.3cm]6}] {} (2); 	\draw (6) to node [label={[xshift=.15cm,yshift=-.7cm]2}] {} (2);
	\draw [line width=.08cm] (4) to (3); 	\draw [line width=.08cm] (5) to (1); 	\draw [line width=.08cm] (6) to (2); 
	\draw [dashed,  line width=.08cm] (3) to[out=0,in=180] node [label={[xshift=-.1cm]x}] {} (5); 	\draw [dashed,  line width=.08cm] (1) to[out=-90,in=-90] node [label={[xshift=-.0cm]y}] {} (2); 	\draw [dashed,  line width=.08cm] (6) to[out=90,in=90, distance=1cm] node [label=z] {} (4); 
\end{tikzpicture}
\par\end{center}

\begin{center}
\caption{A cycle of length $x+y+z+3$ passes through a $H'_{3}$.}

\par\end{center}%
\end{minipage}\hfill{}%
\begin{minipage}[t]{0.45\columnwidth}%
\begin{center}
\tikzstyle{node}=[circle, draw, fill=black!50,                         inner sep=0pt, minimum width=4pt]
\begin{tikzpicture}[thick,scale=.75] 	\node [node] (a) at (-1,.5) {}; 	\node [node] (b) at (0,1.2) {}; 	\node [node] (c) at (1,.5) {}; 	\node [node] (d) at (1,-.5) {}; 	\node [node] (e) at (0,-1.2) {}; 	\node [node] (f) at (-1,-.5) {}; 	\node [node] (g) at (0,.5) {}; 	\node [node] (h) at (0,-.5) {}; 	\node [node] (i) at  ($ (a)+ (.5,1.5) $) {}; 	\node [node] (j) at  ($ (d)+ (-.5,2.5) $) {};
	\draw (a) -- (b) -- (c) -- (d) -- (e) -- (f) -- (a); 	\draw (b) -- (g) -- (h) -- (e); 	\draw (a) -- (i) -- (j); 	\draw (d) to[out=0,in=0, distance=.5cm] (j); 	\coordinate (1) at  ($ (j) + (.75,.75) $); 	\coordinate (2) at  ($ (g)+ (-1.2, 1.2) $); 	\coordinate (3) at  ($ (c)+ (.75,.75) $); 	\coordinate (6) at  ($ (i)+ (-.75,.75) $); 	\coordinate (5) at  ($ (h)+ (1.2,-1.2) $); 	\coordinate (4) at  ($ (f) + (-.75,-.75) $); 	\draw (i) to node [label=6] {} (6); 	\draw (j) to node [label=1] {} (1); 	\draw (g) to node [label={[xshift=.2cm]2}] {} (2); 	\draw (c) to node [label={[xshift=-.05cm]3}] {} (3); 	\draw (h) to node [label={[xshift=.15cm, yshift=-.1cm]5}] {} (5); 	\draw (f) to node [label={[xshift=-.1cm]4}] {} (4);
	\draw [dashed, line width=.08cm] (5) to[out=30,in=-30,distance=1.3cm] node [label={[xshift=.2cm,yshift=.5cm]x}] {} (3); 	\draw [dashed, line width=.08cm] (2) to[out=90,in=90,distance=.6cm] node [label={[xshift=-.0cm]y}] {} (1); 	\draw [dashed, line width=.08cm] (4) to[out=160,in=200,distance=1cm] node [label={[xshift=-.2cm,yshift=-.0cm]z}] {} (6);
	\draw [line width=.08cm] (2) to (g) to (b) to (a) to (i) to (6); 	\draw [line width=.08cm] (4) to (f) to (e) to (h) to (5); 	\draw [line width=.08cm] (3) to (c) to (d) to[out=0,in=0,distance=.5cm] (j) to (j) to (1);

\end{tikzpicture}
\par\end{center}

\begin{center}
\caption{The cycle from the previous figure, after expanding the gadget, is
now a cycle of length $x+y+z+13$.}

\par\end{center}%
\end{minipage}
\end{figure}
\begin{figure}[H]
\begin{minipage}[t]{0.45\columnwidth}%
\begin{center}
\tikzstyle{node}=[circle, draw, fill=black!50,                         inner sep=0pt, minimum width=4pt]
\begin{tikzpicture}[thick,scale=0.8] 	\coordinate (4) at (-2,1); 	\coordinate (3) at (-2,-1); 	\draw (4) to node [label={[xshift=.15cm,yshift=.3cm]4}] {} (3); 	\draw (4) to node [label={[xshift=.15cm,yshift=-.7cm]3}] {} (3); 	\coordinate (5) at (0,1); 	\coordinate (1) at (0,-1); 	\draw (5) to node [label={[xshift=.15cm,yshift=.3cm]5}] {} (1); 	\draw (5) to node [label={[xshift=.15cm,yshift=-.7cm]1}] {} (1); 	\coordinate (6) at (2,1); 	\coordinate (2) at (2,-1); 	\draw (6) to node [label={[xshift=.15cm,yshift=.3cm]6}] {} (2); 	\draw (6) to node [label={[xshift=.15cm,yshift=-.7cm]2}] {} (2);
	\draw [line width=.08cm] (4) to (3); 	\draw [line width=.08cm] (5) to (1); 	\draw [line width=.08cm] (6) to (2); 
	\draw [dashed,  line width=.08cm] (3) to[out=0,in=180] node [label={[xshift=-.1cm]x}] {} (5); 	\draw [dashed,  line width=.08cm] (1) to[out=0,in=180] node [label={[xshift=-.1cm]y}] {} (6); 	\draw [dashed,  line width=.08cm] (2) to[out=20,in=50, distance=3.2cm] node [label=z] {} (4); 
\end{tikzpicture}
\par\end{center}

\begin{center}
\caption{A cycle of length $x+y+z+3$ passes through a $H'_{3}$.}

\par\end{center}%
\end{minipage}\hfill{}%
\begin{minipage}[t]{0.45\columnwidth}%
\begin{center}
\tikzstyle{node}=[circle, draw, fill=black!50,                         inner sep=0pt, minimum width=4pt]
\begin{tikzpicture}[thick,scale=.75] 	\node [node] (a) at (-1,.5) {}; 	\node [node] (b) at (0,1.2) {}; 	\node [node] (c) at (1,.5) {}; 	\node [node] (d) at (1,-.5) {}; 	\node [node] (e) at (0,-1.2) {}; 	\node [node] (f) at (-1,-.5) {}; 	\node [node] (g) at (0,.5) {}; 	\node [node] (h) at (0,-.5) {}; 	\node [node] (i) at  ($ (a)+ (.5,1.5) $) {}; 	\node [node] (j) at  ($ (d)+ (-.5,2.5) $) {};
	\draw (a) -- (b) -- (c) -- (d) -- (e) -- (f) -- (a); 	\draw (b) -- (g) -- (h) -- (e); 	\draw (a) -- (i) -- (j); 	\draw (d) to[out=0,in=0, distance=.5cm] (j); 	\coordinate (1) at  ($ (j) + (.75,.75) $); 	\coordinate (2) at  ($ (g)+ (-1.2, 1.2) $); 	\coordinate (3) at  ($ (c)+ (.75,.75) $); 	\coordinate (6) at  ($ (i)+ (-.75,.75) $); 	\coordinate (5) at  ($ (h)+ (1.2,-1.2) $); 	\coordinate (4) at  ($ (f) + (-.75,-.75) $); 	\draw (i) to node [label=6] {} (6); 	\draw (j) to node [label=1] {} (1); 	\draw (g) to node [label={[xshift=.2cm]2}] {} (2); 	\draw (c) to node [label={[xshift=-.05cm]3}] {} (3); 	\draw (h) to node [label={[xshift=.15cm, yshift=-.1cm]5}] {} (5); 	\draw (f) to node [label={[xshift=-.1cm]4}] {} (4);
	\draw [dashed, line width=.08cm] (5) to[out=30,in=-30,distance=1.3cm] node [label={[xshift=.2cm,yshift=.5cm]x}] {} (3); 	\draw [dashed, line width=.08cm] (6) to[out=90,in=90,distance=.6cm] node [label={[xshift=-.0cm]y}] {} (1); 	\draw [dashed, line width=.08cm] (4) to[out=160,in=200,distance=1cm] node [label={[xshift=-.2cm,yshift=-.0cm]z}] {} (2);
	\draw [line width=.08cm] (2) to (g) to (h) to (5); 	\draw [line width=.08cm] (6) to (i) to (a) to (b) to (c) to (3); 	\draw [line width=.08cm] (4) to (f) to (e) to (d) to[out=0,in=0,distance=.5cm] (j) to (1);

\end{tikzpicture}
\par\end{center}

\begin{center}
\caption{The cycle from the previous figure, after expanding the gadget, is
now a cycle of length $x+y+z+13$.}

\par\end{center}%
\end{minipage}
\end{figure}

\begin{figure}[H]
\begin{minipage}[t]{0.45\columnwidth}%
\begin{center}
 \tikzstyle{node}=[circle, draw, fill=black!50,                         inner sep=0pt, minimum width=4pt]
\begin{tikzpicture}[thick,scale=0.8] 	\coordinate (4) at (-2,1); 	\coordinate (3) at (-2,-1); 	\draw (4) to node [label={[xshift=.15cm,yshift=.3cm]4}] {} (3); 	\draw (4) to node [label={[xshift=.15cm,yshift=-.7cm]3}] {} (3); 	\coordinate (5) at (0,1); 	\coordinate (1) at (0,-1); 	\draw (5) to node [label={[xshift=.15cm,yshift=.3cm]5}] {} (1); 	\draw (5) to node [label={[xshift=.15cm,yshift=-.7cm]1}] {} (1); 	\coordinate (6) at (2,1); 	\coordinate (2) at (2,-1); 	\draw (6) to node [label={[xshift=.15cm,yshift=.3cm]6}] {} (2); 	\draw (6) to node [label={[xshift=.15cm,yshift=-.7cm]2}] {} (2);
	\draw [line width=.08cm] (4) to (3); 	\draw [line width=.08cm] (5) to (1); 	\draw [line width=.08cm] (6) to (2); 
	\draw [dashed,  line width=.08cm] (3) to[out=-50,in=-130] node [label={[xshift=-.1cm]x}] {} (2); 	\draw [dashed,  line width=.08cm] (1) to[out=0,in=180] node [label={[xshift=-.0cm]y}] {} (6); 	\draw [dashed,  line width=.08cm] (5) to[out=180,in=0] node [label=z] {} (4); 
\end{tikzpicture}
\par\end{center}

\begin{center}
\caption{A cycle of length $x+y+z+3$ passes through a $H'_{3}$.}

\par\end{center}%
\end{minipage}\hfill{}%
\begin{minipage}[t]{0.45\columnwidth}%
\begin{center}
\tikzstyle{node}=[circle, draw, fill=black!50,                         inner sep=0pt, minimum width=4pt]
\begin{tikzpicture}[thick,scale=.75] 	\node [node] (a) at (-1,.5) {}; 	\node [node] (b) at (0,1.2) {}; 	\node [node] (c) at (1,.5) {}; 	\node [node] (d) at (1,-.5) {}; 	\node [node] (e) at (0,-1.2) {}; 	\node [node] (f) at (-1,-.5) {}; 	\node [node] (g) at (0,.5) {}; 	\node [node] (h) at (0,-.5) {}; 	\node [node] (i) at  ($ (a)+ (.5,1.5) $) {}; 	\node [node] (j) at  ($ (d)+ (-.5,2.5) $) {};
	\draw (a) -- (b) -- (c) -- (d) -- (e) -- (f) -- (a); 	\draw (b) -- (g) -- (h) -- (e); 	\draw (a) -- (i) -- (j); 	\draw (d) to[out=0,in=0, distance=.5cm] (j); 	\coordinate (1) at  ($ (j) + (.75,.75) $); 	\coordinate (2) at  ($ (g)+ (-1.2, 1.2) $); 	\coordinate (3) at  ($ (c)+ (.75,.75) $); 	\coordinate (6) at  ($ (i)+ (-.75,.75) $); 	\coordinate (5) at  ($ (h)+ (1.2,-1.2) $); 	\coordinate (4) at  ($ (f) + (-.75,-.75) $); 	\draw (i) to node [label=6] {} (6); 	\draw (j) to node [label=1] {} (1); 	\draw (g) to node [label={[xshift=.2cm]2}] {} (2); 	\draw (c) to node [label={[xshift=-.05cm]3}] {} (3); 	\draw (h) to node [label={[xshift=.15cm, yshift=-.1cm]5}] {} (5); 	\draw (f) to node [label={[xshift=-.1cm]4}] {} (4);
	\draw [dashed, line width=.08cm] (3) to[out=60,in=130,distance=4cm] node [label={[xshift=-.2cm]x}] {} (2); 	\draw [dashed, line width=.08cm] (6) to[out=90,in=90,distance=.6cm] node [label={[xshift=-.0cm]y}] {} (1); 	\draw [dashed, line width=.08cm] (4) to[out=300,in=250,distance=1cm] node [label={[xshift=-.2cm,yshift=-.0cm]z}] {} (5);
	\draw [line width=.08cm] (2) to (g) to (h) to (5); 	\draw [line width=.08cm] (4) to (f) to (e) to (d) to[out=0,in=0, distance=.5cm] (j) to (j) to (1); 	\draw [line width=.08cm] (3) to (c) to (b) to (a) to (i) to (6);

\end{tikzpicture}
\par\end{center}

\begin{center}
\caption{The cycle from the previous figure, after expanding the gadget, is
now a cycle of length $x+y+z+13$.}

\par\end{center}%
\end{minipage}
\end{figure}

\begin{figure}[H]
\begin{minipage}[t]{0.45\columnwidth}%
\begin{center}
\tikzstyle{node}=[circle, draw, fill=black!50,                         inner sep=0pt, minimum width=4pt]
\begin{tikzpicture}[thick,scale=0.8] 	\coordinate (4) at (-2,1); 	\coordinate (3) at (-2,-1); 	\draw (4) to node [label={[xshift=.15cm,yshift=.3cm]4}] {} (3); 	\draw (4) to node [label={[xshift=.15cm,yshift=-.7cm]3}] {} (3); 	\coordinate (5) at (0,1); 	\coordinate (1) at (0,-1); 	\draw (5) to node [label={[xshift=.15cm,yshift=.3cm]5}] {} (1); 	\draw (5) to node [label={[xshift=.15cm,yshift=-.7cm]1}] {} (1); 	\coordinate (6) at (2,1); 	\coordinate (2) at (2,-1); 	\draw (6) to node [label={[xshift=.15cm,yshift=.3cm]6}] {} (2); 	\draw (6) to node [label={[xshift=.15cm,yshift=-.7cm]2}] {} (2);
	\draw [line width=.08cm] (4) to (3); 	\draw [line width=.08cm] (5) to (1); 	\draw [line width=.08cm] (6) to (2); 
	\draw [dashed,  line width=.08cm] (3) to[out=-50,in=-130] node [label={[xshift=-.1cm]x}] {} (2); 	\draw [dashed,  line width=.08cm] (6) to[out=180,in=0] node [label={[xshift=-.0cm]y}] {} (5); 	\draw [dashed,  line width=.08cm] (1) to[out=180,in=0, distance=1cm] node [label={[xshift=.1cm]z}] {} (4); 
\end{tikzpicture}
\par\end{center}

\begin{center}
\caption{A cycle of length $x+y+z+3$ passes through a $H'_{3}$.}

\par\end{center}%
\end{minipage}\hfill{}%
\begin{minipage}[t]{0.45\columnwidth}%
\begin{center}
\tikzstyle{node}=[circle, draw, fill=black!50,                         inner sep=0pt, minimum width=4pt]
\begin{tikzpicture}[thick,scale=.75] 	\node [node] (a) at (-1,.5) {}; 	\node [node] (b) at (0,1.2) {}; 	\node [node] (c) at (1,.5) {}; 	\node [node] (d) at (1,-.5) {}; 	\node [node] (e) at (0,-1.2) {}; 	\node [node] (f) at (-1,-.5) {}; 	\node [node] (g) at (0,.5) {}; 	\node [node] (h) at (0,-.5) {}; 	\node [node] (i) at  ($ (a)+ (.5,1.5) $) {}; 	\node [node] (j) at  ($ (d)+ (-.5,2.5) $) {};
	\draw (a) -- (b) -- (c) -- (d) -- (e) -- (f) -- (a); 	\draw (b) -- (g) -- (h) -- (e); 	\draw (a) -- (i) -- (j); 	\draw (d) to[out=0,in=0, distance=.5cm] (j); 	\coordinate (1) at  ($ (j) + (.75,.75) $); 	\coordinate (2) at  ($ (g)+ (-1.2, 1.2) $); 	\coordinate (3) at  ($ (c)+ (.75,.75) $); 	\coordinate (6) at  ($ (i)+ (-.75,.75) $); 	\coordinate (5) at  ($ (h)+ (1.2,-1.2) $); 	\coordinate (4) at  ($ (f) + (-.75,-.75) $); 	\draw (i) to node [label=6] {} (6); 	\draw (j) to node [label=1] {} (1); 	\draw (g) to node [label={[xshift=.2cm]2}] {} (2); 	\draw (c) to node [label={[xshift=-.05cm]3}] {} (3); 	\draw (h) to node [label={[xshift=.15cm, yshift=-.1cm]5}] {} (5); 	\draw (f) to node [label={[xshift=-.1cm]4}] {} (4);
	\draw [dashed, line width=.08cm] (3) to[out=60,in=130,distance=3cm] node [label={[xshift=-.2cm]x}] {} (2); 	\draw [dashed, line width=.08cm] (5) to[out=220,in=220, distance=3.7cm] node [label={[xshift=-.3cm,yshift=.0cm]y}] {} (6); 	\draw [dashed, line width=.08cm] (4) to[out=130,in=120,distance=3.7cm] node [label={[xshift=-.1cm,yshift=-.0cm]z}] {} (1);
	\draw [line width=.08cm] (2) to (g) to (b) to (a) to (f) to (4); 	\draw [line width=.08cm] (3) to (c) to (d) to (e) to (h) to (5); 	\draw [line width=.08cm] (6) to (i) to (j) to (1);

\end{tikzpicture}
\par\end{center}

\begin{center}
\caption{The cycle from the previous figure, after expanding the gadget, is
now a cycle of length $x+y+z+13$.}

\par\end{center}%
\end{minipage}
\end{figure}

\begin{figure}[H]
\begin{minipage}[t]{0.45\columnwidth}%
\begin{center}
\tikzstyle{node}=[circle, draw, fill=black!50,                         inner sep=0pt, minimum width=4pt]
\begin{tikzpicture}[thick,scale=0.8] 	\coordinate (4) at (-2,1); 	\coordinate (3) at (-2,-1); 	\draw (4) to node [label={[xshift=.15cm,yshift=.3cm]4}] {} (3); 	\draw (4) to node [label={[xshift=.15cm,yshift=-.7cm]3}] {} (3); 	\coordinate (5) at (0,1); 	\coordinate (1) at (0,-1); 	\draw (5) to node [label={[xshift=.15cm,yshift=.3cm]5}] {} (1); 	\draw (5) to node [label={[xshift=.15cm,yshift=-.7cm]1}] {} (1); 	\coordinate (6) at (2,1); 	\coordinate (2) at (2,-1); 	\draw (6) to node [label={[xshift=.15cm,yshift=.3cm]6}] {} (2); 	\draw (6) to node [label={[xshift=.15cm,yshift=-.7cm]2}] {} (2);
	\draw [line width=.08cm] (4) to (3); 	\draw [line width=.08cm] (5) to (1); 	\draw [line width=.08cm] (6) to (2); 
	\draw [dashed,  line width=.08cm] (3) to[out=130,in=120, distance=3cm] node [label={[xshift=-.1cm]x}] {} (6); 	\draw [dashed,  line width=.08cm] (2) to[out=180,in=0] node [label={[xshift=-.0cm]y}] {} (1); 	\draw [dashed,  line width=.08cm] (5) to[out=180,in=0] node [label=z] {} (4); 
\end{tikzpicture}
\par\end{center}

\begin{center}
\caption{A cycle of length $x+y+z+3$ passes through a $H'_{3}$.}

\par\end{center}%
\end{minipage}\hfill{}%
\begin{minipage}[t]{0.45\columnwidth}%
\begin{center}
\tikzstyle{node}=[circle, draw, fill=black!50,                         inner sep=0pt, minimum width=4pt]
\begin{tikzpicture}[thick,scale=.75] 	\node [node] (a) at (-1,.5) {}; 	\node [node] (b) at (0,1.2) {}; 	\node [node] (c) at (1,.5) {}; 	\node [node] (d) at (1,-.5) {}; 	\node [node] (e) at (0,-1.2) {}; 	\node [node] (f) at (-1,-.5) {}; 	\node [node] (g) at (0,.5) {}; 	\node [node] (h) at (0,-.5) {}; 	\node [node] (i) at  ($ (a)+ (.5,1.5) $) {}; 	\node [node] (j) at  ($ (d)+ (-.5,2.5) $) {};
	\draw (a) -- (b) -- (c) -- (d) -- (e) -- (f) -- (a); 	\draw (b) -- (g) -- (h) -- (e); 	\draw (a) -- (i) -- (j); 	\draw (d) to[out=0,in=0, distance=.5cm] (j); 	\coordinate (1) at  ($ (j) + (.75,.75) $); 	\coordinate (2) at  ($ (g)+ (-1.2, 1.2) $); 	\coordinate (3) at  ($ (c)+ (.75,.75) $); 	\coordinate (6) at  ($ (i)+ (-.75,.75) $); 	\coordinate (5) at  ($ (h)+ (1.2,-1.2) $); 	\coordinate (4) at  ($ (f) + (-.75,-.75) $); 	\draw (i) to node [label=6] {} (6); 	\draw (j) to node [label=1] {} (1); 	\draw (g) to node [label={[xshift=.2cm]2}] {} (2); 	\draw (c) to node [label={[xshift=-.05cm]3}] {} (3); 	\draw (h) to node [label={[xshift=.15cm, yshift=-.1cm]5}] {} (5); 	\draw (f) to node [label={[xshift=-.1cm]4}] {} (4);
	\draw [dashed, line width=.08cm] (6) to[out=50,in=30,distance=2cm] node [label={[xshift=-.2cm,yshift=.1cm]x}] {} (3); 	\draw [dashed, line width=.08cm] (1) to[out=120,in=150,distance=2cm] node [label={[xshift=-.2cm]y}] {} (2); 	\draw [dashed, line width=.08cm] (4) to[out=260,in=-80,distance=1cm] node [label={[xshift=.0cm,yshift=-.5cm]z}] {} (5);
	\draw [line width=.08cm] (1) to (j) to (i) to (6); 	\draw [line width=.08cm] (4) to (f) to (a) to (b) to (g) to (2); 	\draw [line width=.08cm] (3) to (c) to (d) to (e) to (h) to (5);

\end{tikzpicture}
\par\end{center}

\begin{center}
\caption{The cycle from the previous figure, after expanding the gadget, is
now a cycle of length $x+y+z+13$.}

\par\end{center}%
\end{minipage}
\end{figure}

\begin{figure}[H]
\begin{minipage}[t]{0.45\columnwidth}%
\begin{center}
\tikzstyle{node}=[circle, draw, fill=black!50,                         inner sep=0pt, minimum width=4pt]
\begin{tikzpicture}[thick,scale=0.8] 	\coordinate (4) at (-2,1); 	\coordinate (3) at (-2,-1); 	\draw (4) to node [label={[xshift=.15cm,yshift=.3cm]4}] {} (3); 	\draw (4) to node [label={[xshift=.15cm,yshift=-.7cm]3}] {} (3); 	\coordinate (5) at (0,1); 	\coordinate (1) at (0,-1); 	\draw (5) to node [label={[xshift=.15cm,yshift=.3cm]5}] {} (1); 	\draw (5) to node [label={[xshift=.15cm,yshift=-.7cm]1}] {} (1); 	\coordinate (6) at (2,1); 	\coordinate (2) at (2,-1); 	\draw (6) to node [label={[xshift=.15cm,yshift=.3cm]6}] {} (2); 	\draw (6) to node [label={[xshift=.15cm,yshift=-.7cm]2}] {} (2);
	\draw [line width=.08cm] (4) to (3); 	\draw [line width=.08cm] (5) to (1); 	\draw [line width=.08cm] (6) to (2); 
	\draw [dashed,  line width=.08cm] (3) to[out=130,in=120, distance=3cm] node [label={[xshift=-.1cm]x}] {} (6); 	\draw [dashed,  line width=.08cm] (2) to[out=180,in=0] node [label={[xshift=.1cm]y}] {} (5); 	\draw [dashed,  line width=.08cm] (1) to[out=180,in=0] node [label={[xshift=.1cm]z}] {} (4); 
\end{tikzpicture}
\par\end{center}

\begin{center}
\caption{A cycle of length $x+y+z+3$ passes through a $H'_{3}$.}

\par\end{center}%
\end{minipage}\hfill{}%
\begin{minipage}[t]{0.45\columnwidth}%
\begin{center}
\tikzstyle{node}=[circle, draw, fill=black!50,                         inner sep=0pt, minimum width=4pt]
\begin{tikzpicture}[thick,scale=.75] 	\node [node] (a) at (-1,.5) {}; 	\node [node] (b) at (0,1.2) {}; 	\node [node] (c) at (1,.5) {}; 	\node [node] (d) at (1,-.5) {}; 	\node [node] (e) at (0,-1.2) {}; 	\node [node] (f) at (-1,-.5) {}; 	\node [node] (g) at (0,.5) {}; 	\node [node] (h) at (0,-.5) {}; 	\node [node] (i) at  ($ (a)+ (.5,1.5) $) {}; 	\node [node] (j) at  ($ (d)+ (-.5,2.5) $) {};
	\draw (a) -- (b) -- (c) -- (d) -- (e) -- (f) -- (a); 	\draw (b) -- (g) -- (h) -- (e); 	\draw (a) -- (i) -- (j); 	\draw (d) to[out=0,in=0, distance=.5cm] (j); 	\coordinate (1) at  ($ (j) + (.75,.75) $); 	\coordinate (2) at  ($ (g)+ (-1.2, 1.2) $); 	\coordinate (3) at  ($ (c)+ (.75,.75) $); 	\coordinate (6) at  ($ (i)+ (-.75,.75) $); 	\coordinate (5) at  ($ (h)+ (1.2,-1.2) $); 	\coordinate (4) at  ($ (f) + (-.75,-.75) $); 	\draw (i) to node [label=6] {} (6); 	\draw (j) to node [label=1] {} (1); 	\draw (g) to node [label={[xshift=.2cm]2}] {} (2); 	\draw (c) to node [label={[xshift=-.05cm]3}] {} (3); 	\draw (h) to node [label={[xshift=.15cm, yshift=-.1cm]5}] {} (5); 	\draw (f) to node [label={[xshift=-.1cm]4}] {} (4);
	\draw [dashed, line width=.08cm] (6) to[out=90,in=45,distance=2cm] node [label={[xshift=.2cm]x}] {} (3); 	\draw [dashed, line width=.08cm] (2) to[out=210,in=220,distance=3.2cm] node [label={[xshift=-.4cm]y}] {} (5); 	\draw [dashed, line width=.08cm] (4) to[out=320,in=0,distance=4cm] node [label={[xshift=-.2cm,yshift=-.0cm]z}] {} (1);
	\draw [line width=.08cm] (2) to (g) to (b) to (a) to (f) to (4); 	\draw [line width=.08cm] (3) to (c) to (d) to (e) to (h) to (5); 	\draw [line width=.08cm] (6) to (i) to (j) to (1);

\end{tikzpicture}
\par\end{center}

\begin{center}
\caption{The cycle from the previous figure, after expanding the gadget, is
now a cycle of length $x+y+z+13$.}

\par\end{center}%
\end{minipage}
\end{figure}

\section{Appendix E: $H_{4}$s} \label{apdxh4}

\subsection{Gadget is covered by two cycles}

If a $H'_{4}$ is covered by two disjoint cycles in $F_{i}$, then
the internal edge cannot be part of the $2$-factor. This leaves only
the single possibility depicted in Figures 143-144, which takes two
cycles of lengths $x+2$ and $y+2$ in $F_{i}$ and returns a single
cycle of length $x+y+14$ in $F_{i-1}$ after the expansion. $x+2$
and $y+2$ are both at least $6$, so $x+y+14>8$, meaning the resulting
cycle in $F_{i-1}$ cannot be a cycle of length $6$.

\begin{figure}[H]
\begin{minipage}[t]{0.45\columnwidth}%
\begin{center}
\tikzstyle{node}=[circle, draw, fill=black!50,                         inner sep=0pt, minimum width=4pt]
\begin{tikzpicture}[auto,thick, scale=.5]	 	\node [node] (a) at (0,2) {}; 	\node [node] (b) at (0,-2) {}; 	\draw (a) -- (b); 	\coordinate (1) at  ($ (a)+ (-1,1) $); 	\draw (a) to node [label={[xshift=-.1cm]1}] {} (1); 	\coordinate (3) at  ($ (a)+ (1,1) $); 	\draw (a) to node [label={[xshift=.5cm, yshift=-.5cm]3}] {} (3); 	\coordinate (2) at  ($ (b)+ (-1,-1) $); 	\draw (b) to node [label={[xshift=-.5cm, yshift = .5cm]2}] {} (2); 	\coordinate (4) at  ($ (b)+ (1,-1) $); 	\draw (b) to node [label={4}] {} (4);
	\draw [line width=.08cm] (1) to (a) to (3); 	\draw [line width=.08cm] (2) to (b) to (4);
	\draw [dashed, line width=.08cm] (1) to[out=90,in=90,distance=.8cm] node [label=x] {} (3); 	\draw [dashed, line width=.08cm] (2) to[out=-90,in=-90,distance=.8cm] node [label=y] {} (4); \end{tikzpicture}
\par\end{center}

\begin{center}
\caption{Two cycles of lengths $x+2$ and $y+2$ pass through a $H'_{4}$.}

\par\end{center}%
\end{minipage}\hfill{}%
\begin{minipage}[t]{0.45\columnwidth}%
\begin{center}
\tikzstyle{node}=[circle, draw, fill=black!50,                         inner sep=0pt, minimum width=4pt]
\begin{tikzpicture}[thick,scale=.75] 	\node [node] (a) at (-1,.5) {}; 	\node [node] (b) at (0,1.2) {}; 	\node [node] (c) at (1,.5) {}; 	\node [node] (d) at (1,-.5) {}; 	\node [node] (e) at (0,-1.2) {}; 	\node [node] (f) at (-1,-.5) {}; 	\node [node] (g) at (0,.5) {}; 	\node [node] (h) at (0,-.5) {}; 	\node [node] (i) at  ($ (a)+ (.5,1.5) $) {}; 	\node [node] (j) at  ($ (d)+ (-.5,2.5) $) {}; 	\node [node] (w1) at ($ (j) + (2.5,-.7) $) {}; 	\node [node] (w2) at ($ (i) + (-2.5,-.7) $) {};
	\draw (a) -- (b) -- (c) -- (d) -- (e) -- (f) -- (a); 	\draw (b) -- (g) -- (h) -- (e); 	\draw (a) -- (i) -- (j) -- (w1) -- (g); 	\draw (i) -- (w2) -- (h); 	\draw (d) to[out=0,in=0, distance=.5cm] (j); 	\coordinate (1) at  ($ (w2) + (-1,0) $); 	\coordinate (2) at  ($ (w1)+ (1, 0) $); 	\coordinate (3) at  ($ (c)+ (1.5,-.75) $); 	\coordinate (4) at  ($ (f)+ (-1.5,.0) $); 	\draw (w2) to node [label=1] {} (1); 	\draw (w1) to node [label=2] {} (2); 	\draw (c) to node [label={[xshift=.1cm]3}] {} (3); 	\draw (f) to node [label={[xshift=-.1cm]4}] {} (4);
	\draw [dashed, line width=.08cm] (1) to[out=260,in=240, distance=3cm] node [label={[xshift=-1.2cm]x}] {} (3); 	\draw [dashed, line width=.08cm] (2) to[out=280,in=300, distance=3cm] node [label={[xshift=1.2cm]y}] {} (4);
	\draw [line width=.08cm] (1) to (w2) to (i) to (j) to (w1) to (2); 	\draw [line width=.08cm] (4) to (f) to (a) to (b) to (g) to (h) to (e) to (d) to (c) to (3); \end{tikzpicture}
\par\end{center}

\begin{center}
\caption{The cycles from the previous figure, after expanding the gadget, are
now a single cycle of length $x+y+14$.}

\par\end{center}%
\end{minipage}
\end{figure}

\subsection{Gadget is covered by one cycle}

If a $H'_{4}$ is covered by a single cycle in $F_{i}$, then we consider
cases when the internal edge is part of $F_{i}$ and those when the
internal edge is not included. First consider the cases when the internal
edge is included in $F_{i}$. Then $F_{i}$ passes through either
edge 1 or 3 and either edge 2 or 4. Each of these possibilities takes
a cycle of length $x+3$ in $F_{i}$ and returns a cycle of length
$x+13$ in $F_{i-1}$. $x+3\geq6$, so $x+13>8$, meaning the resulting
cycle in $F_{i-1}$ cannot be a cycle of length $6$. The case when
exiting edges 1 and 4 are included in $F_{i}$ is symmetric to the
case when exiting edges 2 and 3, so only the first of these cases
is included in this appendix. Then, there are three unique cases to
consider where the internal edge is included in $F_{i}$. We consider
these cases in the first six figures of this subsection.

Next, consider when the internal edge is not included in $F_{i}$.
Then all four exiting edges of the gadget must be used. There are
two cases to consider where $F_{i}$ can pass through these four edges,
because after exiting edge 1, the cycle can re-enter the gadget at
either edge 2 or 4. Each of these possibilities takes a cycle of length
$x+y+4$ in $F_{i}$ and returns a cycle of length $x+y+14$ in $F_{i-1}$.
$x+y+4\geq6$, so $x+y+14>8$, meaning the resulting cycle in $F_{i-1}$
cannot be a cycle of length $6$. We consider both cases in the last four figures of this subsection.

\begin{figure}[H]
\begin{minipage}[t]{0.45\columnwidth}%
\begin{center}
\tikzstyle{node}=[circle, draw, fill=black!50,                         inner sep=0pt, minimum width=4pt]
\begin{tikzpicture}[auto,thick, scale=.5]	 	\node [node] (a) at (0,2) {}; 	\node [node] (b) at (0,-2) {}; 	\draw (a) -- (b); 	\coordinate (1) at  ($ (a)+ (-1,1) $); 	\draw (a) to node [label={[xshift=-.1cm]1}] {} (1); 	\coordinate (3) at  ($ (a)+ (1,1) $); 	\draw (a) to node [label={[xshift=.5cm, yshift=-.5cm]3}] {} (3); 	\coordinate (2) at  ($ (b)+ (-1,-1) $); 	\draw (b) to node [label={[xshift=-.5cm, yshift = .5cm]2}] {} (2); 	\coordinate (4) at  ($ (b)+ (1,-1) $); 	\draw (b) to node [label={4}] {} (4);
	\draw [line width=.08cm] (1) to (a) to (b) to (2); 	\draw [dashed, line width=.08cm] (1) to[out=180,in=180, distance=1cm] node [label={[xshift=-.5cm]x}] {} (2); \end{tikzpicture}
\par\end{center}

\begin{center}
\caption{A cycle of length $x+3$ passes through a $H'_{4}$.}

\par\end{center}%
\end{minipage}\hfill{}%
\begin{minipage}[t]{0.45\columnwidth}%
\begin{center}
\tikzstyle{node}=[circle, draw, fill=black!50,                         inner sep=0pt, minimum width=4pt]
\begin{tikzpicture}[thick,scale=.75] 	\node [node] (a) at (-1,.5) {}; 	\node [node] (b) at (0,1.2) {}; 	\node [node] (c) at (1,.5) {}; 	\node [node] (d) at (1,-.5) {}; 	\node [node] (e) at (0,-1.2) {}; 	\node [node] (f) at (-1,-.5) {}; 	\node [node] (g) at (0,.5) {}; 	\node [node] (h) at (0,-.5) {}; 	\node [node] (i) at  ($ (a)+ (.5,1.5) $) {}; 	\node [node] (j) at  ($ (d)+ (-.5,2.5) $) {}; 	\node [node] (w1) at ($ (j) + (2.5,-.7) $) {}; 	\node [node] (w2) at ($ (i) + (-2.5,-.7) $) {};
	\draw (a) -- (b) -- (c) -- (d) -- (e) -- (f) -- (a); 	\draw (b) -- (g) -- (h) -- (e); 	\draw (a) -- (i) -- (j) -- (w1) -- (g); 	\draw (i) -- (w2) -- (h); 	\draw (d) to[out=0,in=0, distance=.5cm] (j); 	\coordinate (1) at  ($ (w2) + (-1,0) $); 	\coordinate (2) at  ($ (w1)+ (1, 0) $); 	\coordinate (3) at  ($ (c)+ (1.5,-.75) $); 	\coordinate (4) at  ($ (f)+ (-1.5,.0) $); 	\draw (w2) to node [label=1] {} (1); 	\draw (w1) to node [label=2] {} (2); 	\draw (c) to node [label={[xshift=.1cm]3}] {} (3); 	\draw (f) to node [label={[xshift=-.1cm]4}] {} (4);
	\draw [dashed, line width=.08cm] (1) to[out=90,in=90, distance=2cm] node [label=x] {} (2); 	\draw [line width=.08cm] (1) to (w2) to (h) to (e) to (f) to (a) to (i) to (j) to[out=0,in=0,distance=.5cm] (d) to (d) to (c) to (b) to (g) to (w1) to (2); \end{tikzpicture}
\par\end{center}

\begin{center}
\caption{The cycle from the previous figure, after expanding the gadget, is
now a cycle of length $x+13$.}

\par\end{center}%
\end{minipage}
\end{figure}
\begin{figure}[H]
\begin{minipage}[t]{0.45\columnwidth}%
\begin{center}
\tikzstyle{node}=[circle, draw, fill=black!50,                         inner sep=0pt, minimum width=4pt]
\begin{tikzpicture}[auto,thick, scale=.5]	 	\node [node] (a) at (0,2) {}; 	\node [node] (b) at (0,-2) {}; 	\draw (a) -- (b); 	\coordinate (1) at  ($ (a)+ (-1,1) $); 	\draw (a) to node [label={[xshift=-.1cm]1}] {} (1); 	\coordinate (3) at  ($ (a)+ (1,1) $); 	\draw (a) to node [label={[xshift=.5cm, yshift=-.5cm]3}] {} (3); 	\coordinate (2) at  ($ (b)+ (-1,-1) $); 	\draw (b) to node [label={[xshift=-.5cm, yshift = .5cm]2}] {} (2); 	\coordinate (4) at  ($ (b)+ (1,-1) $); 	\draw (b) to node [label={4}] {} (4);
	\draw [line width=.08cm] (1) to (a) to (b) to (4); 	\draw [dashed, line width=.08cm] (1) to[out=45,in=40, distance=3.5cm] node [label={[xshift=-.0cm]x}] {} (4); \end{tikzpicture} 
\par\end{center}

\begin{center}
\caption{A cycle of length $x+3$ passes through a $H'_{4}$.}

\par\end{center}%
\end{minipage}\hfill{}%
\begin{minipage}[t]{0.45\columnwidth}%
\begin{center}
\tikzstyle{node}=[circle, draw, fill=black!50,                         inner sep=0pt, minimum width=4pt]
\begin{tikzpicture}[thick,scale=.8] 	\node [node] (a) at (-1,.5) {}; 	\node [node] (b) at (0,1.2) {}; 	\node [node] (c) at (1,.5) {}; 	\node [node] (d) at (1,-.5) {}; 	\node [node] (e) at (0,-1.2) {}; 	\node [node] (f) at (-1,-.5) {}; 	\node [node] (g) at (0,.5) {}; 	\node [node] (h) at (0,-.5) {}; 	\node [node] (i) at  ($ (a)+ (.5,1.5) $) {}; 	\node [node] (j) at  ($ (d)+ (-.5,2.5) $) {}; 	\node [node] (w1) at ($ (j) + (2.5,-.7) $) {}; 	\node [node] (w2) at ($ (i) + (-2.5,-.7) $) {};
	\draw (a) -- (b) -- (c) -- (d) -- (e) -- (f) -- (a); 	\draw (b) -- (g) -- (h) -- (e); 	\draw (a) -- (i) -- (j) -- (w1) -- (g); 	\draw (i) -- (w2) -- (h); 	\draw (d) to[out=0,in=0, distance=.5cm] (j); 	\coordinate (1) at  ($ (w2) + (-1,0) $); 	\coordinate (2) at  ($ (w1)+ (1, 0) $); 	\coordinate (3) at  ($ (c)+ (1.5,-.75) $); 	\coordinate (4) at  ($ (f)+ (-1.5,.0) $); 	\draw (w2) to node [label=1] {} (1); 	\draw (w1) to node [label=2] {} (2); 	\draw (c) to node [label={[xshift=.1cm]3}] {} (3); 	\draw (f) to node [label={[xshift=-.1cm]4}] {} (4);
	\draw [dashed, line width=.08cm] (1) to[out=180,in=180, distance=1cm] node [label={[yshift=-.6cm]x}] {} (4); 	\draw [line width=.08cm] (1) to (w2) to (i) to (j) to (w1) to (g) to (h) to (e) to (d) to (c) to (b) to (a) to (f) to (4); \end{tikzpicture}
\par\end{center}

\begin{center}
\caption{The cycle from the previous figure, after expanding the gadget, is
now a cycle of length $x+13$.}

\par\end{center}%
\end{minipage}
\end{figure}
\begin{figure}[H]
\begin{minipage}[t]{0.45\columnwidth}%
\begin{center}
\tikzstyle{node}=[circle, draw, fill=black!50,                         inner sep=0pt, minimum width=4pt]
\begin{tikzpicture}[auto,thick, scale=.5]	 	\node [node] (a) at (0,2) {}; 	\node [node] (b) at (0,-2) {}; 	\draw (a) -- (b); 	\coordinate (1) at  ($ (a)+ (-1,1) $); 	\draw (a) to node [label={[xshift=-.1cm]1}] {} (1); 	\coordinate (3) at  ($ (a)+ (1,1) $); 	\draw (a) to node [label={[xshift=.5cm, yshift=-.5cm]3}] {} (3); 	\coordinate (2) at  ($ (b)+ (-1,-1) $); 	\draw (b) to node [label={[xshift=-.5cm, yshift = .5cm]2}] {} (2); 	\coordinate (4) at  ($ (b)+ (1,-1) $); 	\draw (b) to node [label={4}] {} (4);
	\draw [line width=.08cm] (3) to (a) to (b) to (4); 	\draw [dashed, line width=.08cm] (3) to[out=0,in=0, distance=1cm] node [label={[xshift=-.0cm]x}] {} (4); \end{tikzpicture}
\par\end{center}

\begin{center}
\caption{A cycle of length $x+3$ passes through a $H'_{4}$.}

\par\end{center}%
\end{minipage}\hfill{}%
\begin{minipage}[t]{0.45\columnwidth}%
\begin{center}
\tikzstyle{node}=[circle, draw, fill=black!50,                         inner sep=0pt, minimum width=4pt]
\begin{tikzpicture}[thick,scale=.8] 	\node [node] (a) at (-1,.5) {}; 	\node [node] (b) at (0,1.2) {}; 	\node [node] (c) at (1,.5) {}; 	\node [node] (d) at (1,-.5) {}; 	\node [node] (e) at (0,-1.2) {}; 	\node [node] (f) at (-1,-.5) {}; 	\node [node] (g) at (0,.5) {}; 	\node [node] (h) at (0,-.5) {}; 	\node [node] (i) at  ($ (a)+ (.5,1.5) $) {}; 	\node [node] (j) at  ($ (d)+ (-.5,2.5) $) {}; 	\node [node] (w1) at ($ (j) + (2.5,-.7) $) {}; 	\node [node] (w2) at ($ (i) + (-2.5,-.7) $) {};
	\draw (a) -- (b) -- (c) -- (d) -- (e) -- (f) -- (a); 	\draw (b) -- (g) -- (h) -- (e); 	\draw (a) -- (i) -- (j) -- (w1) -- (g); 	\draw (i) -- (w2) -- (h); 	\draw (d) to[out=0,in=0, distance=.5cm] (j); 	\coordinate (1) at  ($ (w2) + (-1,0) $); 	\coordinate (2) at  ($ (w1)+ (1, 0) $); 	\coordinate (3) at  ($ (c)+ (1.5,-.75) $); 	\coordinate (4) at  ($ (f)+ (-1.5,.0) $); 	\draw (w2) to node [label=1] {} (1); 	\draw (w1) to node [label=2] {} (2); 	\draw (c) to node [label={[xshift=.1cm]3}] {} (3); 	\draw (f) to node [label={[xshift=-.1cm]4}] {} (4);
	\draw [dashed, line width=.08cm] (3) to[out=-90,in=-90, distance=2cm] node [label={[yshift=-.6cm]x}] {} (4); 	\draw [line width=.08cm] (4) to (f) to (a) to (b) to (g) to (w1) to (j) to (i) to (w2) to (h) to (e) to (d) to (c) to (3); \end{tikzpicture}
\par\end{center}

\begin{center}
\caption{The cycle from the previous figure, after expanding the gadget, is
now a cycle of length $x+13$.}

\par\end{center}%
\end{minipage}
\end{figure}
\begin{figure}[H]
\begin{minipage}[t]{0.45\columnwidth}%
\begin{center}
\tikzstyle{node}=[circle, draw, fill=black!50,                         inner sep=0pt, minimum width=4pt]
\begin{tikzpicture}[auto,thick, scale=.5]	 	\node [node] (a) at (0,2) {}; 	\node [node] (b) at (0,-2) {}; 	\draw (a) -- (b); 	\coordinate (1) at  ($ (a)+ (-1,1) $); 	\draw (a) to node [label={[xshift=-.1cm]1}] {} (1); 	\coordinate (3) at  ($ (a)+ (1,1) $); 	\draw (a) to node [label={[xshift=.5cm, yshift=-.5cm]3}] {} (3); 	\coordinate (2) at  ($ (b)+ (-1,-1) $); 	\draw (b) to node [label={[xshift=-.5cm, yshift = .5cm]2}] {} (2); 	\coordinate (4) at  ($ (b)+ (1,-1) $); 	\draw (b) to node [label={4}] {} (4);
	\draw [line width=.08cm] (1) to (a) to (3); 	\draw [line width=.08cm] (2) to (b) to (4); 	\draw [dashed, line width=.08cm] (1) to[out=180,in=180, distance=1cm] node [label={[xshift=-.5cm]x}] {} (2); 	\draw [dashed, line width=.08cm] (3) to [out=0,in=0,distance=1cm] node [label=y] {} (4); \end{tikzpicture}
\par\end{center}

\begin{center}
\caption{A cycle of length $x+4$ passes through a $H'_{4}$.}

\par\end{center}%
\end{minipage}\hfill{}%
\begin{minipage}[t]{0.45\columnwidth}%
\begin{center}
\tikzstyle{node}=[circle, draw, fill=black!50,                         inner sep=0pt, minimum width=4pt]
\begin{tikzpicture}[thick,scale=.75] 	\node [node] (a) at (-1,.5) {}; 	\node [node] (b) at (0,1.2) {}; 	\node [node] (c) at (1,.5) {}; 	\node [node] (d) at (1,-.5) {}; 	\node [node] (e) at (0,-1.2) {}; 	\node [node] (f) at (-1,-.5) {}; 	\node [node] (g) at (0,.5) {}; 	\node [node] (h) at (0,-.5) {}; 	\node [node] (i) at  ($ (a)+ (.5,1.5) $) {}; 	\node [node] (j) at  ($ (d)+ (-.5,2.5) $) {}; 	\node [node] (w1) at ($ (j) + (2.5,-.7) $) {}; 	\node [node] (w2) at ($ (i) + (-2.5,-.7) $) {};
	\draw (a) -- (b) -- (c) -- (d) -- (e) -- (f) -- (a); 	\draw (b) -- (g) -- (h) -- (e); 	\draw (a) -- (i) -- (j) -- (w1) -- (g); 	\draw (i) -- (w2) -- (h); 	\draw (d) to[out=0,in=0, distance=.5cm] (j); 	\coordinate (1) at  ($ (w2) + (-1,0) $); 	\coordinate (2) at  ($ (w1)+ (1, 0) $); 	\coordinate (3) at  ($ (c)+ (1.5,-.75) $); 	\coordinate (4) at  ($ (f)+ (-1.5,.0) $); 	\draw (w2) to node [label=1] {} (1); 	\draw (w1) to node [label=2] {} (2); 	\draw (c) to node [label={[xshift=.1cm]3}] {} (3); 	\draw (f) to node [label={[xshift=-.1cm]4}] {} (4);
	\draw [dashed, line width=.08cm] (1) to[out=90,in=90, distance=2cm] node [label=x] {} (2); 	\draw [dashed, line width=.08cm] (4) to [out=-90,in=-90, distance=1.5cm] node [label={[yshift=-.5cm]y}] {} (3); 	\draw [line width=.08cm] (1) to (w2) to (i) to (a) to (b) to (g) to (h) to (e) to (f) to (4); 	\draw [line width=.08cm] (3) to (c) to (d) to[out=0, in=0,distance=.5cm] (j) to (j) (w1) to (2); \end{tikzpicture}
\par\end{center}

\begin{center}
\caption{The cycle from the previous figure, after expanding the gadget, is
now a cycle of length $x+y+14$.}

\par\end{center}%
\end{minipage}
\end{figure}
\begin{figure}[H]
\begin{minipage}[t]{0.45\columnwidth}%
\begin{center}
\tikzstyle{node}=[circle, draw, fill=black!50,                         inner sep=0pt, minimum width=4pt]
\begin{tikzpicture}[auto,thick, scale=.5]	 	\node [node] (a) at (0,2) {}; 	\node [node] (b) at (0,-2) {}; 	\draw (a) -- (b); 	\coordinate (1) at  ($ (a)+ (-1,1) $); 	\draw (a) to node [label={[xshift=-.1cm]1}] {} (1); 	\coordinate (3) at  ($ (a)+ (1,1) $); 	\draw (a) to node [label={[xshift=.5cm, yshift=-.5cm]3}] {} (3); 	\coordinate (2) at  ($ (b)+ (-1,-1) $); 	\draw (b) to node [label={[xshift=-.5cm, yshift = .5cm]4}] {} (2); 	\coordinate (4) at  ($ (b)+ (1,-1) $); 	\draw (b) to node [label={2}] {} (4);
	\draw [line width=.08cm] (1) to (a) to (3); 	\draw [line width=.08cm] (2) to (b) to (4); 	\draw [dashed, line width=.08cm] (1) to[out=180,in=180, distance=1cm] node [label={[xshift=-.5cm]x}] {} (2); 	\draw [dashed, line width=.08cm] (3) to [out=0,in=0,distance=1cm] node [label=y] {} (4); \end{tikzpicture}
\par\end{center}

\begin{center}
\caption{A cycle of length $x+4$ passes through a $H'_{4}$.}

\par\end{center}%
\end{minipage}\hfill{}%
\begin{minipage}[t]{0.45\columnwidth}%
\begin{center}
\tikzstyle{node}=[circle, draw, fill=black!50,                         inner sep=0pt, minimum width=4pt]
\begin{tikzpicture}[thick,scale=.75] 	\node [node] (a) at (-1,.5) {}; 	\node [node] (b) at (0,1.2) {}; 	\node [node] (c) at (1,.5) {}; 	\node [node] (d) at (1,-.5) {}; 	\node [node] (e) at (0,-1.2) {}; 	\node [node] (f) at (-1,-.5) {}; 	\node [node] (g) at (0,.5) {}; 	\node [node] (h) at (0,-.5) {}; 	\node [node] (i) at  ($ (a)+ (.5,1.5) $) {}; 	\node [node] (j) at  ($ (d)+ (-.5,2.5) $) {}; 	\node [node] (w1) at ($ (j) + (2.5,-.7) $) {}; 	\node [node] (w2) at ($ (i) + (-2.5,-.7) $) {};
	\draw (a) -- (b) -- (c) -- (d) -- (e) -- (f) -- (a); 	\draw (b) -- (g) -- (h) -- (e); 	\draw (a) -- (i) -- (j) -- (w1) -- (g); 	\draw (i) -- (w2) -- (h); 	\draw (d) to[out=0,in=0, distance=.5cm] (j); 	\coordinate (1) at  ($ (w2) + (-1,0) $); 	\coordinate (2) at  ($ (w1)+ (1, 0) $); 	\coordinate (3) at  ($ (c)+ (1.5,-.75) $); 	\coordinate (4) at  ($ (f)+ (-1.5,.0) $); 	\draw (w2) to node [label=1] {} (1); 	\draw (w1) to node [label=2] {} (2); 	\draw (c) to node [label={[xshift=.1cm]3}] {} (3); 	\draw (f) to node [label={[xshift=-.1cm]4}] {} (4);
	\draw [dashed, line width=.08cm] (1) to[out=220,in=180] node [label={[xshift=-.5cm]x}] {} (4); 	\draw [dashed, line width=.08cm] (2) to [out=-40,in=0] node [label={[xshift=.5cm]y}] {} (3); 	\draw [line width=.08cm] (1) to (w2) to (i) to (j) to (w1) to (2); 	\draw [line width=.08cm] (3) to (c) to (d) to (e) to (h) to (g) to (b) to (a) to (f) to (4); \end{tikzpicture}
\par\end{center}

\begin{center}
\caption{The cycle from the previous figure, after expanding the gadget, is
now a cycle of length $x+y+14$.}

\par\end{center}%
\end{minipage}
\end{figure}

\section{Appendix F: $H_{5}$s} \label{apdxh5}

Two edges of each $H'_{5}$ is covered by a cycle in $F_{i}$. Then,
there are three cases to consider, when each of the gadget's edges
are excluded from $F_{i}$. In each of these cases, we start with
a cycle of length $x+2$ in $F_{i}$ and are returned a cycle of length
$x+14$. $x+2\geq6$, so $x+14>8$, meaning these expansion operations
cannot introduce an organic $6$-cycle into the $2$-factor. 

\begin{figure}[H]
\begin{minipage}[t]{0.45\columnwidth}%
\begin{center}
\tikzstyle{node}=[circle, draw, fill=black!50,                         inner sep=0pt, minimum width=4pt]
\begin{tikzpicture}[thick,scale=1.5] 	\node [node] (a) at (-1,.5) {}; 	\coordinate (1) at  ($ (a)+ (-.75,.75) $); 	\coordinate (2) at  ($ (a)+ (.75,.75) $); 	\coordinate (3) at  ($ (a)+ (0,-.8) $); 	\draw (a) to node [label={[xshift=.05cm]1}] {} (1); 	\draw (a) to node [label={[xshift=-.1cm]2}] {} (2); 	\draw (a) to node [label={[xshift=-.2cm,yshift=-.3cm]3}] {} (3);
	\draw [line width=.08cm] (1) to (a) to (2); 	\draw [dashed, line width=.08cm] (1) to[out=90,in=90] node [label=x] {} (2); \end{tikzpicture}
\par\end{center}

\begin{center}
\caption{A cycle of length $x+2$ passes through a $H'_{5}$.}

\par\end{center}%
\end{minipage}\hfill{}%
\begin{minipage}[t]{0.45\columnwidth}%
\begin{center}
\tikzstyle{node}=[circle, draw, fill=black!50,                         inner sep=0pt, minimum width=4pt]
\begin{tikzpicture}[thick,scale=.75] 	\node [node] (a) at (-1,.5) {}; 	\node [node] (b) at (0,1.2) {}; 	\node [node] (c) at (1,.5) {}; 	\node [node] (d) at (1,-.5) {}; 	\node [node] (e) at (0,-1.2) {}; 	\node [node] (f) at (-1,-.5) {}; 	\node [node] (g) at (0,.5) {}; 	\node [node] (h) at (0,-.5) {}; 	\node [node] (i) at  ($ (a)+ (.5,1.5) $) {}; 	\node [node] (j) at  ($ (d)+ (-.5,2.5) $) {}; 	\node [node] (w1) at ($ (j) + (2.5,-.7) $) {}; 	\node [node] (w2) at ($ (i) + (-2.5,-.7) $) {};
	\draw (a) -- (b) -- (c) -- (d) -- (e) -- (f) -- (a); 	\draw (b) -- (g) -- (h) -- (e); 	\draw (a) -- (i) -- (j) -- (w1) -- (g); 	\draw (i) -- (w2) -- (h); 	\draw (d) to[out=0,in=0, distance=.5cm] (j);
	\coordinate (2) at  ($ (w1)+ (1, 0) $); 	\coordinate (3) at  ($ (c)+ (1.5,-.75) $); 	\coordinate (4) at  ($ (f)+ (-1.5,.0) $);
	\draw (w1) to node [label=2] {} (2);
	\draw (f) to node [label={[xshift=-.1cm]3}] {} (4);
	\node [node] (1-3) at ($ (f) + (0,-1) $) {}; 	\coordinate (1) at  ($ (1-3) + (0,-1) $);
	\draw (w2) to[out=180,in=180] (1-3) to[out=-30,in=0, distance=2cm] (c); 	\draw (1-3) to node [label={[xshift=-.2cm,yshift=-.3cm]1}] {} (1);
	\draw [dashed, line width=.08cm] (1) to[out=330,in=280, distance=2cm] node [label={[xshift=-0cm]x}] {} (2);
	\draw [line width=.08cm] (1) to (1-3) to[in=180,out=180] (w2) to (w2) to (h) to (g) to (b) to (c) to (d) to (e) to (f) to (a) to (i) to (j) to (w1) to (2);
\end{tikzpicture}
\par\end{center}

\begin{center}
\caption{The cycle from the previous figure, after expanding the gadget, is
now a cycle of length $x+14$.}

\par\end{center}%
\end{minipage}
\end{figure}
\begin{figure}[H]
\begin{minipage}[t]{0.45\columnwidth}%
\begin{center}
\tikzstyle{node}=[circle, draw, fill=black!50,                         inner sep=0pt, minimum width=4pt]
\begin{tikzpicture}[thick,scale=1.5] 	\node [node] (a) at (-1,.5) {}; 	\coordinate (1) at  ($ (a)+ (-.75,.75) $); 	\coordinate (2) at  ($ (a)+ (.75,.75) $); 	\coordinate (3) at  ($ (a)+ (0,-.8) $); 	\draw (a) to node [label={[xshift=.05cm]1}] {} (1); 	\draw (a) to node [label={[xshift=-.1cm]2}] {} (2); 	\draw (a) to node [label={[xshift=-.2cm,yshift=-.3cm]3}] {} (3);
	\draw [line width=.08cm] (1) to (a) to (3); 	\draw [dashed, line width=.08cm] (1) to[out=180,in=180] node [label={[xshift=.2cm]x}] {} (3); \end{tikzpicture}
\par\end{center}

\begin{center}
\caption{A cycle of length $x+2$ passes through a $H'_{5}$.}

\par\end{center}%
\end{minipage}\hfill{}%
\begin{minipage}[t]{0.45\columnwidth}%
\begin{center}
\tikzstyle{node}=[circle, draw, fill=black!50,                         inner sep=0pt, minimum width=4pt]
\begin{tikzpicture}[thick,scale=.75]
\clip (-4,-3) rectangle (4.3,3);
\node [node] (a) at (-1,.5) {}; 	\node [node] (b) at (0,1.2) {}; 	\node [node] (c) at (1,.5) {}; 	\node [node] (d) at (1,-.5) {}; 	\node [node] (e) at (0,-1.2) {}; 	\node [node] (f) at (-1,-.5) {}; 	\node [node] (g) at (0,.5) {}; 	\node [node] (h) at (0,-.5) {}; 	\node [node] (i) at  ($ (a)+ (.5,1.5) $) {}; 	\node [node] (j) at  ($ (d)+ (-.5,2.5) $) {}; 	\node [node] (w1) at ($ (j) + (2.5,-.7) $) {}; 	\node [node] (w2) at ($ (i) + (-2.5,-.7) $) {};
	\draw (a) -- (b) -- (c) -- (d) -- (e) -- (f) -- (a); 	\draw (b) -- (g) -- (h) -- (e); 	\draw (a) -- (i) -- (j) -- (w1) -- (g); 	\draw (i) -- (w2) -- (h); 	\draw (d) to[out=0,in=0, distance=.5cm] (j);
	\coordinate (2) at  ($ (w1)+ (1, 0) $); 	\coordinate (3) at  ($ (c)+ (1.5,-.75) $); 	\coordinate (4) at  ($ (f)+ (-1.5,.0) $);
	\draw (w1) to node [label=2] {} (2);
	\draw (f) to node [label={[xshift=-.1cm]3}] {} (4);
	\node [node] (1-3) at ($ (f) + (0,-1) $) {}; 	\coordinate (1) at  ($ (1-3) + (0,-1) $);
	\draw (w2) to[out=180,in=180] (1-3) to[out=-30,in=0, distance=2cm] (c); 	\draw (1-3) to node [label={[xshift=-.2cm,yshift=-.3cm]1}] {} (1);
	\draw [dashed, line width=.08cm] (1) to[out=270,in=180, distance=1cm] node [label={[xshift=.1cm]x}] {} (4);
	\draw [line width=.08cm] (1) to (1-3) to[in=180,out=180] (w2) to (w2) to (i) to (j) to (w1) to (g) to (h) to (e) to (d) to (c) to (b) to (a) to (f) to (4);
\end{tikzpicture}
\par\end{center}

\begin{center}
\caption{The cycle from the previous figure, after expanding the gadget, is
now a cycle of length $x+14$.}

\par\end{center}%
\end{minipage}
\end{figure}
\begin{figure}[H]
\begin{minipage}[t]{0.45\columnwidth}%
\begin{center}
\tikzstyle{node}=[circle, draw, fill=black!50,                         inner sep=0pt, minimum width=4pt]
\begin{tikzpicture}[thick,scale=1.5] 	\node [node] (a) at (-1,.5) {}; 	\coordinate (1) at  ($ (a)+ (-.75,.75) $); 	\coordinate (2) at  ($ (a)+ (.75,.75) $); 	\coordinate (3) at  ($ (a)+ (0,-.8) $); 	\draw (a) to node [label={[xshift=.05cm]1}] {} (1); 	\draw (a) to node [label={[xshift=-.1cm]2}] {} (2); 	\draw (a) to node [label={[xshift=-.2cm,yshift=-.3cm]3}] {} (3);
	\draw [line width=.08cm] (2) to (a) to (3); 	\draw [dashed, line width=.08cm] (2) to[out=0,in=0] node [label={[xshift=.3cm]x}] {} (3); \end{tikzpicture}
\par\end{center}

\begin{center}
\caption{A cycle of length $x+2$ passes through a $H'_{5}$.}

\par\end{center}%
\end{minipage}\hfill{}%
\begin{minipage}[t]{0.45\columnwidth}%
\begin{center}
\tikzstyle{node}=[circle, draw, fill=black!50,                         inner sep=0pt, minimum width=4pt]
\begin{tikzpicture}[thick,scale=.75]
\clip (-4,-3) rectangle (4.3,3);
\node [node] (a) at (-1,.5) {}; 	\node [node] (b) at (0,1.2) {}; 	\node [node] (c) at (1,.5) {}; 	\node [node] (d) at (1,-.5) {}; 	\node [node] (e) at (0,-1.2) {}; 	\node [node] (f) at (-1,-.5) {}; 	\node [node] (g) at (0,.5) {}; 	\node [node] (h) at (0,-.5) {}; 	\node [node] (i) at  ($ (a)+ (.5,1.5) $) {}; 	\node [node] (j) at  ($ (d)+ (-.5,2.5) $) {}; 	\node [node] (w1) at ($ (j) + (2.5,-.7) $) {}; 	\node [node] (w2) at ($ (i) + (-2.5,-.7) $) {};
	\draw (a) -- (b) -- (c) -- (d) -- (e) -- (f) -- (a); 	\draw (b) -- (g) -- (h) -- (e); 	\draw (a) -- (i) -- (j) -- (w1) -- (g); 	\draw (i) -- (w2) -- (h); 	\draw (d) to[out=0,in=0, distance=.5cm] (j);
	\coordinate (2) at  ($ (w1)+ (1, 0) $); 	\coordinate (3) at  ($ (c)+ (1.5,-.75) $); 	\coordinate (4) at  ($ (f)+ (-1.5,.0) $);
	\draw (w1) to node [label=2] {} (2);
	\draw (f) to node [label={[xshift=-.1cm]3}] {} (4);
	\node [node] (1-3) at ($ (f) + (0,-1) $) {}; 	\coordinate (1) at  ($ (1-3) + (0,-1) $);
	\draw (w2) to[out=180,in=180] (1-3) to (1-3) to[out=-30,in=0, distance=2cm] (c); 	\draw (1-3) to node [label={[xshift=-.2cm,yshift=-.3cm]1}] {} (1);
	\draw [dashed, line width=.08cm] (2) to[out=-70,in=250, distance=4.1cm] node [label={[xshift=.1cm]x}] {} (4);
	\draw [line width=.08cm] (4) to (f) to (a) to (i) to (w2) to[out=180,in=180] (1-3) to (1-3) to[out=-30,in=0,distance=2cm] (c) to (c) to (b) to (g) to (h) to (e) to (d) to[out=0,in=0, distance=.5cm] (j) to (j) to (w1) to (2);
\end{tikzpicture}
\par\end{center}

\begin{center}
\caption{The cycle from the previous figure, after expanding the gadget, is
now a cycle of length $x+14$.}

\par\end{center}%
\end{minipage}
\end{figure}

\section{Appendix G: $H_{6}$s} \label{apdxh6}

There are two cases to consider for a $H'_{6}$, when the edge is
included in $F_{i}$ and when it is not. In both cases, expanding
the gadget cannot introduce an organic $6$-cycle to $F_{i-1}$. We
consider both cases in this section.

\begin{figure}[H]
\begin{minipage}[t]{0.45\columnwidth}%
\begin{center}
\tikzstyle{node}=[circle, draw, fill=black!50,                         inner sep=0pt, minimum width=4pt]
\begin{tikzpicture}[thick,scale=1.5] 	\coordinate (1) at (0,1); 	\coordinate (2) at (0,-1); 	\draw (1) to node [label={[xshift=.15cm,yshift=1cm]1}] {} (2); 	\draw (1) to node [label={[xshift=.15cm,yshift=-1.4cm]2}] {} (2);
\end{tikzpicture}
\par\end{center}

\begin{center}
\caption{A $H'_6$, which is not included in
the $2$-factor }

\par\end{center}%
\end{minipage}\hfill{}%
\begin{minipage}[t]{0.45\columnwidth}%
\begin{center}
\tikzstyle{node}=[circle, draw, fill=black!50,                         inner sep=0pt, minimum width=4pt]
\begin{tikzpicture}[thick,scale=.75] 	\node [node] (a) at (-1,.5) {}; 	\node [node] (b) at (0,1.2) {}; 	\node [node] (c) at (1,.5) {}; 	\node [node] (d) at (1,-.5) {}; 	\node [node] (e) at (0,-1.2) {}; 	\node [node] (f) at (-1,-.5) {}; 	\node [node] (g) at (0,.5) {}; 	\node [node] (h) at (0,-.5) {}; 	\node [node] (i) at  ($ (a)+ (.5,1.5) $) {}; 	\node [node] (j) at  ($ (d)+ (-.5,2.5) $) {}; 	\node [node] (w1) at ($ (j) + (2.5,-.7) $) {}; 	\node [node] (w2) at ($ (i) + (-2.5,-.7) $) {};
	\draw (a) -- (b) -- (c) -- (d) -- (e) -- (f) -- (a); 	\draw (b) -- (g) -- (h) -- (e); 	\draw (a) -- (i) -- (j) -- (w1) -- (g); 	\draw (i) -- (w2) -- (h); 	\draw (d) to[out=0,in=0, distance=.5cm] (j);
	\node [node] (left) at ($ (w2) + (0,-1) $) {}; 	\node [node] (right) at ($ (w1) + (0,-1) $) {};
	\draw (w2) to (left) to[out=290,in=-30,distance=4cm] (c); 	\draw (w1) to (right) to[out=280,in=250,distance=2.5cm] (f);
	\coordinate (1) at ($ (left) + (-1,0) $); 	\draw (left) to node [label={[xshift=.05cm]1}] {} (1);
	\coordinate (2) at ($ (right) + (1,0) $); 	\draw (right) to node [label={[xshift=.05cm]2}] {} (2);
	\draw[line width=.08cm] (left) to[out=290,in=-30,distance=4cm] (c) to (c) to (d) to (e) to (h) to (g) to (b) to (a) to (f) to[in=280,out=250,distance=2.5cm] (right) to (w1) to (j) to (i) to (w2) to (left); \end{tikzpicture}
\par\end{center}

\begin{center}
\caption{The cycle from the previous figure, after expanding the gadget, is
now a cycle of length $14$.}

\par\end{center}%
\end{minipage}
\end{figure}
\begin{figure}[H]
\begin{minipage}[t]{0.45\columnwidth}%
\begin{center}
\tikzstyle{node}=[circle, draw, fill=black!50,                         inner sep=0pt, minimum width=4pt]
\begin{tikzpicture}[thick,scale=1.5] 	\coordinate (1) at (0,1); 	\coordinate (2) at (0,-1); 	\draw (1) to node [label={[xshift=.15cm,yshift=1cm]1}] {} (2); 	\draw [line width=.08cm] (1) to node [label={[xshift=.15cm,yshift=-1.4cm]2}] {} (2); \draw [dashed, line width=.08cm] (1) to[out=0,in=0,distance=1cm] node[label={[xshift=.4cm]x}] {} (2);
\end{tikzpicture}
\par\end{center}

\begin{center}
\caption{A cycle of length $x+1$ passes through a $H'_{6}$.}

\par\end{center}%
\end{minipage}\hfill{}%
\begin{minipage}[t]{0.45\columnwidth}%
\begin{center}
\tikzstyle{node}=[circle, draw, fill=black!50,                         inner sep=0pt, minimum width=4pt]
\begin{tikzpicture}[thick,scale=.6]
\clip (-4.75,-3) rectangle (4.75,3.5);
\node [node] (a) at (-1,.5) {}; 	\node [node] (b) at (0,1.2) {}; 	\node [node] (c) at (1,.5) {}; 	\node [node] (d) at (1,-.5) {}; 	\node [node] (e) at (0,-1.2) {}; 	\node [node] (f) at (-1,-.5) {}; 	\node [node] (g) at (0,.5) {}; 	\node [node] (h) at (0,-.5) {}; 	\node [node] (i) at  ($ (a)+ (.5,1.5) $) {}; 	\node [node] (j) at  ($ (d)+ (-.5,2.5) $) {}; 	\node [node] (w1) at ($ (j) + (2.5,-.7) $) {}; 	\node [node] (w2) at ($ (i) + (-2.5,-.7) $) {};
	\draw (a) -- (b) -- (c) -- (d) -- (e) -- (f) -- (a); 	\draw (b) -- (g) -- (h) -- (e); 	\draw (a) -- (i) -- (j) -- (w1) -- (g); 	\draw (i) -- (w2) -- (h); 	\draw (d) to[out=0,in=0, distance=.5cm] (j);
	\node [node] (left) at ($ (w2) + (0,-1) $) {}; 	\node [node] (right) at ($ (w1) + (0,-1) $) {};
	\draw (w2) to (left) to[out=290,in=-30,distance=4cm] (c); 	\draw (w1) to (right) to[out=280,in=250,distance=2.5cm] (f);
	\coordinate (1) at ($ (left) + (-1,0) $); 	\draw (left) to node [label={[xshift=.05cm]1}] {} (1);
	\coordinate (2) at ($ (right) + (1,0) $); 	\draw (right) to node [label={[xshift=.05cm]2}] {} (2);
	\draw[line width=.08cm] (1) to (left) to[out=290,in=-30,distance=4cm] (c) to (c) to (d) to (e) to (f) to (a) to (b) to (g) to (h) to (w2) to (i) to (j) to (w1) to (right) to (2); 	\draw [dashed, line width=.08cm] (1) to[out=140,in=40,distance=5cm] node[label=x] {} (2); \end{tikzpicture}
\par\end{center}

\begin{center}
\caption{The cycle from the previous figure, after expanding the gadget, is
now a cycle of length $x+15$.}

\par\end{center}%
\end{minipage}
\end{figure}




\end{document}